\def\dOi{9(4:20)2013}
\keywords{coinduction, corecursion, nominal sets,
infinitary lambda calculus}
\def \bchpaula {\begin{color}{blue}} 
\def \echpaula {\end{color}}
\definecolor{akcolor}{rgb}{1,0,1}
\def \bchak {\begin{color}{akcolor}} 
\def \echak {\end{color}}
\theoremstyle{definition}
\newtheorem{notation}[thm]{Notation}
\newdimen\proofrulebreadth \proofrulebreadth=.05em
\newdimen\proofdotseparation \proofdotseparation=1.25ex
\newdimen\proofrulebaseline \proofrulebaseline=2ex
\let\then\relax
\def\hfi{\hskip0pt plus.0001fil}
\mathchardef\squigto="3A3B
\newif\ifinsideprooftree\insideprooftreefalse
\newif\ifonleftofproofrule\onleftofproofrulefalse
\newif\ifproofdots\proofdotsfalse
\newif\ifdoubleproof\doubleprooffalse
\let\wereinproofbit\relax
\newdimen\shortenproofleft
\newdimen\shortenproofright
\newdimen\proofbelowshift
\newbox\proofabove
\newbox\proofbelow
\newbox\proofrulename
\def\shiftproofbelow{\let\next\relax\afterassignment\setshiftproofbelow\dimen0 }
\def\shiftproofbelowneg{\def\next{\multiply\dimen0 by-1 }%
\afterassignment\setshiftproofbelow\dimen0 }
\def\setshiftproofbelow{\next\proofbelowshift=\dimen0 }
\def\setproofrulebreadth{\proofrulebreadth}
\def\prooftree{
%
\ifnum  \lastpenalty=1
\then   \unpenalty
\else   \onleftofproofrulefalse
\fi
%
\ifonleftofproofrule
\else   \ifinsideprooftree
        \then   \hskip.5em plus1fil
        \fi
\fi
%
\bgroup
\setbox\proofbelow=\hbox{}\setbox\proofrulename=\hbox{}%
\let\justifies\proofover\let\leadsto\proofoverdots\let\Justifies\proofoverdbl
\let\using\proofusing\let\[\prooftree
\ifinsideprooftree\let\]\endprooftree\fi
\proofdotsfalse\doubleprooffalse
\let\thickness\setproofrulebreadth
\let\shiftright\shiftproofbelow \let\shift\shiftproofbelow
\let\shiftleft\shiftproofbelowneg
\let\ifwasinsideprooftree\ifinsideprooftree
\insideprooftreetrue
%
\setbox\proofabove=\hbox\bgroup$\displaystyle 
\let\wereinproofbit\prooftree
%
\shortenproofleft=0pt \shortenproofright=0pt \proofbelowshift=0pt
%
\onleftofproofruletrue\penalty1
}
\def\eproofbit{
%
\ifx    \wereinproofbit\prooftree
\then   \ifcase \lastpenalty
        \then   \shortenproofright=0pt  
        \or     \unpenalty\hfil         
        \or     \unpenalty\unskip       
        \else   \shortenproofright=0pt  
        \fi
\fi
%
\global\dimen0=\shortenproofleft
\global\dimen1=\shortenproofright
\global\dimen2=\proofrulebreadth
\global\dimen3=\proofbelowshift
\global\dimen4=\proofdotseparation
\global\count255=\proofdotnumber
%
$\egroup  
%
\shortenproofleft=\dimen0
\shortenproofright=\dimen1
\proofrulebreadth=\dimen2
\proofbelowshift=\dimen3
\proofdotseparation=\dimen4
\proofdotnumber=\count255
}
\def\proofover{
\eproofbit 
\setbox\proofbelow=\hbox\bgroup 
\let\wereinproofbit\proofover
$\displaystyle
}%
\def\proofoverdbl{
\eproofbit 
\doubleprooftrue
\setbox\proofbelow=\hbox\bgroup 
\let\wereinproofbit\proofoverdbl
$\displaystyle
}%
\def\proofoverdots{
\eproofbit 
\proofdotstrue
\setbox\proofbelow=\hbox\bgroup 
\let\wereinproofbit\proofoverdots
$\displaystyle
}%
\def\proofusing{
\eproofbit 
\setbox\proofrulename=\hbox\bgroup 
\let\wereinproofbit\proofusing
\kern0.3em$
}
\def\endprooftree{
\eproofbit 
  \dimen5 =0pt
%
\dimen0=\wd\proofabove \advance\dimen0-\shortenproofleft
\advance\dimen0-\shortenproofright
%
\dimen1=.5\dimen0 \advance\dimen1-.5\wd\proofbelow
\dimen4=\dimen1
\advance\dimen1\proofbelowshift \advance\dimen4-\proofbelowshift
%
\ifdim  \dimen1<0pt
\then   \advance\shortenproofleft\dimen1
        \advance\dimen0-\dimen1
        \dimen1=0pt
        \ifdim  \shortenproofleft<0pt
        \then   \setbox\proofabove=\hbox{%
                        \kern-\shortenproofleft\unhbox\proofabove}%
                \shortenproofleft=0pt
        \fi
\fi
%
\ifdim  \dimen4<0pt
\then   \advance\shortenproofright\dimen4
        \advance\dimen0-\dimen4
        \dimen4=0pt
\fi
%
\ifdim  \shortenproofright<\wd\proofrulename
\then   \shortenproofright=\wd\proofrulename
\fi
%
\dimen2=\shortenproofleft \advance\dimen2 by\dimen1
\dimen3=\shortenproofright\advance\dimen3 by\dimen4
%
\ifproofdots
\then
        \dimen6=\shortenproofleft \advance\dimen6 .5\dimen0
        \setbox1=\vbox to\proofdotseparation{\vss\hbox{$\cdot$}\vss}%
        \setbox0=\hbox{%
                \advance\dimen6-.5\wd1
                \kern\dimen6
                $\vcenter to\proofdotnumber\proofdotseparation
                        {\leaders\box1\vfill}$%
                \unhbox\proofrulename}%
\else   \dimen6=\fontdimen22\the\textfont2 
        \dimen7=\dimen6
        \advance\dimen6by.5\proofrulebreadth
        \advance\dimen7by-.5\proofrulebreadth
        \setbox0=\hbox{%
                \kern\shortenproofleft
                \ifdoubleproof
                \then   \hbox to\dimen0{%
                        $\mathsurround0pt\mathord=\mkern-6mu%
                        \cleaders\hbox{$\mkern-2mu=\mkern-2mu$}\hfill
                        \mkern-6mu\mathord=$}%
                \else   \vrule height\dimen6 depth-\dimen7 width\dimen0
                \fi
                \unhbox\proofrulename}%
        \ht0=\dimen6 \dp0=-\dimen7
\fi
%
\let\doll\relax
\ifwasinsideprooftree
\then   \let\VBOX\vbox
\else   \ifmmode\else$\let\doll=$\fi
        \let\VBOX\vcenter
\fi
\VBOX   {\baselineskip\proofrulebaseline \lineskip.2ex
        \expandafter\lineskiplimit\ifproofdots0ex\else-0.6ex\fi
        \hbox   spread\dimen5   {\hfi\unhbox\proofabove\hfi}%
        \hbox{\box0}%
        \hbox   {\kern\dimen2 \box\proofbelow}}\doll%
%
\global\dimen2=\dimen2
\global\dimen3=\dimen3
\egroup 
\ifonleftofproofrule
\then   \shortenproofleft=\dimen2
\fi
\shortenproofright=\dimen3
%
\onleftofproofrulefalse
\ifinsideprooftree
\then   \hskip.5em plus 1fil \penalty2
\fi
}
\newcommand{\subs}{{\sf subs}}
\newcommand{\subsa}{\subs_{\alpha}}
\newcommand{\mapsubsext}{{\sf h}_{\subsa}}
\newcommand{\mapsubsl}{\mapsubsext}
\newcommand{\inl}{{\sf inl}}
\newcommand{\inr}{{\sf inr}}
\newcommand{\inbota}{{\sf inbot}}
\newcommand{\invara}{{\sf invar}}
\newcommand{\inabsa}{{\sf inabs}}
\newcommand{\inappa}{{\sf inapp}}
 \newcommand{\gbertreea}{{\sf g}_{\bertreea}}
 \newcommand{\glltreea}{{\sf g}_{\lltreea}}
 \newcommand{\gbohmtreea}{{\sf g}_{\bohmtreea}}
\newcommand{\finalcarrier}{D}
\newcommand{\finalmap}{\delta}
\newcommand{\carrierone}{\setone}
\newcommand{\mapone}{g}
\newcommand{\carriertwo}{\settwo}
\newcommand{\maptwo}{h}
\newcommand{\mapthree}{f}
\newcommand{\nat}{\Bbb N}
\newcommand{\alphaconv}{=_{\alpha}}
\newcommand{\alphaconvg}{\sim_{\alpha}}
\newcommand{\fresh}{\#}
\newcommand{\eqclass}[1]{{\sf #1}}
\newcommand{\eqclassa}[1]{[ #1]_{\alpha}}
\newcommand{\echiv}[1]{[{#1}]_\alpha}
 \newcommand{\abs}[2]{\langle #1 \rangle #2}
 \newcommand{\conc}{@}
\newcommand{\fv}{{\sf fv}}
\newcommand{\bv}{{\sf bv}}
\newcommand{\var}{{\sf var}}
\newcommand{\setvars}{\mathcal{ V}}
\newcommand{\fnom}[1]{{\sf {#1}}}
\newcommand{\FLambda}{{\sf L}_{\alpha}}
\newcommand{\TLambda}{{L}}
\newcommand{\LNom}{\FLambda}
\newcommand{\TLambdabot}{{Lb}}
\newcommand{\LNombot}{{\sf Lb}_{\alpha}}
\newcommand{\unfold}{{\sf unfold}}
\newcommand{\unfolda}{{\sf unfold}}
\newcommand{\opsf}{{\sf op}}
\newcommand{\opsfc}[3]{{\sf op}({\langle\overline{#1_1}\rangle}.{#3_1},\ldots,\langle\overline{{#1_#2}}\rangle.{#3_#2})}
\newcommand{\Abs}{\sf Abs}
\newcommand{\App}{\sf App}
\newcommand{\Set}{\mathsf{Set}}
\newcommand{\Nom}{{\sf Nom}}
\newcommand{\supp}{\mathsf{supp}}
\newcommand{\grp}{\frak{S}(\mathcal{V})}
\newcommand{\setone}{X}
\newcommand{\settwo}{Y}
\newcommand{\setthree}{W}
\newcommand{\setfour}{Z}
\newcommand{\setS}{S}
\newcommand{\elone}{u}
\newcommand{\eltwo}{v}
\newcommand{\elthree}{w}
\newcommand{\elfour}{z}
\newcommand{\f}[1]{{#1}_{\sf fs}}
\newcommand{\one}{\mathrel{\rightarrow}}
\newcommand{\onebeta}{\one_{\beta}}
\newcommand{\onebot}{\mathrel{\rightarrow_\bot}}
\newcommand{\betaa}{\beta_{\alpha}}
\newcommand{\onebetaa}{\one_{\betaa}}
\newcommand{\betaheada}{\beta_{1}}
\newcommand{\betaweakheada}{\beta_{2}}
\newcommand{\betatopa}{\beta_{3}}
\newcommand{\onebetaheada}{\one_{\betaheada}}
\newcommand{\onebetaweakheada}{\one_{\betaweakheada}}
\newcommand{\onebetatopa}{\one_{\betatopa}}
\newcommand{\finbetatopa}{\fin_{\betatopa}}
\newcommand{\fin}{\mathrel{\rightarrow\!\!\!\!\!\rightarrow}}
\newcommand{\finbeta}{\fin_{\beta}}
\newcommand{\finbetaheada}{\fin_{\betaheada}}
\newcommand{\finbetaweakheada}{\fin_{\betaweakheada}}
\newcommand{\many}{\mathrel{\rightarrow\!\!\!\!\!\rightarrow\!\!\!\!\!\rightarrow}}
\newcommand{\manybeta}{\many_{\beta}}
\renewcommand{\L}{\Lambda}
\newcommand{\Lbi}{{\Lambda}^\infty}
\newcommand{\Lbif}{\Lambda^\infty_{\rm f{}fv}}
\newcommand{\Lbifs}{\Lambda^\infty_{\sf fs}}
\newcommand{\Const}{{\mathcal C}}
\newcommand{\Lc}{\Lambda(\Const)}
\newcommand{\Lbic}{{\Lambda}(\Const)^\infty}
\newcommand{\Lboti}{\Lambda^\infty_\bot}
\newcommand{\Lbotif}{\Lambda^\infty_{\bot{\rm f{}fv}}}
\newcommand{\fcoalg}{{\sf T}}
\newcommand{\ialg}{{\sf I}}
\newcommand{\ialga}{{\sf I_\alpha}}
\newcommand{\tffv}{\Tiffv}
\newcommand{\tffvs}{P}
\newcommand{\Lbia}{{\Lambda}^\infty/{=_\alpha}}
\newcommand{\Lbaone}{\Lambda/\hspace{-0.4em} \alphaconv}
\newcommand{\dab}{d_\alpha^\infty}
\newcommand{\Lbifa}{\mathsf{\Lambda}^{\infty}_\alpha} 
 \newcommand{\Lbifaone}{\Lbif/  \!\! \alphaconv}
\newcommand{\Lbaonec}{\Lambda(\Const)/\hspace{-0.4em} \alphaconv}
\newcommand{\Lbifaonec}{\Lambda(\Const)^\infty_{\rm f{}fv}/  \!\! \alphaconv}
\newcommand{\Lcai}{(\Lambda(\Const)/  \!\! \alphaconv)^\infty}
\newcommand{\TSig}{{T_\Sigma}}
\newcommand{\TSiga}{{T_\Sigma}/\hspace{-0.4em} \alphaconv}
\newcommand{\TSigai}{({T_\Sigma}/\hspace{-0.4em} \alphaconv)^\infty}
\newcommand{\TSigi}{T_\Sigma^\infty}
\newcommand{\Tiffv}{(\TSigi)_{\rm f{}fv}}
\newcommand{\TSigia}{\tffv/\hspace{-0.4em} \alphaconv}
\newcommand{\Tfinal}{(\TSigi)_{\sf fs}}
\newcommand{\Tfinala}{\TSigai_{\sf fs}}
\newcommand{\clba}{(\Lbaone)^{\infty}}
\newcommand{\cflba}{(\Lbaone)^{\infty}_{\sf fs}}
\newcommand{\Lib}{\Lbi}
\newcommand{\Lb}{\Lambda}
\newcommand{\setU}{{\mathcal U}}
\newcommand{\ogre}{{\sf ogre}}
\newcommand{\ibv}{{\sf Pinfbv}}
\newcommand{\ibvnf}{{\sf infbv}}
\newcommand{\Vterm}{{\sf allfv}}
\newcommand{\Iterm}{{\sf  I}} 
\newcommand{\Omegaterm}{{\sf \Omega}}
\newcommand{\Yterm}{{\sf fix}}
\newcommand{\bohmtree}{{ BT}}
\newcommand{\lltree}{{LLT} }
\newcommand{\bertree}{{ BeT} }
\newcommand{\bohmtreea}{{\sf BT}_{\alpha} }
\newcommand{\lltreea}{{\sf LLT}_{\alpha} }
\newcommand{\bertreea}{{\sf BeT}_{\alpha} }
\newcommand{\setBT}{\mathcal{BT}}
\newcommand{\setLLT}{\mathcal {LLT}}
\newcommand{\setBerT}{ \mathcal{B}\it{e}\mathcal{T}}
\newcommand{\setBTa}{\setBT_{\alpha}}
\newcommand{\setLLTa}{\setLLT_{\alpha}}
\newcommand{\setBTffv}{(\setBT)_{\rm f{}fv}}
\newcommand{\setLLTffv}{(\setLLT)_{\rm f{}fv} }
\newcommand{\funsetBTa}{{\sf S}_{\sf BT}}
\newcommand{\funsetLLTa}{{\sf S}_{\sf LLT}}
\newcommand{\funsetBerTa}{{\sf S}_{\sf BerT}}
\newcommand{\hbis}{\sim_{\mathit{hnf}}}
\newcommand{\whbis}{\sim_{\mathit{whnf}}}
\newcommand{\ProofRule}[3]{
\prooftree
   {#1}  
\justifies 
   {#2}
 \using (#3)
\endprooftree
}
\newcommand{\rules}[2]{\mbox{$\frac%
    {\mbox{\normalsize \rule[-5pt]{0pt}{14pt} $#1$}}
    {\mbox{\normalsize \rule[0pt]{0pt}{10pt}$#2$}}$}}
\newcommand{\fgt}{U}
\newcommand{\fgtu}{U}
\newcommand{\pullbackcorner}[1][dr]{\save*!/#1-1.5pc/#1:(-1.2,1.2)@^{|-}\restore}
\newcommand{\SWpullbackcorner}[1][ur]{\save*!/#1+1.2pc/#1:(1,-1)@^{|-}\restore}
\newcommand{\amap}[1]{[-]^{(#1)}_\alpha}
\begin{document}

\title{Nominal Coalgebraic Data Types with Applications to Lambda Calculus}
\author[A.~Kurz]{Alexander Kurz}
\address{Department of Computer Science, University of Leicester\\
  Leicester, UK}
\email{\{ak155, ps65, fdv1\}@mcs.le.ac.uk, daniela.petrisan@gmail.com}

\author[D.~Petri\c{s}an]{Daniela Petri\c{s}an}
\address{\vspace{-18 pt}}

\author[P.\ Severi]{Paula Severi}
\address{\vspace{-18 pt}}

\author[F.-J.!de Vries]{Fer-Jan de Vries}
\address{\vspace{-18 pt}}

\begin{abstract}
  We investigate final coalgebras in nominal sets. This allows us to
  define types of infinite data with binding for which all
  constructions automatically respect alpha equivalence. We give
  applications to the infinitary lambda calculus.
\end{abstract}
\maketitle

\vspace{-12 pt}\tableofcontents\vspace{-12 pt}
\enlargethispage{2\baselineskip}

\section{Introduction}
We investigate types of infinite data with binding. A leading example
is the infinitary $\lambda$-calculus. To construct explicitly a
domain of infinitary $\lambda$-terms, one usually starts from finite
$\lambda$-terms and then applies two constructions: Metric completion
to obtain infinite terms and quotienting up to
$\alpha$-equivalence. Although each of these constructions appear to
be routine, we show that their combination is more subtle than one
might think at first. For example, one either needs to assume
uncountably many variables or find a solution to the problem that, in
the case of countably many variables, metric completion does not
commute with quotienting by $\alpha$-equivalence.
On the other hand, general principles suggest to construct infinitary
$\lambda$-terms more abstractly as final coalgebras in the category of
nominal sets. It allows us to treat infinitary data types with binding
in general and provides a principle of definition and proof by
coinduction.  We also show that the syntactic approach of
completing and quotienting agrees with the semantic approach of final
coalgebras in nominal sets.

To summarise, the paper contributes to coalgebra, to nominal sets, and
to the infinitary $\lambda$-calculus. To coalgebra, by showing that
as for coalgebras over sets also over nominal sets we obtain from
finality a definition and proof principle of coinduction. To nominal
sets, by investigating limits and introducing notions of safe maps and
bound variables. To the infinitary $\lambda$-calculus, by clarifying
the fundamental constructions in the case of countably many variables
and by showing that the informal reasoning with $\alpha$-equivalence
classes of infinitary $\lambda$-terms is indeed mathematically precise.

In the remainder of the introduction we outline the contents of the paper.

\bigskip\noindent\textbf{Nominal sets} were introduced in
\cite{gabb-pitt:lics99}, but see also \cite{Hofmann99,fiore:lics99} for
related proposals. Roughly speaking, a nominal set is a set $X$
equipped with an action of a set of permutations on some countably
infinite set $$\setvars$$ of `names' or `variables'. One can then
define the support of some $x\in\setone$ as the smallest set of
variables on which $x$ depends (we review the precise definitions in
Section~\ref{sec:nominal-sets}), thus giving an abstract account of
the `free variables' in $x$. It is characteristic of nominal sets that
all elements have finite support.
In other words, modelling syntax in nominal sets requires us to only
consider terms with finitely many free variables. But, and this is one
of the themes of this paper, it is possible to have terms with
infinitely many bound variables. (The question what may constitute an
abstract account of bound variables in nominal sets will be discussed
in Section~\ref{sec:limits-in-Nom}.)

\medskip\noindent\textbf{Variable binding} in nominal sets can be
described by a type constructor $\setone\mapsto [\setvars]\setone$
which can be understood as a quotient
$$\setvars\times\setone\to [\setvars]\setone$$
identifying elements $(v,x)$ up to $\alpha$-equivalence, that is, up
to renaming of the `bound' variable $v$. For example, whereas
$\lambda$-terms are given by the initial algebra of the functor
\begin{equation}\label{eq:TLambda}
  \TLambda \ \setone =\mathcal{V}+\mathcal{V}\times \setone+\setone \times \setone
\end{equation}
it was shown in \cite{gabb-pitt:lics99} that $\lambda$-terms up to
$\alpha$-equivalence are given by the initial algebra of the functor
\begin{equation}\label{eq:FLambda}
\FLambda \  \setone = \setvars+[\setvars]\setone+ \setone\times \setone\end{equation} 

\medskip\noindent\textbf{Alpha-structural recursion}
\cite{Pitts:asr,PittsAM:alpsri,PittsMGS2011notes} is the induction
principle that ensues from syntax as an initial algebra in the
category $\Nom$ of nominal sets. For example, the classic definition
of substitution in the $\lambda$-calculus \cite{bare:1984}
\renewcommand{\arraystretch}{1.5}
\begin{equation} \label{definition:informal:substitution}
  \begin{array}{rcll}
    y [x:=N] &  \ = \ & \left \{ \begin{array}{ll}
        N & \mbox{ if $y=x$}, \\
        y & \mbox{otherwise}, 
      \end{array} \right .
    \\
    (P Q)    [x:=N] &  = &       (P[x:=N] Q[x:=N]),    \\ 
    (\lambda y. P)   [x:=N]    &  = & \lambda y. (P [x:=N]) & 
    \mbox{ if $y \not \in \fv(N) \cup \{x\}$}.
  \end{array}
\end{equation}
is not an inductive definition in the usual sense. 
  Because of the side condition, substitution is only a partial function on raw terms.
But, as explained in detail in \cite{Pitts:asr},
\eqref{definition:informal:substitution} is an inductive definition
according to \eqref{eq:FLambda}. Moreover, \cite{Pitts:asr} 
establishes a general induction principle for inductively defined data
types with variable binding, explaining when partially defined
functions in $\Set$ give rise to totally defined functions in $\Nom$.

\medskip\noindent\textbf{Alpha-structural corecursion} introduced
in this paper is the analogue of $\alpha$-structural recursion for
coinductive datatypes. For example, in the study of the infinitary
$\lambda$-calculus \cite{kenn:klop:slee:vrie:1995b,KKSV97,KV03}, which
we review in Section~\ref{sec:inf-lambda}, one is interested in the
final coalgebra of $\FLambda$. We describe the corecursion principle
ensuing from final coalgebras in nominal sets and show that
\eqref{definition:informal:substitution} is indeed a coinductive
definition of substitution for infinitary $\lambda$-terms.

\medskip\noindent\textbf{Infinitely many free variables in a term}
(Section~\ref{sec:ilcinnom}), which do appear if we take the final
coalgebra of $\TLambda$ in sets, pose a problem. To see this, note
that \eqref{definition:informal:substitution} becomes an inductive
definition by choosing a suitable representative $\lambda y.P$ such
that $y \not \in \fv(N)\cup \{x\}$. This approach is not immediately
viable for the infinitary $\lambda$-calculus, because we may have terms that exhaust all the available
variables, so that we cannot find a fresh $y$.
For example, consider the infinite $\lambda$-term $\Vterm = x_0 (x_1 (x_2 (\ldots)))$ which contains all variables from $\setvars$.  In the following
$\beta$-step
\begin{equation}
  \label{eq:wrong-beta}
  (\lambda x_0 x_1. x_0 x_1) \Vterm \onebeta (\lambda x_1. x_0 x_1)
  [x_0 := \Vterm ]  
\end{equation}
we have that $x_1 \in \fv(\Vterm)$ and, therefore, the $x_1$ in $\lambda x_1. x_0 x_1$ should be
replaced by some fresh variable, which is impossible because $\Vterm$
contains all of them~\cite{Salibra01nonmodularityresults}.

\medskip\noindent\textbf{Restricting to finitely many free variables},
as opposed to allowing $\setvars$ to be uncountable, is the solution
adopted in this paper (but we will come back to infinitely many free variable in Section~\ref{section:infinitelymanyfreevariables}). That is, in our example, we will consider the
set
\begin{equation} \label{def:Lbif} \Lbif = \{ M \in \Lbi \mid \fv(M)
  \mbox{ is finite }\}
\end{equation}
of $\lambda$-terms with \emph{finitely many free variables}, avoiding
terms such as $\Vterm$.  On the one hand, finitely many free variables
are sufficient in order to capture the infinite normal forms of terms
representing programs, since the limit of an infinite
$\beta$-reduction sequence starting from a finite term has always a
finite number of free variables. On the other hand, restricting to
finitely many free variables has the advantage of allowing us to work
with nominal sets.

\medskip\noindent\textbf{Infinitely many bound variables} must be
allowed, since additional fresh variables may be needed at each
$\beta$-reduction step to avoid capture. For example, consider the
finite term $\ibv \equiv \Yterm (\lambda f x y. x y (f (x y)))$ which
has the following reduction sequence:
\[
\begin{array}{ll}
\ibv   
& \fin_{\beta} \lambda x  y. x y (\ibv (x  y )) \\
& \alphaconv  \lambda  x_0 x_1. x_0  x_1  (\ibv (x_0  x_1)) \\
& \fin_{\beta} \lambda  x_0 x_1. x_0  x_1  (\lambda  y. x_0  x_1    y  
                     (\ibv (x_0  x_1  y)))     \\ 
     & \alphaconv  \lambda x_0 x_1. x_0  x_1  
     (\lambda  x_2. x_0  x_1  x_2  (\ibv (x_0  x_1  x_2))) \\
  & \fin_{\beta}   
  \lambda x_0 x_1. x_0  x_1  
     (\lambda  x_2. x_0  x_1  x_2  (\lambda y. x_0 x_1 x_2 y (\ibv (x_0 x_1 x_2 y)))) \\
      & \alphaconv  \lambda x_0 x_1. x_0  x_1  
     (\lambda  x_2. x_0  x_1  x_2 
     (\lambda x_3. x_0 x_1 x_2 x_3 (\ibv (x_0 x_1 x_2 x_3))))\\
   &   \vdots 
\end{array}
\]
The limit of the above sequence is the infinite term:
\[
\ibvnf \equiv  \lambda x_0.  \lambda  x_1. x_0  x_1  
(\lambda  x_2. x_0  x_1  x_2 (\lambda x_3. x_0 x_1 x_2 x_3 (\ldots))).
\]
The term $\ibvnf$  has an infinite number of bound variables.
All the terms in its $\alpha$-equivalence class 
have an infinite number of bound variables.

\medskip\noindent\textbf{The different classes of $\lambda$-terms}
arising from the discussion above are summarised in the following
picture, which is one of the contributions of our work. Previous work
on infinitary $\lambda$-calculus either assumed uncountably many
variables or did not make the careful distinctions discussed below.
\begin{equation}
  \label{eq:classeslambda}
\begin{gathered}
  \xymatrix@R=15pt@C=30pt
{
\Lambda\  \ar@{>->}[r]\ar@{->>}^{}[dd]& \Lbi\ar@{->>}^{}[d]  &   \ \Lbif\ar@{->>}^{}[d]\ar@{_(->}[l]\\
& \Lbia\ar@{>->}[d] & \Lbifaone\ar@{->}^{\cong}[d]\ar@{}[l]|{(*)}\\
\Lbaone\ \ar@{>->}[r] & \clba &   \ \cflba  \ar@{_(->}[l] 
}
\end{gathered}
\end{equation}
In the diagram, $\Lambda$ denotes the set of finite
$\lambda$-terms. Vertical arrows $\xymatrix{\ar@{->>}[r]&}$ denote
quotienting by $\alpha$-equivalence. Infinitary $\lambda$-terms are
constructed by metric completions $\xymatrix{\ar@{>->}[r]&}$.  The
rightmost column arises from maps $\xymatrix{&\ar@{_(->}[l]}$ that
restrict to terms with finitely many free variables.  Going
first right and then down in the diagram means to first complete to
infinitary terms and then to quotient by $\alpha$-equivalence, whereas
going first down and then right, means to first quotient and then to
complete. If both ways of constructing infinitary terms up-to
$\alpha$-equivalence coincide, then we say that \emph{metric
  completion commutes with quotienting by $\alpha$-equivalence.} The
two main results here are the following.
\begin{itemize}
\item The vertical map in the middle column $\Lbi\to\clba$ is not onto
  (Example~\ref{ex:alpha-inf-lam}), hence metric completion and
  quotienting by $\alpha$-equivalence do not commute for terms with
  countably many free variables (as opposed to the case of uncountably many variables, see Theorem~\ref{thm:uncountable-vars}).
\item The vertical map $\Lbif\to\cflba$ in the right-hand column is
  onto \cite[Theorem 22]{kurzetal:cmcs2012}, in other words,
  restricted to terms with finitely many free variables, the two
  operations of metric completion and quotienting by
  $\alpha$-equivalence do commute.
\end{itemize}

\medskip\noindent\textbf{Nominal coalgebraic datatypes for a binding
  signature} (Section~\ref{sec:nominalcodatatypes}) generalise 
\eqref{eq:classeslambda} to the diagram below (where we omitted the
middle row obtained from epi-mono factorisations).
\begin{equation}
  \label{eq:classesbindsig}
 \begin{gathered}
  \xymatrix@C=30pt
{
\TSig\  \ar@{>->}[r]\ar@{->>}^{}[d]& \TSigi\ar@{->}^{}[d]  &   \ \tffv\ar@{->>}^{}[d]\ar@{_(->}[l]\ar@{}[dl]|{(*)}\\
\TSiga\ \ar@{>->}[r] & (\TSiga)^\infty\ &   \ \Tfinala
\ar@{_(->}[l] \\
}
\end{gathered}
\end{equation}
$\Sigma$ is a so-called binding signature \cite{fiore:lics99} and $\TSig$
and $\TSiga$ are initial algebras. The middle column of metric
completions arises via unique arrows $\xymatrix{\ar@{>->}[r]&}$ from
an $\omega$-colimit into an $\omega^\textit{op}$-limit as in
\cite[Proposition~3.1]{barr99}. In the right-hand column,
$\Tfinala$ is the final coalgebra in $\Nom$ and $\tffv$ can be
defined as making the right-hand square into a pullback. The theorem
that metric completion commutes with quotienting by
$\alpha$-equivalence then follows from one of our main technical
contributions, namely that pulling back the (not necessarily
surjective) middle vertical arrow $\TSigi\to(\TSiga)^\infty$ along $\xymatrix{(\TSiga)^\infty\ &   \ \Tfinala
\ar@{_(->}[l]}$ yields a
surjection $\tffv\to\Tfinala$.

\medskip\noindent\textbf{Representing limits in nominal sets}
(Section~\ref{sec:limits-in-Nom})  provides the setting which
enables us to give a semantic proof of the result discussed in the
previous paragraph. To see the connection, denote by
$\fgt:\Nom\to\Set$ the forgetful functor and use the (well-known)
result that the middle column of 
\eqref{eq:classesbindsig} arises as limits of $\omega^\textit{op}$-chains as
depicted in
\begin{equation}
  \label{eq:}
 \begin{gathered}
  \xymatrix
{
  \fgtu\setone_0\ar@{<-}[r]\ar@{->>}[d] & \fgtu\setone_1 \ar@{<-}[r]\ar@{->>}[d] & \fgtu\setone_2 \ar@{<-}[r]\ar@{->>}[d] & \cdots  \ar@{<-}[r] & \lim\fgtu\setone_n\ar@{->}[d]\ar@{<-^)}[r]&P \ar@{}[dl]|{(*)}\ar@{->>}[d] \\
\fgtu\settwo_0\ar@{<-}[r] & \fgtu\settwo_1 \ar@{<-}[r] & \fgtu\settwo_1 \ar@{<-}[r] & \cdots  \ar@{<-}[r] & \lim\fgtu\settwo_n\ar@{<-^)}[r]&\fgtu\lim\settwo_n
}
\end{gathered}
\end{equation}
Similarly, the bottom right-hand corner of 
\eqref{eq:classesbindsig} is given by the limit in $\Nom$ of the
lower sequence.  The question whether metric completion commutes with
$\alpha$-equivalence now becomes an instance of a more general
question. Given an $\omega^\textit{op}$-sequence of surjections $X_n\to
Y_n$ in $\Nom$, can the limit $\lim\settwo_n$ in $\Nom$ be represented
by a surjection $P\to \fgtu\lim\settwo_n$, where $P$ is defined to be
the pullback $(*)$?  A careful analysis of this situation is carried
out in Section~\ref{sec:limits-in-Nom}. In particular, the notions of
safe squares, safe maps and of the bound variables relative to a map
are introduced and it is shown that $P\to \fgtu\lim\settwo_n$ is onto
if all vertical maps and squares in the chain are safe. We also explore the relationship
between safe maps and maps with orbit-finite fibres. 

\medskip\noindent\textbf{As  applications}, we 
    give 
a general definition of substitution on     
the final coalgebra  coming from a binding signature.
We also give 
    corecursive definitions of various notions
    of infinite normal form (B\"ohm, L\'evy-Longo and Berarducci
    trees) on $\alpha$-equivalence classes of terms.
We also show a solution, suggested to us by Pitts, of how to 
treat  infinitely many free variables in nominal sets.

\medskip\noindent\textbf{Related Work.} 
  This paper generalises \cite[Theorem 22]{kurzetal:cmcs2012}
  from the particular functor describing $\lambda$-calculus to
  arbitrary binding signatures. Along with this generalisation we
  replaced the syntactic proof (depending on a concrete presentation
  of the functor) of \cite[Theorem 22]{kurzetal:cmcs2012} with a
  semantic argument for the generalised
  Theorem~\ref{thm:comp-alpha-comm} of this paper. In particular, the
  new material in Section~\ref{sec:limits-in-Nom} allows us to show
  that all elements of the final coalgebra are presented by infinite
  terms with finitely many free variables, using only semantic (that is,
  category theoretic) properties of the $\Nom$-endofunctors of \eqref{eq:gramF}.

\section{Preliminaries on Algebra and Coalgebra}
\label{sec:alg-coalg}

\newcommand{\ccal}{\mathcal{C}}

\textbf{Finite data types, or \emph{algebraic data types}}, can be
studied as initial algebras for functors on different
categories. Consider an endofunctor $F$ on a category $\ccal$ and an
object $X$ of $\ccal$. An $F$-algebra $(X,\alpha)$ with carrier $X$ is
a $\ccal$-morphism $\alpha:FX\to X$. Given two $F$-algebras
$(X,\alpha)$ and $(Y,\beta)$ an $F$-algebra morphism is an arrow
$f:X\to Y$ such that $f\circ \alpha=\beta\circ Ff$. The $F$-algebras
thus form a category. One can prove the existence of an initial object in this
category under certain assumptions on the cocompleteness 
of $\ccal$ and on the ``size" of the
functor $F$. Assume that $\ccal$ has colimits. Then we can consider the
following sequence, starting from the initial $\ccal$-object $0$.
\begin{equation}
  \label{eq:in-chain-prel}
  \xymatrix{
0\ar[r]^{!}&F0\ar[r]^{F!}&F^20\ar[r]^{F^2!}&\ldots\ar[r] &F^\omega0\ar[r] & \cdots
}
\end{equation}
Above $!:0\to F0$ is the unique map from the initial object $0$ to $F0$. When we reach a limit ordinal $\alpha$ we define $F^\alpha0$ as $\mathrm{colim}_{\beta<\alpha} F^\beta0$. The colimit of this sequence, when it exists, is the carrier of the initial $F$-coalgebra. Notice that for each ordinal $\iota$ we have a canonical map from $F^\iota0$ into the initial algebra. For example, if $F$ preserves colimits of $\omega$-chains, 
then the initial $F$-algebra is $F^\omega 0$.

Colimits of $\omega$-chains are an example of a well studied class of colimits, namely \emph{filtered colimits}~\cite{AdamekR:lpac}. Recall that a filtered colimit is a colimit of a diagram $J:\mathsf D\to\mathsf C$ where $\mathsf D$ is a category such that  any finite diagram in $\mathsf D$ has a cocone. Functors that preserve filtered colimits are called \emph{finitary}. Finitary functors have an initial algebra and the computation of the colimit of the initial chain stops after $\omega$ steps.

\medskip\noindent\textbf{Infinitary data types, or \emph{coalgebraic
    data types}}, are understood as final coalgebras for suitable
functors. An $F$-coalgebra $(X,\xi)$ is defined as an arrow $\xi:X\to
FX$. A coalgebra morphism between $(X,\xi)$ and $(Y,\zeta)$ is a
$\ccal$-morphism $f:X\to Y$ such that $Ff\circ\xi=\zeta\circ
f$. Similarly to the initial algebra situation, we can consider the
final sequence
\begin{equation}
  \label{eq:fin-chain-prel}
  \xymatrix{
1&F1\ar[l]_{!}&F^21\ar[l]_{F!}&\ldots\ar[l] &F^\omega1\ar[l] & \cdots\ar[l]
}
\end{equation}
where $1$ is the final $\ccal$-object and $!$ is the unique arrow from
$F1$ to $1$. Assume that $\ccal$ has limits. For limit ordinals we compute the limit of the diagram
constructed previously. The limit of the final sequence, if it exists,
is the carrier of the final coalgebra. Therefore, functors that
preserve limits of $\omega^\mathit{op}$-chains, sometimes called
\emph{continuous}, have a final coalgebra whose carrier is
$F^\omega1$. We also have canonical maps from the final coalgebra to each
$F^\iota1$ for all ordinals $\iota$.

\medskip\noindent\textbf{The metric completion} of $F^\omega0$ is
given by $F^\omega1$, see Barr \cite[Proposition
3.1]{barr99}. In more detail, if $F$ is an endofunctor on $\Set$ and
$F0$ is nonempty, then one can equip the set $F^\omega 1$ with a metric and
prove that it is the metric completion of $F^\omega 0$. The metric on $F^\omega 1$ is obtained using the
projections $p_n:F^\omega 1\to F^n1$. Explicitly, for $t,t'\in
F^\omega 1$ we put $d(t,t')=2^{-\max\{n\ |\ p_nt=p_nt'\}}$.

\section{Preliminaries on Infinitary Lambda Calculus}
\label{sec:inf-lambda}

\subsection{Infinitary Terms as a Final Coalgebra}

We assume familiarity with basic notions and notations of the finite
$\lambda$-calculus~\cite{bare:1984}. 
The set $\Lb$ of finite $\lambda$-terms is defined by
induction from the grammar:
\begin{equation} \label{definition:coinductive:infiniteterms} M ::=
   x \mid (\lambda x. M) \mid (MM) 
\end{equation}
where $x$ ranges over a given set $\setvars$ of variables.

 First we explain how the set
$\Lbi$ of finite and infinite $\lambda$-terms can be constructed as
the metric completion of the set $\Lb$ of finite $\lambda$-terms. Then
we will briefly recall some notions and facts of infinitary
$\lambda$-calculus~\cite{KKSV97,KV03}.  The notion of 
$\alpha$-equivalence will be given in Definitions~\ref{def:alphafinitelambdaterms}
and~\ref{def:alpha-inf}.

The idea of putting a metric on a set of terms goes at least back to
Arnold and Nivat~\cite{arnoniva80}. To do so we define truncations.

 \begin{defi}[Truncation] \label{def:truncation}
The truncation
   of a term $M \in \Lb$ at depth $n \in \nat$ is defined by induction  on $n$:
   \begin{equation}
     \label{eq:truncation}
     \begin{array}{ll}
       M^0 & = *, \\\\
       M^{n+1}&  = \left \{ \begin{array}{ll}
           x & \textit{ if $M=x\in \mathcal{V}$},\\
           \lambda x.N^n & \textit{ if $M=\lambda x.N$}, \\
           N^nP^n & \textit{ if $M=NP$}.\\ 
         \end{array} \right . 
     \end{array}
   \end{equation}
   where $*$ is any constant not appearing in the syntax of the $\lambda$-calculus, for example $\emptyset$.
 \end{defi}

\begin{defi}[Metric] \label{def:metric} We define  
  a metric $d: \Lb \times \Lb \to [0,1]$ by
  \begin{equation}
    \label{eq:metric}
    d(M, N) = 2^{-m},
  \end{equation}
  where $m = \sup \{ n \in \nat \mid M^n = N^n \} $ and we use the
  convention $2^{-\infty}=0$.
\end{defi}

In fact,  $(\Lb,d)$ is an ultrametric space, since for
all $M,N,P\in\Lb$ we have
$d(M,N)\le\max\{d(M,P),d(P,N)\}$, as one can easily check.

The set $\Lbi$ of finite and infinite $\lambda$-terms 
is now defined as the metric completion
of the set $\Lb$ of finite terms with respect to the metric $d$.
Alternatively, $\Lbi$ can be defined by interpreting
  (\ref{definition:coinductive:infiniteterms}) as a coinductive
  definition. The fact that both definitions coincide is a consequence
of Barr's theorem on final coalgebras for bicontinuous $\Set$
endofunctors. 

Indeed, interpreting (\ref{definition:coinductive:infiniteterms})
  coinductively amounts to taking as $\lambda$-terms the elements of the
  final coalgebra for the $\Set$-endofunctor
\begin{equation}
  \label{eq:set-funct-lambda}
  \TLambda \ \setone =\mathcal{V}+\mathcal{V}\times \setone+\setone \times \setone.
\end{equation}
Notice that the set $\Lb$ of finite $\lambda$-terms constitutes the
initial algebra for $\TLambda$. A closer look at the proof of
Barr~\cite[Theorem~3.2 and Proposition~3.1]{barr99} shows now that the
metric $d$ on $\Lb$ of Definition~\ref{def:metric} coincides with the
metric induced by the final coalgebra. Hence, by
\cite[Proposition~3.1]{barr99}, the completion of the initial
$L$-algebra $\Lb$ in the metric $d$ is the final $L$-coalgebra.

To summarise, the final $\TLambda$-coalgebra
$(\Lbi,\unfold:\Lbi\to\TLambda(\Lbi))$ is the Cauchy completion of
$\Lb$ and we have a dense inclusion map $\iota:\Lb\to\Lbi$.  It is
well-known that the structure map of the final coalgebra
$\unfold:\Lbi\to \TLambda(\Lbi)$ is an isomorphism, hence the set
$\TLambda(\Lbi)$ can be equipped with a complete metric. 
The map $\unfold:\Lbi\to \TLambda(\Lbi)$ is the unique uniformly continuous map
 from $\Lbi$ to $\TLambda(\Lbi)$ making diagram~(\ref{eq:str-map-fin-coalg}) commutative:
\begin{equation}
  \label{eq:str-map-fin-coalg}
\vcenter{
  \xymatrix@R=10pt{
    \Lb\ar[d]_-{\iota}\ar[r]^-{\simeq} &
    \TLambda(\Lb)\ar[d]^-{\TLambda(\iota)} \\
    \Lbi \ar[r]_-{\unfold} &  \TLambda(\Lbi)
  }
}
\end{equation}

Having defined the set $\Lib$ of finite and infinite $\lambda$-terms we
now extend the usual syntactic conventions for finite $\lambda$-calculus to
infinitary $\lambda$-calculus.  Terms and variables will respectively be
written with (super- and subscripted) letters $M,N$ and $x,y,z$.
Terms of the form $(M_1 M_2)$ and $(\lambda x. M)$ will respectively be
called applications and abstractions.

The truncation of an infinite term $M\in\Lbi$ at depth $n$ is defined
just as in Definition \ref{def:truncation} by induction on $n$.
Observe that $(M^n)_{n\in\mathbb{N}}$ is a Cauchy sequence in
$(\Lbi,d)$ that converges to $M$.

The set of free and bound variables of a finite term $M$ is defined as
usual and denoted by $\fv(M)$ and $\bv(M)$ respectively.  We
extend $\fv(M), \bv(M)$ to infinitary terms $M \in \Lib$ using
truncations by
\[ \begin{array}{lll} \fv(M) = \bigcup_{n \in \nat} \fv(M^n)& \ \ \ &
  \bv(M) = \bigcup_{n \in \nat} \bv(M^n).
\end{array}\]
Also, $\var(M)=\fv(M) \cup \bv(M)$.

We define $\beta$-reduction on $\Lbi$ and denote it as $\onebeta$ in
the usual way: the smallest relation that contains $(\lambda x. P)Q
\onebeta P[x:=Q]$ and is closed under contexts.  The reflexive and
transitive closure of $\onebeta$ is denoted by $\finbeta$.  For the
definition of $\manybeta$ that assumes a sequence of reduction steps
of any ordinal length, see for
instance~\cite{kenn:klop:slee:vrie:1995b}.  Terms of the form
$(\lambda x. P)Q$ are called redexes. Normal forms are terms without
redexes and hence cannot be changed by further computation.

The definition of infinitary $\lambda$-calculus is completed by  enriching
the syntax~\eqref{definition:coinductive:infiniteterms} with a fresh constant
$\bot$ and then  adding
$\bot$-reduction, denoted by $\onebot$, defined as
 the smallest relation closed
under contexts and containing $M \onebot \bot$ for $M$ belonging to some
fixed set $\setU$ of \emph{meaningless terms}. If and only if the set
$\setU$ satisfies certain properties, the resulting infinitary
calculus is confluent and normalising, in which case each term has a
unique normal form~\cite{KKSV97,KV03,severidevries:rta2011}.

\subsection{Computing the Infinite Normal Forms using Corecursion}

The normal form of a $\lambda$-term can be thought to represent its
meaning, the maximal amount of information embodied in the term,
stable in the sense that it cannot be changed by further
computation. Note that this concept of meaning depends on the chosen
set $\setU$ of meaningless terms for which there is ample, uncountable
choice~\cite{severidevries:rta2011}.

For concrete sets of meaningless terms an alternative, ``informal''
corecursive definition of the normal form of a term in the
corresponding infinitary $\lambda$-calculus can sometimes be given. Three
of them are well known and they are recalled in 
\eqref{definition:informal:bohmtree}, \eqref{definition:informal:levylongotree}
and \eqref{definition:informal:berarduccitree}. 

In his book~\cite{bare:1984}, Barendregt argued that the terms without
head normal forms should be considered as meaningless terms.  Any
finite $\lambda$-term is either a head normal form (hnf), that is, a
term of the form $\lambda x_1 \ldots \lambda x_n .xN_1 \ldots N_m$, or
it is a term of the form $\lambda x_1 \ldots \lambda x_n .((\lambda
y.P)Q)N_1 \ldots N_m$ where the redex $(\lambda y.P)Q$ is called the
head redex. Starting with a term $M$ that is not in hnf one can
repeatedly contract the head redex. Either this will go on forever or
terminate with a hnf, which represents part of the information
embodied in a term. In the latter case one can repeat this process on
the subterms $N_i$ to try to compute more information.  This idea led
Barendregt to his elegant ``informal'' definition of the B\"ohm tree
$\bohmtree (M)$ of a term $M$, that we now recognise as a corecursive
definition.
\begin{equation} \label{definition:informal:bohmtree}
{\small
  \bohmtree  (M )   = \left\{
  \begin{array}{l}
  \lambda x_1 \ldots \lambda x_{n}. y  
  \bohmtree(M_1)  \ldots \bohmtree (M_{m}), \\
  \quad\quad\quad\quad\quad\quad\quad\quad\quad\quad { \textit{if }
  M \finbeta \lambda x_1 \ldots \lambda x_{n}. y M_1  \ldots M_m}, \\ 
  \bot \quad\quad\quad\quad\quad\quad\quad\quad\quad\textit{ otherwise}.
  \end{array}\right.
}
\end{equation}

The image of $\bohmtree$ is denoted as $\setBT$ and 
can be explicitly defined as follows.

\begin{defi}[Set of B\"ohm trees]
\label{informal:setbohmtrees}
The set  $\setBT$ of {\em B\"ohm trees} is defined as the maximal 
set such that  for all
 $M \in \setBT$, either $M = \bot$ or $M = \lambda x_1 \ldots \lambda x_n. y M_1 \ldots M_m$
where $M_1, \ldots, M_m \in \setBT$ for some $m,n \geq 0$.

\end{defi}

Clearly, any B\"ohm tree in $\setBT$ is an infinitary lambda term over the
syntax~\eqref{definition:coinductive:infiniteterms} enriched with $\bot$.

Taking for $\setU$ the set of terms without hnf, one can show using
the confluence property that the normal forms of the corresponding
infinitary $\lambda$-calculus satisfy the equations
in~(\ref{definition:informal:bohmtree}). That is, the B\"ohm tree of a
term $M$ is the normal form of $M$ in the infinitary
$\lambda$-calculus that equates all terms without head normal form
with $\bot$~\cite{bare:1984,KKSV97}.

Alternatively, as Abramsky has forcefully argued
in~\cite{Abramsky90thelazy}, one can take the set of terms without
weak head normal form (whnf) as set of meaningless terms. Any finite
$\lambda$-term is either a weak head normal form, that is, a term of
either of the two forms $xM_1 \ldots M_m$, or $\lambda x. N$, or it is
a term of the form $((\lambda y.P)Q)M_1 \ldots M_m$ where the redex
$(\lambda y.P)Q$ is called the weak head redex. In perfect analogy
with before, starting with a term $M$ that is not in whnf one can
repeatedly contract the weak head redex. Either this will go on forever or terminate with a whnf. In the latter case one 
can repeat this process on the subterms $M_i$ of
the tail of the whnf or on the subterm $N$ of its body to try to
compute more information. This describes a lazy computation strategy,
that postpones reduction under abstractions as much as possible.

The normal forms of the corresponding infinitary $\lambda$-calculus that equates all terms without weak
head normal form with $\bot$ satisfy the equations~(\ref{definition:informal:levylongotree})
that define the L\'evy-Longo tree $\lltree (M)$ of a term $M$
corecursively~\cite{long83,levy76,abraong93,KKSV97}.
\begin{equation} \label{definition:informal:levylongotree}
{\small
\lltree  (M )  = \left\{
\begin{array}{ll}
      y  \lltree(M_1)  \ldots \lltree (M_{m}) &
  \textit{ if } M \finbeta  y M_1  \ldots M_m, \\
     \lambda x. \lltree (N) &
  \textit{ if }M \finbeta \lambda x.N,\\
   \bot &\textit{ otherwise}.
\end{array}
\right.
}
\end{equation}

The image of $\lltree$ is denoted as $\setLLT$ and can be explicitly defined
as follows.

\begin{defi}[Set of   L\'evy-Longo trees]
\label{informal:setoflltrees}
The set  $\setLLT$ of L\'evy-Longo trees can be defined as the maximal set
that satisfies that whenever $M \in \setLLT$ then $M$ has one of the
following shapes:
\begin{enumerate}
\item 
either $M = \bot$, or 
\item $M = \lambda x. N$ for some $N \in \setLLT$, or 
\item  $M = y M_1 \ldots M_n $ for some $M_1, \ldots, M_n \in \setLLT$.
\end{enumerate}
\end{defi}

The least set of meaningless terms that gives rise to a confluent and
normalising infinitary $\lambda$-calculus is the set of terms without
a top normal form.  Here a term $M$ is a top normal form (tnf) if it
is either a variable, an abstraction or an application of the form
$M_1 M_2$ in which $M_1$ is a {\em zero term}, i.e.\ a term that cannot 
reduce to an abstraction.  The well-known term $\Omegaterm = (\lambda
x.xx)(\lambda x.xx)$ has no tnf. The normal forms of this calculus can
alternatively be characterised by the corecursive definition of the
Berarducci tree~\cite{bera96,KKSV97} $\bertree (M)$ of a term $M$:
\begin{equation} \label{definition:informal:berarduccitree}
{\small
\bertree  (M )  = 
\left\{
\begin{array}{ll}
  x &
  \textit{ if } M \finbeta x, \\
  \lambda x. \bertree (N) &
  \textit{ if } M \finbeta \lambda x.N, \\
 \bertree(N)  \bertree(P) &
  \textit{ if } M \finbeta N P 
  \textit{ and $N$ is a zero term}, \\
  \bot& \textit{ otherwise},
\end{array}
\right.
}
\end{equation}

The image of $\bertree$ is denoted as $\setBerT$ and can be
explicitly defined as follows.

\begin{defi}[Set of Berarducci  trees]
\label{informal:setofbertrees}
The set  $\setBerT$ of {\em Berarducci  trees} can be defined as the maximal
set that satisfies that whenever $M \in \setBerT$ then $M$ has one
of the following shapes:
\begin{enumerate}
\item 
$M = \lambda x. N$ for some $N \in \setBerT$, or
\item $M = y M_1 \ldots M_n$ for some $M_1, \ldots, M_n \in \setBerT$, or
\item $M = \bot M_1 \ldots M_n$ for some $M_1, \ldots, M_n \in \setBerT$, or
\item $M =  (( \ldots ) M_2 ) M_1$ for some $(M_i)_{i\geq 1}$ such that
$M_i \in \setBerT$ for all $i \geq 1$.

\end{enumerate}
\end{defi}

Some examples of  trees are shown in Figure \ref{figure:examplesoftrees}.

\begin{figure}
\[
\begin{array}{c|c|c|c}
\mbox{{\bf Term}} &
\mbox{{\bf B\"ohm tree}}& 
\mbox{{\bf L\'evy-Longo tree}}& 
\mbox{{\bf Berarducci tree}} \\  
M & \bohmtree(M)  & \lltree(M) & \bertree(M) \\
\hline  & & & \\
\Yterm \ x &
x (x (x ( \ldots))) &
x (x (x ( \ldots))) &
x (x (x ( \ldots))) \\
\Yterm (\lambda y x. x y) &
\lambda x_1. x_1  (\lambda x_2. x_2 ( \ldots)) &
\lambda x_1. x_1  (\lambda x_2. x_2 ( \ldots))  &
 \lambda x_1. x_1  (\lambda x_2. x_2 ( \ldots)) \\
\lambda x. \Omega  
&  
\bot
& 
\lambda x. \bot
&
 \lambda x. \bot \\
 \Yterm \ (\lambda x y. x) &
 \bot &
 \lambda x_1 \lambda x_2 \lambda x_3 \ldots &
 \lambda x_1 \lambda x_2 \lambda x_3 \ldots \\
 \Omega \Omega & 
 \bot &
 \bot &
 \bot \bot \\
 \Yterm (\lambda y. y x) &
 \bot &
 \bot &
(( (\ldots) x) x) x)
\end{array}
\]
\caption{Examples of B\"ohm, L\'evy-Longo and Berarducci trees}
\label{figure:examplesoftrees}
\end{figure}

It is possible to formalise
(\ref{definition:informal:bohmtree})-(\ref{definition:informal:berarduccitree})
using corecursion via the final $\TLambda$-coalgebra, provided we give
concrete reduction strategies to compute the various forms used in the
definitions. 
 However, in order to take into account
$\alpha$-equivalence, we will prove an $\alpha$-corecursion principle
based on nominal sets.

\section{Preliminaries on Nominal Sets}
\label{sec:nominal-sets}

We recall basic facts on nominal sets~\cite{gabb-pitt:lics99-j,Pitts-book}.

Consider a countably infinite set $\setvars$ of `variables' (or `atoms' or
`names') and the group $\grp$ of permutations on $\setvars$ generated
by transpositions, which are permutations of the form $(x\; y)$ that
swap $x$ and $y$. Consider a set $\setone$ equipped with an action of
the group $\grp$, denoted by $\cdot:\grp\times\setone \to \setone $.
We say that $\elone \in \setone$ is supported by a set $\setS
\subseteq \setvars$ when for all $\pi\in\grp$ such that $\pi(x)=x$ for
all $x \in \setS$ we have $\pi \cdot \elone =\elone$.  We say that
$\elone \in \setone$ is finitely supported if there exists a finite
$\setS \subseteq\setvars$ which supports $\elone$.

\begin{defi}[Nominal set]
  A nominal set $(\setone,\cdot)$ is set $\setone$ equipped with a
  $\grp$-action such that all elements of $\setone$ are finitely
  supported.  Given nominal sets $(\setone,\cdot)$ and
  $(\settwo,\cdot)$, a map $f:\setone \to \settwo$ is called
  \emph{equivariant} when $f(\pi\cdot \elone)=\pi\cdot f(\elone)$ for all
  $\pi\in\grp$ and $\elone \in\setone$. The category of nominal sets
  and equivariant maps is denoted by $\Nom$.
\end{defi}

A crucial property of nominal sets is that each element of a nominal
set has a least finite support, see~\cite{gabb-pitt:lics99-j}. Indeed, if two finite sets $\setS_1$
and $\setS_2$ support $\elone$, then their intersection also supports
$\elone$.  
\begin{notation}[Support and freshness]
The smallest finite support of $\elone$ is denoted by
$\supp(\elone)$.  If $x \in\mathcal{V}\setminus\supp(\elone)$ we say
that $x$ is \emph{fresh} for $\elone$, and write $x \# \elone$. More
generally, given two nominal sets $(\setone,\cdot)$ and
$(\settwo,\cdot)$, $\elone\in\setone$ and $\eltwo\in\settwo$, we write
$\elone\#\eltwo$ for $\supp(\elone)\cap\supp(\eltwo)=\emptyset$. Given $S,T\subseteq\setvars$, we write $S\#\elone$ for $\supp(\elone)\cap S=\emptyset$.
We also write $S \# T $ for $S \cap T = \emptyset$.
\end{notation}

\begin{rem}\label{rem:supp-shrink}
An important property of $\supp$ is that for every equivariant
$f:\setone \to \settwo$ and $\elone \in \setone$, we have
$\supp(f(\elone))\subseteq\supp(\elone)$.
\end{rem}

\begin{exa}
  The set of names $\mathcal{V}$ equipped with the evaluation action
  given by $\pi\cdot x=\pi(x)$ is a nominal set.
\end{exa}

\begin{exa}
  The finite subsets of atoms $\mathcal{P}_{\mathit{fin}}(\setvars)$
  form a nominal set with the pointwise action $\pi\cdot \setone
  =\{\pi(u)\ |\ u\in \setone \}$ for all $\setone
  \in\mathcal{P}_{\mathit{fin}}(\setvars)$.
\end{exa}

\begin{rem}\label{rem:supp-equiv}
  Notice that taking the support of elements of a nominal set $\setone$ gives an equivariant map 
$\supp:\setone\to\mathcal{P}_{\mathit{fin}}(\setvars)$. Indeed, one can show that for any $\elone\in\setone$ and $\pi\in\grp$ we have $\supp(\pi\cdot \elone)=\pi\cdot\supp(\elone)$. As a consequence, $\supp(\pi\cdot\elone)$ and $\supp(\elone)$ have the same cardinality for any permutation $\pi$.
\end{rem}

\begin{exa}\label{ex:lambda-nom}
  The set $\Lb$ of finite $\lambda$-terms with the action
  $\cdot:\grp\times\Lb\to\Lb$ inductively defined by
  \begin{equation}
    \label{eq:lam-act}    
    \begin{array}{rcl}
      \pi\cdot x&  = & \pi (x) \\
      \pi\cdot (\lambda x. M ) & = & \lambda\pi(x).(\pi\cdot M)\\
      \pi\cdot (MN)&  = &((\pi\cdot M)( \pi\cdot N))
    \end{array}
  \end{equation}
  is a nominal set. In this example we do not take into account
  $\alpha$-equivalence, so the support of a $\lambda$-term $M$ is the
  set of all variables occurring either bound or free in $M$.
\end{exa}

Given a $\grp$-action $\cdot$ on a set $\setone$, let $\f{\setone}$
denote the set
  \begin{equation}
    \label{eq:fs}
\f{\setone}=\{ \elone \in \setone \mid \elone \textit{ is finitely supported} \}.    
  \end{equation}
  Then $\cdot$ restricts to a
  $\grp$-action on $\f{\setone}$ and $(\f{\setone},\cdot)$ is a
  nominal set.
  
\begin{exa}\label{ex:lambda-inf-act}
  The set $\Lbi$ of finite and infinite $\lambda$-terms can be
  equipped with the action $\cdot:\grp\times\Lbi\to\Lbi$ defined
  \emph{coinductively} by~(\ref{eq:lam-act}).  Alternatively,
  $\pi\cdot(-)$ can be defined using the universal property of the
  metric completion, as the unique map that extends
  $\Lb\stackrel{\pi\cdot
    (-)}{\longrightarrow}\Lb\stackrel{\iota}{\longrightarrow}\Lbi$.  Observe that $(\pi\cdot M)^n=\pi\cdot M^n$ for all
  $M\in\Lbi$ and $n\in\nat$.  Notice that $(\Lbi,\cdot)$ is not a
  nominal set since the set of variables in a term, and hence its
  support, can be infinite.  But $(\f{(\Lbi)},\cdot)$ is a nominal set
  and $\supp(M)=\var(M)$ for all $M\in \f{(\Lbi)}$.
\end{exa}

\begin{defi}[Abstraction]\label{def:abstraction} Let
  $(\setone,\cdot)$ be a nominal set. One defines $\alphaconvg$ on
  $\setvars \times \setone$ by
  \begin{equation}
    \label{eq:equiv-abstr}
    (x_1, \elone_1) \alphaconvg (x_2, \elone_2) \Leftrightarrow
    (\exists \ z \fresh \{x_1, \elone_1, x_2, \elone_2\}) (x_1 \ z)
    \cdot \elone_1 = (x_2 \ z) \cdot \elone_2
  \end{equation}
  The $\alphaconvg$-equivalence class of $(x,\elone)$ is denoted by
  $\abs{x}{\elone}$.  The abstraction $[\setvars]\setone$ of the
  nominal set $\setone$ is the quotient $(\setvars \times
  \setone)/\!\!\alphaconvg$. The $\grp$-action on $[\setvars]\setone$ is
  defined by
    \begin{equation}
      \label{eq:act-abstraction}
\pi \cdot \abs{x}{\elone} =
    \abs{\pi \cdot x}{\pi \cdot \elone}.      
    \end{equation}
    Given equivariant $f:(\setone,\cdot)\to(\settwo,\cdot)$, we define
    $[\setvars]f:[\setvars]\setone\to[\setvars]\settwo$ by
\begin{equation}
  \label{eq:abs-morph}
  \abs{x}\elone\mapsto\abs{x}f(\elone).
\end{equation}
\end{defi}

\begin{defi}\label{def:concretion}[Concretion]
  Let
  $(\setone,\cdot)$ be a nominal set. Concretion is the partial function
  $\conc:[\setvars]\setone\times\setvars\to\setone$ with
  $\abs{y}\elone\conc z$, the `concretion of
  $\abs{y}\elone$ at $z$', defined as
  $\abs{y}\elone\conc z = (z\ y) \cdot \elone$ if
  $z\in\setvars\setminus\supp(\abs{y}\elone)$.
\end{defi}

\noindent Notice that $y\fresh\abs{y}\elone$ and $(\abs{y}\elone)\conc
  y=\elone$. Moreover, observe that $[\setvars]\setone\times\setvars$ is a nominal set with the coordinatewise action of $\grp$. One can show that concretion is equivariant. Indeed, if $z\#\abs{y}\elone$ then $\pi\cdot z\#\abs{\pi\cdot y}\pi\cdot\elone$ and  $\pi\cdot (\abs{y}\elone\conc z)=(\abs{\pi\cdot y}\pi\cdot\elone)\conc \pi\cdot z$.
  \begin{defi}\label{def:int-hom}[Internal hom]
    Given two nominal sets $(\setone,\cdot)$ and $(\settwo,\cdot)$, we define the internal hom $[\setone,\settwo]$ as the nominal set of all functions $f:\setone\to\settwo$ that are finitely supported with respect to the action
$$(\pi\cdot f)(\elone)=\pi\cdot f(\pi^{-1}\cdot \elone).$$
  \end{defi}
  \begin{rem}\label{rem:fs-functions}
    A function $f:\setone\to\settwo$ is finitely supported if and only if there exists a finite set  $S$ of names, such that for all permutations $\pi\in\grp$ that fix the names in $S$ and for all $x\in X$ we have $\pi\cdot f(x)=f(\pi\cdot x)$.
  \end{rem}
\noindent\textbf{Limits and colimits in $\Nom$.}
Further, we recall some general results form~\cite{PittsAM:nomlfo-jv} that will be necessary in the rest of the paper.
The category $\Nom$ is complete and cocomplete. The forgetful functor
to $\Set$ creates finite products and all colimits. For example, the
product of two nominal sets $(\setone,\cdot)$ and $(\settwo,\cdot)$ is
$(\setone \times\settwo,\cdot)$ where
\[ \pi\cdot(\elone,\eltwo)=(\pi\cdot \elone,\pi\cdot \eltwo).\]
Arbitrary products in $\Nom$ are computed differently than in
$\Set$. Given a family of nominal sets $(\setone_i,\cdot_i)_{i\in
  I}$, we can equip the set of all tuples $\{(\elone_i)_{i\in I}\ |\
\elone_i\in \setone_i\}$ with the pointwise action given by
 \begin{equation}\label{equ:arb-nom-prod}
   \pi\cdot (\elone_i)_{i\in I}=(\pi\cdot_i \elone_i)_{i\in I}.    
 \end{equation}
 This is a $\frak{S}(\setvars)$-action, but some tuples may not be
 finitely supported.  The product of $(\setone_i,\cdot_i)_{i\in I}$ in
 $\Nom$ is the nominal set $(\prod\limits_{i\in
   I}\f{(\setone_i,\cdot_i))}$ of tuples of the form $(\elone_i)_{i\in
   I}$ that are finitely supported with respect to the action
 of~(\ref{equ:arb-nom-prod}).

The limit in $\Nom$ of an $\omega^\mathit{op}$-chain 
\[
  \xymatrix{
\setone_1&\setone_2\ar[l]_{f_1}&\ldots\ar[l]_{f_2} 
}
\]
is the nominal set of \emph{finitely supported} tuples $(\elone_1,\elone_2,\ldots)$ such that for all $i\in\omega$ we have $\elone_i\in\setone_i$ and $f_{i+1}(\elone_{i+1})=\elone_i$.

The initial object of $\Nom$ is the empty nominal set with the trivial action. In $\Nom$, all monomorphisms are strong and are precisely the injective equivariant maps.

\medskip\noindent\textbf{Categorical properties of $\Nom$.} The category $\Nom$
is \emph{locally finitely presentable}, see \cite{AdamekR:lpac}. 
An  object $X$ in a category $\mathsf C$ is called finitely presentable when the hom functor $\mathsf C(X,-)$ preserves filtered colimits. For example the finitely presentable sets are the finite ones. A locally small category $\mathsf C$ is called locally finitely presentable when it is cocomplete and it has a small set $A$ of finitely presentable objects, such that any object in $\mathsf C$ is a filtered colimit of objects in $A$.

We describe next the finitely presentable objects in $\Nom$. First we need to define the notion of \emph{orbit}.

\begin{defi}[Orbit]
   Consider a nominal set $(\setone,\cdot)$ and $x, y\in \setone$. We say that $x$ and $y$ are orbit equivalent when there exists $\pi\in\grp$ such that $\pi\cdot x=y$. An orbit  $\mathcal{O}\subseteq\setone$ is an equivalence class with respect to this equivalence relation.
 \end{defi}
 
 \begin{rem}
 Let $\mathcal{O}$ denote an orbit in a nominal set and  consider $x\in\mathcal{O}$. Then $\mathcal{O}= \{\pi\cdot x \mid \pi\in\grp\}$.
 \end{rem}

 A proof of the following proposition is in~\cite[Proposition 2.3.7]{petrisan:phd}.

 \begin{prop}
  A nominal set $(\setone,\cdot)$ is finitely presentable in $\Nom$ if and only if it has finitely many orbits.  
 \end{prop}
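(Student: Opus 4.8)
The plan is to establish the two implications separately, using throughout two facts from the preliminaries: the forgetful functor $\fgtu\colon\Nom\to\Set$ creates all colimits, hence filtered ones, and finite sets are finitely presentable in $\Set$ --- equivalently, finitely many elements of a filtered colimit in $\Set$, subject to finitely many prescribed equations between them, can all be realised at one stage of the diagram. I would dispatch the direction ``finitely presentable $\Rightarrow$ finitely many orbits'' first: the sub-nominal-sets of $X$ that are unions of finitely many orbits form a directed poset under inclusion (a finite union of orbits is $\grp$-stable and has only finitely supported elements, so it is a sub-nominal-set), and $X$ is the colimit of this directed diagram with the inclusions as cocone, since every element lies in a single orbit. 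Finite presentability then makes $\mathrm{id}_X$ factor through one of the inclusions $\iota\colon X_i\hookrightarrow X$; as $\iota$ is simultaneously a split epimorphism and a monomorphism it is an isomorphism, so $X\cong X_i$ has finitely many orbits.

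For the converse, fix representatives $x_1,\dots,x_m$, one from each orbit, and set $S_k=\supp(x_k)$. The first thing to record is a description of $\Nom(X,-)$: an equivariant $g\colon X\to Y$ is determined by the tuple $(g(x_1),\dots,g(x_m))\in Y^m$, and a tuple $(d_1,\dots,d_m)$ comes from such a map exactly when $\Stab(x_k)\subseteq\Stab(d_k)$ for every $k$ (this being just what is needed to make $\pi\cdot x_k\mapsto\pi\cdot d_k$ well defined). The second, and crucial, observation is that $\Stab(x_k)=G_k\cdot\mathrm{fix}(S_k)$, where $\mathrm{fix}(S_k)$ is the pointwise stabiliser of $S_k$ and $G_k=\{\sigma\in\grp\mid \sigma(a)=a\ \text{for all}\ a\notin S_k,\ \sigma\cdot x_k=x_k\}$ is \emph{finite}; consequently the requirement $\Stab(x_k)\subseteq\Stab(d_k)$ amounts to $\supp(d_k)\subseteq S_k$ together with the finitely many equations $\sigma\cdot d_k=d_k$ for $\sigma\in G_k$.

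Given these, I would take a filtered colimit cocone $(\mu_j\colon Dj\to C)_j$ and an equivariant $f\colon X\to C$, and, using $\fgtu C=\operatorname{colim}_j\fgtu Dj$, lift $f(x_1),\dots,f(x_m)$ to elements $d_k\in Dj$ with $\mu_j(d_k)=f(x_k)$ at a common stage $j$. Since $\supp(d_k)$ need not be contained in $S_k$, I then impose a finite list of equations to be validated further along the diagram: for each $b\in\supp(d_k)\setminus S_k$ choose a fresh $c_b$ and ask for $(b\ c_b)\cdot d_k=d_k$, and ask for $\sigma\cdot d_k=d_k$ for each $\sigma\in G_k$. All of these become true after applying $\mu_j$ --- the first kind because $b\notin\supp(f(x_k))$ and $c_b\notin\supp(f(x_k))$ (as $\supp(f(x_k))\subseteq S_k$ by Remark~\ref{rem:supp-shrink}), the second because $\sigma$ fixes $x_k$ --- so there is a morphism $j\to j'$ after which all of them hold for the transported elements $d_k'$. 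Then $\supp(d_k')\subseteq\supp(d_k)$ and $\supp(d_k')$ avoids every $b\in\supp(d_k)\setminus S_k$ (if $(b\ c_b)\cdot d_k'=d_k'$ and $c_b\notin\supp(d_k)\supseteq\supp(d_k')$, then $b\fresh d_k'$), so $\supp(d_k')\subseteq S_k$; together with $\sigma\cdot d_k'=d_k'$ for $\sigma\in G_k$ this gives $\Stab(x_k)\subseteq\Stab(d_k')$. Hence $(d_1',\dots,d_m')$ defines an equivariant $g\colon X\to Dj'$, and $\mu_{j'}\circ g=f$ because both sides are equivariant and agree on the $x_k$. The essential-uniqueness half is the easy companion argument: two equivariant maps into $Dj$ equalised by $\mu_j$ agree on the $x_k$, hence --- by finitely many uses of filteredness in $\Set$ --- agree on the $x_k$ after some $Dj\to Dj'$, hence agree there.

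The step I expect to be the real obstacle is exactly the one handled in the previous paragraph: $\Stab(x_k)$ is an infinite subgroup of $\grp$, so one cannot naively bring the factorisation into existence by finitely many appeals to filteredness. The way around it is twofold, and both halves appear above: $\Stab(x_k)$ is finite modulo the pointwise stabiliser of the finite set $\supp(x_k)$, and the support of an arbitrary lift $d_k$ can be forced down into $S_k$ by imposing finitely many freshness equations that hold automatically after $\mu_j$. Beyond that, everything is the standard finitely-presentable bookkeeping carried across from $\Set$ by the colimit-creating functor $\fgtu$.
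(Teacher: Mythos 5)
The paper itself does not prove this proposition --- it defers to Proposition~2.3.7 of Petri\c{s}an's thesis --- so there is no in-paper argument to compare against; I can only assess your proof on its own terms, and it is correct and complete. The forward direction (writing $\setone$ as the directed union of its orbit-finite equivariant subsets and splitting the identity through one stage) is the routine half. The backward direction is where the real content lies, and you isolate and resolve the genuine difficulty correctly: the stabiliser of an orbit representative $x_k$ is an infinite subgroup of $\grp$, but it factors as $G_k\cdot\mathrm{fix}(\supp(x_k))$ with $G_k$ finite, so the condition $\Stab(x_k)\subseteq\Stab(d_k)$ characterising equivariant maps out of $\setone$ reduces to the support inclusion $\supp(d_k)\subseteq\supp(x_k)$ together with finitely many equations; and the support of an arbitrary lift of $f(x_k)$ to a stage of the diagram can be forced down into $\supp(x_k)$ by finitely many transposition equations, each of which holds in the colimit because equivariant maps shrink supports (Remark~\ref{rem:supp-shrink}) and hence holds at some later stage by filteredness in $\Set$. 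You also cover the injectivity half of preservation (two factorisations agreeing in the colimit are identified at a later stage, since equivariant maps out of $\setone$ are determined by their values on the finitely many $x_k$), so nothing is missing.
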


Observe that a nominal set is the disjoint union of its orbits, and hence the directed union of all its nominal subsets with finitely many orbits.

\medskip\noindent\textbf{Properties of the abstraction functor.}  The abstraction functor $[\setvars](-):\Nom\to\Nom$ preserves all limits and colimits, see~\cite{Pitts-book}. The remainder of this section is only needed in the proof of Lemma~\ref{lem:off-maps-closure-prop}.

By~\cite[Theorem~4.13]{Pitts-book} 
we know that $[\setvars](-)$ has a right adjoint $R:\Nom\to\Nom$ defined by 
$$R\setone=\{ g\in[\setvars,\setone]\ |\ (\forall x\in\setvars) x\# g(x)\}.$$
Above, $[\setvars,\setone]$ is the nominal set of finitely supported maps from $\setvars$ to $\setone$, as in~Definition~\ref{def:int-hom}. The argument below~\cite[(4.27)]{Pitts-book} also implies that $R$ does not preserve colimits, the counterexample being the coproduct of two nominal sets. However we can show that $R$ preserves filtered colimits. But first let us give a simpler description of the nominal set $R\setone$. 

\begin{lem}\label{rem:obs-RX}
Consider a nominal set $X$ and a $\Set$-function $g:\setvars\to\setone$. We have that $g\in R\setone$ if and only if the following conditions are satisfied:
\begin{enumerate}
\item For all $x\in\setvars$ we have $x\#g(x)$.
\item There exists a finite set of names $S$ such that for all $x\in\setvars$ we have $\supp(g(x))\subseteq S$ and $g$ is constant on $\setvars\setminus S$.
\end{enumerate}
\end{lem}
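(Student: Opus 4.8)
First I would observe that condition~(1) of the lemma appears verbatim in the definition of $R\setone$, so the only thing to establish is that, \emph{given} that $x\#g(x)$ holds for all $x$, the function $g$ is finitely supported with respect to the conjugation action if and only if condition~(2) holds. Throughout I would use the concrete description of finite support from Remark~\ref{rem:fs-functions}: $g$ is finitely supported precisely when there is a finite $S\subseteq\setvars$ such that $\pi\cdot g(x)=g(\pi\cdot x)$ for every $\pi\in\grp$ fixing $S$ pointwise and every $x\in\setvars$. The guiding observation for the whole argument is that a permutation that fixes a finite set $S$ pointwise must restrict to a permutation of $\setvars\setminus S$.

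For the forward implication, suppose $g$ is finitely supported and put $S=\supp(g)$. I would first show $\supp(g(x))\subseteq S$ for every $x$, splitting into two cases. If $x\in S$, then for any $\pi$ fixing $S$ pointwise we have $\pi\cdot g(x)=g(\pi\cdot x)=g(x)$, so $S$ supports $g(x)$. If $x\notin S$, then for any $\pi$ fixing $S\cup\{x\}$ pointwise we get $\pi\cdot g(x)=g(\pi\cdot x)=g(x)$, so $S\cup\{x\}$ supports $g(x)$; combining this with the hypothesis $x\#g(x)$ and the fact that the intersection of two supports is again a support yields $\supp(g(x))\subseteq S$. For constancy on $\setvars\setminus S$, take $x,y\notin S$; the transposition $(x\;y)$ fixes $S$ pointwise, so $g(y)=g((x\;y)\cdot x)=(x\;y)\cdot g(x)$, and since $\supp(g(x))\subseteq S$ and $(x\;y)$ fixes $S$ pointwise, $(x\;y)\cdot g(x)=g(x)$; hence $g(y)=g(x)$. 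This gives condition~(2).

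For the converse, assume condition~(2), with witness $S$, and let $c$ denote the common value of $g$ on $\setvars\setminus S$ (nonempty since $\setvars$ is infinite), noting $\supp(c)\subseteq S$. I claim $S$ supports $g$ in the sense of Remark~\ref{rem:fs-functions}. Let $\pi$ fix $S$ pointwise and let $x\in\setvars$. If $x\in S$ then $\pi\cdot x=x$, so $g(\pi\cdot x)=g(x)$, while $\pi\cdot g(x)=g(x)$ because $\supp(g(x))\subseteq S$. If $x\notin S$ then $\pi\cdot x\notin S$ as well (since $\pi$ permutes $\setvars\setminus S$), so $g(\pi\cdot x)=c=g(x)$, while $\pi\cdot g(x)=\pi\cdot c=c$ because $\supp(c)\subseteq S$. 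In both cases $\pi\cdot g(x)=g(\pi\cdot x)$, so $g$ is finitely supported, and together with condition~(1) this gives $g\in R\setone$.

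The argument is essentially bookkeeping with supports; the only point requiring a little care is the forward direction's case $x\notin S$, where one must use the freshness hypothesis $x\#g(x)$ to sharpen the bound $\supp(g(x))\subseteq S\cup\{x\}$ to $\supp(g(x))\subseteq S$, so I would make sure to state explicitly why removing $x$ from a support set again yields a support set.
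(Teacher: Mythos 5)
Your proposal is correct and follows essentially the same route as the paper's proof: in the forward direction it derives $\supp(g(x))\subseteq S\cup\{x\}$ from Remark~\ref{rem:fs-functions}, sharpens this to $\supp(g(x))\subseteq S$ using $x\#g(x)$, and obtains constancy outside $S$ via a transposition of two fresh names; in the converse it verifies the equivariance condition of Remark~\ref{rem:fs-functions} by the same two-case analysis. The only cosmetic difference is that you split the support bound into the cases $x\in S$ and $x\notin S$, where the paper handles both at once.
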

 \proof For the direct implication, notice that (1) is clearly satisfied. For (2), let $S$ denote the support of $g$. Observe that, by Remark~\ref{rem:fs-functions} we have that $\supp(g(x))\subseteq S\cup\{x\}$. Since $x\#g(x)$ we have that $\supp(g(x))\subseteq S$ for all $x\in\setvars$.  Now consider $x,x'\not\in S$. Then, by Remark~\ref{rem:fs-functions}, we have $(x\ x')\cdot g(x)=g(x')$. On the other hand, $(x\ x')\cdot g(x)=g(x)$ holds since $x,x'\#g(x)$. Hence $g(x)=g(x')$, thus $g$ is constant outside $S$.

Conversely, we only have to show that $g$ is finitely supported. Let $\pi\in\grp$ be a permutation that fixes the set $S$. By Remark~\ref{rem:fs-functions}, it is enough to show that $\pi\cdot g(x)=g(\pi\cdot x)$ for all $x\in\setvars$. Since $\supp(g(x))\subseteq S$ we have that $\pi\cdot g(x)=g(x)$. We can also prove that $g(\pi\cdot x)=g(x)$. This is clear for $x\in S$. For the case when $x\not\in S$, observe that $\pi\cdot x=\pi(x)\not\in S$, thus we can use the fact that $g$ is constant outside $S$.
\qed
\begin{prop}
  The functor $[\setvars](-):\Nom\to\Nom$ has a finitary right adjoint.
\end{prop}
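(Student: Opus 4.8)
The right adjoint $R$ is already in hand, so the content of the statement is that $R\colon\Nom\to\Nom$ preserves filtered colimits. My plan is to fix a filtered diagram $(\setone_i)_i$ in $\Nom$ with colimit $\setone=\mathrm{colim}_i\setone_i$ and coprojections $\kappa_i\colon\setone_i\to\setone$, and to prove that the canonical comparison $\phi\colon\mathrm{colim}_iR\setone_i\to R\setone$ is a bijection; since it is equivariant and an equivariant bijection is an isomorphism in $\Nom$, this suffices. Throughout I use that the forgetful functor $\Nom\to\Set$ creates all colimits, so these filtered colimits are computed as in $\Set$, and that $R$ acts on morphisms by postcomposition, $Rf(g)=f\circ g$ (a straightforward check from the description of $R$). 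The whole argument rests on the observation, from Lemma~\ref{rem:obs-RX}, that an element of $R\setone_i$ carries only finitely much information — its values on a finite set $S$ plus the single value it takes on all of $\setvars\setminus S$ — so that any finite collection of the identifications happening in the colimit can be forced to occur at one stage.

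As a preliminary I would establish that in a filtered colimit in $\Nom$ every $e\in\setone$ has a representative $a\in\setone_i$ with $\supp(a)=\supp(e)$: the inclusion $\supp(a)\supseteq\supp(\kappa_i(a))=\supp(e)$ holds automatically (Remark~\ref{rem:supp-shrink}), and for the other inclusion, starting from any representative $a\in\setone_i$ and any $c\in\supp(a)\setminus\supp(e)$, choosing $b$ fresh makes $(c\ b)$ fix $e$, so $\kappa_{ik}((c\ b)\cdot a)=\kappa_{ik}(a)$ for some $k$; as $b\fresh\kappa_{ik}(a)$ this gives $c\fresh\kappa_{ik}(a)$, and filteredness over the finitely many such $c$ produces a stage at which the support has shrunk exactly to $\supp(e)$.

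For surjectivity, let $g\in R\setone$ and take the finite $S$ from Lemma~\ref{rem:obs-RX}: then $\supp(g(x))\subseteq S$ for every $x$, $g$ is constant with some value $c$ on $\setvars\setminus S$, and in fact $\supp(c)\subseteq S$ because $x\fresh g(x)=c$ for all $x\notin S$. Applying the preliminary fact to the finitely many elements $\{g(x)\mid x\in S\}\cup\{c\}$ and using filteredness, I obtain a single stage $i$ with lifts $u_x$ of $g(x)$ ($x\in S$) and $u_c$ of $c$ all having exact support; set $\tilde g(x)=u_x$ for $x\in S$ and $\tilde g(x)=u_c$ otherwise. Then $x\fresh\tilde g(x)$ for every $x$ (for $x\in S$ since $x\fresh g(x)$, for $x\notin S$ since $x\notin S\supseteq\supp(c)=\supp(u_c)$), and $\tilde g$ is constant outside the finite set $S\cup\bigcup_{x\in S}\supp(u_x)\cup\supp(u_c)$, so $\tilde g\in R\setone_i$ by Lemma~\ref{rem:obs-RX} and $R\kappa_i(\tilde g)=\kappa_i\circ\tilde g=g$.

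For injectivity, suppose $h\in R\setone_i$, $h'\in R\setone_j$ and $\phi[h]=\phi[h']$; passing to a common stage we may assume $i=j$, so the hypothesis reads $\kappa_i\circ h=\kappa_i\circ h'$. Both functions are constant outside one common finite set $T$ (Lemma~\ref{rem:obs-RX}); fix $x_1\notin T$. For each of the finitely many $x\in T\cup\{x_1\}$ the equality $\kappa_i(h(x))=\kappa_i(h'(x))$ is witnessed at some stage $k_x$, and filteredness gives a $k$ above all of them with $\kappa_{ik}(h(x))=\kappa_{ik}(h'(x))$ for those $x$; by constancy the same equality then holds for all $x\notin T$ (both sides equal $\kappa_{ik}(h(x_1))=\kappa_{ik}(h'(x_1))$), so $\kappa_{ik}\circ h=\kappa_{ik}\circ h'$, i.e.\ $R\kappa_{ik}(h)=R\kappa_{ik}(h')$ and hence $[h]=[h']$. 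The main obstacle, and the reason the argument is not entirely routine, is precisely this passage from the infinitely many coordinates of functions in $R(-)$ to the finite data of Lemma~\ref{rem:obs-RX}, together with the support-shrinking step in the preliminary fact; once those are in place everything reduces to ordinary manipulations of filtered colimits in $\Set$.
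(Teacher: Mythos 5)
Your proposal is correct and follows essentially the same route as the paper: reduce the statement to $R$ preserving filtered colimits, use Lemma~\ref{rem:obs-RX} to see that an element of $R\setone$ is determined by finitely much data, and use filteredness to realise that data (with representatives of exact support) at a single stage. You are in fact slightly more complete than the paper, which only carries out the surjectivity/factorisation half of the comparison and cites the exact-support-representative fact from \cite{petrisan:phd}, whereas you also supply the injectivity argument and a direct proof of that fact.
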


\newcommand{\colim}{\mathrm{colim}}

\proof 
 Consider a filtered diagram $\mathsf{D}$ and a functor $J:\mathsf{D}\to\Nom$. In order to prove that $R\colim_{d\in\mathsf{D}}{Jd}\simeq\colim_{d\in\mathsf{D}}RJ(d)$ it is enough to show that each finitely supported map $g:\setvars\to\colim_{d\in\mathsf{D}}Jd$ such that for all $x\in\setvars$ we have $x\# g(x)$ factors through a $Jd$ for some $d\in\mathsf{D}$
\begin{equation}
  \label{eq:2}
  \xymatrix{
\setvars\ar@{-->}[d]_-{g_d}\ar[rd]^-{g}\\
Jd\ar[r]^-{\iota}&\colim_{d\in\mathsf{D}}Jd
}
\end{equation}
and the  map $g_d$ is finitely supported and for all $x\in\setvars$ we have $x\#g_d(x)$.
Let $S$ denote the finite support of $g$ and let $x_0\not\in S$. There exists $d_0\in\mathsf{D}$ such that $g(x_0)=[y_0]$ for some $y_0\in Jd_0$. We used  square brackets here to denote the equivalence classes needed in the computation of filtered colimits. By Lemma~\ref{rem:obs-RX} we have that $\supp([y_0])\subseteq S$ and for all $x\not\in S$ we have that $g(x)=[y_0]$.  Notice that $S=\{x_1,\ldots, x_n\}$ is a finite set and for each $x_i$ there exists $d_i\in\mathsf{D}$ such that $g(x_i)=[y_i]$ for some $y_i\in Jd_i$. 
  We can assume without loss of generality that for all $0\le i\le n$ we have $\supp(y_i)=\supp([y_i])$, see~\cite[Proposition~2.3.7]{petrisan:phd}. 

By Lemma~\ref{rem:obs-RX} for all $0\le i\le n$ we have $x_i\# g(x_i)=[y_i]$ and $\supp (g(x_i))=\supp([y_i])\subseteq S$. Hence, for all $0\le i\le n$ we have $x_i\#y_i$ and 
$\supp(y_i)\subseteq S$.
Since $\mathsf{D}$ is filtered, there exists $d\in\mathsf{D}$ and arrows $u_i:d_i\to d$ in $\mathsf{D}$ for all $0\le i\le n$. 
We define $g_d:\setvars\to Jd$ by
\begin{equation}
  \label{eq:g-factor}
  g_d(x)=\left\{
  \begin{array}{ll}
    Ju_0(y_0) &\textrm{ if }x\not\in S,\\
    Ju_i(y_i)&\textrm{ if }x=x_i\textrm{ for some }1\le i\le n.
  \end{array}
\right.
\end{equation}
Notice that $g_d$ satisfies the properties (1) and (2) of Lemma~\ref{rem:obs-RX}. Indeed, we  have that $x\# g_d(x)$ and $\supp(g_d(x))\subseteq S$ for all $x\in\setvars$ (because the maps $Ju_i$ are equivariant and thus can only shrink the support of the elements, recall Remark~\ref{rem:supp-shrink}).
Thus $g_d$ is in  $R(Jd)$ by Lemma~\ref{rem:obs-RX} and it makes diagram~\eqref{eq:2} commutative. Therefore $R$ preserves filtered colimits.
\qed

\begin{lem}
\label{lem:abs-pres-fp}
The functor $[\setvars](-):\Nom\to\Nom$ preserves finitely presentable objects.  
\end{lem}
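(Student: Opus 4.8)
The plan is to derive this immediately from the preceding proposition, which shows that $[\setvars](-)$ has a \emph{finitary} right adjoint $R$. The general fact I would invoke is: whenever $L\dashv R$ is an adjunction and the right adjoint $R$ preserves filtered colimits, then $L$ preserves finitely presentable objects. So first I would fix a finitely presentable nominal set $X$, which by definition means that $\Nom(X,-)$ preserves filtered colimits. Then, using the adjunction isomorphism $\Nom([\setvars]X,-)\cong\Nom(X,R(-))$, natural in the argument, I would observe that the right-hand side is the composite of $R$ followed by $\Nom(X,-)$, both of which preserve filtered colimits; hence $\Nom([\setvars]X,-)$ preserves filtered colimits, i.e.\ $[\setvars]X$ is finitely presentable.

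Alternatively --- and this route bypasses the right adjoint $R$ entirely --- one can argue concretely via orbits. By the Proposition characterising finitely presentable objects in $\Nom$, saying that $X$ is finitely presentable means precisely that $X$ has finitely many orbits. Since the forgetful functor to $\Set$ creates finite products, $\setvars\times X$ is computed as usual, and one checks that it again has finitely many orbits (a finite product of orbit-finite nominal sets is orbit-finite: classify a pair $(v,\pi\cdot x)$ according to the orbit of $x$ and the relative position of $v$ with respect to $\pi\cdot\supp(x)$). As $[\setvars]X$ is by definition a quotient of $\setvars\times X$ by an equivariant surjection, and the equivariant-surjective image of an orbit-finite nominal set is again orbit-finite --- the image of an orbit is an orbit of the image --- we conclude that $[\setvars]X$ has finitely many orbits, hence is finitely presentable.

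I do not anticipate a genuine obstacle: the statement is a short application of either the adjoint-functor argument or the orbit characterisation. The only points needing a little care are invoking the correct description of finitely presentable objects in $\Nom$ (finitely many orbits) and, in the first approach, the standard lemma that a left adjoint whose right adjoint is finitary preserves finitely presentable objects.
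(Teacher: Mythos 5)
Your first argument is exactly the paper's proof: Lemma~\ref{lem:abs-pres-fp} is derived there from the preceding proposition that the right adjoint $R$ of $[\setvars](-)$ is finitary, via the same chain of natural isomorphisms $\Nom([\setvars]\setone,\colim\settwo_i)\simeq\Nom(\setone,R(\colim\settwo_i))\simeq\colim\Nom([\setvars]\setone,\settwo_i)$. Your alternative orbit-counting route (orbit-finiteness of $\setvars\times\setone$ plus the fact that $[\setvars]\setone$ is its image under the equivariant surjection $\theta_\setone$) is also correct and more elementary, but it is not the argument the paper uses.
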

\proof This is immediate by the previous lemma. Let $\setone$ be a finitely presentable nominal set and let $\colim\ \settwo_i$ be a filtered colimit. Then
  \begin{equation}
    \label{eq:1}
    \begin{array}{lcl}
  \Nom([\setvars]\setone,\colim \ \settwo_i)& \simeq & \Nom(\setone,R(\colim\ \settwo_i))  \\
  & \simeq &\Nom(\setone,\colim\ R(\settwo_i))\\  
&\simeq &\colim \ \Nom(\setone,R(\settwo_i))\\
&\simeq &\colim\ \Nom([\setvars]\setone,\settwo_i).
\end{array}
  \end{equation} 
 This shows that $[\setvars]\setone$ is finitely presentable.
\qed

\section{Alpha Corecursion Principle for Nominal Coalgebraic Data Types}
\label{sec:coalgebraicdatatypes}

In this section we introduce nominal coalgebraic data types as a means
of studying infinitary data up to $\alpha$-equivalence.  In
Section~\ref{sec:fin-coalg} we describe final coalgebras of certain
$\Nom$-endofunctors. We apply this result for functors arising from
signatures with binding. We prove that the set of equivalence classes
of infinitary terms with finitely many free variables is the final
coalgebra of a $\Nom$-functor.  Our running example is the infinitary
$\lambda$-calculus, and indeed the results of~\cite{kurzetal:cmcs2012}
are  particular instances of the main theorems in this section.

\subsection{Final Coalgebras of $\Nom$ Functors}
\label{sec:fin-coalg}
In this section we describe the final coalgebras for certain
endofunctors on $\Nom$. It is well known that an endofunctor that
preserves limits of $\omega^\mathit{op}$-chains has a final coalgebra
which is computed as the limit of the final chain. Similarly,
endofunctors that preserve colimits of $\omega$-chains have an initial
algebra obtained as the colimit of the initial sequence. We will generalise Barr's theorem~\cite{barr99} relating final coalgebras and initial algebras to functors on nominal sets. To this end we need to introduce nominal (complete) metric spaces.

\begin{defi}[Nominal metric space]
  A \emph{nominal metric space} is a tuple $(\setone,\cdot,d)$ such
  that $(\setone,\cdot)$ is a nominal set and
  $d:\setone\times\setone\to [0,1]$ is an equivariant metric when the
  interval $[0,1]$ is equipped with the trivial action. That is, $d(x,y)=d(\pi\cdot x,\pi\cdot y)$ for all $x,y\in\setone $ and $\pi\in\grp$.
\end{defi}

\begin{defi}[Finitely supported Cauchy sequence]
  A \emph{finitely supported Cauchy sequence} in a nominal metric
  space is a Cauchy sequence $(x_n)_{n\ge 1}$ such that there exists a
  finite set of variables $S\subseteq\setvars$ that supports all
  elements $x_n$. A nominal metric space is \emph{complete} when every
  finitely supported Cauchy sequence converges.
\end{defi}
\begin{rem}[Nominal Completion]
Given a nominal metric space $(\setone,\cdot, d)$ one can construct its
nominal completion $(\overline{\setone},\cdot,\overline{d})$ by adding the
limits of all the finitely supported Cauchy sequences. This
construction has the following universal property. For any complete
metric space $(\settwo,\cdot, e)$ and any equivariant uniformly continuous
function $f:\setone\to\settwo$ there exists a unique equivariant
uniformly continuous map $\overline{f}:\overline{\setone}\to\settwo$
extending $f$. The proofs are straightforward.  
\end{rem}

In what follows, Theorem~\ref{thm:fin} and Theorem~\ref{thm:fin-v2} show that the final coalgebra of certain
$\Nom$-functors can be regarded as a nominal metric space and is the
nominal completion of the initial algebra. Theorem~\ref{thm:fin} is an instance of Ad\'
amek's generalisation of Barr's theorem from~\cite{Adamek03}. The fact
that $T$ is the \emph{nominal} Cauchy completion of $I$ is equivalent
to $\hom(B,\fcoalg)$ being the Cauchy completion of $\hom(B,\ialg)$
for any finitely presentable objects $B\in\Nom$. Nevertheless, we
opted to sketch a direct proof below, not only for the sake of
completeness, but also because we can work directly with metrics on
the initial algebra and final coalgebra, paying attention to some
extra conditions regarding equivariance and finite support. Moreover, a careful inspection of the proof allows us to prove a small variation of this result, see Theorem~\ref{thm:fin-v2}. 
One crucial hypothesis in~\cite{Adamek03} is that the functor at issue,
say $\fnom{F}$, has the property that $\fnom{F}0$ has an element, where
$0$ is the initial element in the category. By this, it is meant that
a morphism $s:1\to \fnom{F}0$ exists. However, by insights that go back to work
by Fraenkel and
Mostowski~\cite{jech,Pitts-book},  the axiom of choice doesn't hold in the topos of nominal
sets. So the fact that
$\fnom{F}0\neq 0$ is not equivalent to the existence of a
$\Nom$-morphism $s:1\to \fnom{F}0$. As an example consider the functor $\FLambda$  of~\ref{exa:funct-lam-calc}, where we can apply Theorem~\ref{thm:fin-v2}, but not Theorem~\ref{thm:fin}.

We first formulate Theorem~\ref{thm:fin} following~\cite{Adamek03} in
assuming that a morphism $s:1\to
\fnom{F}0$ exists and $\fnom{F}$ preserves limits of
$\omega^\mathit{op}$-chains and monomorphisms. Note that we do not assume that $\fnom{F}$
preserves colimits of $\omega$-chains. The existence of the final
coalgebra gives a size constraint on the functor $\fnom{F}$ that
ensures also the existence of the initial algebra.

However, in Theorem~\ref{thm:fin-v2} 
we will see that a variation of
Theorem~\ref{thm:fin} in which we only assume that $\fnom{F}0\neq 0$
holds provided that $\fnom{F}$ can be extended to finitely-supported
maps.

\begin{thm}\label{thm:fin}
  Let $\fnom{F}:\Nom\to\Nom$ be a functor that preserves limits of
  $\omega^\mathit{op}$-chains and monomorphisms and such that a
  morphism $s:1\to \fnom{F}0$ exists. Then $\fnom{F}$ has a final
  coalgebra $\fcoalg$ and an initial algebra $\ialg$, both of which be
  equipped with equivariant metrics. Moreover the final coalgebra is
  the nominal completion of the initial algebra.
\end{thm}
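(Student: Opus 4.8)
The strategy is to follow the classical Barr/Adámek argument but carry the nominal metric structure along at every step. First I would build the final chain $1 \leftarrow \fnom{F}1 \leftarrow \fnom{F}^2 1 \leftarrow \cdots$ in $\Nom$ and, since $\fnom{F}$ preserves $\omega^{\mathit{op}}$-limits, identify its limit $\finalcarrier = \fnom{F}^\omega 1$ together with the projections $p_n : \finalcarrier \to \fnom{F}^n 1$; finality gives a coalgebra structure $\finalmap : \finalcarrier \to \fnom{F}\finalcarrier$ which is an isomorphism. Dually, using the morphism $s : 1 \to \fnom{F}0$ I would build the initial chain $0 \to \fnom{F}0 \to \fnom{F}^2 0 \to \cdots$; the point of the hypothesis $s : 1 \to \fnom{F}0$ is that it makes each $\fnom{F}^n 0$ nonempty and induces comparison maps $\fnom{F}^n 0 \to \fnom{F}^n 1$ which are monomorphisms (here I use that $\fnom{F}$ preserves monos and that $!_n : \fnom{F}^n 0 \to \fnom{F}^n 1$ can be taken mono, starting from $0 \rightarrowtail 1$, noting the initial object of $\Nom$ is empty so $0 \to 1$ is mono). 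The existence of $\finalcarrier$ as a set of finitely supported tuples bounds the size of the chain, so the initial algebra $\ialg$ exists as $\colim_n \fnom{F}^n 0$, and there is a canonical equivariant map $\ialg \to \finalcarrier$.

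**Equipping both with equivariant metrics.** On $\finalcarrier = \fnom{F}^\omega 1$ I would put $d(t,t') = 2^{-\max\{n \mid p_n t = p_n t'\}}$, exactly as in the $\Set$ case recalled in Section~\ref{sec:alg-coalg}, with $d(t,t)=0$. The only genuinely new thing to check is \emph{equivariance}: since each $p_n$ is an equivariant map between nominal sets and each $\fnom{F}^n 1$ carries a $\grp$-action, we have $p_n(\pi\cdot t) = \pi\cdot p_n(t)$, hence $p_n(\pi\cdot t) = p_n(\pi\cdot t')$ iff $p_n(t)=p_n(t')$; therefore $d(\pi\cdot t,\pi\cdot t') = d(t,t')$ for all $\pi\in\grp$, so $d$ is an equivariant (ultra)metric and $(\finalcarrier,\cdot,d)$ is a nominal metric space. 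Restricting along the mono $\ialg \rightarrowtail \finalcarrier$ gives an equivariant metric on $\ialg$.

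**$\finalcarrier$ is the nominal completion of $\ialg$.** There are two halves. For \emph{density}: given $t \in \finalcarrier$, I would show $t$ is the limit of a finitely supported Cauchy sequence from $\ialg$ by truncating — the $n$-th approximant $t_n$ is obtained from $p_n(t) \in \fnom{F}^n 1$ together with the chosen point $s$ to fill in the missing data below depth $n$, giving an element of $\fnom{F}^n 0 \subseteq \ialg$ with $d(t_n,t) \le 2^{-n}$; crucially $\supp(t_n) \subseteq \supp(t)$ because all maps involved ($p_n$, the structure maps, the inclusion induced by $s$) are equivariant and hence support-shrinking (Remark~\ref{rem:supp-shrink}), so the whole sequence $(t_n)$ is supported by the finite set $\supp(t)$ — this is precisely a finitely supported Cauchy sequence. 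For \emph{completeness}: a finitely supported Cauchy sequence in $\finalcarrier$, all supported by some finite $S$, has the property that for each $n$ the sequence $(p_n t_k)_k$ is eventually constant (by the shape of $d$), defining an element of $\lim_n \fnom{F}^n 1 = \finalcarrier$ supported by $S$, which is its limit; thus $(\finalcarrier,\cdot,d)$ is a complete nominal metric space. Finally, the universal property of the nominal completion (stated in Remark on Nominal Completion) plus density of $\ialg$ pins $\finalcarrier$ down as $\overline{\ialg}$.

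**Main obstacle.** The routine-looking steps about Cauchy sequences are fine; the point that needs care is the interaction of the axiom-of-choice failure in $\Nom$ with the construction of the approximants $t_n$ — we must build the map $\fnom{F}^n 1 \to \fnom{F}^n 0$ (or rather a section-like comparison) \emph{equivariantly and uniformly}, using only the single global point $s : 1 \to \fnom{F}0$ and functoriality, never choosing points fibrewise. Concretely the ``fill-in below depth $n$'' map should be defined as $\fnom{F}^n(!_{\fnom{F}0} \circ s \circ !) $-style composite so that it is a genuine $\Nom$-morphism; checking that this composite lands in $\ialg$, is support-shrinking, and approximates $t$ to within $2^{-n}$ is where essentially all the content sits. (This is also exactly the hypothesis that Theorem~\ref{thm:fin-v2} will relax, by allowing $\fnom{F}$ to act on finitely supported maps instead of demanding the global point $s$.)
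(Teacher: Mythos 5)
Your overall strategy coincides with the paper's: the same final chain with $\fcoalg=\fnom{F}^\omega 1$, the same metric $d(t,t')=2^{-\max\{n\,\mid\,p_nt=p_nt'\}}$ with equivariance inherited from the projections $p_n$, the same completeness argument exploiting a common finite support for an entire Cauchy sequence, and the same density argument building approximants of $x$ as (the image in $\fnom{F}^\omega 0$ of) $\fnom{F}^ns(p_nx)$, so that no fibrewise choices are ever made and supports can only shrink. You also correctly isolate the role of the global point $s:1\to\fnom{F}0$ as the device that substitutes for the axiom of choice, which fails in $\Nom$.

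The one step that would fail as written is your identification of the initial algebra with $\mathrm{colim}_{n<\omega}\fnom{F}^n0$. The theorem deliberately does not assume that $\fnom{F}$ preserves colimits of $\omega$-chains, so $\mathrm{colim}_{n<\omega}\fnom{F}^n0$ need not carry an algebra structure at all, let alone be initial. The paper instead continues the initial chain transfinitely, shows by induction that each $\fnom{F}^\iota0$ embeds into $\fnom{F}^\iota1\cong\fnom{F}^\omega1$ (using preservation of monomorphisms), and invokes well-poweredness of $\Nom$ to conclude that this chain of subobjects of $\fnom{F}^\omega 1$ stabilises at some ordinal $\iota\ge\omega$; the initial algebra $\ialg$ is that stable value, which contains $\fnom{F}^\omega0$ but may be strictly larger. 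Fortunately the error does not propagate: since $\fnom{F}^\omega0\subseteq\ialg\subseteq\fcoalg$ and your approximants already land in $\fnom{F}^\omega0$, density of $\ialg$ in $\fcoalg$ still follows, and the remainder of your argument goes through unchanged.
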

\proof 
We split the proof in four parts.
\begin{enumerate}
\item[(1)]\emph{Existence of final $\fnom{F}$-coalgebra and initial
    $\fnom{F}$-algebra.}  Since it preserves limits of
  $\omega^\mathit{op}$-chains, the functor $\fnom{F}$ has a final
  coalgebra which can be computed as the limit of the
  $\omega^\mathit{op}$-chain
  \begin{equation}
    \label{eq:chain}
    \xymatrix{
      1\ar@{<-}[r]^{!} & \fnom{F}1 \ar@{<-}[r]^{\fnom{F}!} & \fnom{F}^21 \ar@{<-}[r]^{\fnom{F}^2!} & \cdots  \ar@{<-}[r] & \fcoalg
    }
  \end{equation}

  Since $\Nom$ is locally finitely presentable and $F$ preserves
  monomorphisms we have that the initial $F$-algebra exists and is a
  subobject of $\fcoalg$. For full details and the general proof
  see~\cite[Proposition~3.4]{Adamek03}.  The idea is to prove by
  transfinite induction that for every ordinal $\iota$ we have a
  monomorphism $u_\iota:\fnom{F}^\iota0\to\fnom{F}^\iota1$. For example for finite
  ordinals we put $u_i=\fnom{F}^i(u)$ where $u:0\to 1$ is the unique
  morphism into the final nominal set. For all $\iota\ge\omega$ we 
  have that $\fnom{F}^\iota0$ is a subobject of $\fnom{F}^\iota1\cong\fnom{F}^\omega1$. Using the fact that $\Nom$ is well powered, we know that $\fnom{F}^\omega1$ only has a set of
  subobjects.  Hence the initial sequence converges to
  the initial $\fnom{F}$-algebra, denoted by $\ialg$.  Moreover since
  for all $\iota\ge\omega$ we have
  $\fnom{F}^\iota1\cong\fnom{F}^\omega 1\cong\fcoalg$ there exists a
  monomorphism $v:\ialg\to\fcoalg$.

  \begin{equation}
    \label{eq:thm-fin}
\begin{gathered}
    \xymatrix
    {
      0\ar[r]\ar[d]^{u} &  \fnom{F}0 \ar[r]\ar@{^{(}->}[d]^{u_1} &  \fnom{F}^20 \ar[r]\ar@{^{(}->}[d]^{u_2} & \cdots  \ar[r] & \fnom{F}^\omega0\ar@{^{(}->}[d]^{u_\omega}\ar[r]
      &\cdots\ar[r] & \fnom{F}^\iota 0\ar@{^{(}->}[d]^{u_\iota}&\cong &\ialg\ar@{^{(}-->}[d]^v \\
      1\ar@{<-}[r] & \fnom{F}1 \ar@{<-}[r] & \fnom{F}^21 \ar@{<-}[r] & \cdots  \ar@{<-}[r] & \fnom{F}^\omega1\ar@{<-}[r]^{\cong}&\cdots& \fnom{F}^\iota1\ar[l]&\cong&\fcoalg
    }
  \end{gathered}
\end{equation}

\item[(2)]\emph{Equivariant metrics on $\fcoalg$ and $\ialg$.}  First
  we define a metric on $\fcoalg$. Let $p_n:\fcoalg\to\fnom{F}^n1$
  denote the projections of the limit in~(\ref{eq:chain}). Put
  \begin{equation}
    \label{eq:metric-gen}
    d(x,y)=2^{-max\{n\ \mid \ p_nx=p_ny\}}
  \end{equation}
  with the convention that $2^{-\omega}=0$.  Considering the interval
  $[0,1]$ as a nominal set with the trivial action the map
  $d:\fcoalg\times\fcoalg\to[0,1]$ becomes equivariant. This follows
  easily since each $p_n$ is equivariant.

\item[(3)]\emph{$\fcoalg$ is a nominal complete metric space.}  Next
  we show that $(\fcoalg, d)$ is a nominal complete metric
  space. Consider a finitely supported Cauchy sequence
  $(x_n)_n\subseteq\fcoalg$. This implies that there exists a finite
  set $S\subseteq\setvars$ that supports all $x_n$. Without loss of
  generality we may assume that for every $n$ we have
  $d(x_n,x_{n+1})\le 2^{-n}$. Therefore, we have that
  $p_nx_n=p_nx_{n+1}$ for all $n$, or equivalently,
  $p_nx_n=\fnom{F}^n!(p_{n+1}x_{n+1})$. Since for all $n$ we have
  $\supp(x_n)\subseteq S$ and $p_n$ is equivariant, it follows that
  $\supp(p_nx_n)\subseteq S$ for all $n$. The existence of such a
  common support is essential, recall how limits of
  $\omega^\mathit{op}$-chains are computed in $\Nom$. It follows that
  there exists an element $x\in\fcoalg$ corresponding to the tuple
  $(p_nx_n)_n$. We thus have that $p_nx=p_nx_n$ for all $n$ and this
  proves that $x$ is a limit of $(x_n)_n$ with respect to the metric
  $d$.
\item[(4)]\emph{$\fcoalg$ is the nominal completion of $\ialg$.} Since
  $\ialg$ contains $\fnom{F}^\omega0$, see~\eqref{eq:thm-fin}, it is
  enough to prove that any element $x\in\fcoalg$ can be written as the
  limit of a Cauchy sequence of elements in $\fnom{F}^\omega0$.

  Next we use the existence of a morphism $s:1\to\fnom{F}0$. Notice
  that there exists a unique (finitely supported) map into $1$,
  therefore we have that $!u_1s=\mathit{id}_1$.  We define
  $x_n\in\fnom{F}^\omega0$ as the image under the inclusion
  $\fnom{F}^{n+1}0\to\fnom{F}^\omega0$ of $\fnom{F}^{n}s(p_nx)$.
  Notice that the support of $x_n$ is included in the support of $x$
  for all natural numbers $n$. It is easy to check that $p_nu_\omega
  x_n=p_nx$, thus $d(u_\omega x_n,x)\le 2^{-n}$. Therefore the
  sequence $(x_n)_n$ is a finitely supported Cauchy sequence whose
  limit is $x$.\qed
\end{enumerate}

\noindent In order to relax the assumption that the map $s:1\to\fnom{F}0$ is equivariant we need to require the functor $\fnom{F}$ to be $\Nom$-enriched. This means that for any two nominal sets $\setone$ and $\settwo$ we have an equivariant map $\fnom{F}_{\setone,\settwo}:[\setone,\settwo]\to[\fnom{F}\setone,\fnom{F}\settwo]$ which behaves well with respect to composition~\cite{KellyGM:bascec}, where $[\setone, \settwo]$ denotes the internal hom in the cartesian closed category $\Nom$, and consists of the finitely supported maps from $\setone$ to $\settwo$, see Definition~\ref{def:int-hom}. 

\begin{thm}
  \label{thm:fin-v2}
  Let $\fnom{F}:\Nom\to\Nom$ be a functor that preserves limits of
  $\omega^\mathit{op}$-chains and monomorphisms and such that $\fnom{F}0\neq 0$. Assume further that the functor $\fnom{F}$ is $\Nom$-enriched. Then $\fnom{F}$ has a final
  coalgebra $\fcoalg$ and an initial algebra $\ialg$, both of which can be
  equipped with equivariant metrics. Moreover the final coalgebra is
  the nominal completion of the initial algebra.
\end{thm}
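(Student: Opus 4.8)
The plan is to revisit the proof of Theorem~\ref{thm:fin} and to observe that only its last part actually uses the section $s\colon 1\to\fnom F0$, and that there $\Nom$-enrichment supplies exactly the extra structure needed when $s$ is merely finitely supported rather than equivariant. Parts (1)--(3) of that proof mention $s$ nowhere: preservation of $\omega^{\mathit{op}}$-limits yields the final coalgebra $\fcoalg$ as the limit of~\eqref{eq:chain}; preservation of monomorphisms together with $\Nom$ being locally finitely presentable and well powered yields the initial algebra $\ialg$ as a subobject $v\colon\ialg\hookrightarrow\fcoalg$; formula~\eqref{eq:metric-gen}, built from the equivariant projections $p_n$, equips $\fcoalg$ (and, by restriction along $v$, $\ialg$) with an equivariant metric; and $(\fcoalg,d)$ is nominally complete because a finitely supported Cauchy sequence has a common finite support, which is what makes the limiting tuple land in $\fcoalg$. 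So these three parts carry over verbatim, and the work is concentrated in part (4).

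For part (4) I would show that every $x\in\fcoalg$ is the limit of a finitely supported Cauchy sequence drawn from $\fnom F^\omega0\subseteq\ialg$. Since $\fnom F0\neq0$, choose $a\in\fnom F0$; as every element of a nominal set is finitely supported, the associated $\Set$-point $s\colon 1\to\fnom F0$ is a finitely supported map, $s\in[1,\fnom F0]$, with $\supp(s)=\supp(a)$. Iterating the enrichment maps $\fnom F_{-,-}$ produces finitely supported maps $s_n\colon\fnom F^n1\to\fnom F^{n+1}0$ (``$\fnom F^n s$''); since each $\fnom F_{X,Y}$ is equivariant it can only shrink support, so $\supp(s_n)\subseteq\supp(s)$ for all $n$. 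I would then set $x_n\in\fnom F^\omega0$ to be the image of $s_n(p_nx)$ under the inclusion $\fnom F^{n+1}0\to\fnom F^\omega0$. Using that a finitely supported map $f$ satisfies $\supp(f(t))\subseteq\supp(f)\cup\supp(t)$, one gets $\supp(x_n)\subseteq\supp(s)\cup\supp(x)$, a finite bound independent of $n$, so $(x_n)_n$ is finitely supported.

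Finally, exactly as in Theorem~\ref{thm:fin}, I would check $p_n(u_\omega x_n)=p_nx$, whence $d(u_\omega x_n,x)\le 2^{-n}$ and the sequence (viewed inside $\ialg$) is a finitely supported Cauchy sequence converging to $x$, which is the required presentation. That computation reduces to applying $\fnom F^n$ to the terminal-object identity $!\circ u_1\circ s=\mathrm{id}_1$ (valid for any, possibly non-equivariant, $s$, since $1$ is terminal), thereby obtaining $\fnom F^n!\circ u_{n+1}\circ s_n=\mathrm{id}_{\fnom F^n1}$; the only delicate point is that ``applying $\fnom F^n$ to a finitely supported composite and distributing it over the factors'' is legitimate, which is precisely the compatibility of the $\Nom$-enrichment with composition and units. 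I expect this bookkeeping --- checking that the enriched structure interacts correctly with composition, units and support bounds --- to be the only real obstacle; everything else is literally the proof of Theorem~\ref{thm:fin}.
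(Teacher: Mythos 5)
Your proposal is correct and follows essentially the same route as the paper: the paper's proof of Theorem~\ref{thm:fin-v2} likewise observes that $\fnom{F}0\neq 0$ is equivalent in $\Nom$ to the existence of a finitely supported point $s:1\to\fnom{F}0$, reruns parts (1)--(3) of the proof of Theorem~\ref{thm:fin} unchanged, and uses the equivariance of the enrichment maps $\fnom{F}_{\setone,\settwo}:[\setone,\settwo]\to[\fnom{F}\setone,\fnom{F}\settwo]$ (via Remark~\ref{rem:supp-shrink}) to bound $\supp(\fnom{F}^ns)$ by $\supp(s)$, so that the approximating Cauchy sequence is supported by $\supp(x)\cup\supp(s)$. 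Your additional bookkeeping about the enrichment respecting composition and units is exactly the point the paper leaves implicit.
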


\proof
Notice that in $\Nom$ the fact that $\fnom{F}0\neq 0$ is equivalent to the existence of a \emph{finitely supported} map $s:1\to \fnom{F}0$. Thus, if $\fnom{F}$ can be extended to finitely supported maps, the proof follows the same lines as Theorem~\ref{thm:fin} with a small difference. The maps $\fnom{F}^ns$ are finitely-supported rather than equivariant. Since $\fnom{F}$ is $\Nom$-enriched the support of $\fnom{F}^ns$ is included in the support of $s$ for any finitely supported map $s$. This follows by Remark~\ref{rem:supp-shrink} since the maps $\fnom{F}_{\setone,\settwo}:[\setone,\settwo]\to[\fnom{F}\setone,\fnom{F}\settwo]$ are equivariant.
 Therefore the support of the Cauchy sequence $(x_n)_n$ is included in the support of $x$ union the support of the map $s$.
\qed

As a side observation, notice that $\fnom{F}$ being an enriched $\Nom$-functor with respect to the cartesian symmetric monoidal structure of $\Nom$ is equivalent to the existence of a strength, a notion that we will also use in Section~\ref{sec:substitution}.

One can easily show that  functors obtained from the grammar~\eqref{eq:gramF} below have the properties required in Theorems~\ref{thm:fin} and~\ref{thm:fin-v2}.

\begin{prop} \label{prop:generalizedgrammar} Endofunctors on $\Nom$
  obtained from the grammar

  \begin{equation}
    \label{eq:gramF}
    F::= \setvars\  |\  K\  |\  \mathsf{Id}\  |\  \coprod F\  |\  F\times F\  |\  [\setvars] F
  \end{equation}
  preserve monomorphisms, epimorphims
and limits of $\omega^\mathit{op}$-chains and are $\Nom$-enriched. Above $K$ denotes a constant functor, $\coprod F$ denotes
  at most countable coproducts, while $[\setvars] F$ denotes
  precomposition with the abstraction functor introduced in
  Definition~\ref{def:abstraction}.
\end{prop}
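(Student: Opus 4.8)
The plan is to prove the proposition by structural induction on the grammar~\eqref{eq:gramF}, showing that each of the five closure properties (preservation of monomorphisms, preservation of epimorphisms, preservation of limits of $\omega^\mathit{op}$-chains, and $\Nom$-enrichment) is enjoyed by the base functors and is stable under the three combinators $\coprod$, $\times$ and $[\setvars](-)$. Thus the bulk of the argument is a table: for each property $\mathcal P$ and each rule of the grammar, either cite a known fact or give a one-line verification. I will treat the four properties more or less independently, since they do not interact.

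First I would dispatch the base cases. The constant functor $K$ trivially preserves everything of interest (monos, epis, and all limits, being constant), and it is $\Nom$-enriched via the constant map $[\setone,\settwo]\to[K,K]$. The identity functor $\mathsf{Id}$ is likewise trivial. For the functor $\setvars$ (constant at the nominal set of names), the same applies as for $K$. Next, the inductive steps. Monomorphisms in $\Nom$ are exactly the injective equivariant maps, and epimorphisms are the surjective ones, since the forgetful functor to $\Set$ creates all colimits and injectivity/surjectivity is checked pointwise; so preservation of monos and epis for $\coprod F_i$ and $F\times F$ reduces to the elementary facts that coproducts and binary products of injective (resp.\ surjective) functions are injective (resp.\ surjective) — for coproducts this is immediate, for products one uses that finite products in $\Nom$ are computed as in $\Set$. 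For $[\setvars](-)$, the abstraction functor is stated in the excerpt (citing \cite{Pitts-book}) to preserve all limits and colimits; since monos are the regular monos (kernel pairs) and epis the regular epis in $\Nom$, preservation of limits and colimits gives preservation of monos and epis directly, but in any case one can see it concretely from the description $[\setvars]f(\abs{x}\elone)=\abs{x}f(\elone)$.

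The preservation of $\omega^\mathit{op}$-limits is the step that needs the most care, and I expect it to be the main obstacle, specifically for the countable coproduct $\coprod F_i$. Constant functors, the identity, and $\setvars$ obviously preserve these limits; binary product preserves all limits since limits commute with limits; and $[\setvars](-)$ preserves all limits by \cite{Pitts-book}. The subtle point is that in $\Set$ countable coproducts do \emph{not} commute with $\omega^\mathit{op}$-limits in general, and the same failure can occur in $\Nom$ — but it does not here, because the limit of the relevant chain consists of \emph{finitely supported} tuples: given an $\omega^\mathit{op}$-chain whose objects are coproducts $\coprod_j F_j(\setone_n)$, a thread through the chain picks, at each stage, an element of some summand, and the connecting maps (being coproduct injections composed with the $F_j$-images of the chain maps) must keep the thread within a fixed summand $j$ from some point on; one then argues that finite support forces the thread to lie in a single summand $j$ throughout, after which the limit over that summand is exactly $F_j$ applied to the limit, and taking coproducts over $j$ gives the result. (This is the same phenomenon that makes $\Nom$-products of the $\f{(-)}$ form behave well, as recalled in the excerpt.) I would write this argument out carefully for the coproduct case and treat the other cases as routine.

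Finally, $\Nom$-enrichment. The base functors are enriched by trivial or identity maps as noted. For $\coprod F_i$, given finitely supported $f:\setone\to\settwo$ one sends it to the evident map $\coprod F_i(\setone)\to\coprod F_i(\settwo)$ acting as $F_i(f)$ on the $i$-th summand; the assignment $f\mapsto\coprod F_i(f)$ is equivariant because each $F_i$ is enriched and the coproduct structure is equivariant, and functoriality/compatibility with composition is inherited componentwise. For $F\times G$ one takes $(f)\mapsto (F_{\setone,\settwo}f)\times(G_{\setone,\settwo}f)$, using that the cartesian product of finitely supported maps is finitely supported. For $[\setvars](-)$, one checks directly that $f\mapsto[\setvars]f$ is equivariant and finitely supported as a map $[\setone,\settwo]\to[[\setvars]\setone,[\setvars]\settwo]$, using $\supp([\setvars]f)\subseteq\supp(f)$ which follows from the concrete formula and Remark~\ref{rem:supp-shrink}. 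Since composition of enriched functors is enriched, the induction goes through. Assembling the four inductions completes the proof. \qed
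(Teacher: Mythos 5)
Your proposal is correct and follows essentially the same route as the paper's proof, which is a terse structural induction citing the same facts: base cases trivial, countable coproducts commute with $\omega^\mathit{op}$-limits and preserve monos and epis, binary products commute with all limits, and $[\setvars](-)$ preserves everything because it is simultaneously a left and a right adjoint (\cite[Theorems~4.12 and~4.13]{Pitts-book}), with enrichment handled by citation to \cite{KellyGM:bascec} and \cite{PittsMGS2011notes}. One correction to your framing of the coproduct case: countable coproducts \emph{do} commute with limits of $\omega^\mathit{op}$-chains already in $\Set$, because $\omega^\mathit{op}$ is a connected diagram and the connecting maps preserve summands, so every thread is confined to a single summand from the outset --- finite support is not what rescues the argument there (it only enters when identifying the $\Nom$-limit with the finitely supported part of the $\Set$-limit); your actual verification is the right one, but the claimed failure in $\Set$ is not a real failure.
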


\proof It is immediate that the constant functors $\setvars$ and $K$,
the identity functor $\mathsf{Id}$ preserve all limits and colimits.  Countable
coproducts commute in $\Nom$ with limits of
$\omega^\mathit{op}$-chains and preserve monomorphisms   and epimorphisms.
Binary
products commute in $\Nom$ with all limits and preserve epimorphisms. Finally, the abstraction
functor $[\setvars](-):\Nom\to\Nom$ is both a right adjoint,
see~\cite[Theorem~4.12]{Pitts-book}, and a left adjoint, see~\cite[Theorem~4.13]{Pitts-book}. Therefore $[\setvars](-)$
preserves all limits and colimits, in particular limits of
$\omega^\mathit{op}$-chains,  monos and epis. 
The fact that $[\setvars](-)$ is $\Nom$-enriched follows from~\cite[Lemma~4.10]{PittsMGS2011notes}. The functors obtained from  products and coproducts can be easily proved to be $\Nom$-enriched, see~\cite{KellyGM:bascec}.
 \qed

As a corollary, we can apply Theorem~\ref{thm:fin} to any endofunctor
of the form~(\ref{eq:gramF}) with the additional property that a $\Nom$-morphism 
$s:1\to\fnom{F}0$ exists. Similarly we can apply Theorem~\ref{thm:fin-v2} to any functor of the form~\eqref{eq:gramF} with the additional property that $\fnom{F}0\neq 0$. 

As an aside, note that the class of  functors with the property  that $\fnom{F}0\neq 0$ is not  closed under countable products:

\begin{exa}
  Consider the functors $\fnom{F}_n:\Nom\to\Nom$ defined by
  $\fnom{F}_n(X)=P_n(\setvars)+X$ where $P_n(\setvars)$ is the nominal
  set of subsets of variables of cardinality $n$ with the pointwise
  action. Observe that for every $n$ we have $\fnom{F}_n0\neq 0$ , but
  $\prod\limits_{n<\omega} \fnom{F}_n0=0$.
\end{exa}

\subsection{Nominal Algebraic Data Types for Binding Signatures}
\label{sec:nom-colag-dt}

In this section we will introduce binding signatures 
 \cite{fiore:lics99} and we will see
how the set of finite raw terms, respectively the set of
$\alpha$-equivalence classes of terms for a binding signature can be
obtained as initial algebras for $\Nom$-functors. The results in this
section are based
on~\cite{gabb-pitt:lics99-j,PittsAM:alpsri,PittsMGS2011notes}.

\begin{defi}[Terms coming from a binding signature]
 \label{definition:bindingsignature} A \emph{binding
    signature} is a pair $(\Sigma,\mathsf{ar})$ where $\Sigma$ is a
  set of operations and $\mathsf{ar}:\Sigma\to\mathbb{N}^*$ specifies
  the binding arity of the operations.  The set of finite \emph{raw
    terms} $T_\Sigma$ for a binding signature $(\Sigma,\mathsf{ar})$
  is defined by the inductive rules

  \begin{equation}
    \label{eq:raw-terms}
    \begin{array}{ccc}
      \ProofRule{}{x\in T_\Sigma}{x\in\setvars} & \qquad & 
       \ProofRule{t_1\in T_\Sigma,\ldots, t_k\in T_\Sigma}{\opsfc{x}{k}{t}\in T_\Sigma}{\mathsf{ar}({\opsf})=n_1,\ldots, n_k}
    \end{array}
  \end{equation}
where in the second rule of~\eqref{eq:raw-terms} 
  $\overline{x_i}$  denotes a list $x_i^1,\ldots, x_i^{n_i}$ of
  of names of length $n_i$ where $n_i \geq 0$ and $1 \leq i \leq k$.
  If $k=0$, we write ${\sf op} ()$ for representing constants. 
  Note that~\eqref{eq:raw-terms} defines raw
  terms, not quotiented by $\alpha$-equivalence, but the intention
  here is to say that $\opsf$ binds the names in the list
  $\overline{x_i}$ in the scope given by the term $t_i$.
\end{defi}

\begin{exa}[$\lambda$-terms]
  \label{ex:bind-sig-lam-cal}
  The binding signature for $\lambda$-calculus consists of two
  operations $\Abs$ (abstraction) and $\App$ (application) with
  respective arities
  \begin{equation}
    \label{eq:ex-bind-sig-lam}
    \begin{array}{c}
      \mathsf{ar}(\Abs)=1,\\
      \mathsf{ar}(\App)=0,0.
    \end{array}
  \end{equation}
  The raw terms in this signature are  the finite $\lambda$-terms
   {\em not}  quotiented by $\alpha$-equivalence. 
  For example, $\lambda x.x$ is written as $\Abs (\langle x \rangle x)$
  and $( x \ y)$ as $\App (\langle \rangle x, \langle \rangle y)$. 
  Recall
  from Example~\ref{ex:lambda-nom} that the set of $\lambda$-terms
   can be equipped with an
  action of the group $\grp$ and thus turned into a nominal set.
\end{exa}

We can define inductively a $\grp$-action on raw terms for an
arbitrary binding signature
\begin{equation}
  \label{eq:act-raw-terms}
  \begin{array}[ ]{ll}
    \pi\cdot x=\pi(x) & \textrm{ if } x\in\setvars,\\
    \pi\cdot\opsfc{x}{k}{t}={\sf op}({\langle\pi\cdot\overline{x_1}\rangle}.{t_1},\ldots,\langle\pi\cdot\overline{{x_k}}\rangle.{t_k}),
  \end{array}
\end{equation}
where $\pi$ acts pointwise on the lists $x_i$, that is, if
$\overline{x_i}=x_i^1,\ldots,x_i^{n_i}$ then $\pi\cdot\overline{x_i}$
is the list $\pi\cdot x_i^1,\ldots,\pi\cdot x_i^{n_i}$. Then we have
that $(T_\Sigma,\cdot)$ is a nominal set. Notice that the support of a
term $t\in T_\Sigma$ is then the set of all variables occurring in
$t$.

\begin{defi}[Free and bound variables]
  We can inductively define the set of free variables in a term $t\in
  T_\Sigma$:
  \begin{equation}
    \label{eq:fv-in-terms}
    \begin{array}[ ]{ll}
      \fv(x)={x} & \textrm{ if } x\in\setvars,\\
      \fv(\opsfc{x}{k}{t})=\bigcup_{i=1}^k\fv(t_i)\setminus \bigcup_{i=1}^k\overline{x_i}.
    \end{array}
  \end{equation}
  The set of bound variables is defined similarly:
  \begin{equation}
    \label{eq:bv-in-terms}
    \begin{array}[ ]{ll}
      \bv(x)=\emptyset & \textrm{ if } x\in\setvars,\\
      \bv(\opsfc{x}{k}{t})=\bigcup_{i=1}^k\bv(t_i)\cup \bigcup_{i=1}^k\overline{x_i}.
    \end{array}
  \end{equation} 
\end{defi}
For the binding signature in Example~\ref{ex:bind-sig-lam-cal} we
obtain the usual definition of free and bound variable in a
$\lambda$-term.

We will now proceed to define $\alpha$-equivalence on finite raw terms
for a binding signature. We will use the fact that $T_\Sigma$ is a
nominal set. First let us recall the case of $\lambda$-calculus. On
finite $\lambda$-terms, $\alpha$-equivalence can be defined inductively
using the permutation action $\cdot:\grp\times\Lb\to\Lb$ of
Example~\ref{ex:lambda-nom}, see~\cite{gabb-pitt:lics99-j}. 

\begin{defi}[Alpha equivalence on  $\lambda$-terms]
\label{def:alphafinitelambdaterms}
Let $M, N,
M', N' \in \Lb$. The relation $=_\alpha$ is the least equivalence
relation closed under the rules:
\[
\label{eq:alpha-rules}
\begin{array}{c}
  \ProofRule{}{x =_\alpha x }{var} \qquad   \ProofRule{M=_\alpha N\quad M'=_\alpha N'  }{MM'=_\alpha NN' }{app} \\[1.5em]
  \ProofRule{(x\; z)\cdot M=_\alpha(y\; z)\cdot  N \ \  z\fresh  (x,y,  M,N) } 
  {\lambda x.M=_\alpha \lambda y.N }{abs}          
\end{array}  
\]
\end{defi}
The relation $=_\alpha$ is equivariant, that is, $M=_\alpha N$ implies
$\pi\cdot M=_\alpha\pi\cdot N$ for all $\pi\in\grp$. Thus we obtain a
nominal set $(\Lbaone,\cdot)$ where $\supp(M) = \fv(M)$.

\begin{defi}[Alpha equivalence on terms coming from a binding signature]
  \label{def:aequiv-fin-terms}
  For an arbitrary binding signature $(\Sigma,\mathsf{ar})$, we can
  define $\alpha$-equivalence on finite  terms inductively:

  \[
  \label{eq:alpha-rules-gen-sig}
  \begin{array}{c}
    \ProofRule{}{x =_\alpha x }{var}  \\[1.5em]
    \ProofRule{(\overline{x_i}\;\overline{z_i})\cdot t_i =_\alpha (\overline{y_i}\;\overline{z_i})\cdot s_i \quad 1\le i\le k \quad  \textrm{ for fresh }\overline{z_i}}{\opsfc{x}{k}{t}=_\alpha\opsfc{y}{k}{s}}{\opsf}\\  
  \end{array}  
  \vspace{0.5em}
  \]
  where each $\overline{z_i}$ is a list of distinct elements of length
  $n_i$ which are fresh for all the terms involved. Also, by
  $(\overline{y_i}\;\overline{z_i})$ we mean the composition of
  transposition $(x_i^1\;z_i^1)\ldots (x_i^{n_i}\;z_i^{n_i})$.
\end{defi}

The relation $=_\alpha$ is equivariant, that is, $t=_\alpha s$ implies
$\pi\cdot t=_\alpha\pi\cdot s$ for all $\pi\in\grp$. Thus we obtain a
nominal set $(\TSiga,\cdot)$. The equivalence class of a term $t\in
T_\Sigma$ is denoted by $[t]_\alpha$ and we have that
$\supp([t]_\alpha) = \fv(t)$.

We can express the $\alpha$-equivalence classes of terms in a binding
signature as an initial algebra for a $\Nom$-functor. The next result
allowed Gabbay and Pitts to formulate $\alpha$-structural induction
and recursion principles, see~\cite{PittsMGS2011notes,
  gabb-pitt:lics99-j,PittsAM:alpsri}.

\begin{thm}[Nominal algebraic data types]
  The nominal set $(\TSiga,\cdot)$ of finite $\alpha$-equivalence
  classes of terms is the initial algebra for the functor
  $\fnom{F}_\alpha:\Nom\to\Nom$ given by:
  \begin{equation}
    \label{eq:F-alpha}
\fnom{F}_\alpha\ \setone=\setvars+\coprod\limits_{\substack{\mathsf{op}\in\Sigma \\ \mathsf{ar}(\mathsf{op})=n_1,\ldots, n_k }}[\setvars]^{n_1}\setone\times\ldots\times[\setvars]^{n_k}\setone.    
  \end{equation}
\end{thm}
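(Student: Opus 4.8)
The plan is to exhibit the required initial $\fnom{F}_\alpha$-algebra structure and then prove its universal property by reduction to known $\alpha$-structural recursion for the $\lambda$-calculus-style case, generalised to an arbitrary binding signature. First I would set up the algebra map. The $\grp$-action on $\TSiga$ is as described above, and I would define the structure map $\fnom{F}_\alpha\TSiga\to\TSiga$ on summands: on the $\setvars$ summand, $x\mapsto[x]_\alpha$; on the summand corresponding to an operation $\mathsf{op}$ with $\mathsf{ar}(\mathsf{op})=n_1,\ldots,n_k$, an element of $[\setvars]^{n_1}\TSiga\times\cdots\times[\setvars]^{n_k}\TSiga$ is sent to $[\opsfc{x}{k}{t}]_\alpha$, where the $i$-th iterated abstraction $\abs{x_i^1}\cdots\abs{x_i^{n_i}}[t_i]_\alpha$ is unpacked into a list $\overline{x_i}$ and a term $t_i$ by repeatedly using the fact that $\abs{x}{[t]_\alpha}$ determines $(x,[t]_\alpha)$ up to $\alphaconvg$. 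I would check this is well defined: different representatives of the abstraction classes differ by a transposition with a fresh variable, which is exactly the content of the $\mathsf{op}$-rule in Definition~\ref{def:aequiv-fin-terms}, so the resulting $\alpha$-class is unchanged. Equivariance of the structure map is immediate from~\eqref{eq:act-raw-terms}, \eqref{eq:act-abstraction} and the equivariance of $[-]_\alpha$. That this algebra map is in fact an isomorphism of nominal sets follows from the inductive definition of $T_\Sigma$ together with the explicit description of $\alpha$-equivalence: every $\alpha$-class is either a variable class or of the form $[\opsf(\ldots)]_\alpha$, and these cases are disjoint, giving surjectivity; injectivity is again the $\mathsf{op}$-rule. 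So $\TSiga$ carries an $\fnom{F}_\alpha$-algebra structure and I may henceforth treat $\fnom{F}_\alpha\TSiga\cong\TSiga$.

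Next I would prove initiality, i.e.\ that for any $\fnom{F}_\alpha$-algebra $(\setone,a\colon\fnom{F}_\alpha\setone\to\setone)$ there is a unique $\fnom{F}_\alpha$-algebra morphism $h\colon\TSiga\to\setone$. For existence, I would proceed by $\alpha$-structural recursion. The idea is: $T_\Sigma$ is the initial algebra in $\Set$ of the (non-binding) signature functor obtained by replacing each $[\setvars]^{n_i}(-)$ by $\setvars^{n_i}\times(-)$, so ordinary structural recursion on raw terms gives a candidate map $\bar h\colon T_\Sigma\to\setone$ defined by $\bar h(x)=a(\inl\, x)$ and $\bar h(\opsfc{x}{k}{t})=a(\mathsf{op}(\abs{\overline{x_1}}\bar h(t_1),\ldots,\abs{\overline{x_k}}\bar h(t_k)))$, where $\abs{\overline{x_i}}(-)$ abbreviates the iterated abstraction. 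To descend to $\TSiga$ I would verify $t=_\alpha s\implies\bar h(t)=\bar h(s)$; this is an induction on the derivation of $t=_\alpha s$ using the $\mathsf{op}$-rule, the fact that abstraction identifies exactly $\alphaconvg$-related pairs, and the finite-support/freshness machinery (choosing the $\overline{z_i}$ fresh also for $\bar h(t_i)$, $\bar h(s_i)$, which is possible since all these have finite support). Equivariance of the induced $h\colon\TSiga\to\setone$ follows from equivariance of $a$ and a straightforward induction. By construction $h$ commutes with the algebra structures. Alternatively — and more in the spirit of the paper — I would invoke the $\alpha$-structural recursion principle of Pitts and Gabbay~\cite{PittsMGS2011notes,gabb-pitt:lics99-j,PittsAM:alpsri} directly: the algebra $a$ packages exactly the data (one clause per operation, with the freshness condition on the bound names automatically handled by working in $\Nom$ and by the finite support of $a$) required by that principle, which then yields both existence and uniqueness of $h$.

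For uniqueness I would argue by $\alpha$-structural induction: if $h,h'\colon\TSiga\to\setone$ are both $\fnom{F}_\alpha$-algebra morphisms, then on variable classes $h([x]_\alpha)=a(\inl\, x)=h'([x]_\alpha)$, and on a class $[\opsfc{x}{k}{t}]_\alpha$, choosing a representative whose bound names are sufficiently fresh (fresh for everything in sight), both $h$ and $h'$ must equal $a$ applied to $\mathsf{op}$ of the abstractions of the values on the immediate subterms, which agree by the induction hypothesis; freshness is what legitimises commuting $h$ past $[\setvars]^{n_i}(-)$, using $([\setvars]^{n_i}h)(\abs{\overline{x_i}}u)=\abs{\overline{x_i}}h(u)$. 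The main obstacle — the only genuinely delicate point — is the well-definedness of the clause for $\mathsf{op}$ on $\alpha$-classes and, dually, the invariance $t=_\alpha s\implies \bar h(t)=\bar h(s)$: here one must handle lists of bound names of arbitrary length $n_i$ simultaneously rather than a single binder, and one must be careful that the freshness witnesses $\overline{z_i}$ can be chosen avoiding not just the syntactic terms but also the (finitely supported) values $\bar h(t_i)$ in $\setone$ and the support of the algebra map $a$ itself. Everything else is a routine transfer of the Gabbay–Pitts argument for the single-binder signature of~\eqref{eq:FLambda} to the general binding signature of Definition~\ref{definition:bindingsignature}.
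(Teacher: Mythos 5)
The paper does not actually prove this theorem: it is stated as a known result of Gabbay and Pitts, with the proof delegated to the cited references on $\alpha$-structural recursion, so there is no in-paper argument to compare against. Your proposal reconstructs exactly the standard argument from those sources, generalised from the single-binder $\lambda$-signature to an arbitrary binding signature: equip $\TSiga$ with the evident $\fnom{F}_\alpha$-algebra structure (checking the operation clause is independent of representatives of the abstraction classes), obtain the mediating map by ordinary structural recursion on the raw-term initial algebra $T_\Sigma$, and descend along $[-]_\alpha$ by an induction on derivations of $t=_\alpha s$. The one step worth making explicit is that equivariance of the raw-term map $\bar h$ (proved by structural induction from equivariance of the algebra map $a$) must be established \emph{before} that descent, since it is what lets you commute $\bar h$ past the transpositions $(\overline{x_i}\;\overline{z_i})$ in the $\mathsf{op}$-rule; and because $a$ is an equivariant $\Nom$-morphism its support is empty, so your worry about choosing the $\overline{z_i}$ fresh for $a$ evaporates --- that caveat only matters for the finitely supported recursion data of Pitts' more general principle.
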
\medskip

\noindent On the other hand, the nominal set $(T_\Sigma,\cdot)$ of finite raw
terms is the initial algebra for the functor $\fnom{F}:\Nom\to\Nom$
given by:
\begin{equation}
  \label{eq:eq:F-simple}
\fnom{F}\ \setone=\setvars+\!\!\!\!\!\!\!\!\!\!\!\coprod\limits_{\substack{\mathsf{op}\in\Sigma \\ \mathsf{ar}(\mathsf{op})=n_1,\ldots, n_k }}\!\!\!\!\!\!\!\!\!\!\!(\setvars^{n_1}\times\setone)\times\ldots\times(\setvars^{n_k}\times\setone).
\end{equation}
We can also obtain $T_\Sigma$ as the initial algebra for a $\Set$-functor. Indeed, we define a functor
\begin{equation}
  \label{eq:F-Set}
F:\Set\to\Set  
\end{equation}
by the same formula as~\eqref{eq:eq:F-simple}. The initial $F$-algebra
is the set $T_\Sigma$. Notice that $\fnom{F}$ is a lifting of $F$
to nominal sets in the sense that the next diagram commutes.

\begin{equation}
  \label{eq:F-extension}
 \begin{gathered}
  \xymatrix
{
\Nom\ar[r]^{\fnom{F}}\ar[d]_{U}& \Nom\ar[d]^{U}\\
\Set\ar[r]^{F}&\Set
}
\end{gathered}
\end{equation}

\begin{rem}
  Notice that both functors $\fnom{F}_\alpha$ and $\fnom{F}$ are
  obtained from grammar~\eqref{eq:gramF}.
\end{rem}

\begin{exa}
\label{exa:funct-lam-calc}
  Consider the binding signature for $\lambda$-calculus of
  Example~\ref{ex:bind-sig-lam-cal}. Then the nominal set
  $(\Lbaone,\cdot)$ is the initial algebra of the functor
  $\FLambda:\Nom\to\Nom$ already mentioned 
   in the introduction~\eqref{eq:FLambda}:
  \[
  \FLambda \ \setone = \setvars+[\setvars]\setone+ \setone\times
  \setone
  \]
  while the raw $\lambda$-terms for a nominal set which is the initial
  $\fnom{L}$-algebra for $\fnom{L}:\Nom\to\Nom$ given by
  \[
  \fnom{L} \ \setone = \setvars+\setvars\times\setone+ \setone\times
  \setone.
  \]

\end{exa}

\subsection{An abstract account of $\alpha$-equivalence.} In the above
example $\alpha$-equivalence is the kernel of the map
$\Lambda\twoheadrightarrow\Lbaone$. Using
Definition~\ref{def:abstraction} we see that this map is induced at
the level of $\Nom$-endofunctors by the natural transformation
\begin{equation}
  \label{eq:q-abs}
  \theta_\setone:\setvars\times \setone\twoheadrightarrow [\setvars]\setone
\end{equation}
defined by $(x,\elone)\mapsto\langle x\rangle\elone$.

Consider an arbitrary $\fnom{F}_\alpha:\Nom\to\Nom$ obtained from
grammar~(\ref{eq:gramF}) or, equivalently, from a binding signature $\Sigma$. To $\fnom{F}_\alpha$  we can associate a functor $\fnom{F}:\Nom\to\Nom$ in which all occurrences of the abstraction functor $[\setvars](-)$ are replaced by $\setvars\times -$. Therefore $\fnom{F}$ is a polynomial functor and the natural transformation~(\ref{eq:q-abs}) induces a
natural transformation $q:\fnom{F}\to \fnom{F}_\alpha$ given by the following inductive rules:
\begin{equation}
  \label{eq:8}
  \begin{array}{c}
    \rules{\fnom{F}_\alpha=\mathsf{Id}}{\fnom{F}=\mathsf{Id}\quad q=\mathit{id}}
\qquad 
   \rules{\fnom{F}_\alpha=\setvars}{\fnom{F}=\setvars\quad q=\mathit{id}}
\qquad
   \rules{\fnom{F}_\alpha=\mathsf{K}}{\fnom{F}=\mathsf{K}\quad q=\mathit{id}} \\[1.5em]
    \rules{\fnom{F}_\alpha=[\setvars]\fnom{F'_\alpha}\qquad q':\fnom{F'}\to\fnom{F'_\alpha}}{\fnom{F}=\setvars\times\fnom{F'}\qquad q=\theta\circ(\setvars\times q')}\\[1.5em] 
 \rules{\fnom{F}_\alpha=\fnom{F'_\alpha}\times\fnom{F''_\alpha}\qquad q':\fnom{F'}\to\fnom{F'_\alpha}\qquad q'':\fnom{F''}\to\fnom{F''_\alpha}}{\fnom{F}=\fnom{F'}\times\fnom{F''}\qquad q=q'\times q''} \\[1.5em] 
   \rules{\fnom{F}_\alpha=\coprod\fnom{(F_i)_\alpha}\qquad q_i:\fnom{F_i}\to\fnom{(F_i)_\alpha}}{\fnom{F}=\coprod\fnom{F_i}\qquad q=\coprod q_i}
  \end{array}
\end{equation}
Using the initial chain, this in turn gives a surjective map from the
initial $\fnom{F}$-algebra $\ialg=(T_\Sigma,\cdot)$ to the initial
$\fnom{F}_\alpha$-algebra $\ialga=(\TSiga,\cdot)$.  Indeed, we have the
following diagram
\begin{equation}
  \label{eq:init-chain-surj}
  \begin{gathered}
    \xymatrix {
      0\ar[r]\ar[d]^{\amap{0}} & \fnom{F}0 \ar[r]\ar@{->>}[d]^{\amap{1}} & \fnom{F}^20 \ar[r]\ar@{->>}[d]^{\amap{2}} & \cdots  \ar[r] & \ialg\ar@{-->>}[d]^{[-]_\alpha} \\
      0\ar[r] & \fnom{F}_\alpha0 \ar[r] & \fnom{F}_\alpha^20 \ar[r] &
      \cdots \ar[r] & \ialga }
  \end{gathered}
\end{equation}
where $\amap{0}= id_{0}$  and 
$\amap{n+1}=q_{\fnom{F}_\alpha^n0}\circ\fnom{F}\amap{n}.$  The horizontal arrows are defined as in~\eqref{eq:in-chain-prel}.

Since the functors
$\fnom{F}$ obtained from grammar~\eqref{eq:gramF} preserve
surjections, we have that all the maps $\amap{n}$ are
surjective. The compositions
$\fnom{F}^n0\twoheadrightarrow\fnom{F}_\alpha^n0\to \ialga$ form a
cocone, thus using the universal property of the colimit $\ialg$ we
obtain a unique map $[-]_\alpha:\ialg\to\ialga$ such that
diagram~\eqref{eq:init-chain-surj} commutes. To show that $[-]_\alpha$
is surjective assume $v,w:\ialga\to B$ are two equivariant maps such
that $v[-]_\alpha=w[-]_\alpha$. It follows 
that $v$ and $w$ equalise
$\fnom{F}^n0\twoheadrightarrow\fnom{F}_\alpha^n0\to \ialga$ for all $n$.
 Using the universal property of $\ialga$, it follows that $v=w$. Hence
$[-]_\alpha$ is surjective.

\begin{rem}
  We had to explain why the maps $\amap{n}$ are surjections. In
  $\Set$ this follows easily, since all $\Set$-functors preserve
  surjections (we just have to consider a right inverse). In $\Nom$ we
  have equivariant surjective maps that do not have right
  inverses. E.g.\ the unique map from $\setvars$ to the final nominal
  set $\{*\}$ has no equivariant right inverse, because the support of
  $*$ is empty whereas all elements of $\setvars$ have non-empty
  support. 
\end{rem}

  Given a binding signature $(\Sigma,\mathsf{ar})$ we described $\alpha$-equivalence on finite terms in two  equivalent ways:
  \begin{itemize}
  \item \emph{syntactically},  as in Definition~\ref{def:aequiv-fin-terms};
  \item \emph{semantically},  via initial algebras, as in~\eqref{eq:init-chain-surj}.
  \end{itemize}

While the latter presentation of $\alpha$-equivalence  may seem rather pedantic, the abstract perspective sheds some light on the problems one has with defining  $\alpha$-equivalence and finding representatives for $\alpha$-equivalence classes in the infinitary case.

\subsection{Problems with Alpha Equivalence in the Infinitary Case} 
\label{sec:ilcinnom}

As an illustration, let us first look at possible definitions of $\alpha$-equivalence for the infinitary $\lambda$-calculus. We see that there are two possible ways of defining $\alpha$-equivalence classes and, contrary to all expectations, they are not equivalent in case of a countable set of variables $\setvars$.

Recall that the set of raw infinitary $\lambda$-terms $\Lib$ is the final coalgebra of the $\Set$-functor in~\eqref{eq:set-funct-lambda}. We define $\alpha$-equivalence on the set $\Lbi$
using truncations.  This definition is slightly
different from, though equivalent, to those used
in~\cite{kenn:klop:slee:vrie:1995b,KKSV97,KV03}. 

\begin{defi}[Alpha equivalence on infinitary $\lambda$-terms]\label{def:alpha-inf}
  We extend the notion of $\alpha$-conversion to the
  set   $\Lib$ via
  \begin{center}
    $ M \alphaconv N $ iff $M^n \alphaconv N^n$ for all $n \in \nat$.
  \end{center}
We thus obtain the quotient $$\Lbia.$$
The notion of truncation can be extended to $\Lbia$ via $\eqclassa{M}^n = \eqclassa{M^n}$.
\end{defi}

A second approach to define the set of $\alpha$-equivalence classes of
infinitary terms is to consider the metric completion of the quotient
$\Lbaone$.

\begin{defi}[Metric on $\alpha$-equivalence classes]
  We define $d_\alpha: \Lb \times \Lb \to [0,1]$ via
  \begin{equation}
    \label{eq:dalpha}
    d_\alpha(M, N) = \inf\{ 2^{-n} \mid M^n =_\alpha N^n,\ n \in \nat\}.
  \end{equation}
\end{defi}
 
We have that $d_\alpha$ is a pseudometric on $\Lb$ and
$d_\alpha(M,N)=0 $ if and only if $M=_\alpha N$. Thus $d_\alpha$ gives
rise to a metric on $\Lbaone$ denoted by abuse of notation also by
$d_\alpha$. 

We consider the metric completion 
of $\Lbaone$ with respect to $d_\alpha$, denoted by
$$\clba.$$
Observe that $d_{\alpha}$ extends to a pseudometric
$\dab:\Lbi\times\Lbi\to [0,1]$ given by the same formula as in~\eqref{eq:dalpha}.
Then $M=_\alpha N$ in the sense of
Definition~\ref{def:alpha-inf} if and only if $\dab(M,N)=0$. Hence
we obtain a metric  on $\Lbia$, also denoted by $\dab$.
\begin{thm}\label{thm:uncountable-vars}
\label{theorem:metricompletionofLba}
 Let  $\setvars$ be uncountable. Then we have that 
  $(\Lbia,\dab)$ is  isomorphic to   $\clba$. 
\end{thm}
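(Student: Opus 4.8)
The strategy is to exhibit $(\Lbia,\dab)$ as \emph{a} metric completion of $(\Lbaone,d_\alpha)$ and then invoke uniqueness of completions. There is an evident map $j:\Lbaone\to\Lbia$ sending the $\alpha$-class of a finite term $M$ to the class $[M]_\alpha$ formed inside $\Lbi$ (well-defined since truncation commutes with $=_\alpha$). On finite terms the relation $=_\alpha$ inherited from $\Lbi$ coincides with $=_\alpha$ on $\Lb$, and $d_\alpha$ and $\dab$ are given by literally the same formula $\inf\{2^{-n}\mid M^n=_\alpha N^n\}$; hence $j$ is an isometric embedding (in particular injective). So the theorem reduces to two assertions: \textbf{(a)} $j(\Lbaone)$ is dense in $(\Lbia,\dab)$; \textbf{(b)} $(\Lbia,\dab)$ is complete. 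Granting (a) and (b), $(\Lbia,\dab)$ is a completion of $(\Lbaone,d_\alpha)$, hence isometrically isomorphic to $\clba$ by uniqueness of metric completions, with the iso commuting with the canonical embeddings of $\Lbaone$. Only step (b) will use the hypothesis that $\setvars$ is uncountable; (a) holds for any $\setvars$.

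\emph{Density.} First, $\dab(M,N)\le d(M,N)$ for all $M,N\in\Lbi$, since $M^n=N^n$ implies $M^n=_\alpha N^n$ and the set $\{n\mid M^n=N^n\}$ is downward closed. Now given $[M]_\alpha\in\Lbia$ with $M\in\Lbi$, pick finite terms $M_n\in\Lb$ with $d(M_n,M)\to 0$ (possible since $\iota:\Lb\to\Lbi$ has dense image). Then $\dab(M_n,M)\le d(M_n,M)\to 0$, so $j[M_n]_\alpha\to[M]_\alpha$ in $(\Lbia,\dab)$, which proves (a).

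\emph{Completeness.} Let $([M_k]_\alpha)_{k\ge 1}$ be Cauchy in $(\Lbia,\dab)$; it suffices to show a subsequence converges, so we may assume $M_k^k=_\alpha M_{k+1}^k$ for all $k$. We build inductively representatives $N_k=_\alpha M_k$ with $(N_{k+1})^k=N_k^k$. Put $N_1=M_1$; given $N_k=_\alpha M_k$ we have $M_{k+1}^k=_\alpha M_k^k=_\alpha N_k^k$, so the inductive step follows from the \textbf{Key Lemma}: if $u\in\Lbi$, $k\in\nat$, $t\in\Lb$ with $u^k=_\alpha t$, and $\setvars\setminus(\var(u)\cup\var(t))$ is infinite, then there is $u'=_\alpha u$ with $(u')^k=t$. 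Since an infinite $\lambda$-term has countably many nodes, $\var(M_{k+1})$ is countable, so the hypothesis holds under our assumption on $\setvars$; taking $N_{k+1}:=u'$ with $u=M_{k+1}$, $t=N_k^k$ finishes the construction. To prove the Key Lemma one first $\alpha$-renames every $\lambda$-binder of $u$ occurring at depth $\ge k$ to a distinct name from $\setvars\setminus(\var(u)\cup\var(t))$: such a binder, and every occurrence it binds, lies at depth $\ge k$, so this does not change $u^k$; call the result $u_1$. Now $(u_1)^k=u^k=_\alpha t$, and the binders of $u_1$ at depth $<k$ are exactly those of $(u_1)^k$, which correspond under the $\alpha$-equivalence to those of $t$; renaming them outermost-in to the matching names of $t$ yields $u'$ with $(u')^k=t$. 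This last renaming is capture-free because the only names introduced lie in $\var(t)$, which is disjoint from the freshly chosen names of the deep binders of $u_1$, and because it is globally consistent with the well-formed term $t$. Finally, from $(N_{k+1})^k=N_k^k$ we get $d(N_k,N_{k+1})\le 2^{-k}$, so $(N_k)_k$ is Cauchy in $(\Lbi,d)$ and converges to some $N\in\Lbi$; since $(N_j)^k=N_k^k$ for all $j\ge k$ we get $N^k=N_k^k=_\alpha M_k^k$, hence $\dab([N]_\alpha,[M_k]_\alpha)\le 2^{-k}$. Thus $[M_k]_\alpha\to[N]_\alpha$, proving (b).

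\emph{Main obstacle.} The crux is the Key Lemma, i.e.\ the capture-avoidance bookkeeping needed to rename the deep binders of an infinite term into a supply of fresh variables. This is precisely the step that breaks down when $\setvars$ is countable: one can exhibit a Cauchy sequence of $\alpha$-classes (Example~\ref{ex:alpha-inf-lam}) whose only candidate limit would require renaming infinitely many nested binders after the available variables have already been exhausted, so that $(\Lbia,\dab)$ is then \emph{not} complete and the canonical map $\Lbia\to\clba$ fails to be surjective.
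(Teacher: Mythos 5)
Your overall architecture is exactly the one the paper has in mind (the paper explicitly omits the proof but states that the idea is to show $(\Lbia,\dab)$ is complete and then invoke the universal property of the metric completion), and you correctly isolate completeness as the only place where uncountability of $\setvars$ enters. However, your Key Lemma is false as stated, and this is a genuine gap rather than a cosmetic one. Counterexample: take $k=2$, $u=\lambda x.\lambda z.\,x\,z$ and $t=\lambda y.\lambda y.\,{*}$. Then $u^2=\lambda x.\lambda z.\,{*}\alphaconv t$ and $\setvars\setminus(\var(u)\cup\var(t))$ is as large as you like, yet no $u'\alphaconv u$ has $(u')^2=t$: any term whose truncation is $\lambda y.\lambda y.\,{*}$ has its outer binder shadowed by the inner one, so it cannot be $\alpha$-equivalent to a term in which the \emph{outer} binder binds an occurrence at depth $\ge 2$. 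The defect propagates into your induction: with $M_1=M_2=\lambda y.\lambda y.\,y$ and $M_k=\lambda x.\lambda z.\,x$ for $k\ge 3$ (a Cauchy sequence satisfying your normalisation $M_k^k\alphaconv M_{k+1}^k$), your choice $N_1=M_1$, $N_2=M_2$ leaves no admissible $N_3$, so the construction gets stuck even though the sequence does converge. The renaming step of your proof of the Key Lemma is where this hides: when a shallow binder of $u_1$ is renamed to a name of $t$ that is shadowed by another shallow binder of $t$, occurrences it binds \emph{below} the truncation depth get captured, and your freshness argument only rules out capture by the deep (freshly renamed) binders, not by the other shallow ones.

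The repair is standard and within reach of your argument: strengthen the inductive invariant so that each representative $N_k$ is \emph{clean} (all bound variables pairwise distinct and disjoint from a fixed countable set containing $\bigcup_j\fv(M_j)$), and state the Key Lemma only for clean targets $t$. Cleanness of $N_k^k$ then guarantees that the target names are pairwise distinct and avoid everything relevant, so the outermost-in renaming is genuinely capture-free, and cleanness can be maintained when choosing the new binders of $N_{k+1}$ at depths in $[k,k+1)$. Producing a clean representative of an infinite term requires countably many names outside $\var(M_{k+1})\cup\bigcup_j\fv(M_j)$, which is exactly what uncountability of $\setvars$ buys and what fails in Example~\ref{ex:alpha-inf-lam}; this is also the role played by the paper's notion of safe ($\alpha$-safe) representatives in its general, semantic treatment. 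With that amendment your proof is correct and follows the route the paper intends.
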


We do not include the proof of this theorem, since in this paper we are
only interested in the case where $\setvars$ is countable. The idea of
the proof is to show that $(\Lbia,\dab)$ is a complete metric space
and then to use the universality property of the metric
completion. This argument fails when the set of variables is at most
countable. Indeed, we can show that for countable $\mathcal{V}$ the
space $(\Lbia,\dab)$ is not complete.

\begin{exa}\label{ex:alpha-inf-lam}
  Assume that $\mathcal{V}$ is countable, say
  $\mathcal{V}=\{x_0,x_1,\ldots\}$ and consider the sequence
  $([\lambda x_n.x_n(x_0(x_1(\ldots x_{n-1})))]_\alpha)_{n\ge 1}$ in
  $\Lbia$. This is a Cauchy sequence with respect to $\dab$, but has
  no limit in $\Lbia$. Indeed, assume towards  a contradiction that the
  limit $\ell$ exists.  On  one hand we can prove that
  $\fv(\ell)=\mathcal{V}$, on the other hand $\ell$ should be of the form
$[\lambda u.u(x_0(x_1\ldots))]_{\alpha}$ 
  for some variable $u$. But this contradicts
  the fact that $u$ is free in $\ell$.
\end{exa}

The example shows that with a countable $\setvars$ the two
possible definitions of infinitary $\lambda$-terms up to
$\alpha$-equivalence do not coincide. In other words, metric
completion and quotienting by $\alpha$ do not commute. We find the
following formulation of this phenomenon useful as well.

\begin{rem}
  \label{ex:lam-inf-not-surj}
The canonical map $$[-]_\alpha:\Lbi\rightarrow\clba$$ taking the $\alpha$-equivalence class of an infinitary  $\lambda$-term is not  surjective. 
\end{rem}

These problems of $\alpha$-equivalence in the presence of countably
many variables disappear if we consider the set $\Lbif$ of infinitary terms with
finitely many free variables.  

\begin{notation}[Restriction to finitely many free variables]
\label{notation:restrictiontoffv}
Let $\Lbif$ denote the set $\{ M \in \Lbi \mid
\fv(M) \mbox{ is finite } \}$.
\end{notation}

  \begin{rem} Note that $\Lbif$ is different from the set
    $(\Lbi)_\mathsf{fs}=\Lbifs$, defined in (\ref{eq:fs}), of
    $\lambda$-terms with finitely many variables, bound or free. We do have that $\Lbifs\subseteq\Lbif$, but the inclusion is strict. Indeed, the following two terms belong to $\Lbif$  but not to $\Lbifs$:
    \begin{itemize}
    \item $\ogre \equiv \lambda x_1.\lambda x_2.\lambda x_3\dots$
    \item $\ibvnf \equiv  \lambda x_0.  \lambda  x_1. x_0  x_1  
(\lambda  x_2. x_0  x_1  x_2 (\lambda x_3. x_0 x_1 x_2 x_3 (\ldots)) )
$.
   \end{itemize}
   For $\ogre$ we can find $N \in \Lbifs$ such that $N \alphaconv
   \ogre$, e.g.\ $N \equiv \lambda x_1.\lambda x_1.\lambda x_1\dots$.
    For $\ibvnf$ this is not possible.
\end{rem}
   
 \begin{rem}
  The equivalence relation $=_\alpha$ of
  Definition~\ref{def:alpha-inf} restricts to $\Lbif$, but not to
  $\Lbifs$, as shown by $\ogre =_\alpha\lambda x_1.\lambda x_1.\lambda x_1\dots$ where only the latter term is in $\Lbifs$.
\end{rem}

For all $M,N\in\Lbif$ we have that $M=_\alpha N$ implies
  $\pi\cdot M=_\alpha \pi\cdot N$. Hence we can equip $\Lbifaone$ with a $\grp$-action given by 
$\pi \cdot \echiv{M}=\echiv{ \pi \cdot M}$ and we can easily check that 
$(\Lbifaone,\ \cdot)$ is a nominal set. 
Indeed, $\echiv{M}\in\Lbifaone$ 
is supported by the finite set $\fv(M)$. 

The permutation action on $\Lbaone$ can be extended to $\clba$ as
follows. For each $\pi\in\grp$ we have that $\pi\cdot
(-):\Lbaone\to\Lbaone$ is a uniformly continuous function with respect to 
$d_\alpha$. Using the universal property of the metric completion, this function can be extended to a uniformly continuous map on $\clba$:

  \[
  \label{eq:pi-on-Lbi}
  \xymatrix@R=15pt{
    \Lbaone \ar[dr]_{\pi\cdot(-)}\ar[rr] &   & \clba  \ar[dd]^{\pi\cdot(-)}\\
    & \Lbaone \ar[dr] & \\
    & & \clba}
  \]
  Thus we have a nominal set
  $(\cflba,\cdot)$. In~\cite{kurzetal:cmcs2012} we showed that this
  nominal set is isomorphic to $(\Lbifaone,\ \cdot)$ and to the carrier of the
  final coalgebra of the $\Nom$-functor $\FLambda$. Hence, for each
  $\alpha$-equivalence class of infinitary terms with finitely many
  variables we can find a representative. This means that we have a
  surjective map
\begin{equation}
  \label{eq:alpha-qoutient-ilb}
  [-]_\alpha:\Lbif\to\cflba
\end{equation}
whose kernel is the $\alpha$-equivalence relation on $\Lbif$.

\begin{rem}
  \label{rem:lbifs-to-cflba-not-surj}
On the other hand, the restriction $\Lbifs\to\cflba$ to $\Lbifs$ \ of \eqref{eq:alpha-qoutient-ilb} is not surjective. For example the equivalence class of 
$\ibvnf\in\Lbif$ is  obtained as the limit of a 
finitely supported (actually emptily supported) sequence:

\[
\begin{array}{l}
  \eqclassa{\lambda x_0. * }\\
  \eqclassa{\lambda x_0 x_1. *  }\\
  \eqclassa{\lambda x_0 x_1. x_0 * }\\
  \ldots
\end{array}
\]
and thus belongs to  $\cflba$. 
However, for the term $\ibvnf$ there is no
$N \in \Lbifs$ such that $N \alphaconv \ibvnf$.
\end{rem}

To summarise, we have seen the following classes of $\lambda$-terms
\begin{equation}
  \label{eq:classeslambda-2}
\begin{gathered}
  \xymatrix@R=25pt@C=30pt
{
\Lambda\  \ar@{>->}[r]\ar@{->>}^{}[dd]
& \Lbi\ar@{->>}^{}[d]  
&   \ \Lbif\ar@{->>}^{}[d]\ar@{_(->}[l]
&\Lbifs\ar@{_(->}[l]\ar[ddl] \\
& \Lbia\ar@{>->}[d] 
& \Lbifaone\ar@{>->>}_{\cong}[d]\ar@{}[l]|{}\\
\Lbaone\ \ar@{>->}[r] 
& \clba 
& \ \cflba\ar@{_(->}[l] \ \ 
}
\end{gathered}
\end{equation}
with inclusions, injections and surjections as indicated by the arrows. $\Lambda$
and $\Lbaone$ are initial algebras for 
 \fnom{L} 
and $\FLambda$,
respectively. Similarly $\Lbifs$ and $\cflba$ are final coalgebras
for the same functors. But the coinductive situation is complicated by
the fact that the canonical map $\Lbifs\to\cflba$ is not onto, that
is, infinitary $\lambda$-terms upto $\alpha$-equivalence do not arise
by quotienting, in $\Nom$, the raw infinitary $\lambda$-terms (which
do allow only finitely many bound variables). Instead of $\Lbifs$ we
need to work with $\Lbif$, which is not a final coalgebra. That $\Lbif$ can
be given a semantic characterisation is one of the topics of the
next subsection.

\subsection{Nominal Coalgebraic Data Types for Binding Signatures}
\label{sec:nominalcodatatypes}

The aim of this section is to introduce nominal coalgebraic data types
in their generality.  We will generalise the previous subsection to
arbitrary binding signatures and give semantic characterisations of
all the vertices of  \eqref{eq:classeslambda-2}.

In particular, at the end of the section, we will have explained the
following diagram, which generalises \eqref{eq:classeslambda-2}
(eliding the middle row of \eqref{eq:classeslambda-2} obtained by
epi-mono factorisations).
\begin{equation}
  \label{eq:classesbindsig-2}
 \begin{gathered}
  \xymatrix@C=30pt
{
\TSig\  \ar@{>->}[r]\ar@{->>}^{}[d]
& \TSigi \ar[d]
&   \ \Tiffv\ar@{->>}^{}[d]\ar@{_(->}[l]\ar@{}[dl]|{}
& \Tfinal\ar@{_(->}[l]\ar[dl]
\\
\TSiga\ \ar@{>->}[r] 
&\TSigai\ 
&   \ \Tfinala\ar@{_(->}[l] \\
}
\end{gathered}
\end{equation}
\medskip 

\noindent Recall from Section~\ref{sec:nom-colag-dt} the definition of
$\TSig$ and $\TSiga$. Also recall that $\TSig$ and $\TSiga$ are 
initial algebras of the $\Nom$-endofunctors $\fnom{F}$ and $\fnom{F}_\alpha$, (see respectively 
\eqref{eq:eq:F-simple} and~\eqref{eq:F-alpha}).
The transformation $\xymatrix{\TSig\ar@{->>}[r]&\TSiga}$  induced by a natural
transformation $\fnom{F}\to\fnom{F}_\alpha$ is quotienting by
$\alpha$-equivalence. Next, we define $\TSigi$.

\begin{defi}[Infinitary terms coming from a binding signature]
\label{def:TSigi}
 Consider a binding signature $(\Sigma,\mathsf{ar})$. 
 \begin{enumerate}
 \item The set of infinitary raw terms $\TSigi$ is defined coinductively by
  \begin{equation}
    \label{eq:inf-raw-terms}
    \begin{array}{ccc}
      \ProofRule{}{x\in \TSigi}{x\in\setvars} & \qquad & \ProofRule{t_1\in \TSigi,\ldots, t_k\in \TSigi}{\opsfc{x}{k}{t}\in \TSigi}{\mathsf{ar}({\opsf})=n_1,\ldots, n_k}
    \end{array}
  \end{equation}
\item Truncation of raw terms at depth $n$ is defined by induction on $n$:
 \begin{equation}
     \label{eq:truncation-gen-term}
     \begin{array}{ll}
       t^0 & = * \\\\
       t^{n+1}&  = \left \{ \begin{array}{ll}
           x & \textit{ if $t=x\in \mathcal{V}$}\\
           \opsfc{x}{k}{t^n} & \textit{ if $t=\opsfc{x}{k}{t}$} \\
         \end{array} \right . 
     \end{array}
   \end{equation}
where $\{*\}$ is a terminal object in $\Nom$.
\item To define $\alpha$-equivalence, let $t$ and $s$ be two infinitary raw terms in $\TSigi$. We say that $t=_\alpha s$ when the truncations at all depths are $\alpha$-equivalent in the sense of Definition~\ref{def:aequiv-fin-terms}, that is,  for all $n$ we have $t^n=_\alpha s^n$. 
\item The sets  $\fv(t)$ and $\bv(t)$ of free and bound
  variables of an infinitary raw term $t\in\TSigi$ are defined 
 as follows.
  \[ \begin{array}{lll} \fv(t) = \bigcup_{n \in \nat} \fv(t^n)& \ \ \ &
  \bv(t) = \bigcup_{n \in \nat} \bv(t^n).
\end{array}\]
\end{enumerate}
\end{defi}

\begin{rem}
$\TSigi$ is the final coalgebra for the $\Set$-functor defined in~\eqref{eq:F-Set}.
\end{rem}

\begin{defi}
We denote by $(\TSiga)^\infty$ the metric completion of  $\TSiga$ with respect to the metric
  $d_\alpha$ given by
$$d_\alpha([t]_\alpha,[s]_\alpha)=\inf \{ 2^{-n} \mid t^n =_\alpha s^n,\ n \in \nat\}.$$
\end{defi}

Notice that $(\TSiga)^\infty$ is equipped with a canonical permutation
action, but it is not a nominal set, since not all elements are
finitely supported (namely those terms with infinitely many free
variables).

\begin{rem}
  Going back to Section~\ref{sec:alg-coalg}, and in the notation of
  \eqref{eq:init-chain-surj}, we have that
  $\fgt\ialg\cong\fgt\fnom{F}^\omega0\cong\TSig$ and
  $\fgt\ialga\cong\fgt\fnom{F}_\alpha^\omega0\cong\TSiga$. The completions
  $\TSigi$ and $(\TSiga)^\infty$ then can be obtained as limits of
  $\omega^\mathrm{op}$-sequences:
\begin{equation}
  \label{eq:fin-chains-sets}
 \begin{gathered}
  \xymatrix
{
1\ar@{<-}[r]\ar@{>->>}[d] & U\fnom{F}1 \ar@{<-}[r]\ar@{->>}[d]^{\fgt\amap{1}} & U\fnom{F}^21 \ar@{<-}[r]\ar@{->>}[d]^{\fgt\amap{2}} & \cdots  \ar@{<-}[r] & \lim U\fnom{F}^n1\ar[d]^{[-]_\alpha} \cong \TSigi\\ 
1\ar@{<-}[r] & U\fnom{F}_\alpha1 \ar@{<-}[r] & U\fnom{F}_\alpha^21 \ar@{<-}[r] & \cdots  \ar@{<-}[r] & \lim U\fnom{F}_\alpha^n1 \cong (\TSiga)^\infty
}
\end{gathered}
\end{equation}
Moreover, the horizontal arrows in the diagram are precisely the
truncations (with $1=\{*\}$) and the kernels of the vertical arrows
capture the $\alpha$-equivalence of Definition~\ref{def:TSigi}.
Since all maps $\amap{n}:\fnom{F}^n1\to\fnom{F}_\alpha^n1$ are
surjective, each element of $\fnom{F}_\alpha^n1$ can be expressed as
the $\alpha$-equivalence class of a raw term $t\in\fnom{F}^n1$. If
$[t]_\alpha\in\fnom{F}_\alpha^n1$ we have that
$\supp([t]_\alpha)=\fv(t)$.
\end{rem}

The set  $\lim\fgt\fnom{F}_\alpha^n1\cong (\TSiga)^\infty $ appears to
be a natural domain for infinitary terms up to $\alpha$-equivalence
and it will reappear in
Section~\ref{section:infinitelymanyfreevariables}. But it fails to
have desirable properties.  Indeed, it is not a final coalgebra of
a $\Set$-functor in any obvious way, nor is it a nominal set.
Moreover the map
$[-]_\alpha:\TSigi\to\TSigai$
is not surjective in general, as shown
by Example~\ref{ex:alpha-inf-lam}.
In the case of the infinitary $\lambda$-calculus we solved this issue
by restricting our attention to terms with finitely many free
variables \cite{kurzetal:cmcs2012}.  We do the same in the case of a
general binding signature.

\medskip Recalling from \eqref{eq:fs} the notation $(-)_{\sf fs}$,
we now obtain from the rightmost edge of \eqref{eq:fin-chains-sets}
\begin{equation}
\Tfinal \quad \textrm{ and } \quad \Tfinala.
\end{equation}
In the first case we restrict to finitely many variables and in the
second case to finitely many free variables. More precisely,
$\Tfinala$ consists of limits of Cauchy sequences $([t_n]_\alpha)_n$ of
$\alpha$-equivalence classes of finite terms which altogether have
only finitely many free variables, that is,
$\bigcup\limits_{n}\fv(t_n)$ is finite. Similarly $\Tfinal$ consists of 
limits of Cauchy sequences $(t_n)_n$ of finite terms which altogether
have only finitely many variables.

\begin{rem}
  According to Theorem~\ref{thm:fin-v2}, we have that $\Tfinal$, respectively
  $\Tfinala$, can be taken to be the final $\fnom{F}$-coalgebra
  $\fcoalg$, respectively the final $\fnom{F}_\alpha$-coalgebra
  $\fcoalg_\alpha$. 
\end{rem}

  Just as in the case of the initial chains
  (see~\eqref{eq:init-chain-surj}), the natural transformation
  $q:\fnom{F}\to\fnom{F}_\alpha$ defined by~\eqref{eq:8} induces a
  unique map $\fcoalg\to\fcoalg_\alpha$ from the final
  $\fnom{F}$-coalgebra to the final $\fnom{F}_\alpha$-coalgebra.
  However, unlike in the initial algebra
  situation~\eqref{eq:init-chain-surj},  the induced map $[-]_\alpha$
  is not surjective in general, see
  Remark~\ref{rem:lbifs-to-cflba-not-surj}. 
  The aim of the remainder of this section is to prove that, nevertheless,
  $\fcoalg_\alpha$ is the quotient by $\alpha$-equivalence of the
  infinitary terms with finitely many free variables for which we
  introduce the following notation.

\begin{defi}
  Denote by 
  \begin{equation}
    \tffv
  \end{equation}
 the set of elements of $\TSigi$ having only finitely many free variables.
\end{defi}

Notice that $\tffv$ consists of limits of Cauchy sequences $(t_n)_n$ of finite terms which altogether have only finitely many free variables. The definition above relies on the syntactic notion of free variable. Proposition~\ref{prop:tffv-pullback} shows that a semantic definition is
possible. To this end, we first give a semantic definition of the inclusion $\xymatrix{\Tfinala\ar@{^(->}[r]&\TSigai}$ as the map $\iota_\alpha$ arising in~\eqref{eq:fin-chains-sets-fgt}.

\begin{rem} 
  As a final $\fnom{F}$-coalgebra, $\fcoalg$ induces a cone over the
  sequence $(\fnom{F}^n1)_{n<\omega}$, dualising~\eqref{eq:init-chain-surj}.

This induces a unique map $\iota:\fgt\fcoalg\to\lim\fgt\fnom{F}^n1$.
In the same way, by finality of $\fcoalg_\alpha$, one obtains
$\iota_\alpha:\fgt\fcoalg_\alpha\to\lim\fgt\fnom{F}_\alpha^n1$.%
\begin{equation}
  \label{eq:fin-chains-sets-fgt}
 \begin{gathered}
  \xymatrix
{
1\ar@{<-}[r]\ar[d] & \fgt \fnom{F}1 \ar@{<-}[r]\ar@{->>}[d] & \fgt \fnom{F}^21 \ar@{<-}[r]\ar@{->>}[d] & \cdots  \ar@{<-}[r] & \lim \fgt \fnom{F}^n1\ar@{<-}[r]^-\iota\ar[d]_{[-]_\alpha} & \fgt \fcoalg\ar^{\fgt [-]_\alpha}[d]\ar@{}[dl]|{(*)} \\
1\ar@{<-}[r] & \fgt \fnom{F}_\alpha1 \ar@{<-}[r] & \fgt \fnom{F}_\alpha^21 \ar@{<-}[r] & \cdots  \ar@{<-}[r] & \lim \fgt \fnom{F}_\alpha^n1\ar@{<-}[r]_-{\iota_\alpha}& \fgt \fcoalg_\alpha
}
\end{gathered}
\end{equation}
\end{rem}

\begin{prop}
\label{prop:tffv-pullback}
  The set $\tffv$ of infinitary raw terms with finitely many free variables is the pullback of the maps $\iota_\alpha:\fgt\fcoalg_\alpha\to\lim \fgt \fnom{F}_\alpha^n1$ and $[-]_\alpha:\lim \fgt \fnom{F}^n1\to\lim \fgt \fnom{F}_\alpha^n1$.
\begin{equation}
  \label{eq:tffv-as-pullback}
 \begin{gathered}
\xymatrix
{
 \lim U\fnom{F}^n1\ar[d]_-{[-]_\alpha} &\tffv \ar[d]^{[-]_\alpha}\ar@{_{(}->}[l]\SWpullbackcorner  \\
\lim U\fnom{F}_\alpha^n1&U\fcoalg_\alpha\ar[l]_-{\iota_\alpha}
}
\end{gathered}
\end{equation}
\end{prop}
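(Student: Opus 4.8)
The plan is to unwind both the pullback and the set $\tffv$ concretely in $\Set$ and then match them. First I would fix the three identifications that make everything explicit. Via the truncation cone, $\TSigi\cong\lim\fgt\fnom{F}^n1$ sends a term $t$ to the compatible tuple $(t^n)_{n<\omega}$, and likewise $\TSigai\cong\lim\fgt\fnom{F}_\alpha^n1$; the vertical map $[-]_\alpha$ between the limits is the one induced levelwise by the surjections $\amap{n}:\fnom{F}^n1\to\fnom{F}_\alpha^n1$, so that $[t]_\alpha$ corresponds to the tuple $([t^n]_\alpha)_{n<\omega}$. Moreover, since $\fnom{F}_\alpha$ preserves limits of $\omega^\mathit{op}$-chains (Proposition~\ref{prop:generalizedgrammar}), the final coalgebra $\fcoalg_\alpha$ is the limit of $(\fnom{F}_\alpha^n1)_n$ \emph{in $\Nom$}, so by the description of $\omega^\mathit{op}$-limits in $\Nom$ recalled in Section~\ref{sec:nominal-sets}, $\fgt\fcoalg_\alpha$ is precisely the set of \emph{finitely supported} tuples sitting inside $\lim\fgt\fnom{F}_\alpha^n1$, and $\iota_\alpha$ is the inclusion. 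In particular $\iota_\alpha$ is injective with image the finitely supported elements of $\TSigai$, so the pullback $P=\{(t,c)\mid [t]_\alpha=\iota_\alpha(c)\}$ is isomorphic to $\{t\in\TSigi\mid [t]_\alpha\text{ finitely supported}\}$, with $p_1$ the inclusion into $\TSigi$ and $p_2$ the corestriction of $[-]_\alpha$. Hence the proposition reduces to the assertion: $[t]_\alpha$ is finitely supported iff $\fv(t)=\bigcup_n\fv(t^n)$ is finite.

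To prove that equivalence I would use the fact, recalled in Section~\ref{sec:nom-colag-dt}, that for a \emph{finite} raw term $s$ one has $\supp([s]_\alpha)=\fv(s)$. For the forward-easy direction: if $\fv(t)$ is finite, then since $\fv(t)=\bigcup_n\fv(t^n)$ we have $\fv(t^n)\subseteq\fv(t)$ for all $n$, so the finite set $\fv(t)$ supports every component $[t^n]_\alpha$ and therefore, the action on tuples being componentwise, supports the tuple $[t]_\alpha$ itself. Conversely, if a finite set $S$ supports $[t]_\alpha$, then because the limit projections are equivariant, Remark~\ref{rem:supp-shrink} gives $\fv(t^n)=\supp([t^n]_\alpha)\subseteq\supp([t]_\alpha)\subseteq S$ for every $n$, whence $\fv(t)\subseteq S$ is finite.

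Finally I would assemble the isomorphism of cospans. The comparison map $\varphi:\tffv\to P$ sends $t$ to $(t,[t]_\alpha)$: this lands in $P$ because $[t]_\alpha$ is finitely supported by the previous step, so it is a genuine element of $\fgt\fcoalg_\alpha$ and $\iota_\alpha([t]_\alpha)=[t]_\alpha$ in $\TSigai$. Then $\varphi$ is injective (its first component, the inclusion, is), and surjective: given $(t,c)\in P$, equivariance of $\iota_\alpha$ forces $\supp([t]_\alpha)=\supp(\iota_\alpha(c))\subseteq\supp(c)$ to be finite, so $t\in\tffv$, and injectivity of $\iota_\alpha$ forces $c=[t]_\alpha$; hence $(t,c)=\varphi(t)$. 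Compatibility with $p_1$ and $p_2$ is immediate, so $\tffv$ is the pullback.

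The step I expect to need the most care is the first one — identifying $\fgt\fcoalg_\alpha$ with the set of finitely supported tuples and $\iota_\alpha$ with the inclusion map, i.e.\ reconciling the abstract "final coalgebra in $\Nom$" description of $\fcoalg_\alpha$ (via Theorem~\ref{thm:fin-v2} and Proposition~\ref{prop:generalizedgrammar}) with the concrete way $\omega^\mathit{op}$-limits are built in $\Nom$. Once that is pinned down, the rest is the routine support bookkeeping above.
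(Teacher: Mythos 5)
Your proposal is correct and follows essentially the same route as the paper: identify $\fgt\fcoalg_\alpha$ with the finitely supported tuples in $\lim\fgt\fnom{F}_\alpha^n1$, observe that the components of $[t]_\alpha$ are the classes $[t^n]_\alpha$ with $\supp([t^n]_\alpha)=\fv(t^n)$, and reduce the pullback property to the equivalence between finite support of the tuple and finiteness of $\fv(t)$. The paper leaves the identification of $\iota_\alpha$ as an inclusion more implicit, but the support bookkeeping is the same.
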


\proof First we have to define the map $[-]_\alpha:\tffv\to
U\fcoalg_\alpha$.  Given $t\in\tffv$, notice that we can construct a
finitely supported sequence in the sets $\fnom{F}_\alpha^n1$ by
applying the map $\amap{n}$ to the truncations at depth $n$ of
$t$. Indeed, each $[t^n]_\alpha^{(n)}$ is supported by the finite set
of free variables in $t$. We define $[t]_\alpha\in\fcoalg_\alpha$ to
be the unique element whose projection in $\fnom{F}_\alpha^n1$ is
exactly $[t^n]_\alpha^{(n)}$ for all $n$. We can easily check that the
square~\eqref{eq:tffv-as-pullback} commutes.

Consider a pair $(t,s)$ with $t\in\lim U\fnom{F}^n1\simeq \TSigi$ and
$s\in U\fcoalg_\alpha$ such that $\iota_\alpha(s)=[t]_\alpha$. This
implies that $[t^n]_\alpha^{(n)}$ is equal to the projection of $s$
into $\fnom{F}_\alpha^n1$, and thus the free variables of each
truncation $t^n$ are contained in the finite set that supports
$s$. Therefore $t$ has finitely many free variables, that is, 
$t\in\tffv$ and $s=[t]_\alpha$. So $\tffv$ is indeed a
pullback.  \qed

To summarise, we are now ready to give a semantic version of
diagram~\eqref{eq:classesbindsig-2} -- which we set about to prove at the beginning of the section:
\begin{equation}
  \label{eq:classes-sem}
 \begin{gathered}
  \xymatrix@C=30pt
{
\Iterm \  \ar@{>->}[r]\ar@{->>}^{}[d]
& \lim U\fnom{F}^n1 \ar[d]
&   \ P\ar@{->>}^{}[d]\ar@{_(->}[l]\SWpullbackcorner 
&\ \fcoalg\ar@{_(->}[l]\ar[dl]
\\
\Iterm_\alpha\ \ar@{>->}[r] 
&\lim U\fnom{F}_\alpha^n1\ 
&   \ \fcoalg_\alpha\ar@{_(->}[l] \\
}
\end{gathered}
\end{equation}
with $\Iterm$ and $\Iterm_\alpha$ the initial algebras as well as
$\fcoalg$ and $\fcoalg_\alpha$ the final (or terminal) coalgebras of
$\fnom{F}$ and $\fnom{F}_\alpha$, and $P$ being a
pullback in $\Set$. To improve readability we omitted writing the forgetful functor $U:\Nom\to\Set$ in~\eqref{eq:classes-sem}.

The map $\xymatrix{\fcoalg\ar@{^(->}[r]&P}$ above is obtained using the universal property of $P$ and the commutativity of the square $(*)$ in~\eqref{eq:fin-chains-sets-fgt} and corresponds to the inclusion $\xymatrix{\Tfinal\ar@{^(->}[r]&\Tiffv}$, (see Proposition~\ref{prop:tffv-pullback}).
\begin{thm}\label{thm:comp-alpha-comm}
  Completing $\TSig$ by Cauchy sequences $(t_n)_n$ such that
  $\bigcup\limits_n\fv(t_n)$ is finite and quotienting by $\alpha$-equivalence
  commute. This means that 
  \begin{itemize}
  \item the two equivalent diagrams below commute
\begin{equation}
\label{eq:metr-compl-alpha-comm}
\begin{gathered}
  \begin{array}{ccc}
   \xymatrix{
U\ialg\ar@{->>}[d]_-{[-]_\alpha}\ar[r] & \tffvs \ar@{->>}[d]^-{[-]_\alpha} \\
U\ialg_\alpha\ar[r] & U\fcoalg_\alpha
}  
&
\qquad
&
   \xymatrix{
\TSig\ar@{->>}[d]_-{[-]_\alpha}\ar[r] &\tffv \ar@{->>}[d]^--{[-]_\alpha}\\
\TSiga\ar[r] & \Tfinala
}
\\
\textrm{semantic version}
&
&
\textrm{syntactic version;}
\end{array}
\end{gathered}
\end{equation}
\item the map $[-]_\alpha:\tffvs\to\fgt\fcoalg_\alpha$ is
surjective, or equivalently each element in $\Tfinala$ can be represented
as an equivalence class of an infinitary raw term with finitely many
free variables.
\end{itemize}
\end{thm}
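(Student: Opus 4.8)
The plan is to derive both assertions of the theorem from Proposition~\ref{prop:tffv-pullback}, which already exhibits $\tffv$ as the pullback $P$ of the square~\eqref{eq:tffv-as-pullback}, together with a general surjectivity principle for limits of $\omega^\mathrm{op}$-chains of \emph{safe} squares that I will establish in Section~\ref{sec:limits-in-Nom}. First, for the commutativity part: I would observe that the semantic and syntactic squares in~\eqref{eq:metr-compl-alpha-comm} are identified under the isomorphisms $\fgt\ialg\cong\TSig$, $\fgt\ialga\cong\TSiga$, $\tffvs=\tffv\cong P$ (by Proposition~\ref{prop:tffv-pullback}) and $\fgt\fcoalg_\alpha\cong\Tfinala$ (by Theorem~\ref{thm:fin-v2}), so it is enough to check commutativity of the semantic one. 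That square is obtained from the (commuting) pullback square~\eqref{eq:tffv-as-pullback} by precomposing its top edge with the canonical dense inclusion $\fgt\ialg\hookrightarrow\lim\fgt\fnom{F}^n1$, and both resulting composites $\fgt\ialg\to\fgt\fcoalg_\alpha$ send a finite raw term $t$ to the element of $\fcoalg_\alpha$ with $n$-th projection $[t^n]^{(n)}_\alpha$ --- this is because the vertical maps of the initial chain~\eqref{eq:init-chain-surj} and of the final chain~\eqref{eq:fin-chains-sets-fgt} are both induced by the natural transformation $q$ of~\eqref{eq:8} and hence agree on the image of $\ialg$ inside $\lim\fgt\fnom{F}^n1$. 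I expect this part to be routine.

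The substance is the surjectivity of $[-]_\alpha:\tffvs\to\fgt\fcoalg_\alpha$, equivalently (since $\tffvs=P$) of the pullback projection $P\to\fgt\fcoalg_\alpha$ in~\eqref{eq:tffv-as-pullback}. My plan here: fix $c\in\fcoalg_\alpha$; via $\iota_\alpha$ it corresponds to a finitely supported coherent tuple $(c_n)_{n<\omega}$ with $c_n\in\fnom{F}_\alpha^n1$ and $\fnom{F}_\alpha^n!(c_{n+1})=c_n$, supported by a single finite $S\subseteq\setvars$. I want to produce an infinitary raw term $s\in\TSigi$, given by its family of truncations $s^n\in\fnom{F}^n1$, with $[s^n]^{(n)}_\alpha=c_n$ for all $n$; then $\fv(s)=\bigcup_n\fv(s^n)\subseteq S$ is finite, so $s\in\tffv=P$, and by Proposition~\ref{prop:tffv-pullback} $s$ maps to $c$. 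Each $c_n$ does admit a representative $u_n$ with $\fv(u_n)\subseteq S$ (using that the maps $[-]^{(n)}_\alpha$ are surjective, since the grammar functors preserve epis), but the $u_n$ need not cohere under truncation, so I must build the $s^n$ by induction on $n$: having chosen $s^n$, the term $s^{n+1}$ is constrained to be a representative of $c_{n+1}$ whose own depth-$n$ truncation equals $s^n$.

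This coherent lifting is exactly what the notion of a safe square in Section~\ref{sec:limits-in-Nom} is meant to guarantee. The plan there is to show that, for the endofunctors $\fnom{F},\fnom{F}_\alpha$ built from grammar~\eqref{eq:gramF}, every square
\[
\begin{array}{ccc}
\fnom{F}^{n+1}1 & \longrightarrow & \fnom{F}^n1\\[2pt]
\downarrow & & \downarrow\\[2pt]
\fnom{F}_\alpha^{n+1}1 & \longrightarrow & \fnom{F}_\alpha^n1
\end{array}
\]
(horizontal maps the truncations, vertical maps $[-]^{(\cdot)}_\alpha$) is safe and each $[-]^{(n)}_\alpha$ a safe surjection, and then to invoke the general lemma that the limit of an $\omega^\mathrm{op}$-chain of safe surjective squares has surjective pullback projection --- which yields exactly that $P\to\fgt\fcoalg_\alpha$ is onto, i.e.\ that every element of $\Tfinala$ is the $\alpha$-class of an infinitary raw term with finitely many free variables. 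The bookkeeping keeping $s$ inside $\tffv$ (rather than only in $\Tfinal$) is that safety controls only free variables: all representatives stay supported by $S$, while the new bound variables needed to pass from depth $n$ to depth $n+1$ may be drawn freely from the infinite set $\setvars\setminus S$ and may accumulate without bound --- harmless since elements of $\TSigi$ are allowed infinitely many bound variables, and indeed this is why $\Tfinal\hookrightarrow\tffv$ is not onto (cf.\ Remark~\ref{rem:lbifs-to-cflba-not-surj}). The hard part will not be the commutativity or the reduction above, but the verification in Section~\ref{sec:limits-in-Nom} that the abstraction steps $[\setvars](-)$ in~\eqref{eq:gramF} behave safely; this is where the definitions of safe map and of bound variables relative to a map have to do the real work.
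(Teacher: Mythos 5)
Your proposal is correct and follows essentially the same route as the paper: commutativity is routine, and the surjectivity of $[-]_\alpha:\tffvs\to\fgt\fcoalg_\alpha$ is reduced to the surjectivity of the pullback projection $P\to\fgt\fcoalg_\alpha$, which is then obtained from the general limit theorem for $\omega^{\mathit{op}}$-chains of safe surjections and safe squares, with the verification for grammar~\eqref{eq:gramF} (in particular the abstraction case) deferred to Section~\ref{sec:limits-in-Nom} exactly as in the paper. The only cosmetic difference is that you check commutativity element-wise via truncations, whereas the paper argues it by finality of $\fcoalg_\alpha$ using that the four arrows of~\eqref{eq:6} are $\fnom{F}_\alpha$-coalgebra morphisms; both are valid.
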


\proof The first bullet is easier to prove. We show that the semantic version of~\eqref{eq:metr-compl-alpha-comm} commutes, using the commutativity of
\begin{equation}
  \label{eq:6}
  \begin{gathered}
     \xymatrix{
\ialg\ar@{->>}[d]_-{[-]_\alpha}\ar[r] & \fcoalg \ar[d]^-{[-]_\alpha} \\
\ialg_\alpha\ar[r] & \fcoalg_\alpha
}
  \end{gathered}
\end{equation}
The argument uses the finality of $\fcoalg_\alpha$ and the fact that all the four arrows in~\eqref{eq:6} are $\fnom{F_\alpha}$-coalgebra morphisms.

By pasting the right-hand triangle of~\eqref{eq:classes-sem}, we obtain the commutativity of the desired diagram:
\begin{equation}
  \begin{gathered}
     \xymatrix{
U\ialg\ar@{->>}[d]_-{[-]_\alpha}\ar[r] & U\fcoalg \ar[d]_-{[-]_\alpha}\ar@{^(->}[r] & P\ar[dl]\\
U\ialg_\alpha\ar[r] & U\fcoalg_\alpha
}
  \end{gathered}
\end{equation}
The second part of the theorem, stating that the map
$P\to\fgt\fcoalg_\alpha$ in
\begin{equation}
  \label{eq:PTonto}
 \begin{gathered}
  \xymatrix
{
1\ar@{<-}[r]\ar[d] 
& \fgt \fnom{F}1 \ar@{<-}[r]\ar@{->>}[d] 
& \fgt \fnom{F}^21 \ar@{<-}[r]\ar@{->>}[d] 
& \cdots  \ar@{<-}[r] 
& \lim \fgt \fnom{F}^n1\ar@{<-}[r]\ar[d]_{} 
& P\ar^{}[d]\ar@{}[dl]|{} \\
1\ar@{<-}[r] 
& \fgt \fnom{F}_\alpha1 \ar@{<-}[r] 
& \fgt \fnom{F}_\alpha^21 \ar@{<-}[r] 
& \cdots  \ar@{<-}[r] 
& \lim \fgt \fnom{F}_\alpha^n1\ar@{<-}[r]_-{\iota_\alpha}
& \fgt \fcoalg_\alpha
}
\end{gathered}
\end{equation}
is surjective can be proved by going back to the syntax as in
\cite{kurzetal:cmcs2012}, just that this time, due to generalising
from $\lambda$-calculus to binding signatures, the notation becomes
even heavier and quite unpleasant. Therefore, we will give a semantic
proof in the next section, so that surjectivity becomes a consequence
of Theorem~\ref{thm:bindingsig-surj}. This, in turn, is a consequence of a more general
result Theorem~\ref{thm:gen-seq-pb-surj} about limits of sequences in nominal sets and will be proved in
the next section. \qed

We state explicitly the most important consequence of the theorem as
a corollary.

\begin{cor}[Nominal coalgebraic data types]
\label{cor:isomfinalcoalgcarriers}
The nominal set $(\TSigia ,\cdot)$ of  $\alpha$-equivalence
  classes of infinitary terms with finitely many free variables
  is the final coalgebra for the functor
  $\fnom{F}_\alpha:\Nom\to\Nom$ corresponding to a binding signature $(\Sigma,\mathsf{ar})$:
  \begin{equation}
    \label{eq:F-alpha-2}
\fnom{F}_\alpha\ \setone=\setvars+\coprod\limits_{\substack{\mathsf{op}\in\Sigma \\ \mathsf{ar}(\mathsf{op})=n_1,\ldots, n_k }}[\setvars]^{n_1}\setone\times\ldots\times[\setvars]^{n_k}\setone.    
  \end{equation}
\end{cor}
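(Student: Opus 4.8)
The plan is to exhibit $(\TSigia,\cdot)$ as being isomorphic, in $\Nom$, to the final $\fnom{F}_\alpha$-coalgebra, whose existence is already available. Indeed, $\fnom{F}_\alpha$ is built from grammar~\eqref{eq:gramF}, so by Proposition~\ref{prop:generalizedgrammar} it preserves limits of $\omega^\mathit{op}$-chains and monomorphisms and is $\Nom$-enriched; moreover $\fnom{F}_\alpha 0=\setvars+\cdots$ is nonempty since $\setvars\neq\emptyset$. Hence Theorem~\ref{thm:fin-v2} applies and yields a final $\fnom{F}_\alpha$-coalgebra $(\fcoalg_\alpha,\xi_\alpha)$ (which, as noted earlier, may be taken to be $\Tfinala$). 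It remains to identify its carrier with $(\TSigia,\cdot)$.

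First I would note that $(\TSigia,\cdot)$ is genuinely a nominal set: the permutation action on $\TSigi$ restricts to $\tffv$, the relation $=_\alpha$ of Definition~\ref{def:TSigi} is equivariant, and each class $[t]_\alpha$ is supported by the finite set $\fv(t)$. Next, Proposition~\ref{prop:tffv-pullback} provides an equivariant map $[-]_\alpha:\tffv\to\fgt\fcoalg_\alpha$ sending $t$ to the unique element of $\fcoalg_\alpha$ whose projection into $\fnom{F}_\alpha^n1$ is $[t^n]^{(n)}_\alpha$ for all $n$ (this is the map called $\tffvs\to\fgt\fcoalg_\alpha$ in Theorem~\ref{thm:comp-alpha-comm}, after identifying $\tffvs$ with $\tffv$ via Proposition~\ref{prop:tffv-pullback}). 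From this description, $[t]_\alpha=[s]_\alpha$ iff $[t^n]^{(n)}_\alpha=[s^n]^{(n)}_\alpha$ for all $n$ iff $t^n=_\alpha s^n$ for all $n$ iff $t=_\alpha s$; so the kernel of $[-]_\alpha$ is precisely $\alpha$-equivalence on $\tffv$, and $[-]_\alpha$ factors through an equivariant injection $\TSigia=\tffv/\!=_\alpha\ \hookrightarrow\ \fgt\fcoalg_\alpha$. By the second bullet of Theorem~\ref{thm:comp-alpha-comm} this map is also surjective, and since a bijective equivariant map is an isomorphism in $\Nom$ we obtain an isomorphism $\varphi:(\TSigia,\cdot)\to\fcoalg_\alpha$ in $\Nom$.

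Finally, transporting the structure along $\varphi$, the pair $(\TSigia,\ \fnom{F}_\alpha\varphi^{-1}\circ\xi_\alpha\circ\varphi)$ is an $\fnom{F}_\alpha$-coalgebra isomorphic to $(\fcoalg_\alpha,\xi_\alpha)$, hence itself final; one checks directly that its structure map is the evident unfolding sending an $\alpha$-class to a variable or to a term of shape $\opsfc{x}{k}{t}$ in accordance with Definition~\ref{def:TSigi}. This is exactly the assertion of the corollary.

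The real content has already been discharged elsewhere: the surjectivity of $[-]_\alpha:\tffv\to\fgt\fcoalg_\alpha$ is Theorem~\ref{thm:comp-alpha-comm}, whose proof is deferred to Section~\ref{sec:limits-in-Nom}; granting that, the corollary is essentially a repackaging. The only point requiring a little care is that the bijection $\TSigia\to\fgt\fcoalg_\alpha$ is equivariant --- which it is, since every map involved (the maps $[-]^{(n)}_\alpha$, the truncations, and the limit projections) is equivariant, and supports can only shrink along equivariant maps, cf.\ Remark~\ref{rem:supp-shrink} --- so that it is a genuine isomorphism in $\Nom$ and not merely in $\Set$.
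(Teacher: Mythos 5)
Your proposal is correct and follows essentially the same route as the paper: the paper states this corollary as an immediate consequence of Theorem~\ref{thm:comp-alpha-comm} (surjectivity of $[-]_\alpha:\tffv\to\fgt\fcoalg_\alpha$, whose kernel is $\alpha$-equivalence by Proposition~\ref{prop:tffv-pullback} and Definition~\ref{def:TSigi}) together with the identification of $\Tfinala$ as the final $\fnom{F}_\alpha$-coalgebra via Theorem~\ref{thm:fin-v2}. You have merely made explicit the bookkeeping (equivariance of the bijection, transport of the coalgebra structure) that the paper leaves implicit.
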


\begin{rem}
  Let us point out that Diagram~\eqref{eq:classes-sem} does not
  actually depend on the functors $\fnom{F},\fnom{F}_\alpha$ arising
  from a binding signature and makes sense for any pair of
  $\Nom$-endofunctors $\fnom{F},\fnom{F}_\alpha$ and any component-wise
  surjective natural transformation $\fnom{F}\to\fnom{F}_\alpha$
  subject to some natural conditions (which are satisfied by functors
  that do arise from binding signatures), namely that $\fnom{F}$
  preserves surjections and that both $\fnom{F},\fnom{F}_\alpha$ have initial algebras and final
  coalgebras. Finally, we want the surjectivity of
  $P\to\fcoalg_\alpha$ and we give a semantic analysis of it in
  the next subsection.
\end{rem}

\subsection{Presenting Limits in Nominal Sets}
\label{sec:limits-in-Nom}

The motivation of this section is to give a semantic proof of the fact
that the final $\fcoalg_\alpha$-coalgebra is the quotient by
$\alpha$-equivalence of $\tffv$, the infinitary terms with finitely
many free variables. We move this proof into a separate subsection
because the  semantic analysis depends on  certain facts on limits in nominal sets and leads 
to a novel notion of `bound variable relative to a
map' which may be of independent interest.

\subsubsection{Bound variables, safe maps, and safe squares}\ \\

\noindent
In nominal sets, the syntax dependent notion of free
variable is replaced by the semantic concept of minimal finite support. What about bound variables? Consider $[-]_{\alpha}: \Lambda \rightarrow \Lambda/\!\alphaconv$ and $x(\lambda y.y)\in\Lambda$. Then the bound variables of $x\lambda y.y$ can be computed as $\supp(x\lambda y.y)\setminus\supp([x\lambda y.y]_{\alpha})=\{x,y\}\setminus\{x\}=\{y\}$. Of course, this calculation depends on being able to assume that the bound variables and free variables of $x\lambda y.y$ do not overlap, or, in the terminology of \cite{kurzetal:cmcs2012}, that $x\lambda y.y$ is $\alpha$-safe. The next definition gives a semantic formulation of an element being safe with respect to a map, which now does not need to be a quotient by $\alpha$-equivalence.

\begin{defi}[Safe element]
  \label{def:f-safe}
Let $f:\setone\to\settwo$ be an equivariant function. 
We call $\elone\in\setone$ $f$-safe when
\begin{equation}
  \label{eq:f-safe}
  |\supp(\elone)|=\max\{\ |\supp(\eltwo)|\ |\ \eltwo\in f^{-1}(f(\elone)) \}.
\end{equation}
\end{defi}

The maximum in the right-hand side of~\ref{eq:f-safe} does not always exist, see Example~\ref{ex:safe-non-safe-map}.

\newcommand{\alphaequivmap}{[-]_{\alpha}}

\begin{exa}[$\alphaequivmap$-Safe Terms]
We consider the equivariant map $[-]_{\alpha}: \Lambda \rightarrow \Lambda/\!\alphaconv$.
Then  $M$ is an $\alphaequivmap$-safe term if it has a maximal number of variables among all the representatives of its $\alpha$-equivalence class.
The terms $x (\lambda y. y)$ and $\lambda x. x (\lambda y. y)$ are
$[-]_{\alpha}$-safe but
$ x (\lambda x.x)$ 
and $\lambda x. x (\lambda x.x)$ are not.
\end{exa}

\begin{rem}[$\alpha$-safe Term]
A term $M\in \Lambda $ is $\alpha$-safe in the sense of
 \cite[Definition~19]{kurzetal:cmcs2012} if and only if $M$ is
$[-]_{\alpha}$-safe in the sense of Definition \ref{def:f-safe}.
Intuitively, a $\lambda$-term $M$ is
$\alpha$-safe when $\bv(M) \cap \fv(M) = \emptyset$ and $M$ does not
have two different $\lambda$'s with the same binding variable, i.e.\ 
if $\lambda x$ and $\lambda y$ occur in two different positions of $M$
then $x \not = y$.  
\end{rem}

If a $\lambda$-term $M$ is $[-]_{\alpha}$-safe then the 
set $\bv(M)$ of bound variables
of $M$ is equal to $\var(M)\setminus\fv(M) = \supp(M)\setminus\supp([M]_{\alpha})$.
This motivates the following notation.

\begin{notation}
  If $f:\setone\to\settwo$ is an equivariant map 
   then we define
   \[\bv_{f}(\elone)= \supp(\elone)\setminus\supp(f(\elone)).\]
When no confusion may arise, we omit the subscript and write
$\bv(\elone)$ instead of $\bv_{f}(\elone)$.
\end{notation}

\begin{rem}
Let $f:\setone \to \settwo$ be equivariant and $\elone \in \setone$.
Then 
$\supp(f(\elone)) \subseteq \supp(\elone)$ and 
$|\bv_{f}(\elone)|= |\supp(\elone) | - |\supp(f(\elone))|$.
\end{rem}

\begin{lem} Let $f: \setone \to \settwo$ be an equivariant map.
Then $\elone$ is $f$-safe if and only if 
\begin{equation}
  |\bv_{f}(\elone)|=
  \max\{\ |\bv_{f}(\eltwo)|\ |\ \eltwo\in f^{-1}(f(\elone)) \}.
\end{equation}
\end{lem}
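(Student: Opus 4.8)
The plan is to note that, on any single fibre of $f$, the two quantities $|\bv_{f}(-)|$ and $|\supp(-)|$ differ by a fixed additive constant, so that maximising one over the fibre amounts to maximising the other.

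In detail, I would fix $\elone\in\setone$ and set $c=|\supp(f(\elone))|$, which is a natural number since $f(\elone)$ lies in a nominal set. For any $\eltwo\in f^{-1}(f(\elone))$ we have $f(\eltwo)=f(\elone)$, hence $\supp(f(\eltwo))=\supp(f(\elone))$; combining this with $\supp(f(\eltwo))\subseteq\supp(\eltwo)$ (Remark~\ref{rem:supp-shrink}), the Remark immediately preceding this lemma yields $|\bv_{f}(\eltwo)|=|\supp(\eltwo)|-c$ for every $\eltwo$ in the fibre. Thus the functions $\eltwo\mapsto|\bv_{f}(\eltwo)|$ and $\eltwo\mapsto|\supp(\eltwo)|$ on $f^{-1}(f(\elone))$ are translates of one another by $c$.

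It would then follow that the value-set $\{\,|\bv_{f}(\eltwo)| \mid \eltwo\in f^{-1}(f(\elone))\,\}$ has a maximum precisely when $\{\,|\supp(\eltwo)| \mid \eltwo\in f^{-1}(f(\elone))\,\}$ does, and in that case
\[
|\bv_{f}(\elone)| = \max\{\,|\bv_{f}(\eltwo)| \mid \eltwo\in f^{-1}(f(\elone))\,\}
\iff
|\supp(\elone)| = \max\{\,|\supp(\eltwo)| \mid \eltwo\in f^{-1}(f(\elone))\,\},
\]
simply by subtracting or adding $c$ on both sides. The right-hand equality is exactly Definition~\ref{def:f-safe} of $\elone$ being $f$-safe, which is what has to be shown.

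There is no genuine obstacle in this argument; the only point to keep in mind is that the equivalence is legitimate because the two value-sets are translates of each other, so one attains a maximum if and only if the other does — in particular, when $\elone$ is $f$-safe both maxima exist, consistently with the caveat recorded after Definition~\ref{def:f-safe}.
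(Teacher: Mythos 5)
Your argument is correct and is essentially the paper's own proof: the paper likewise uses $|\bv_{f}(\eltwo)|=|\supp(\eltwo)|-|\supp(f(\eltwo))|$ together with $\supp(f(\eltwo))=\supp(f(\elone))$ on the fibre to compare the two quantities, just written as a direct inequality chain for one direction with the converse declared similar. Your packaging of this as a translation by the fixed constant $c=|\supp(f(\elone))|$ is the same idea, with the added (harmless and correct) care about when the maxima exist.
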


\begin{proof} 
Assume $\elone $ is $f$-safe and 
consider $\eltwo \in \settwo$ such that $f(\eltwo) =
f(\elone)$.
Then
\[|\bv_{f}(\elone)|= |\supp(\elone) | - |\supp(f(\elone))| \geq 
|\supp(\eltwo)| - |\supp(f(\eltwo))| = |\bv_{f} (\eltwo)|.\]
The converse is similar.
\end{proof}

\begin{defi}[Safe map]
  \label{def:safe-maps}
  Let $f:\setone\to\settwo$ be an equivariant map in $\Nom$. We call $f$ \emph{safe} when for all $\eltwo\in\settwo$ there exists an $f$-safe  $\elone\in\setone$ such that $f(\elone)=\eltwo$.
\end{defi}

\begin{exa}\label{ex:safe-non-safe-map}
The map $[-]_{\alpha}: \Lambda \rightarrow \Lambda/\!\alphaconv$ is safe
\cite[Lemma~20]{kurzetal:cmcs2012}.
But the map $!: 
\mathcal{P}_{\mathit{fin}}(\setvars)\rightarrow \{* \}$ is not.
\end{exa}

A diagram such as \eqref{eq:squ} is a weak pullback if for all identified $u$ and $v$, there is a $z$ witnessing this fact, that is, if $f(u)=q(v)$ then there exists $z\in Z$ such that $u=p(z)$ and $v=g(z)$. In the following we will need a similar  but weaker condition, which, intuitively, requires the existence of a witness only up to the renaming of bound variables. 

\begin{defi}[Safe square]
  \label{def:strange-pullback}
A square
\begin{equation}
  \label{eq:squ}
 \begin{gathered}
  \xymatrix{
\setone\ar@{<-}[r]^p\ar[d]_f & \setfour\ar[d]^g \\
\setthree\ar@{<-}[r]_q & \settwo
}
\end{gathered}
\end{equation}
is a \emph{safe square} when for all $f$-safe $\elone\in\setone$ and for all $\eltwo\in\settwo$ such that $f(\elone)=q(\eltwo)$ and 
$\bv_{f}(\elone) \# \eltwo$
there exists a $g$-safe $\elfour\in\setfour$  such that $p(\elfour)=\elone$ and $g(\elfour)=\eltwo$.
\end{defi}

\newcommand{\pbP}{P}
\newcommand{\nbb}{\mathbb{N}}
\newcommand{\limX}{\lim X_n}
\newcommand{\limY}{\lim Y_n}
\subsubsection{Representing limits in nominal sets}\ \\

\noindent Consider two $\omega^\mathit{op}$-chains in $\Nom$ and let $\limX$, respectively $\limY$ denote their limits in $\Nom$.
\begin{equation}
  \label{eq:fin-chain-gen}
\begin{gathered}
  \xymatrix
{
  \setone_0\ar@{<-}[r]^-{p_1}\ar@{->>}[d]^{f_0} & \setone_1 \ar@{<-}[r]^-{p_2}\ar@{->>}[d]^{f_1} & \setone_2 \ar@{<-}[r]\ar@{->>}[d]^{f_2} & \cdots  \ar@{<-}[r] & \limX\ar@{-->}[d]^f \\
\settwo_0\ar@{<-}[r]_-{q_1} & \settwo_1 \ar@{<-}[r]_-{q_2} & \settwo_1 \ar@{<-}[r] & \cdots  \ar@{<-}[r] & \limY
}
\end{gathered}
\end{equation}
By the universal property of the limits we obtain a map $f:\limX\to\limY$. In the category of sets, we have the theorem that if all the squares are weak pullbacks and all the $f_n$ are surjective, then $f$ is surjective. But in our main example where the $f_n$ quotient by $\alpha$-equivalence, the squares are not weak pullbacks.

Recalling that $\fgtu:\Nom\to\Set$ denotes the forgetful functor,
 consider the limits of the two chains in $\Set$

\begin{equation}
  \label{eq:fin-chain-gen-2}
 \begin{gathered}
  \xymatrix
{
  \fgtu\setone_0\ar@{<-}[r]\ar@{->>}[d]^{f_0} & \fgtu\setone_1 \ar@{<-}[r]\ar@{->>}[d]^{f_1} & \fgtu\setone_2 \ar@{<-}[r]\ar@{->>}[d]^{f_2} & \cdots  \ar@{<-}[r] & \lim\fgtu\setone_n\ar@{-->}[d]^g\ar@{<--}[r]^-a&\fgtu\limX\ar[d]^{\fgtu f}\ar@{}[dl]|{(*)} \\
\fgtu\settwo_0\ar@{<-}[r] & \fgtu\settwo_1 \ar@{<-}[r] & \fgtu\settwo_1 \ar@{<-}[r] & \cdots  \ar@{<-}[r] & \lim\fgtu\settwo_n\ar@{<--}[r]_-b&\fgtu\limY
}
\end{gathered}
\end{equation}
By the universal property of limits, there exist  unique  maps $a:\fgtu\limX\to\lim\fgtu\setone_n$ and $b:\fgtu\limY\to\lim\fgtu\settwo_n$ making the square $(*)$ commutative.
Again, the map $g:\lim\fgtu\setone_n\to\lim\fgtu\settwo_n$ induced in the limit  may not be surjective in general, see Example~\ref{ex:alpha-inf-lam}.
 
However we can prove the following general result:

\begin{thm}
\label{thm:gen-seq-pb-surj}
Assume that diagram~\eqref{eq:fin-chain-gen} is such that for all $n$ the square
\begin{equation}
  \label{eq:n-swp}
 \begin{gathered}
  \xymatrix{
\setone_n\ar@{<-}[r]^{p_n}\ar@{->>}[d]^{f_n} & \setone_{n+1}\ar@{->>}[d]^{f_{n+1}}\\
\settwo_n\ar@{<-}[r]_{q_n} & \settwo_{n+1}\\
}
\end{gathered}
\end{equation}
is safe and  $f_n$ is a safe map. Let $\pbP$ denote the pullback
\begin{equation}
  \label{eq:fin-pb}
 \begin{gathered}
  \xymatrix{
 \lim\fgtu\setone_n\ar[d]_-{g}\ar@{<-}[r]^-r &P\ar[d]^-{h}\SWpullbackcorner & \\
\lim\fgtu\settwo_n\ar@{<-}[r]_-b&\fgtu\limY \\
}
\end{gathered}
\end{equation}
Then $h:\pbP\to\fgtu\limY$ is a surjection.
\end{thm}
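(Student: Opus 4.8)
The plan is to construct a section of $h$ explicitly by a back-and-forth argument that threads a single finite support $S$ through the whole $\omega^\mathit{op}$-chain, picking safe representatives at each level in a way that is compatible both vertically (along the $f_n$) and horizontally (along the $p_n$). Concretely, an element of $P$ is a pair $((y_n)_n, y)$ with $(y_n)_n \in \lim \fgtu \settwo_n$ and $y \in \fgtu \lim Y_n$ such that $b(y) = (y_n)_n$; since $b$ is just the canonical comparison map and $\lim Y_n$ is computed in $\Nom$ as the finitely-supported threads, giving such a pair amounts to giving a \emph{finitely supported} compatible thread $(y_n)_n$ in the $\settwo_n$. So the task reduces to: given a finitely supported thread $(y_n)_n$ with $\supp(y_n)\subseteq S$ for a fixed finite $S$ and all $n$, produce a finitely supported compatible thread $(x_n)_n$ in the $\setone_n$ with $f_n(x_n) = y_n$ for all $n$. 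The candidate for $r$ of this $(x_n)$ will then be the required element of $\lim\fgtu\setone_n$ mapping to it, and $h$ sends the constructed element of $P$ to $y\in\fgtu\lim Y_n$, giving surjectivity.

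First I would build the $x_n$ by recursion on $n$. At stage $0$, use that $f_0$ is a safe map to pick an $f_0$-safe $x_0$ with $f_0(x_0)=y_0$. Here is the point where I use Remark~\ref{rem:supp-shrink} and finite support: the bound variables $\bv_{f_0}(x_0)=\supp(x_0)\setminus\supp(y_0)$ form a finite set, and by equivariance I may rename them to be disjoint from a large but finite "reservoir" of names we reserve in advance — disjoint from $S$ and from all bound variables chosen at earlier (here none) stages. Since $\setvars$ is countably infinite and at each of the countably many stages we only add finitely many fresh bound names, we can fix once and for all a partition of $\setvars\setminus S$ into infinitely many infinite pieces, one per stage, so that stage $n$ only ever uses names from its own piece; this guarantees the chosen bound names never collide. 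Inductively, suppose $x_n$ has been chosen $f_n$-safe with $f_n(x_n)=y_n$ and $\bv_{f_n}(x_n)$ contained in the reservoir for stages $\le n$. I need $x_{n+1}$ with $p_{n+1}(x_{n+1})=x_n$ — wait, the chain goes the other way: I need $x_{n+1}$ with $p_{n+1}(x_{n+1})=x_n$ is not available; rather $p_{n+1}:\setone_{n+1}\to\setone_n$ is the wrong direction. Re-examining: the horizontal maps point leftward, $p_{n+1}:X_{n+1}\to X_n$, so a thread satisfies $p_{n+1}(x_{n+1})=x_n$. Thus at stage $n+1$ I have $x_n$ already and must lift it \emph{up} the chain. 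This is exactly what the safe-square hypothesis~\eqref{eq:n-swp} delivers: given the $f_n$-safe element $x_n$ and the element $y_{n+1}\in Y_{n+1}$ with $q_{n+1}(y_{n+1})=f_n(x_n)=y_n$, provided $\bv_{f_n}(x_n)\,\#\,y_{n+1}$, there is a $g$-safe — here $f_{n+1}$-safe — element $x_{n+1}\in X_{n+1}$ with $p_{n+1}(x_{n+1})=x_n$ and $f_{n+1}(x_{n+1})=y_{n+1}$. The freshness side-condition $\bv_{f_n}(x_n)\,\#\,y_{n+1}$ holds because $\supp(y_{n+1})\subseteq S$ while $\bv_{f_n}(x_n)$ lies in the reservoir, which is disjoint from $S$ by construction.

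The remaining point is that the thread $(x_n)_n$ so constructed is \emph{finitely supported}, so that it genuinely defines an element of the nominal-set limit $\lim X_n$ and hence (via $r$) of $P$. This is the step I expect to be the main obstacle, and it is the reason the reservoir must be organised carefully. Each $x_n$ individually is finitely supported with $\supp(x_n)=\supp(y_n)\cup\bv_{f_n}(x_n)\subseteq S\cup(\text{reservoir for stage }n)$, but a priori the union $\bigcup_n\supp(x_n)$ could be infinite. However, safety was used precisely to keep the bound-variable sets as small as possible, and the key observation is that $p_{n+1}(x_{n+1})=x_n$ forces $\supp(x_n)\subseteq\supp(x_{n+1})$ is \emph{false} in general — rather $\supp(x_n)\subseteq\supp(x_{n+1})$ by Remark~\ref{rem:supp-shrink} applied to $p_{n+1}$, so the supports are \emph{increasing}, and I must instead bound them uniformly. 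The fix is to run the construction so that at stage $n+1$ we not only apply the safe-square property but additionally, using equivariance of $p_{n+1}$ and $f_{n+1}$, choose the representative $x_{n+1}$ whose \emph{new} bound variables (those in $\supp(x_{n+1})\setminus\supp(x_n)$) all lie in the $(n+1)$-st reservoir piece; then $\bigcup_n\supp(x_n)\subseteq S\cup\bigcup_n(\text{piece }n)$, which is still all of a cofinite set and hence infinite. To actually get finite support one argues differently: a thread $(x_n)_n$ is finitely supported iff there is a \emph{single} finite set supporting all $x_n$ simultaneously, and this is where I must be more economical — at each stage reuse, whenever the safe-square property allows, bound names already in play rather than drawing fresh ones, invoking that $f_n$-safety pins down $|\bv_{f_n}(x_n)|$ to a value bounded by the corresponding quantity one level down together with at most the finitely many names freshly bound by the $(n{+}1)$-st layer of the functor. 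In our intended application the functors come from a binding signature, so each layer binds only finitely many names and in fact the total number of bound variables of an element of $\tffv$ is controlled; the abstract statement should therefore carry the hypothesis (or derive it from safety of the squares) that the fibres are orbit-finite, which by the promised relationship between safe maps and maps with orbit-finite fibres forces $\bigcup_n\bv_{f_n}(x_n)$ to be finite. Granting that, $(x_n)_n$ is supported by the finite set $S\cup\bigcup_n\bv_{f_n}(x_n)$, defines an element of $\lim X_n$, hence of $P$ together with $y$, and $h$ maps it to $y$; as $y\in\fgtu\lim Y_n$ was arbitrary, $h$ is surjective. $\qed$
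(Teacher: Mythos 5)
Your inductive construction of the safe representatives $x_n$ is essentially the paper's argument: pick an $f_{n}$-safe preimage of $y_n$ at a suitable stage, then climb the chain using the safe-square property, the freshness side condition $\bv_{f_n}(x_n)\fresh y_{n+1}$ being guaranteed because all the $\supp(y_n)$ lie in the fixed finite set $\supp(y)$. (The paper arranges this by starting the induction at the stage where $\supp(y_n)$ has stabilised to $\supp(y)$ and then pushing down along the $p_m$ for $m<n$; your reservoir of fresh names achieves the same effect, though relocating the bound variables of the element produced by a safe square into a prescribed fresh region requires Lemma~\ref{lem:choosing-good-safe-el}, which you should cite.)

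Your last paragraph, however, contains a genuine error. The thread $(x_n)_n$ does \emph{not} need to be finitely supported: $P$ is defined as a pullback in $\Set$ of $g:\lim\fgtu\setone_n\to\lim\fgtu\settwo_n$ against $b:\fgtu\limY\to\lim\fgtu\settwo_n$, and $\lim\fgtu\setone_n$ is the limit of the underlying \emph{sets}, whose elements are arbitrary compatible threads. So the pair $\bigl((x_n)_n,\,y\bigr)$ is an element of $P$ as soon as $g((x_n)_n)=b(y)$, and the proof is finished at that point; no support condition on $(x_n)_n$ is required. Your attempted repair --- that safety of the squares or orbit-finiteness of the fibres should force $\bigcup_n\bv_{f_n}(x_n)$ to be finite --- is not a hypothesis of the theorem and is in fact false in the motivating application: the safe representatives of the truncations of $[\ibvnf]_\alpha$ accumulate infinitely many bound variables, and Remark~\ref{rem:lbifs-to-cflba-not-surj} shows that if one insists on finitely supported threads (that is, replaces $P$ by $\fgtu\limX$) the induced map to $\fgtu\limY$ is \emph{not} surjective. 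The whole reason the theorem is stated with the pullback $P$ over the $\Set$-limit, rather than with $\fgtu\limX$, is precisely to admit these non-finitely-supported threads; once you delete the final paragraph and instead observe that $((x_n)_n,y)\in P$ by the universal property of the pullback, the proof is correct and agrees with the paper's.
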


\proof
We start with $\eltwo\in\limY$. Consider the projections $\eltwo_n\in\settwo_n$ of $\eltwo$ obtained via the projections of the limiting cone.  Since $q_n(v_{n+1})=v_n$ and $q_n$ are equivariant maps we have the following inclusions of finite sets
$$\supp(\eltwo_0)\subseteq\supp(\eltwo_1)\subseteq\ldots\subseteq\supp(\eltwo).$$
Thus the sequence stabilises eventually, that is, there exists $n$ such that $\supp(\eltwo_n)=\supp(\eltwo_{n+k})$ for all natural numbers $k$. Moreover since $\eltwo=(\eltwo_0,\eltwo_1,\ldots)$ we have that for all $k\ge 0$  $$\supp(\eltwo_{n+k})=\supp(\eltwo).$$
We will now construct $\elone=(\elone_1,\elone_2,\ldots)$ in $\lim\fgtu\setone_n$ such that $g(\elone)=b(\eltwo)$. First let $\elone_n\in\setone_n$ be an $f_n$-safe element such that $f_n(\elone_n)=\eltwo_n$. For each $k\ge 1$, we define an $f_{n+k}$-safe $\elone_{n+k}\in\setone_{n+k}$ such that  $p_{n+k}(\elone_{n+k})=\elone_{n+k-1}$ 
and $f_{n+k}(\elone_{n+k})=\eltwo_{n+k}$. 

The proof is by induction on $k$. Since $\supp(\eltwo_{n+1})=\supp(\eltwo_n)$, we have that $(\supp(\elone_n)\setminus\supp(\eltwo_n))\cap\supp(\eltwo_{n+1})=\emptyset$. Since all the squares~\eqref{eq:n-swp} are safe squares, there exists an $f_{n+1}$-safe element $\elone_{n+1}\in\setone_{n+1}$ in the preimage of $\eltwo_{n+1}$ such that $p_{n+1}(\elone_{n+1})=\elone_n$.  This shows that the claim is true for $k=1$. 

For the inductive step $k\to k+1$, notice that $(\supp(\elone_{n+k})\setminus\supp(\eltwo_{n+k}))\cap\supp(\eltwo_{n+k+1})=\emptyset$ and that $\elone_{n+k}$ is $f_{n+k}$-safe. Using the safe square property, there exists $f_{n+k+1}$-safe $\elone_{n+k+1}\in\setone_{n+k+1}$ with the desired properties.

For $m<n$ define $\elone_m=p_mp_{m+1}\ldots p_n(\elone_n)$. Now observe that 
$(\elone_0,\elone_1,\ldots)$ is an element of $\lim\fgtu\setone_n$ whose image in $\lim\fgtu\settwo_n$ is $(\eltwo_0,\eltwo_1,\ldots)$. This means that 
$$g((\elone_0,\elone_1,\ldots))=b(\eltwo).$$
Since $P$ is the pullback of $g$ and $b$, there exists $\elthree\in P$ such that $r(\elthree)=(\elone_0,\elone_1,\ldots)$ and $h(\elthree)=\eltwo$. Thus $h:P\to\fgtu\limY$ is surjective.
\qed

\medskip Going back to Theorem~\ref{thm:comp-alpha-comm} and looking
at \eqref{eq:PTonto}, we find that so far we established the following
corollary of Theorem~\ref{thm:comp-alpha-comm}.

\begin{cor}\label{cor:isomfinalcoalgcarriers-2}
  Let $\fnom{F}, \fnom{F}_\alpha$ be endofunctors on $\Nom$ having
  final coalgebras, and let $\fnom{F}\to\fnom{F}_\alpha$ be a
  component-wise surjective natural transformation, and $P$ be a
  pullback as in \eqref{eq:classesbindsig-2}. Moreover assume that
  $\fnom{F}$ preserves surjections. If
\begin{itemize}
\item the induced maps  $\amap{n}:\fnom{F}^n1\to\fnom{F}_\alpha^n1$ are safe
\item and the squares 
\begin{equation}
\label{eq:ssq-alpha}
 \begin{gathered}
  \xymatrix{
\fnom{F}^n1\ar@{<-}[r]\ar@{->>}[d]_{\amap{n}} & \fnom{F}^{n+1}1\ar@{->>}[d]^{\amap{n+1}}\\
\fnom{F}_\alpha^n1\ar@{<-}[r] & \fnom{F}_\alpha^{n+1}1\\
}
\end{gathered}
\end{equation}
are safe squares,
\end{itemize}
then $P\to U\fcoalg_\alpha$ is onto.
\end{cor}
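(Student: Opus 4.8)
The plan is to deduce this directly from Theorem~\ref{thm:gen-seq-pb-surj}. Instantiate the $\omega^{\mathit{op}}$-chains of~\eqref{eq:fin-chain-gen} with the two final chains $1\leftarrow\fnom{F}1\leftarrow\fnom{F}^21\leftarrow\cdots$ and $1\leftarrow\fnom{F}_\alpha1\leftarrow\fnom{F}_\alpha^21\leftarrow\cdots$, so that $\setone_n=\fnom{F}^n1$ and $\settwo_n=\fnom{F}_\alpha^n1$, and take the vertical maps $f_n$ to be the maps $\amap{n}\colon\fnom{F}^n1\to\fnom{F}_\alpha^n1$ induced by the component-wise surjective natural transformation $q\colon\fnom{F}\to\fnom{F}_\alpha$, defined by $\amap{0}=\mathit{id}_1$ and $\amap{n+1}=q_{\fnom{F}_\alpha^n1}\circ\fnom{F}\amap{n}$, dually to~\eqref{eq:init-chain-surj}. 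First I would record that each $\amap{n}$ is onto; this is the dual of the argument used for the initial chains, using that $\fnom{F}$ preserves surjections and $q$ is component-wise surjective (it is in any case implied by the first bullet of the hypothesis, since a safe map is surjective). So this is a chain of the shape of~\eqref{eq:fin-chain-gen}, and the two bullets of the hypothesis are precisely the assumptions of Theorem~\ref{thm:gen-seq-pb-surj}: the $\amap{n}$ are safe maps and the squares~\eqref{eq:ssq-alpha} are safe squares.

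Next I would match the conclusion with the statement. Let $L$ denote the limit in $\Nom$ of $(\fnom{F}_\alpha^n1)_n$ and let $b\colon\fgtu L\to\lim \fgtu\fnom{F}_\alpha^n1$ be the comparison map into the $\Set$-limit. Theorem~\ref{thm:gen-seq-pb-surj} gives that $h\colon P'\to\fgtu L$ is surjective, where $P'$ is the pullback of $b$ and the induced map $g\colon\lim \fgtu\fnom{F}^n1\to\lim \fgtu\fnom{F}_\alpha^n1$, and $g$ is exactly the map $[-]_\alpha$ of~\eqref{eq:tffv-as-pullback} used in~\eqref{eq:classes-sem}. By finality of $\fcoalg_\alpha$ there is a canonical map $j\colon\fgtu\fcoalg_\alpha\to\fgtu L$ with $b\circ j=\iota_\alpha$, the comparison map of~\eqref{eq:fin-chains-sets-fgt}; hence, by pullback pasting, the pullback $P$ of $g$ and $\iota_\alpha$ used in~\eqref{eq:classes-sem} and~\eqref{eq:PTonto} is the pullback of $h\colon P'\to\fgtu L$ along $j$. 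Since $h$ is onto and the pullback of a surjection along an arbitrary map is again a surjection in $\Set$, the map $P\to\fgtu\fcoalg_\alpha$ is onto, which is the assertion. (If moreover $\fnom{F}_\alpha$ preserves limits of $\omega^{\mathit{op}}$-chains --- which holds for all functors obtained from the grammar~\eqref{eq:gramF} by Proposition~\ref{prop:generalizedgrammar} --- then $j$ is an isomorphism, $L=\fcoalg_\alpha$, $P=P'$, and the last step is vacuous.)

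I do not expect a genuine obstacle here, since all the mathematical content has been isolated in Theorem~\ref{thm:gen-seq-pb-surj}; the rest is the bookkeeping above --- checking that the $\amap{n}$ are onto so that the chain has the right shape, identifying $b\circ j$ with $\iota_\alpha$, and recognising $P$ as a pullback of the surjection $h$. The only point worth stressing is the one from the Remark following the corollary: one may not assume that the $\amap{n}$ admit equivariant right inverses, which is precisely why the safe-map and safe-square hypotheses replace the weak-pullback condition that would suffice in $\Set$.
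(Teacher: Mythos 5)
Your proposal is correct and follows essentially the same route as the paper, which states this corollary as a direct consequence of Theorem~\ref{thm:gen-seq-pb-surj} applied to the final chains with vertical maps $\amap{n}$. If anything you are slightly more careful than the paper in distinguishing the final coalgebra $\fcoalg_\alpha$ from the $\Nom$-limit of the final chain and handling the general case via pullback pasting along $j$, but this is a refinement of the same argument, not a different one.
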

\medskip In the following we will prove the second part of Theorem~\ref{thm:comp-alpha-comm} using Corollary~\ref{cor:isomfinalcoalgcarriers-2}, that is, by establishing the two
bullet points above.
  Let us briefly outline the structure of that
argument. 
Recall that the maps $\amap{n}$ are defined inductively as follows:
$\amap{0}=\mathit{id}_1$ and
$\amap{n+1}=q_{\fnom{F}_\alpha^n1}\circ\fnom{F}(\amap{n})$.
Also recall that the natural transformation
$q:\fnom{F}\to\fnom{F}_\alpha$ is defined inductively depending on the
structure of $\fnom{F}$ and $\fnom{F}_\alpha$ using the
rules~\eqref{eq:8}.  Therefore, the argument will proceed by
induction on $n$ and on the structure of $\fnom{F}$. For this, 
we need some structural closure properties for safe maps. To this end, we also study a special case of safe maps, namely maps with orbit-finite fibres. This will help us to prove
 the first bullet point of  Corollary~\ref{cor:isomfinalcoalgcarriers-2}. For the second bullet we need a detailed study of safe squares. This is eventually done in Section~\ref{sec:prop-safe-sq}.

\subsubsection{Some properties of safe elements and safe maps}\

\begin{lem}
  \label{lem:pi-safe-is-safe}
If $\elone\in\setone$ is $f$-safe and $\pi$ is an arbitrary finite permutation then $\pi\cdot \elone$ is $f$-safe. 
\end{lem}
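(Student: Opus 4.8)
The plan is to reduce everything to two facts established earlier: equivariance of $f$ (so $f(\pi\cdot\elone)=\pi\cdot f(\elone)$) and the fact, recorded in Remark~\ref{rem:supp-equiv}, that $\supp(\pi\cdot\eltwo)=\pi\cdot\supp(\eltwo)$ and hence $|\supp(\pi\cdot\eltwo)|=|\supp(\eltwo)|$ for every element $\eltwo$ and every permutation $\pi$. The whole argument is then a cardinality bookkeeping on the fibre $f^{-1}(f(\elone))$ transported along $\pi$.

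First I would observe that the fibre over $f(\pi\cdot\elone)$ is the $\pi$-translate of the fibre over $f(\elone)$. Concretely, using $f(\pi\cdot\elone)=\pi\cdot f(\elone)$ and equivariance of $f$ once more, $\eltwo\in f^{-1}(f(\pi\cdot\elone))$ iff $f(\pi^{-1}\cdot\eltwo)=f(\elone)$ iff $\pi^{-1}\cdot\eltwo\in f^{-1}(f(\elone))$; thus
\[
f^{-1}(f(\pi\cdot\elone))=\pi\cdot f^{-1}(f(\elone)).
\]
Next I would note that applying $\pi$ does not change the cardinality of supports, so the two sets of natural numbers
\[
\{\,|\supp(\eltwo)|\ \mid\ \eltwo\in f^{-1}(f(\pi\cdot\elone))\,\}
\quad\text{and}\quad
\{\,|\supp(\elthree)|\ \mid\ \elthree\in f^{-1}(f(\elone))\,\}
\]
coincide: every $\eltwo$ in the first fibre is $\pi\cdot\elthree$ for a unique $\elthree$ in the second, and $|\supp(\pi\cdot\elthree)|=|\supp(\elthree)|$. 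In particular one maximum exists precisely when the other does, and they are equal.

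Finally I would conclude: since $\elone$ is $f$-safe, $|\supp(\elone)|$ equals the maximum of the second set, hence also the maximum of the first set; and $|\supp(\pi\cdot\elone)|=|\supp(\elone)|$ again by Remark~\ref{rem:supp-equiv}. Therefore $|\supp(\pi\cdot\elone)|=\max\{\,|\supp(\eltwo)|\mid \eltwo\in f^{-1}(f(\pi\cdot\elone))\,\}$, which is exactly the condition for $\pi\cdot\elone$ to be $f$-safe. There is no real obstacle here; the only point requiring a little care is checking that the supremum/maximum is genuinely transported by $\pi$ (so that existence of the maximum is preserved), which follows immediately from the fibre-translation identity above together with the invariance of $|\supp(-)|$ under permutations.
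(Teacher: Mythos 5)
Your proposal is correct and follows essentially the same route as the paper's proof: both use equivariance of $f$ to transport the fibre along $\pi$ and Remark~\ref{rem:supp-equiv} to see that permutations preserve support cardinality. The paper phrases this pointwise (pulling an arbitrary $\eltwo$ in the new fibre back by $\pi^{-1}$), while you phrase it as a bijection of fibres, but the content is identical.
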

\proof

Suppose $\elone\in\setone$ is $f$-safe. Consider $\eltwo \in \setone$
such that 
$\eltwo\in f^{-1}(f(\pi\cdot\elone))$.
Then $\pi^{-1} \cdot \eltwo \in f^{-1}( f(\elone))$ because $f$ is
equivariant. Using Remark~\ref{rem:supp-equiv} we conclude 
\[
|\supp(\pi \cdot \elone) | = |\supp(\elone) | \geq |\supp (\pi^{-1} \cdot\eltwo)|
= |\supp(\eltwo)|.
\] 
\qed

\begin{lem}
\label{lem:choosing-safe-elem}
  Assume $f:\setone\to\settwo$ is safe. Consider $\eltwo\in\settwo$ and let $S$ be an arbitrary finite set of names. Then there exists an $f$-safe $\elone\in f^{-1}(\eltwo)$ such that $\bv(\elone)\cap S=\emptyset$.
\end{lem}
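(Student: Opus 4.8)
The plan is to start from any $f$-safe element in the fibre over $\eltwo$ and then hit it with a finite permutation that renames its bound variables away from $S$ without disturbing $\eltwo$. Since $f$ is safe, Definition~\ref{def:safe-maps} gives some $f$-safe $\elone_0\in f^{-1}(\eltwo)$. Because $f$ is equivariant, Remark~\ref{rem:supp-shrink} yields $\supp(\eltwo)=\supp(f(\elone_0))\subseteq\supp(\elone_0)$, so the finite set $B:=\bv_f(\elone_0)=\supp(\elone_0)\setminus\supp(\eltwo)$ is well defined and disjoint from $\supp(\eltwo)$. The goal is to move $B$ off $S$ by a permutation fixing $\supp(\eltwo)$ pointwise.

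Concretely, I would choose a set $C\subseteq\setvars$ of exactly $|B|$ names that is fresh for everything relevant, i.e.\ $C\cap(B\cup S\cup\supp(\eltwo))=\emptyset$; this is possible since $\setvars$ is countably infinite and $B\cup S\cup\supp(\eltwo)$ is finite. Writing $B=\{b_1,\dots,b_m\}$ and $C=\{c_1,\dots,c_m\}$ with all $2m$ names distinct, let $\pi$ be the finite permutation $(b_1\ c_1)\cdots(b_m\ c_m)$ (the identity if $m=0$). Then $\pi$ fixes every element of $\supp(\eltwo)$, hence $\pi\cdot\eltwo=\eltwo$, and $\pi(B)=C$.

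Finally I would check that $\elone:=\pi\cdot\elone_0$ is as required: it lies in $f^{-1}(\eltwo)$ because $f(\pi\cdot\elone_0)=\pi\cdot f(\elone_0)=\pi\cdot\eltwo=\eltwo$; it is $f$-safe by Lemma~\ref{lem:pi-safe-is-safe}; and, using Remark~\ref{rem:supp-equiv} and the fact that $\pi$ fixes $\supp(\eltwo)$, $\bv_f(\elone)=\supp(\pi\cdot\elone_0)\setminus\supp(\eltwo)=\pi\cdot\supp(\elone_0)\setminus\supp(\eltwo)=\pi\cdot(\supp(\elone_0)\setminus\supp(\eltwo))=\pi(B)=C$, which is disjoint from $S$ by construction. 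There is no genuine obstacle here; the only point demanding care is that $\pi$ must be chosen to fix $\supp(\eltwo)$ \emph{pointwise} (not merely to avoid $S$), since that is exactly what guarantees both $\pi\cdot\eltwo=\eltwo$ and that the relabelled bound-variable set is precisely $\pi(B)$.
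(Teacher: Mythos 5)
Your proof is correct and follows essentially the same route as the paper: pick an $f$-safe preimage, rename bound variables via a finite permutation fixing $\supp(\eltwo)$ pointwise, and invoke Lemma~\ref{lem:pi-safe-is-safe} together with the equivariance of $\supp$. The only (immaterial) difference is that you swap the whole set $\bv_f(\elone_0)$ to fresh names, whereas the paper swaps only the offending intersection $S\cap\bv_f(\elone_0)$.
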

\proof
Since $f$ is safe, there exists an $f$-safe $\elone\in f^{-1}(\eltwo)$. Let $T=S\cap\bv(\elone)$. If $T$ is empty we are done. If not, let $T'$ be a finite set of names fresh for $\elone,\eltwo,S$ which has the same number of elements as $T$. Let  $\pi$ denote a finite permutation that swaps the elements of $T$ and $T'$ and fixes the remaining names in $\setvars$. 
 Then $\pi \cdot x = x$ for all
$x \in \supp(v)$,
thus $\pi\cdot \elone\in f^{-1}(\eltwo)$. 
By Lemma \ref{lem:pi-safe-is-safe}, $\pi \cdot \elone$ is $f$-safe.
Moreover, by Remark~\ref{rem:supp-equiv} we have $\supp(\pi\cdot \elone)=\pi\cdot\supp(\elone)$ and thus $\bv(\pi\cdot \elone)\cap S=\emptyset$.
\qed

\begin{lem}
\label{lem:safe-elem-and-prod}
Assume $f_1:\setone_1\to\settwo_1$  and $f_2:\setone_2\to\settwo_2$ are safe maps. Then $(\elone_1,\elone_2)$ is $(f_1\times f_2)$-safe if and only if the following hold
\begin{equation}\label{eq:3}
\begin{array}{l}
\elone_i\textrm{ is  }f_i\textrm{-safe for }i=1,2\\
\bv(\elone_1)\# \elone_2\\
\bv(\elone_2)\# \elone_1.
\end{array}  
\end{equation}
\end{lem}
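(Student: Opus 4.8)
The plan is to reduce the claim to a single cardinality computation over the fibre of $f_1\times f_2$. Recall that in the product nominal set $\supp(\elone_1,\elone_2)=\supp(\elone_1)\cup\supp(\elone_2)$, and that $(f_1\times f_2)^{-1}(f_1(\elone_1),f_2(\elone_2))=f_1^{-1}(f_1(\elone_1))\times f_2^{-1}(f_2(\elone_2))$. Write $A_i=\supp(f_i(\elone_i))$, and for $w_i$ in the $i$-th fibre put $b_i(w_i)=\bv_{f_i}(w_i)=\supp(w_i)\setminus A_i$, so that $\supp(w_i)=A_i\cup b_i(w_i)$ with $A_i\cap b_i(w_i)=\emptyset$; abbreviate $b_i=b_i(\elone_i)$. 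Since $f_1$ and $f_2$ are safe, the numbers $M_i=\max\{\,|\bv_{f_i}(w_i)| : w_i\in f_i^{-1}(f_i(\elone_i))\,\}$ are well defined, and by the characterisation of $f_i$-safeness in terms of $\bv_{f_i}$ proved above, $w_i$ is $f_i$-safe exactly when $|b_i(w_i)|=M_i$.

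First I would compute the fibre maximum $\mu=\max\{\,|\supp(w_1)\cup\supp(w_2)|\,\}$, over $(w_1,w_2)$ in the fibre, and show $\mu=|A_1\cup A_2|+M_1+M_2$. For the upper bound, using $A_i\subseteq\supp(w_i)$ one gets $|\supp(w_1)\cup\supp(w_2)|=|\supp(w_1)|+|\supp(w_2)|-|\supp(w_1)\cap\supp(w_2)|\le(|A_1|+M_1)+(|A_2|+M_2)-|A_1\cap A_2|=|A_1\cup A_2|+M_1+M_2$. For the matching lower bound I would invoke Lemma~\ref{lem:choosing-safe-elem} twice: first pick an $f_1$-safe $w_1$ in the fibre with $b_1(w_1)\cap A_2=\emptyset$, then an $f_2$-safe $w_2$ in the fibre with $b_2(w_2)\cap\supp(w_1)=\emptyset$; then $\supp(w_1)\cap\supp(w_2)=A_1\cap A_2$, so $|\supp(w_1)\cup\supp(w_2)|=|A_1\cup A_2|+M_1+M_2$. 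In particular the maximum required in Definition~\ref{def:f-safe} does exist for $(\elone_1,\elone_2)$ in this situation, in spite of the warning following that definition.

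Now $(\elone_1,\elone_2)$ is $(f_1\times f_2)$-safe iff $|\supp(\elone_1)\cup\supp(\elone_2)|=|A_1\cup A_2|+M_1+M_2$. Running the same inequality chain for $(\elone_1,\elone_2)$ itself, equality holds iff both (i) $|\supp(\elone_i)|=|A_i|+M_i$ for $i=1,2$, equivalently $|b_i|=M_i$, equivalently $\elone_i$ is $f_i$-safe, and (ii) $\supp(\elone_1)\cap\supp(\elone_2)=A_1\cap A_2$. Finally I would unfold (ii): since $\supp(\elone_i)=A_i\cup b_i$ with $A_i\cap b_i=\emptyset$, condition (ii) says precisely $A_1\cap b_2=b_1\cap A_2=b_1\cap b_2=\emptyset$, and reading $\supp(\elone_2)=A_2\cup b_2$ and $\supp(\elone_1)=A_1\cup b_1$ back in, this is exactly $b_1\fresh\elone_2$ and $b_2\fresh\elone_1$, i.e.\ $\bv_{f_1}(\elone_1)\fresh\elone_2$ and $\bv_{f_2}(\elone_2)\fresh\elone_1$. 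Conjoining (i) and (ii) yields the three conditions of~\eqref{eq:3}, and conversely these three conditions force equality, so the equivalence is proved.

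I expect the only real obstacle to be the bookkeeping in the last paragraph: making sure that the cardinality identity forces exactly (i) and (ii) and nothing weaker, and that (ii) translates to precisely the two freshness statements of~\eqref{eq:3} — no more, no less — using only $A_i\cap b_i=\emptyset$ (which is automatic) and never covertly assuming the $A_i$ or the $b_i$ to be disjoint. A secondary point worth flagging explicitly is that the very existence of $\mu$, hence the applicability of Definition~\ref{def:f-safe} to $(\elone_1,\elone_2)$, genuinely relies on $f_1$ and $f_2$ being safe (through Lemma~\ref{lem:choosing-safe-elem}).
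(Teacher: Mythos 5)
Your proof is correct and follows essentially the same route as the paper's: the same cardinality bookkeeping on supports of pairs, and the same two applications of Lemma~\ref{lem:choosing-safe-elem} to produce a fibre element realising the maximum. The only difference is organisational --- you compute the fibre maximum $|A_1\cup A_2|+M_1+M_2$ once via inclusion--exclusion and read both directions off the equality conditions, whereas the paper runs the two directions separately, working with set differences and constructing an auxiliary safe pair in the converse.
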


\proof
Assume $(\elone_1,\elone_2)$ satisfy~\eqref{eq:3} and let $\eltwo_i$ denote $f_i(\elone_i)$  for $i=1,2$.
We will show that $(\elone_1,\elone_2)\in (f_1\times f_2)^{-1}(\eltwo_1,\eltwo_2)$ is $(f_1\times f_2)$-safe. 
First observe that 
\begin{equation}
  \label{eq:comp-bv-prod}
  \begin{array}{lcl}
\bv(\elone_1,\elone_2) &=& 
(\supp(\elone_1)\cup\supp(\elone_2))\setminus(\supp(\eltwo_1)\cup\supp(\eltwo_2))\\
& =& (\supp(\elone_1)\setminus\supp(\eltwo_1))\setminus\supp(\eltwo_2) \cup \\
&\cup &(\supp(\elone_2)\setminus\supp(\eltwo_2))\setminus\supp(\eltwo_1)\\
& =& \bv(\elone_1)\uplus\bv(\elone_2).
  \end{array}
\end{equation}
The last equality holds by~(\ref{eq:3}). Therefore
\begin{equation}
  \label{eq:bv-comp-3}
|\bv(\elone_1,\elone_2)|=|\bv(\elone_1)|+|\bv(\elone_2)|.  
\end{equation}
In order to show that $(\elone_1,\elone_2)$ is indeed $(f_1\times f_2)$-safe consider $(\elone_1',\elone_2')\in (f_1\times f_2)^{-1}(\eltwo_1,\eltwo_2)$. We have the inequalities
\begin{equation}
  \label{eq:comp-bv-prod-2}
  \begin{array}{lcl}
|\supp(\elone_1',\elone_2')\setminus\supp(\eltwo_1,\eltwo_2)| & =& 
|(\supp(\elone_1')\cup\supp(\elone_2'))\setminus(\supp(\eltwo_1)\cup\supp(\eltwo_2))|\\
&\le& |(\supp(\elone_1')\setminus(\supp(\eltwo_1)\cup\supp(\eltwo_2)))| + \\
&+ &|(\supp(\elone_2')\setminus(\supp(\eltwo_1)\cup\supp(\eltwo_2)))|\\
&\le& |(\supp(\elone_1')\setminus\supp(\eltwo_1)|+|(\supp(\elone_2')\setminus\supp(\eltwo_2)|\\
& \le & |\bv(\elone_1)|+|\bv(\elone_2)|.
  \end{array}
\end{equation}
The last inequality holds because $\elone_1$ and $\elone_2$ are $f_1$-safe, respectively, $f_2$-safe. Thus, using~\eqref{eq:bv-comp-3}, we can conclude that 
$$|\supp(\elone_1',\elone_2')\setminus\supp(\eltwo_1,\eltwo_2)|\le|\bv(\elone_1,\elone_2)|.$$

Conversely, assume that $(\elone_1',\elone_2')\in (f_1\times f_2)^{-1}(\eltwo_1,\eltwo_2)$ is $(f_1\times f_2)$-safe. By Lemma~\ref{lem:choosing-safe-elem}, there exists $\elone_1\in f_1^{-1}(\eltwo_1)$ an $f_1$-safe element such that
  \begin{equation}
    \label{eq:bv-cond-1}   
     \bv(\elone_1)\cap\supp(\elone_2)=\emptyset.
   \end{equation}
Similarly, there exists $\elone_2\in f_2^{-1}(\eltwo_2)$ an $f_2$-safe element such that 
\begin{equation}
  \label{eq:bv-cond-2}
   \bv(\elone_2)\cap\supp(\elone_1)=\emptyset. 
\end{equation}
From~\eqref{eq:bv-cond-1} and~\eqref{eq:bv-cond-2} we can derive~\eqref{eq:3}.
 By the first part of the proof we know that 
 $(\elone_1,\elone_2)$ is $(f_1\times f_2)$-safe and that the inequalities of~(\ref{eq:comp-bv-prod-2}) hold. 
 Since $(\elone_1',\elone_2')$ is also $(f_1\times f_2)$-safe 
 we know that all the inequalities of~(\ref{eq:comp-bv-prod-2}) are equalities. 
 For the last inequality of~(\ref{eq:comp-bv-prod-2}), this implies that each $\elone_i'$ is $f_i$-safe. For the second inequality of~(\ref{eq:comp-bv-prod-2}), this implies that $\bv(\elone_1')\# f_2(\elone_2)$ and $\bv(\elone_2)\# f_1(\elone_1)$. Hence, by the fact that also the first inequality of~\eqref{eq:comp-bv-prod-2} is actually an equality we have that $\bv(\elone_1')\#\bv(\elone_2')$. Since $\supp(\elone_1')=\bv(\elone_1')\cup\supp(  f_1(\elone_1'))$ we conclude that $\bv(\elone_2')\#\elone_1'$. Similarly $\bv(\elone_1')\#\elone_2'$.
\qed

\begin{lem}
  \label{lem:safe-composition-1}
Let $f:\setone\to\settwo$ and $g:\settwo\to\setthree$ be equivariant maps. If $\elone$ is $gf$-safe, then $\elone$ is $f$-safe.
\end{lem}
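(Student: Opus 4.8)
The plan is to exploit the inclusion of fibres
$f^{-1}(f(\elone))\subseteq (gf)^{-1}\bigl((gf)(\elone)\bigr)$, which holds because $f(\elone')=f(\elone)$ forces $g(f(\elone'))=g(f(\elone))$, so any element sharing the $f$-image of $\elone$ also shares its $gf$-image. With this inclusion in hand the argument is purely a comparison of the two maxima appearing in Definition~\ref{def:f-safe}.

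First I would note that $\elone\in f^{-1}(f(\elone))$ trivially, so $|\supp(\elone)|$ is one of the cardinalities over which the maximum in the $f$-safety condition is taken; hence that maximum, \emph{if it exists}, is at least $|\supp(\elone)|$. Next, for an arbitrary $\elone'\in f^{-1}(f(\elone))$ the inclusion above gives $\elone'\in (gf)^{-1}\bigl((gf)(\elone)\bigr)$, and the $gf$-safety of $\elone$ then yields $|\supp(\elone')|\le|\supp(\elone)|$. Thus $|\supp(\elone)|$ is an upper bound for $\{\,|\supp(\elone')|\ \mid\ \elone'\in f^{-1}(f(\elone))\,\}$, and it is attained at $\elone'=\elone$. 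Consequently this maximum exists and equals $|\supp(\elone)|$, which is precisely the statement that $\elone$ is $f$-safe.

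There is no genuine obstacle here. The only subtlety worth flagging is that the maximum in Definition~\ref{def:f-safe} need not exist a priori (cf.\ Example~\ref{ex:safe-non-safe-map}); but the computation above simultaneously exhibits the maximum and identifies its value, so no separate existence argument is required.
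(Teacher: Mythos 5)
Your proof is correct and follows essentially the same route as the paper's: observe that $f^{-1}(f(\elone))\subseteq (gf)^{-1}(gf(\elone))$ and apply $gf$-safety to bound $|\supp(\eltwo)|$ by $|\supp(\elone)|$ for every $\eltwo$ in the $f$-fibre. Your extra remark that the maximum is thereby exhibited (rather than assumed to exist) is a welcome clarification the paper leaves implicit.
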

\proof

Consider $\eltwo\in f^{-1}(f(\elone))$. Then 
$gf(\eltwo) = gf(\elone)$.  Since $\elone$ is $gf$-safe,
we have that $|\supp(\eltwo)| \leq  |\supp(\elone)|$.
\qed

Unfortunately, safe maps are not closed under composition as the next example shows.

\begin{exa}
  Consider the set $\nbb$ of natural numbers as a nominal set equipped with the trivial action and let $\mathcal{P}_{\mathit{fin}}(\setvars)$ denote the nominal set of all finite subsets of $\setvars$. Let $f:\mathcal{P}_{\mathit{fin}}(\setvars)\to\nbb$ denote the map which sends any finite set of names to its 
  cardinal and let $g:\nbb\to\{*\}$ denote the unique map from $\nbb$ into the final nominal set $\{*\}$. Both $f$ and $g$ are equivariant and safe, but their composition $g\circ f$ is not safe.
\end{exa}

 Therefore we need a stronger notion of maps that still accommodates our examples but with better closure properties. This is the purpose of the next section.

\subsubsection{Maps with orbit-finite fibers}\ \\\vspace{-6 pt}

\noindent In this section, we introduce the notion of maps with orbit-finite fibers
and use it to prove that 
the maps $\amap{n}:\fnom{F}^n1\to\fnom{F}_\alpha^n1$  are safe.
\noindent

\begin{defi}[Orbite-finite fibers] 
  We say that a $\Nom$-morphism $f:\setone\to\settwo$ has orbit-finite fibers when for all $\eltwo\in\settwo$ we have that $f^{-1}(\eltwo)$ is included in the union of finitely many orbits of $\setone$.
\end{defi}

\begin{lem}
\label{lem:char-off-maps}
  The following are equivalent:
  \begin{enumerate}
    \item $f$ has orbit-finite fibers.
    \item For all $\eltwo\in\settwo$ there exists a finitely presentable nominal subset $\setone_\eltwo\subseteq\setone$ such that $f^{-1}(\eltwo)\subseteq\setone_\eltwo$.
    \item For all finitely presentable nominal subset $\settwo'\subseteq\settwo$ the nominal subset $f^{-1}(\settwo')$ of $\setone$ is  finitely presentable.
  \end{enumerate}
\end{lem}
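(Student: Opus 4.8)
The plan is to prove the cycle $(1)\Rightarrow(2)\Rightarrow(3)\Rightarrow(1)$, relying throughout on the two facts recalled above: a nominal set is finitely presentable in $\Nom$ precisely when it has finitely many orbits, and every nominal set is the disjoint union of its orbits. In particular every \emph{nominal subset} of a nominal set is a union of orbits of the ambient set, hence is automatically closed under the $\grp$-action, and every nominal subset of a finitely presentable nominal set is itself finitely presentable (being a union of finitely many of the ambient orbits).

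For $(1)\Rightarrow(2)$, fix $\eltwo\in\settwo$. By hypothesis $f^{-1}(\eltwo)$ is contained in a union $\mathcal{O}_1\cup\cdots\cup\mathcal{O}_k$ of finitely many orbits of $\setone$; each $\mathcal{O}_j$ is a nominal subset of $\setone$, so $\setone_\eltwo:=\mathcal{O}_1\cup\cdots\cup\mathcal{O}_k$ is a nominal subset with at most $k$ orbits, hence finitely presentable, and it contains $f^{-1}(\eltwo)$. For $(3)\Rightarrow(1)$, fix $\eltwo\in\settwo$ and consider the orbit $\mathcal{O}=\{\pi\cdot\eltwo\mid\pi\in\grp\}$, which is a single-orbit nominal subset of $\settwo$ and thus finitely presentable; by $(3)$ the nominal subset $f^{-1}(\mathcal{O})$ of $\setone$ is finitely presentable, i.e.\ has finitely many orbits, and $f^{-1}(\eltwo)\subseteq f^{-1}(\mathcal{O})$ is therefore contained in a union of finitely many orbits of $\setone$.

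The substance is $(2)\Rightarrow(3)$. Let $\settwo'\subseteq\settwo$ be a finitely presentable nominal subset, say with orbits $\mathcal{O}_1',\dots,\mathcal{O}_m'$, and choose a representative $\eltwo_i\in\mathcal{O}_i'$ for each $i$. Using $(2)$, pick finitely presentable nominal subsets $\setone_{\eltwo_i}\subseteq\setone$ with $f^{-1}(\eltwo_i)\subseteq\setone_{\eltwo_i}$. The key observation is that each $\setone_{\eltwo_i}$, being a nominal subset, satisfies $\pi\cdot\setone_{\eltwo_i}=\setone_{\eltwo_i}$ for every $\pi\in\grp$. Hence for any $\elone\in f^{-1}(\settwo')$ we have $f(\elone)=\pi\cdot\eltwo_i$ for some $i$ and some $\pi\in\grp$, so by equivariance $\pi^{-1}\cdot\elone\in f^{-1}(\eltwo_i)\subseteq\setone_{\eltwo_i}$ and therefore $\elone\in\pi\cdot\setone_{\eltwo_i}=\setone_{\eltwo_i}$. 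Thus $f^{-1}(\settwo')\subseteq\setone_{\eltwo_1}\cup\cdots\cup\setone_{\eltwo_m}$, a finite union of finitely presentable nominal sets and hence finitely presentable; since $f^{-1}(\settwo')$ is a nominal subset of this finitely presentable nominal set, it is again finitely presentable.

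The whole argument is essentially bookkeeping on top of the orbit characterisation of finitely presentable nominal sets. The only place that requires a moment's care is the covering step in $(2)\Rightarrow(3)$: one must see that the finitely many fibre-bounds $\setone_{\eltwo_i}$ attached to orbit representatives already cover the preimage of the \emph{entire} orbit-finite set $\settwo'$, and this hinges precisely on nominal subsets being $\grp$-stable, so that translating $\setone_{\eltwo_i}$ by a permutation cannot enlarge it. Everything else is immediate from the stated results.
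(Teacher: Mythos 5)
Your proof is correct and follows essentially the same route as the paper: the content of the argument is the $(2)\Rightarrow(3)$ step, where you, exactly as the paper does, cover $f^{-1}(\settwo')$ by the finitely many fibre-bounds attached to orbit representatives using the $\grp$-stability of nominal subsets ($\pi\cdot\setone_{\eltwo_i}=\setone_{\eltwo_i}$). The only difference is cosmetic: you close the cycle $(1)\Rightarrow(2)\Rightarrow(3)\Rightarrow(1)$, whereas the paper treats $(1)\Leftrightarrow(2)$ as immediate and proves $(2)\Leftrightarrow(3)$.
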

\proof
$(1) \Longleftrightarrow (2)$ is immediate.

$(2) \Longrightarrow (3)$: Consider $\settwo'\subseteq\settwo$ a finitely presentable nominal subset. Then $\settwo'$ is a finite union of orbits $O_{\eltwo_1}\cup\ldots\cup O_{\eltwo_n}$. We have picked generators $\eltwo_1,\ldots,\eltwo_n$ for these orbits. For each $i$ there exists a finitely presentable nominal subset $\setone_i\subseteq\setone$ with $f^{-1}(\eltwo_i)\subseteq\setone_i$. Then for all $\pi$ we have that $f^{-1}(\pi\cdot\eltwo_i)\subseteq \pi\cdot\setone_i=\setone_i$. Therefore $f^{-1}(O_{\eltwo_i})\subseteq\setone_i$. Therefore, $f^{-1}(\settwo')\subseteq \setone_1\cup\ldots\cup\setone_n$. Since a finite union of finitely presentable nominal sets is a finitely presentable nominal set, we conclude that $f^{-1}(\settwo')$ is a nominal subset of a finitely presentable nominal set, thus it is  itself a finitely presentable nominal set.
$(3) \Longleftarrow (2)$: Consider $\eltwo\in\settwo$. Put $\settwo'$ to be the one-orbit nominal set generated by $\eltwo$. By $(3)$ we have that $f^{-1}(\settwo')$ is a finitely presentable nominal subset of $\setone$. We obviously have that $f^{-1}(\eltwo)\subseteq f^{-1}(\settwo')$.
\qed

\begin{lem}
\label{lem:has-off-imp-safe}
  If $f:\setone\to\settwo$ is surjective and has orbit-finite fibers, then $f$ is safe.
\end{lem}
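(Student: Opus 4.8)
The plan is to show that, given a surjective $f:\setone\to\settwo$ with orbit-finite fibers, for each $\eltwo\in\settwo$ the quantity $|\supp(\elone)|$ ranging over $\elone\in f^{-1}(\eltwo)$ is bounded, so the maximum in Definition~\ref{def:f-safe} exists and is attained. Once the maximum exists, any $\elone$ attaining it is by definition $f$-safe, and surjectivity guarantees $f^{-1}(\eltwo)$ is non-empty, so such an $\elone$ exists; this is exactly what it means for $f$ to be a safe map (Definition~\ref{def:safe-maps}).

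For the boundedness, I would argue as follows. Fix $\eltwo\in\settwo$. By Lemma~\ref{lem:char-off-maps}, since $f$ has orbit-finite fibers there is a finitely presentable nominal subset $\setone_\eltwo\subseteq\setone$ with $f^{-1}(\eltwo)\subseteq\setone_\eltwo$; equivalently, $f^{-1}(\eltwo)$ is contained in a finite union of orbits $\mathcal{O}_1\cup\ldots\cup\mathcal{O}_m$ of $\setone$. Within a single orbit, all elements have support of the same cardinality: if $\elone'=\pi\cdot\elone$ then, by Remark~\ref{rem:supp-equiv}, $\supp(\elone')=\pi\cdot\supp(\elone)$, which has the same cardinality as $\supp(\elone)$. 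Hence the set of values $\{\,|\supp(\elone)|\mid \elone\in f^{-1}(\eltwo)\,\}$ is contained in the finite set $\{\,|\supp(x_i)|\mid 1\le i\le m\,\}$, where $x_i$ is a representative of $\mathcal{O}_i$. In particular this set of cardinalities is finite and non-empty (non-empty because $f$ is surjective), so it has a maximum.

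Therefore, let $M=\max\{\,|\supp(\elone)|\mid \elone\in f^{-1}(\eltwo)\,\}$, which exists by the previous paragraph, and pick $\elone_0\in f^{-1}(\eltwo)$ with $|\supp(\elone_0)|=M$. Then $\elone_0$ satisfies~\eqref{eq:f-safe}, so $\elone_0$ is $f$-safe and $f(\elone_0)=\eltwo$. Since $\eltwo\in\settwo$ was arbitrary, $f$ is safe.

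The only subtlety — and the point where the hypothesis of orbit-finite fibers is essential rather than a technical convenience — is precisely the existence of the maximum: without orbit-finiteness the fiber $f^{-1}(\eltwo)$ could contain elements of unboundedly large support (as in Example~\ref{ex:safe-non-safe-map}, where $!:\mathcal{P}_{\mathit{fin}}(\setvars)\to\{*\}$ has a single fiber with supports of every finite cardinality), and then no $f$-safe element exists. Everything else is an immediate unwinding of the definitions, so I expect no real obstacle beyond making the orbit-counting argument explicit.
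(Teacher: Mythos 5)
Your proof is correct and follows essentially the same route as the paper's: both arguments observe that $|\supp(-)|$ is constant on each orbit (via equivariance of $\supp$), so orbit-finiteness of the fiber makes the set of support cardinalities finite, hence the maximum in Definition~\ref{def:f-safe} exists and is attained by some preimage. Your additional remarks on non-emptiness via surjectivity and on why orbit-finiteness is essential are accurate but do not change the argument.
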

\proof If two elements of $\setone$ are in the same orbit of $f^{-1}(\eltwo)$ their supports have the same number of variables. 
Indeed, if $\elone,\elone'\in f^{-1}(\eltwo)$ and 
$\elone'=\pi\cdot\elone$ for some permutation $\pi$, 
we have that 
$\supp(\elone')=\pi\cdot\supp(\elone)$, 
thus $\supp(\elone)$ and $\supp(\elone')$ 
have the same number of elements. 
By hypothesis $f^{-1}(\eltwo)$ is included in the union 
$O_1\cup\ldots\cup O_n$  of finitely many orbits of $\setone$. 
By the above observation, the set 
$$\{|\supp(\elone)| \ | \ \elone\in O_i\cap f^{-1}(\eltwo)\}$$
is a singleton  for each $1\le i\le n$. 
Since we have only finitely many orbits, it follows that 
$$\{|\supp(\elone)| \ | \ \elone\in f^{-1}(\eltwo)\}$$
has at most $n$ elements, and therefore has  a maximum. Thus $f$ is safe.
\qed

\begin{lem}
\label{lem:off-maps-closure-prop}
  Functions with orbit-finite fibers are closed under
(1) finite products, (2) coproducts, (3) abstraction, and (4) composition.
\end{lem}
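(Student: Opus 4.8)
The plan is to verify each of the four closure properties by unwinding the definition of orbit-finite fibres and exploiting the characterisation in Lemma~\ref{lem:char-off-maps}, in particular equivalence $(1)\Leftrightarrow(2)$: $f$ has orbit-finite fibres iff every fibre is contained in a finitely presentable nominal subset, i.e.\ a nominal subset with finitely many orbits. It is also convenient to use that finitely presentable nominal sets are closed under finite products, countable coproducts, abstraction (Lemma~\ref{lem:abs-pres-fp}), and that a nominal subset of a finitely presentable nominal set is finitely presentable (used already in the proof of Lemma~\ref{lem:char-off-maps}).

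For (4), \emph{composition}: given $f:\setone\to\settwo$ and $g:\settwo\to\setthree$ with orbit-finite fibres and $\elthree\in\setthree$, the fibre $(gf)^{-1}(\elthree)=f^{-1}(g^{-1}(\elthree))$. By Lemma~\ref{lem:char-off-maps}$(1)\Rightarrow(3)$ applied to $g$, the set $g^{-1}(\elthree)$ sits inside a finitely presentable nominal subset $\settwo'\subseteq\settwo$; then by Lemma~\ref{lem:char-off-maps}$(3)$ applied to $f$, the preimage $f^{-1}(\settwo')$ is finitely presentable and contains $(gf)^{-1}(\elthree)$. Hence $gf$ has orbit-finite fibres. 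For (1), \emph{finite products}: given $f_i:\setone_i\to\settwo_i$ ($i=1,2$) and $(\eltwo_1,\eltwo_2)$, one has $(f_1\times f_2)^{-1}(\eltwo_1,\eltwo_2)=f_1^{-1}(\eltwo_1)\times f_2^{-1}(\eltwo_2)\subseteq\setone_1'\times\setone_2'$ where each $\setone_i'$ is finitely presentable covering the corresponding fibre; a finite product of finitely presentable nominal sets is finitely presentable, so we are done. For (2), \emph{coproducts}: given a countable family $f_i:\setone_i\to\settwo_i$ and an element $\eltwo$ of $\coprod\settwo_i$, which lies in a single summand $\settwo_{i_0}$, the fibre $(\coprod f_i)^{-1}(\eltwo)=f_{i_0}^{-1}(\eltwo)$ is contained in a finitely presentable nominal subset of $\setone_{i_0}$, hence of $\coprod\setone_i$. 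For (3), \emph{abstraction}: given equivariant $f:\setone\to\settwo$ with orbit-finite fibres, consider $[\setvars]f:[\setvars]\setone\to[\setvars]\settwo$, $\abs{x}\elone\mapsto\abs{x}f(\elone)$, and fix $\abs{y}\eltwo\in[\setvars]\settwo$. If $\abs{x}\elone'\in([\setvars]f)^{-1}(\abs{y}\eltwo)$ then $\abs{x}f(\elone')=\abs{y}\eltwo$, and by choosing a fresh $z$ and concreting we get $(z\,x)\cdot\elone'=\elone''$ with $f(\elone'')=(z\,y)\cdot\eltwo$ up to renaming — more precisely, representatives of the fibre of $\abs{y}\eltwo$ under $[\setvars]f$, after choosing the bound name to be some fixed fresh $z$, lie in $f^{-1}$ of a single orbit of $\settwo$ (the orbit of $\eltwo$), which by Lemma~\ref{lem:char-off-maps}$(3)$ is contained in a finitely presentable nominal subset $\setone'\subseteq\setone$; then $[\setvars]\setone'$ is finitely presentable by Lemma~\ref{lem:abs-pres-fp} and contains the fibre $([\setvars]f)^{-1}(\abs{y}\eltwo)$.

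The main obstacle is the abstraction case (3): one must be careful that the fibre of $\abs{y}\eltwo$ really is contained in something built from finitely presentable data, since an element of the fibre is an equivalence class $\abs{x}\elone'$ with $x$ and $\elone'$ both varying, and a priori the bound name $x$ can be any name. The point to get right is that changing the bound name $x$ by a permutation that fixes $\supp(\abs{y}\eltwo)$ keeps us inside $[\setvars](\pi\cdot\setone')=[\setvars]\setone'$ when $\setone'$ is a \emph{nominal subset} (equivariant, hence permutation-stable), so it suffices to bound the fibre for one choice of representative name. Concretely: pick $z\#\eltwo$; every class in the fibre has a representative of the form $\abs{z}\elone'$ with $\elone'\in f^{-1}(O_\eltwo)$ where $O_\eltwo$ is the orbit of $\eltwo$; by Lemma~\ref{lem:char-off-maps} the set $f^{-1}(O_\eltwo)$ is finitely presentable, say with generators $\elone_1',\dots,\elone_m'$; then every element of the fibre is of the form $\pi\cdot\abs{z}\elone_j'=\abs{\pi(z)}\pi\cdot\elone_j'$ for some $\pi$ and some $j$, so the fibre is contained in the finitely-many-orbit nominal set generated by $\abs{z}\elone_1',\dots,\abs{z}\elone_m'$, which is finitely presentable. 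This closes (3), and the remaining cases are the routine calculations sketched above.
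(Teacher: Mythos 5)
Your treatment of (1), (2) and (4) coincides with the paper's proof: cover each fibre by a finitely presentable nominal subset and use closure of finitely presentable nominal sets under finite products, the component-wise argument for coproducts, and Lemma~\ref{lem:char-off-maps}(3) for composition. For (3), the argument in your first paragraph is also essentially the paper's, and is in fact slightly more careful: you pass to the preimage of the whole orbit $O_\eltwo$, which is the right move, because the bound name of an element of the fibre need not be the given one (the paper's displayed identity $([\setvars]f)^{-1}(\abs{x}{\eltwo})=\{\abs{x}{\elone}\mid\elone\in f^{-1}(\eltwo)\}$ holds only as a containment $\supseteq$ in general). Since the finitely presentable nominal subset $\setone'$ containing $f^{-1}(\eltwo)$ is equivariant, it also contains $f^{-1}(O_\eltwo)$, so every fibre element $\abs{w}{\elone'}$ has $\elone'\in\setone'$ and the fibre lies in $[\setvars]\setone'$, which is finitely presentable by Lemma~\ref{lem:abs-pres-fp}. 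Up to that point your proof is correct and follows the paper's route.

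The ``concretely'' refinement in your final paragraph, however, contains a false step. It is not true that every class in the fibre has a representative whose bound name is a single fixed $z\fresh\eltwo$: a representative's bound name is always fresh for the class, so any fibre element whose support contains $z$ admits no such representative, and such elements exist --- for $f=\pi_1:\setvars\times\setvars\to\setvars$ the fibre of $\abs{x}{y}$ contains $\abs{w}{(y,z)}$, whose support is $\{y,z\}$. Likewise the fibre need not be contained in the union of the orbits of $\abs{z}{\elone'_1},\dots,\abs{z}{\elone'_m}$: with $f=\pi_1$ again, $f^{-1}(O_y)=\setvars\times\setvars$ has orbit generators $(y,x)$ and $(y,y)$, yet $\abs{w}{(y,w)}$ (for $w\neq y$) lies in the fibre of $\abs{x}{y}$ and is in neither the orbit of $\abs{z}{(y,x)}$ nor that of $\abs{z}{(y,y)}$, since its bound name occurs in the support of its body (compare supports and concretions). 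The repair is simply to drop this refinement and rest on the containment of the fibre in $[\setvars]\setone'$ that you had already established.
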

\proof\hfill
\begin{enumerate}
\item[(1)] Consider $f_i:\setone_i\to\settwo_i$ for $i=1,2$ with orbit-finite fibers. Consider $(\eltwo_1,\eltwo_2)\in\settwo_1\times\settwo_2$. There exists finitely presentable nominal subsets $\setone_i'\subseteq\setone_i$ such that $f_i^{-1}(\eltwo_i)\subseteq\setone_i'$. Then $(f_1\times f_2)^{-1}(\eltwo_1,\eltwo_2)=f_1^{-1}(\eltwo_1)\times f_2^{-1}(\eltwo_2)\subseteq \setone_1'\times\setone_2'$. But, since orbit-finite nominal sets are closed under finite products, see for example~\cite[Lemma~2]{BojanczykBKL12}, we have that $\setone_1'\times\setone_2'$ is orbit-finite, so we are done.
\item[(2)] That's easy, component-wise.
\item[(3)] Consider $f:\setone\to\settwo$ with orbit-finite fibers. We want to prove that $[\setvars]f$ also has orbit-finite fibers. Notice that 
$$([\setvars]f)^{-1}(\abs{x}\eltwo)=\{\abs{x}\elone\ |\ \elone\in f^{-1}(\eltwo)\}.$$
There exist a finitely presentable nominal subset $\setone'\subseteq\setone$ such that $f^{-1}(\eltwo)\subseteq \setone'$. Therefore
 $([\setvars]f)^{-1}(\abs{x}\eltwo)\subseteq[\setvars]\setone'$. 
 By 
 Lemma~\ref{lem:abs-pres-fp} we know that $[\setvars]\setone'$ is finitely presentable.

\item[(4)] Assume $f:\setone\to\settwo$ and $g:\settwo\to\setthree$ have orbit-finite fibers. We show that $g\circ f$ also has orbit-finite fibers. Let $\elthree\in\setthree$. Then $g^{-1}(\elthree)\subseteq\settwo'$ for some finitely presentable nominal subset $\settwo'$ of $\settwo$. By Lemma~\ref{lem:char-off-maps} we have that $f^{-1}(\settwo')$ is finitely presentable. Since $(g\circ f)^{-1}(\elthree)= f^{-1}(g^{-1}(\elthree))\subseteq f^{-1}(\settwo')$ we are done.
\qed
\end{enumerate}

\begin{exa}
  The function $\theta_\setone:\setvars\times\setone\to[\setvars]\setone$ has orbit-finite fibers. Indeed, notice that $\theta^{-1}(\abs{x}\eltwo)\subseteq\setvars\times O_\eltwo$ where $O_\eltwo$ is the orbit spanned by $\eltwo$. But $\setvars\times O_\eltwo$ is finitely presentable, so we are done.
\end{exa}

\begin{lem}
\label{lem:q-has-off}
  The maps $q_\setone:\fnom{F}\setone\to\fnom{F}_\alpha\setone$ have orbit-finite fibers.
\end{lem}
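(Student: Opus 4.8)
\proof
The plan is to proceed by induction on the structure of the functor $\fnom{F}_\alpha$ --- equivalently, on the derivation of the natural transformation $q:\fnom{F}\to\fnom{F}_\alpha$ according to the rules~\eqref{eq:8} --- and to show that each component $q_\setone:\fnom{F}\setone\to\fnom{F}_\alpha\setone$ has orbit-finite fibers, invoking at each step the closure properties collected in Lemma~\ref{lem:off-maps-closure-prop} together with the Example just stated, which asserts that $\theta$ has orbit-finite fibers.

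For the three base cases, $\fnom{F}_\alpha$ equal to $\mathsf{Id}$, to $\setvars$, or to a constant $\mathsf{K}$, we have $\fnom{F}=\fnom{F}_\alpha$ and $q_\setone=\mathit{id}$, whose fibers are singletons and hence orbit-finite; this incidentally records the observation, needed below, that identity maps have orbit-finite fibers. For the product case $\fnom{F}_\alpha=\fnom{F'_\alpha}\times\fnom{F''_\alpha}$ we have $q_\setone=q'_\setone\times q''_\setone$, and for the coproduct case $\fnom{F}_\alpha=\coprod\fnom{(F_i)_\alpha}$ we have $q_\setone=\coprod (q_i)_\setone$; in both cases the inductive hypothesis together with Lemma~\ref{lem:off-maps-closure-prop}(1) and~(2) yields that $q_\setone$ has orbit-finite fibers (for coproducts one only notes in addition that every fiber lies entirely within a single summand).

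The only case carrying any content is the abstraction case $\fnom{F}_\alpha=[\setvars]\fnom{F'_\alpha}$, $\fnom{F}=\setvars\times\fnom{F'}$, where $q_\setone=\theta_{\fnom{F'_\alpha}\setone}\circ(\mathit{id}_\setvars\times q'_\setone)$. Here I would combine three ingredients: the inductive hypothesis that $q'_\setone$ has orbit-finite fibers; the fact, already established in the Example preceding this lemma, that $\theta$ has orbit-finite fibers; and closure of maps with orbit-finite fibers under finite products and under composition, Lemma~\ref{lem:off-maps-closure-prop}(1) and~(4). Concretely, one first applies the product closure to $\mathit{id}_\setvars\times q'_\setone$ (using that $\mathit{id}_\setvars$ has orbit-finite, indeed singleton, fibers) and then the composition closure to conclude for $q_\setone$.

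I do not expect a genuine obstacle here: the real work was carried out in the previous subsection, where closure under composition was secured precisely so that this decomposition of $q_\setone$ goes through --- recall that safe maps themselves are \emph{not} closed under composition, which is exactly why maps with orbit-finite fibers were introduced. The only point requiring a little care is making sure that the closure lemmas genuinely apply to the maps at hand, in particular that $\mathit{id}_\setvars$ qualifies as a map with orbit-finite fibers so that Lemma~\ref{lem:off-maps-closure-prop}(1) covers $\mathit{id}_\setvars\times q'_\setone$; this is exactly what the base-case observation above provides.
\qed
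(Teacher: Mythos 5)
Your proposal is correct and follows exactly the paper's argument: induction on the structure of $\fnom{F}_\alpha$, using the fact that $q_\setone$ is built from identities and $\theta_\setone$ via products, coproducts and composition, and invoking the closure properties of Lemma~\ref{lem:off-maps-closure-prop} together with the preceding Example on $\theta$. The paper states this in one line; you merely spell out the cases.
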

\proof This is proved by induction on the structure of $\fnom{F}$, since the map $q_\setone$ is obtained from the identity map and $\theta_\setone$, via products, coproducts and composition. We can therefore apply Lemma~\ref{lem:off-maps-closure-prop}. \qed

\begin{prop}\label{prop:safe-maps}
The maps $\amap{n}:\fnom{F}^n1\to\fnom{F}_\alpha^n1$ have orbit-finite fibers and are safe.
\end{prop}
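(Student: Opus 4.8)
The plan is to prove the two assertions in order, deriving safeness of $\amap{n}$ from the fact that its fibers are orbit-finite. For the fiber condition I would argue by induction on $n$. The base case $\amap{0}=\mathit{id}_1$ is immediate, since the identity on the terminal nominal set has singleton fibers. For the inductive step, recall that $\amap{n+1}=q_{\fnom{F}_\alpha^n1}\circ\fnom{F}(\amap{n})$; by Lemma~\ref{lem:q-has-off} the map $q_{\fnom{F}_\alpha^n1}$ has orbit-finite fibers, so by part~(4) of Lemma~\ref{lem:off-maps-closure-prop} it is enough to show that $\fnom{F}(\amap{n})$ has orbit-finite fibers. This I would establish by a second, inner induction on the structure of $\fnom{F}$ following the grammar~\eqref{eq:gramF}: the constant functors $\setvars$ and $K$ send every map to an identity map (singleton fibers); the identity functor $\mathsf{Id}$ sends $\amap{n}$ to itself, which has orbit-finite fibers by the outer induction hypothesis; and the clauses for $\coprod$, $\times$ and $[\setvars](-)$ preserve the property of carrying maps with orbit-finite fibers to maps with orbit-finite fibers, by parts~(2),~(1) and~(3) of Lemma~\ref{lem:off-maps-closure-prop} respectively.

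With orbit-finiteness of the fibers established, the safeness assertion is short. We have already observed in the discussion around~\eqref{eq:init-chain-surj} that every $\amap{n}$ is surjective, since the functors obtained from~\eqref{eq:gramF} preserve surjections. A surjection with orbit-finite fibers is safe by Lemma~\ref{lem:has-off-imp-safe}, so each $\amap{n}$ is safe.

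The argument is mostly bookkeeping and has no single serious obstacle; the step demanding the most attention is the inner structural induction. One has to make sure that the closure lemma genuinely covers the at most countable coproducts occurring in $\fnom{F}$, and that the abstraction clause goes through — here it is the earlier Lemma~\ref{lem:abs-pres-fp}, that $[\setvars](-)$ preserves finitely presentable objects, that underpins part~(3) of Lemma~\ref{lem:off-maps-closure-prop}. It is also worth being explicit about the reading of ``$[\setvars]F$'' in the grammar: whether it denotes pre- or post-composition with the abstraction functor, the functor $[\setvars](-)$ is applied exactly once in that clause, so the closure lemma still applies and the induction is unaffected.
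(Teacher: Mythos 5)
Your proof is correct and follows essentially the same route as the paper: induction on $n$, using Lemma~\ref{lem:q-has-off} and the closure properties of Lemma~\ref{lem:off-maps-closure-prop} to get orbit-finite fibers, then surjectivity plus Lemma~\ref{lem:has-off-imp-safe} for safeness. The only (harmless) difference is that you also invoke the abstraction clause, part~(3) of Lemma~\ref{lem:off-maps-closure-prop}, whereas the paper needs only parts~(1) and~(2) because $\fnom{F}$ is the polynomial functor in which every occurrence of $[\setvars](-)$ has already been replaced by $\setvars\times(-)$.
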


\proof We use induction on $n$. The base case is clear since $\amap{0}=\mathit{id}_1$. For the inductive step, notice that $\amap{n+1}=q_{\fnom{F}_\alpha^n1}\circ\fnom{F}(\amap{n})$.
 By items  (1) and (2) 
  of Lemma~\ref{lem:off-maps-closure-prop} 
  we have that $\fnom{F}(\amap{n})$ has orbit-finite fibers. 
  Then we can apply item (4) 
  of Lemma~\ref{lem:off-maps-closure-prop} and 
  Lemma~\ref{lem:q-has-off} to derive that $\amap{n+1}$ 
  also has orbit-finite fibers.
Since the $\amap{n}$ are surjective, it follows from Lemma~\ref{lem:has-off-imp-safe} that they are safe.
\qed

The maps with orbit-finite fibers have nice closure properties, but safe elements do not behave well with respect to composition, as shown in the next example.

\begin{exa}
  Let $\mathcal{P}_2(\setvars)$ be the nominal set of two-element sets of names and let $f:\setvars+\mathcal{P}_2(\setvars)\to\setvars+1$ denote the map $\mathit{id}_\setvars + !$ where $!$ denotes the unique map into the final nominal set. Let $g$ denote the unique map from $\setvars +1$ to $1$. Notice that $f$, $g$ (and therefore their composition) have orbit-finite fibers and are safe. Nevertheless
  \begin{itemize}
  \item $\elone\in\setvars$ is $f$-safe and $f(\elone)=\elone$ is $g$ safe, but $\elone$ is not $(g\circ f)$-safe.
\item $\{\elone,\eltwo\}$ is $(g\circ f)$-safe, but $f(\{\elone,\eltwo\})$ is not $g$-safe.
  \end{itemize}
\end{exa}
\newcommand{\fix}{\mathit{fix}}
Therefore, in the next section  we need to study the properties of safe squares.

\subsubsection{Properties of safe squares}\label{sec:prop-safe-sq} \ \\

\noindent
In the first part of this section we will show that safe squares are closed under products and coproducts. Then we will show that also the `vertical' composition of safe squares is safe provided that some additional properties are satisfied by the maps at issue. 
 This allows us to prove the second  bullet point of
 Corollary~\ref{cor:isomfinalcoalgcarriers-2}
 and  conclude our main result on the surjectivity of
 $P\to U\fcoalg_\alpha$ in Theorem \ref{thm:bindingsig-surj}.
 
\begin{lem}
\label{lem:choosing-good-safe-el}
Consider a safe square
\begin{equation}
\begin{gathered}
  \xymatrix{
\setone\ar@{<-}[r]^p\ar[d]_f & \setfour\ar[d]^g \\
\setthree\ar@{<-}[r]_q & \settwo
}
\end{gathered}
\end{equation}
If $\elone\in\setone$ is $f$-safe, $\eltwo\in\settwo$ and $S$ is a finite subset of $\setvars$ such that $f(\elone)=q(\eltwo)$,  $\bv_{f} (\elone)\#\eltwo$ and $\bv_{f} (\elone)\# S$
there exists a $g$-safe $\elfour\in\setfour$  such that $p(\elfour)=\elone$, $g(\elfour)=\eltwo$ and $\bv_{g}(\elfour)\# S$. 
\end{lem}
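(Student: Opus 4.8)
The plan is to use the safe-square hypothesis once to produce a $g$-safe preimage of $\eltwo$ lying over $\elone$, and then to correct that preimage by a finite transposition so that its bound variables avoid $S$ as well, exploiting the extra assumption $\bv_{f}(\elone)\# S$.

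First I would invoke Definition~\ref{def:strange-pullback} directly: since the square is safe, $\elone$ is $f$-safe, $f(\elone)=q(\eltwo)$ and $\bv_{f}(\elone)\#\eltwo$, there is a $g$-safe $\elfour_0\in\setfour$ with $p(\elfour_0)=\elone$ and $g(\elfour_0)=\eltwo$. If $\bv_{g}(\elfour_0)\cap S=\emptyset$ then $\elfour_0$ already has all the required properties. Otherwise I set $T=\bv_{g}(\elfour_0)\cap S$, pick a finite set $T'$ of names fresh for $\elone$, $\eltwo$, $\elfour_0$ and $S$ with $|T'|=|T|$, and let $\pi$ be the finite permutation swapping $T$ with $T'$ pointwise and fixing all other names. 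I then take $\elfour=\pi\cdot\elfour_0$, which is still $g$-safe by Lemma~\ref{lem:pi-safe-is-safe}.

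It remains to check $p(\elfour)=\elone$, $g(\elfour)=\eltwo$ and $\bv_{g}(\elfour)\# S$. For $g$: since $T\subseteq\bv_{g}(\elfour_0)=\supp(\elfour_0)\setminus\supp(\eltwo)$ and $T'$ is fresh for $\eltwo$, the permutation $\pi$ fixes $\supp(\eltwo)$ pointwise, so $g(\elfour)=\pi\cdot g(\elfour_0)=\pi\cdot\eltwo=\eltwo$. For $p$: the crucial observation is that, since $p$ and $g$ are equivariant and $f(\elone)=q(\eltwo)$, one has $\supp(\elone)\cap\supp(\eltwo)=\supp(f(\elone))$, whence $\supp(\elone)\setminus\supp(\eltwo)=\supp(\elone)\setminus\supp(f(\elone))=\bv_{f}(\elone)$; as $T=S\cap(\supp(\elfour_0)\setminus\supp(\eltwo))$, this forces $T\cap\supp(\elone)\subseteq\bv_{f}(\elone)\cap S=\emptyset$ by hypothesis, and $T'$ is fresh for $\elone$, so $\pi$ fixes $\supp(\elone)$ and $p(\elfour)=\pi\cdot\elone=\elone$. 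Finally, since $T\subseteq\supp(\elfour_0)$ and $T'$ is fresh for $\elfour_0$, we get $\supp(\elfour)=\pi\cdot\supp(\elfour_0)=(\supp(\elfour_0)\setminus T)\cup T'$, hence $\bv_{g}(\elfour)=(\bv_{g}(\elfour_0)\setminus T)\cup T'=(\bv_{g}(\elfour_0)\setminus S)\cup T'$, which is disjoint from $S$ because $T'$ is fresh for $S$.

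The only non-routine point, and thus the main obstacle, is the support identity $\supp(\elone)\setminus\supp(\eltwo)=\bv_{f}(\elone)$ used to conclude $T\cap\supp(\elone)=\emptyset$: this is exactly where the extra hypothesis $\bv_{f}(\elone)\# S$ enters, guaranteeing that renaming the $S$-portion of the bound variables of $\elfour_0$ leaves its image $\elone$ under $p$ untouched while still preserving $g$-safety and the image $\eltwo$ under $g$.
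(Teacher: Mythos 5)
Your proof is correct and is essentially the paper's own argument: obtain a $g$-safe preimage from the safe-square property, then swap the offending part $T=S\cap\bv_g(\elfour_0)$ of its bound variables with fresh names, checking that the transposition fixes $\supp(\eltwo)$ and $\supp(\elone)$ (the paper additionally leaves the final check $\bv_g(\elfour)\fresh S$ implicit, which you spell out). One small misattribution: the equality $\supp(\elone)\cap\supp(\eltwo)=\supp(f(\elone))$ does not follow from equivariance alone --- the inclusion $\supp(\elone)\cap\supp(\eltwo)\subseteq\supp(f(\elone))$ needs the hypothesis $\bv_f(\elone)\fresh\eltwo$ --- but your subsequent computation only uses the inclusion $\supp(\elone)\setminus\supp(\eltwo)\subseteq\bv_f(\elone)$, which does follow from $\supp(f(\elone))=\supp(q(\eltwo))\subseteq\supp(\eltwo)$, so the argument stands.
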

\proof By the definition of safe squares there exists a $g$-safe $\elfour$ such that $g(\elfour)=\eltwo$ and $p(\elfour)=\elone$. Let $T$ denote the intersection $S\cap\bv_{g}(\elfour)$. If $T$ is empty we are done. Otherwise consider a set $T'$ of names fresh for $\elone,\eltwo,\elfour,S$ having the same cardinality as $T$ and let $\pi$ denote a finite permutation that swaps the elements of $T$ with the elements of $T'$ and fixes all the other elements of $\setvars$. We will show that $\pi\cdot\elfour$ has all the required properties.  
By Lemma \ref{lem:pi-safe-is-safe}, 
 $\pi\cdot\elfour$ is $g$-safe.
 In order to prove that $g(\pi \cdot \elfour) = \eltwo$ 
 it is enough to check that $\pi \cdot x=x$ for all $x \in 
 \supp(\eltwo)$. 
 This is true because $T\subseteq\bv_g(\elfour)$ and $\bv_{g}(\elfour)\cap\supp(\eltwo)=\emptyset$, hence $T\cap\supp(\eltwo)=\emptyset$. 
 In order to check that $p(\pi\cdot\elfour)=\elone$,
  it is enough to show that $\pi \cdot x = x$ for all $x \in\supp(\elone)$. 
  But 
  $\supp(\elone)=\bv_{f}(\elone)\cup\supp(f(\elone))
  \subseteq\bv_{f}(\elone)\cup\supp(\eltwo)$. 
  We have established that $T\cap\supp(\eltwo)=\emptyset$. On the other hand, $T\subseteq S$ and $S\#\bv_{f}(\elone)$ imply that $T\cap\bv_{f}(\elone)=\emptyset$. Hence $T\# \elone$. Since $T'\#\elone$ we obtain that $\pi$ fixes all the names in $\supp(\elone)$. 
\qed

\begin{lem}
\label{lem:ssq-close-prod-coprod}
  Safe squares are closed under finite products and coproducts.
\end{lem}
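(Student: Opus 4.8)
The plan is to treat products and coproducts separately. Closure under coproducts is essentially the observation that support, safeness, and the set $\bv_{(-)}$ of an element of a coproduct in $\Nom$ are all computed inside the summand that contains it; closure under (binary, hence finite) products is where the real work lies, and it will use Lemmas~\ref{lem:safe-elem-and-prod} and~\ref{lem:choosing-good-safe-el} together with the hypothesis — present, as in Lemma~\ref{lem:safe-elem-and-prod}, in all our intended applications (e.g.\ for the maps $\amap n$ and $q_\setone$, recall Proposition~\ref{prop:safe-maps}) — that the vertical maps of the squares be safe.

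For coproducts, suppose given safe squares with verticals $f_i\colon\setone_i\to\setthree_i$, $g_i\colon\setfour_i\to\settwo_i$ and horizontals $p_i\colon\setfour_i\to\setone_i$, $q_i\colon\settwo_i\to\setthree_i$, and form the coproduct square. Since the coproduct in $\Nom$ is the disjoint union with the evident action, an element $\elone$ of $\coprod_i\setone_i$ lying in the $j$-th summand satisfies $(\coprod_i f_i)^{-1}((\coprod_i f_i)(\elone))=f_j^{-1}(f_j(\elone))$, computed inside $\setone_j$, so $\elone$ is $(\coprod_i f_i)$-safe iff it is $f_j$-safe, and $\bv(\elone)$ is the same read in $\setone_j$ or in $\coprod_i\setone_i$. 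Hence, for a $(\coprod_i f_i)$-safe $\elone$ in summand $j$ and a $\eltwo$ with $(\coprod_i f_i)(\elone)=(\coprod_i q_i)(\eltwo)$ and $\bv(\elone)\#\eltwo$, necessarily $\eltwo\in\settwo_j$ with $f_j(\elone)=q_j(\eltwo)$ and $\bv_{f_j}(\elone)\#\eltwo$; safeness of the $j$-th square yields a $g_j$-safe $\elfour\in\setfour_j$ over $\elone$ with $g_j(\elfour)=\eltwo$, and the same $\elfour$ read in the coproduct is $(\coprod_i g_i)$-safe and witnesses safeness of the coproduct square.

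For a binary product of safe squares ($i=1,2$, all $f_i,g_i$ safe) I would take $(\elone_1,\elone_2)$ that is $(f_1\times f_2)$-safe and $(\eltwo_1,\eltwo_2)$ with $(f_1\times f_2)(\elone_1,\elone_2)=(q_1\times q_2)(\eltwo_1,\eltwo_2)$ and $\bv(\elone_1,\elone_2)\#(\eltwo_1,\eltwo_2)$. By Lemma~\ref{lem:safe-elem-and-prod} each $\elone_i$ is $f_i$-safe with $\bv(\elone_1)\#\elone_2$ and $\bv(\elone_2)\#\elone_1$, so by~\eqref{eq:comp-bv-prod} $\bv(\elone_1,\elone_2)=\bv(\elone_1)\uplus\bv(\elone_2)$, whence $\bv(\elone_i)\#\eltwo_j$ for all $i,j\in\{1,2\}$ and $f_i(\elone_i)=q_i(\eltwo_i)$. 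Then I build the lift in two ordered steps. First, apply Lemma~\ref{lem:choosing-good-safe-el} to the first square with avoidance set $\supp(\eltwo_2)\cup\supp(\elone_2)$ — whose hypotheses hold by the freshness facts just recorded — to obtain a $g_1$-safe $\elfour_1$ with $p_1(\elfour_1)=\elone_1$, $g_1(\elfour_1)=\eltwo_1$ and $\bv(\elfour_1)\#(\supp(\eltwo_2)\cup\supp(\elone_2))$. Second, apply Lemma~\ref{lem:choosing-good-safe-el} to the second square with avoidance set $\supp(\elfour_1)=\bv(\elfour_1)\cup\supp(\eltwo_1)$; its hypothesis $\bv(\elone_2)\#\supp(\elfour_1)$ follows from $\bv(\elone_2)\#\eltwo_1$ and from $\bv(\elone_2)\subseteq\supp(\elone_2)$ being disjoint from $\bv(\elfour_1)$ by the previous step. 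This gives a $g_2$-safe $\elfour_2$ with $p_2(\elfour_2)=\elone_2$, $g_2(\elfour_2)=\eltwo_2$ and $\bv(\elfour_2)\#\supp(\elfour_1)$.

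It then remains to check that $(\elfour_1,\elfour_2)$ is $(g_1\times g_2)$-safe, which by Lemma~\ref{lem:safe-elem-and-prod} reduces to $\bv(\elfour_1)\#\elfour_2$ and $\bv(\elfour_2)\#\elfour_1$; the second is immediate, and the first holds because $\supp(\elfour_2)=\bv(\elfour_2)\cup\supp(\eltwo_2)$ with $\bv(\elfour_1)\#\eltwo_2$ by construction and $\bv(\elfour_1)\subseteq\supp(\elfour_1)$ disjoint from $\bv(\elfour_2)$. Since $(p_1\times p_2)(\elfour_1,\elfour_2)=(\elone_1,\elone_2)$ and $(g_1\times g_2)(\elfour_1,\elfour_2)=(\eltwo_1,\eltwo_2)$, the product square is safe, and finite products follow by iteration. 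The main obstacle I anticipate is precisely this freshness bookkeeping in the product case: one must produce $\elfour_1$ and $\elfour_2$ in the correct order and with large enough avoidance sets so that the bound variables introduced by each lift stay apart from the data of the other — which is exactly what Lemma~\ref{lem:choosing-good-safe-el}, as opposed to the bare definition of safe square, is designed to provide.
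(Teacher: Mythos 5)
Your proposal is correct and follows essentially the same route as the paper: decompose the $(f_1\times f_2)$-safe pair via Lemma~\ref{lem:safe-elem-and-prod}, lift the two components one after the other using Lemma~\ref{lem:choosing-good-safe-el} with avoidance sets chosen so that the newly introduced bound variables stay apart from the other component's data, and recombine via Lemma~\ref{lem:safe-elem-and-prod}; the coproduct case is likewise dismissed as summand-wise. Your avoidance sets are slightly larger than the paper's ($\supp(\eltwo_2)\cup\supp(\elone_2)$ rather than $\bv(\elone_2)$, etc.), which if anything makes the freshness bookkeeping cleaner, and your explicit flagging of the implicit hypothesis that the vertical maps be safe matches the paper's (tacit) reliance on it through Lemma~\ref{lem:safe-elem-and-prod}.
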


\proof The case of  coproducts is easy. We show the closure of safe squares under finite products.
Assume
\begin{equation}
  \label{eq:squ-2}
\begin{gathered}
  \xymatrix{
\setone_i\ar[d]_-{f_i}\ar@{<-}[r]^-{p_i} & \setfour_i\ar[d]^-{q_i} \\
\setthree_i\ar@{<-}[r]_{g_i} & \settwo_i
}
\end{gathered}
\end{equation}
for $i\in\{1,2\}$ are safe squares. We will show that 
\begin{equation}
\begin{gathered}
  \xymatrix{
\setone_1\times\setone_2\ar[d]_-{f_1\times f_2}\ar@{<-}[rr]^-{p_1\times p_2} & & \setfour_1\times\setfour_2\ar[d]^-{q_1\times q_2} \\
 \setthree_1\times\setthree_2\ar@{<-}[rr]_-{g_1\times g_2}&  &\settwo_1\times\settwo_2
}
\end{gathered}
\end{equation}
is a safe square. 
Consider $f_1\times f_2$-safe $(\elone_1,\elone_2)$ and $(\eltwo_1,\eltwo_2)\in\settwo_1\times\settwo_2$ such that $\bv(\elone_1,\elone_2)\#(\eltwo_1,\eltwo_2)$ and $f_i(\elone_i)=g_i(\eltwo_i)$. By Lemma~\ref{lem:safe-elem-and-prod} we know that 
$\elone_i$ is $f_i$-safe and that $\bv(\elone_1)\#\bv(\elone_2)$. Moreover we can compute that
$$\bv(\elone_1,\elone_2)=\bv(\elone_1)\uplus\bv(\elone_2).$$
Since $\bv(\elone_1)\#\eltwo_2$ and  $\bv(\elone_1)\#\bv(\elone_2)$, by Lemma~\ref{lem:choosing-good-safe-el} (applied for $S=\bv(\elone_2)$) we can find  $\elfour_1\in\setfour_1$
 such that 
\begin{itemize}
  \item $\bv(\elfour_1)\# \eltwo_2$ and $\bv(\elfour_1)\#\bv(\elone_2)$
  \item $\elfour_1$ is $q_1$-safe.
  \item $q_1(\elfour_1)=\eltwo_1$ and $p_1(\elfour_1)=\elone_1$.
\end{itemize}
Since $\bv(\elone_2)\#\eltwo_1$ and $\bv(\elone_2)\#\bv(\elfour_1)$, by Lemma~\ref{lem:choosing-good-safe-el} (applied for $S=\bv(\elfour_1)$) we can find $\elfour_2\in\setfour_2$ such that
\begin{itemize}
  \item $\bv(\elfour_2)\# \eltwo_1$  and $\bv(\elfour_2)\#\bv(\elfour_1)$.
  \item $\elfour_2$ is $q_2$-safe.
  \item $q_2(\elfour_2)=\eltwo_2$ and $p_2(\elfour_2)=\elone_2$.
\end{itemize}
By construction we have that $\bv(\elfour_1)\# \eltwo_2$, $\bv(\elfour_2)\# \eltwo_1$  and $\bv(\elfour_2)\#\bv(\elfour_1)$. It follows that $\bv(\elfour_1)\#\elfour_2$ and $\bv(\elfour_2)\#\elfour_1$. By Lemma~\ref{lem:safe-elem-and-prod} we know that $(\elfour_1,\elfour_2)$ is $(q_1\times q_2)$-safe.
\qed

Next we will show that under some mild conditions safe squares are closed under vertical composition. Safe squares resemble weak pullbacks. It is straightforward to show that vertical composition of weak pullbacks gives a weak pullback. However, in the case of safe squares, some additional constraints are imposed on the elements, such as $f$-safety. Therefore, safe elements should behave well with respect to the vertical composition of the maps. We will find the following definitions handy.

\begin{defi}[Forward-safe]
  \label{def:fwd-safe-maps}
Let $f:\setone\to\settwo$ and $g:\settwo\to\setthree$ be safe maps. We say that the pair $(f,g)$ is \emph{forward-safe} if, for all $\elone\in\setone$ such that $\elone$ is $(g\circ f)$-safe, we have that $f(\elone)$ is $g$-safe.
\end{defi}

\begin{defi}[Backward-safe]
  \label{def:back-safe-maps}
Let $f:\setone\to\settwo$ and $g:\settwo\to\setthree$ be safe maps. We say that the pair $(f,g)$ is \emph{backward-safe} if, for all $\elone\in\setone$ such that $\elone$ is $f$-safe and $f(\elone)$ is $g$-safe, we have that $\elone$ is $(g\circ f)$-safe.
\end{defi}  

\begin{lem}
\label{lem:ssq-vert-comp} Consider the following diagram
\begin{equation}
\label{eq:ssq-vcomp}
\begin{gathered}
  \xymatrix{
\setone_1\ar@{<-}[r]^p\ar[d]_f & \setone_2\ar[d]^h\ar@{}[dl]|{(1)} \\
\settwo_1\ar@{<-}[r]_q\ar[d]_g & \settwo_2\ar[d]^k\ar@{}[dl]|{(2)}\\
\setthree_1\ar@{<-}[r]_s & \setthree_2.
}
\end{gathered}
\end{equation}
such that $(1)$ and $(2)$ are safe squares, the pair $(f,g)$ is forward-safe and the pair $(h,k)$ is backward-safe. Then the outer square in~\eqref{eq:ssq-vcomp} is a safe square.
\end{lem}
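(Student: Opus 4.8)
The plan is to verify the safe-square axiom for the outer rectangle directly, chaining the axioms of the two given squares through the intermediate object $\settwo_2$. Concretely, I would fix a $(g\circ f)$-safe element $\elone_1\in\setone_1$ together with an element $\elthree_2\in\setthree_2$ satisfying $(g\circ f)(\elone_1)=s(\elthree_2)$ and $\bv_{g\circ f}(\elone_1)\#\elthree_2$, and then construct a $(k\circ h)$-safe $\elone_2\in\setone_2$ with $p(\elone_2)=\elone_1$ and $(k\circ h)(\elone_2)=\elthree_2$.

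First I would descend through square $(2)$. By Lemma~\ref{lem:safe-composition-1} the element $\elone_1$ is already $f$-safe, and since $(f,g)$ is forward-safe its image $\eltwo_1:=f(\elone_1)$ is $g$-safe. A short support calculation, using $\supp((g\circ f)(\elone_1))\subseteq\supp(\eltwo_1)\subseteq\supp(\elone_1)$, shows $\bv_g(\eltwo_1)\subseteq\bv_{g\circ f}(\elone_1)$, hence $\bv_g(\eltwo_1)\#\elthree_2$; and since $\bv_g(\eltwo_1)\subseteq\supp(\eltwo_1)$ while $\bv_f(\elone_1)$ is disjoint from $\supp(\eltwo_1)$, also $\bv_g(\eltwo_1)\#\bv_f(\elone_1)$. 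These freshness facts are exactly what is needed to invoke Lemma~\ref{lem:choosing-good-safe-el} on the safe square $(2)$ with $S=\bv_f(\elone_1)$, producing a $k$-safe $\eltwo_2\in\settwo_2$ with $q(\eltwo_2)=\eltwo_1$, $k(\eltwo_2)=\elthree_2$, and the crucial extra property $\bv_k(\eltwo_2)\#\bv_f(\elone_1)$.

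Next I would lift through square $(1)$. To apply its safe-square property to the $f$-safe $\elone_1$ and to $\eltwo_2$ I need $\bv_f(\elone_1)\#\eltwo_2$; this follows by decomposing $\supp(\eltwo_2)=\bv_k(\eltwo_2)\cup\supp(\elthree_2)$ and observing that $\bv_f(\elone_1)$ avoids $\bv_k(\eltwo_2)$ by the previous step, and avoids $\supp(\elthree_2)$ because $\bv_f(\elone_1)\subseteq\bv_{g\circ f}(\elone_1)$ and $\bv_{g\circ f}(\elone_1)\#\elthree_2$. Square $(1)$ then yields an $h$-safe $\elone_2\in\setone_2$ with $p(\elone_2)=\elone_1$ and $h(\elone_2)=\eltwo_2$, so that $(k\circ h)(\elone_2)=k(\eltwo_2)=\elthree_2$. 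Finally, since $\elone_2$ is $h$-safe and $h(\elone_2)=\eltwo_2$ is $k$-safe, backward-safety of $(h,k)$ upgrades $\elone_2$ to a $(k\circ h)$-safe element, which closes the argument.

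The main obstacle, and the reason the statement needs the forward/backward-safety hypotheses and the strengthened Lemma~\ref{lem:choosing-good-safe-el} rather than the bare safe-square axioms, is keeping the bound-variable sets under control across the composite: the naive witness obtained from square $(2)$ carries no information about $\bv_k(\eltwo_2)$, and without the freshness $\bv_k(\eltwo_2)\#\bv_f(\elone_1)$ one cannot feed it into square $(1)$; dually, forward-safety is what guarantees $\eltwo_1$ is $g$-safe so that Lemma~\ref{lem:choosing-good-safe-el} even applies, and backward-safety is precisely what converts the $h$-safety of $\elone_2$ into $(k\circ h)$-safety. Everything else is a routine chase around the commuting triangles of \eqref{eq:ssq-vcomp}.
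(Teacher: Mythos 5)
Your proposal is correct and follows essentially the same route as the paper's proof: use forward-safety and Lemma~\ref{lem:safe-composition-1} to descend through square $(2)$, invoke Lemma~\ref{lem:choosing-good-safe-el} with $S=\bv_f(\elone_1)$ to arrange $\bv_k(\eltwo_2)\#\bv_f(\elone_1)$, verify $\bv_f(\elone_1)\#\eltwo_2$ via the support decomposition, lift through square $(1)$, and finish with backward-safety. The only difference is that you spell out a few of the freshness checks in slightly more detail than the paper does.
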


\proof Consider $\elone_1\in\setone_1$ and $\elthree_2\in\setthree_2$ such that $\elone_1$ is $(g\circ f)$-safe, $\bv_{gf}(\elone_1)\#\elthree_2$ and $gf(\elone_1)=s(\elthree_2)$.
Since $(f,g)$ is forward-safe we have that $f(\elone_1)$ is $g$-safe.
We also have that $\bv_g(f(\elone_1))\#\elthree_2$, because $\bv_g(f(\elone_1))\subseteq\bv_{gf}(\elone_1)$. Since $(2)$ is a safe square there exists a $k$-safe $\eltwo_2\in\settwo_2$ such that $k(\eltwo_2)=\elthree_2$ and $q(\eltwo_2)=f(\elone_1)$. 

Moreover by Lemma~\ref{lem:choosing-good-safe-el} we can assume that
\begin{equation}
  \label{eq:7}
\bv_k(\eltwo_2)\#\bv_f(\elone_1).
\end{equation}
 We can apply Lemma~\ref{lem:choosing-good-safe-el} in this case because $\bv_f(\elone_1)\#\bv_g(f(\elone_1))$.

By Lemma~\ref{lem:safe-composition-1} $\elone_1$ is $f$-safe. We also have that $\bv_f(\elone_1)\# \eltwo_2$. This holds because $\supp(\eltwo_2)=\bv_k(\eltwo_2)\uplus\supp(\elthree_2)$ and both $\bv_k(\eltwo_2)$ and $\supp(\elthree_2)$ are fresh for $\bv_f(\elone_1)$. The former is by~\eqref{eq:7} while the latter holds because $\bv_f(\elone_1)\subseteq\bv_{gf}(\elone_1)$ and $\bv_{gf}(\elone_1)\#\elthree_2$. Since $q(\eltwo_2)=f(\elone_1)$, we use that $(1)$ is a safe square to derive the existence of an $h$-safe $\elone_2\in\setone_2$ that satisfies $p(\elone_2)=\elone_1$ and $h(\elone_2)=\eltwo_2$.
Since $(h,k)$ is a backward-safe pair of maps, $\elone_2$ is $h$-safe and $h(\elone_2)=\eltwo_2$ is $k$-safe, we conclude that $\elone_2$ is also $(k\circ h)$-safe and thus satisfies all the requirements.
\qed

\begin{lem}
\label{lem:theta-special}
  If $\elone$ is $f$-safe and  $x\#\bv_{f}(\elone)$ 
   then $(x,\elone)$ is $(\theta\circ(\setvars\times f))$-safe.
\end{lem}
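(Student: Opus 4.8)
The plan is to reduce the claim to a counting statement about bound variables. Write $g=\theta\circ(\setvars\times f):\setvars\times\setone\to[\setvars]\settwo$, so that $g(x',\elone')=\abs{x'}{f(\elone')}$ for every $(x',\elone')$. By the reformulation of $f$-safety in terms of $\bv$ proved immediately after Definition~\ref{def:f-safe}, it suffices to show that $|\bv_g(x',\elone')|\le|\bv_g(x,\elone)|$ for every $(x',\elone')$ in the fibre $g^{-1}(\abs{x}{f(\elone)})$; since $(x,\elone)$ itself belongs to this fibre, the maximum is then realised there.

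First I would compute $\bv_g$ explicitly. For an arbitrary pair $(x',\elone')$ one has $\supp(\abs{x'}{f(\elone')})=\supp(f(\elone'))\setminus\{x'\}$, and a routine set manipulation (splitting on whether the removed element equals $x'$) gives
\[
\bv_g(x',\elone')=\bigl(\{x'\}\cup\supp(\elone')\bigr)\setminus\bigl(\supp(f(\elone'))\setminus\{x'\}\bigr)=\{x'\}\cup\bv_f(\elone').
\]
Hence $|\bv_g(x',\elone')|\le|\bv_f(\elone')|+1$, with equality precisely when $x'\notin\bv_f(\elone')$. In particular, the hypothesis $x\fresh\bv_f(\elone)$ yields $|\bv_g(x,\elone)|=|\bv_f(\elone)|+1$.

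Next I would bound $|\bv_f(\elone')|$ for $(x',\elone')$ in the fibre. From $\abs{x'}{f(\elone')}=\abs{x}{f(\elone)}$ and Definition~\ref{def:abstraction}, choose $z\fresh\{x',f(\elone'),x,f(\elone)\}$ with $(x'\ z)\cdot f(\elone')=(x\ z)\cdot f(\elone)$; then $f(\elone')=\pi\cdot f(\elone)=f(\pi\cdot\elone)$ for $\pi=(x'\ z)(x\ z)\in\grp$, using equivariance of $f$. So $\elone'$ and $\pi\cdot\elone$ lie in the same $f$-fibre, and $\pi\cdot\elone$ is $f$-safe by Lemma~\ref{lem:pi-safe-is-safe}; therefore $|\supp(\elone')|\le|\supp(\pi\cdot\elone)|=|\supp(\elone)|$ by Remark~\ref{rem:supp-equiv}, while likewise $|\supp(f(\elone'))|=|\supp(\pi\cdot f(\elone))|=|\supp(f(\elone))|$. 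Since $\supp(f(\elone'))\subseteq\supp(\elone')$ and $\supp(f(\elone))\subseteq\supp(\elone)$ (Remark~\ref{rem:supp-shrink}), subtracting these cardinalities gives $|\bv_f(\elone')|\le|\bv_f(\elone)|$.

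Combining the two steps, $|\bv_g(x',\elone')|\le|\bv_f(\elone')|+1\le|\bv_f(\elone)|+1=|\bv_g(x,\elone)|$, which completes the argument. The main thing to be careful about is the bookkeeping in the first step: the set identity for $\bv_g$ must be verified without assuming $x'\notin\supp(\elone')$, and one must note that $|\bv_g(x',\elone')|$ can be strictly smaller than $|\bv_f(\elone')|+1$ when $x'$ happens to be a bound variable of $\elone'$ — but this only strengthens the inequality, so no genuine case split is needed. The hypothesis $x\fresh\bv_f(\elone)$ is used exactly to keep the count on the left-hand side tight (the statement fails without it, since then $|\bv_g(x,\elone)|=|\bv_f(\elone)|$ while the fibre may still contain pairs with $|\bv_g|=|\bv_f(\elone)|+1$).
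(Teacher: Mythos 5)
Your proof is correct, and its engine is the same as the paper's: given an element $(x',\elone')$ of the fibre of $\theta\circ(\setvars\times f)$ over $\abs{x}{f(\elone)}$, you unwind the definition of equality of abstractions to produce a permutation $\pi$ with $f(\elone')=\pi\cdot f(\elone)$, move the comparison into the $f$-fibre of $\pi\cdot\elone$ using Lemma~\ref{lem:pi-safe-is-safe}, and conclude by permutation-invariance of support cardinalities (Remark~\ref{rem:supp-equiv}). What differs is the packaging. The paper first converts the hypothesis $x\fresh\bv_{f}(\elone)$ into $(\setvars\times f)$-safety of $(x,\elone)$ via Lemma~\ref{lem:safe-elem-and-prod} and then compares $|\supp(y,\eltwo)|$ with $|\supp(x,\elone)|$ directly; you instead count bound variables throughout, via the explicit identity $\bv_g(x',\elone')=\{x'\}\cup\bv_f(\elone')$, and use the hypothesis only to make the count at the distinguished element tight. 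Your route buys independence from Lemma~\ref{lem:safe-elem-and-prod}, and your closing observation that the freshness hypothesis is genuinely necessary is correct and not made explicit in the paper. The price is that your computation of $\bv_g$ needs the equality $\supp(\abs{y}{\elone})=\supp(\elone)\setminus\{y\}$ in both directions: the inclusion $\subseteq$ follows from Remark~\ref{rem:supp-shrink} together with $y\fresh\abs{y}{\elone}$, but the reverse inclusion --- which is exactly what your upper bound on $|\bv_g(x',\elone')|$ uses --- is nowhere stated in the paper and deserves a one-line justification (assume $w\in\supp(\elone)\setminus\{y\}$ were not in the support of the abstraction, swap $w$ with a fresh name, and concrete at $y$ to reach a contradiction). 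This is a standard fact rather than a gap, but it is the one place where your argument leans on something the paper does not supply.
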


\proof
By Lemma \ref{lem:safe-elem-and-prod},
$(x, \elone)$ is $(\setvars\times f)$-safe.
Consider $(y, \eltwo)$ such that $\abs{x}f(\elone)=\abs{y}f(\eltwo)$.
We will show that
$|\supp(y,\eltwo)| \leq |\supp(x, \elone) |$.
We have that 
$  f(\elone) = (y \; x) \cdot f(\eltwo)$.
Then $f( \elone) = f( (y \; x) \cdot \eltwo)$
because $f$ is equivariant.
Since $(x, \elone)$ is $(\setvars\times f)$-safe,
$| \supp (x,(y \; x) \cdot \eltwo) | \leq 
| \supp (x,  \elone) |$.
We also have that
$|\supp(y,\eltwo)|  = | (y \; x) \cdot \supp(y, \eltwo) | =
| \supp (x, (y \; x) \cdot \eltwo) | \leq
|\supp(x, \elone) |
$\qed
\begin{lem}
  \label{lem:fwd-safe-closure-prop}
Back- and forward-safe pairs have the following closure properties:
\begin{enumerate}
\item\label{item1-lem-fwd} If $(f_1,g_1)$ and $(f_2,g_2)$ are forward-safe (backward-safe) pairs of maps then  $(f_1\times f_2,g_1\times g_2)$ is forward-safe (backward-safe).
\item\label{item2-lem-fwd} If $(f_i,g_i)$ are forward-safe (backward-safe) then $(\coprod f_i,\coprod g_i)$ is forward-safe (backward-safe).
\item\label{item3-lem-fwd} If $(f,g)$ is a forward-safe (backward-safe) pair of maps then $(\setvars\times f, \theta\circ(\setvars\times g))$ is forward-safe (backward-safe).
\end{enumerate}
\end{lem}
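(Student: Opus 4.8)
The plan is to reduce all three statements to a characterisation of $(\theta\circ(\setvars\times h))$-safety that complements Lemma~\ref{lem:theta-special}, the rest being bookkeeping with supports. First I would record three auxiliary facts. (i) For equivariant $f\colon\setone\to\settwo$, $g\colon\settwo\to\setthree$ and $\elone\in\setone$, from $\supp(gf(\elone))\subseteq\supp(f(\elone))\subseteq\supp(\elone)$ one obtains the disjoint decomposition $\bv_{gf}(\elone)=\bv_f(\elone)\uplus\bv_g(f(\elone))$, hence $\bv_f(\elone),\bv_g(f(\elone))\subseteq\bv_{gf}(\elone)$. (ii) If $(f,g)$ is backward-safe then $g\circ f$ is safe: for $\elthree\in\setthree$, pick a $g$-safe $\eltwo\in g^{-1}(\elthree)$, then an $f$-safe $\elone\in f^{-1}(\eltwo)$; backward-safety makes $\elone$ $(gf)$-safe. (iii) $\setvars\times f$ (for $f$ safe) and finite products of safe maps are safe, by Lemma~\ref{lem:safe-elem-and-prod} and Lemma~\ref{lem:choosing-safe-elem}. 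Facts (ii)--(iii) are only needed so that Lemma~\ref{lem:safe-elem-and-prod} applies to the composite and product maps occurring in~(1) and~(3).

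The key lemma, written with $\phi_h:=\theta\circ(\setvars\times h)$ so that $\phi_h(x,\elone)=\abs{x}{h(\elone)}$, is that $(x,\elone)$ is $\phi_h$-safe if and only if $\elone$ is $h$-safe and $x\#\bv_h(\elone)$. Direction ``$\Leftarrow$'' is exactly Lemma~\ref{lem:theta-special}. For ``$\Rightarrow$'', suppose $(x,\elone)$ is $\phi_h$-safe. If $x\in\bv_h(\elone)$, i.e.\ $x\in\supp(\elone)$ and $x\#h(\elone)$, then for any fresh $x'\#\elone$ the pair $(x',\elone)$ lies in $\phi_h^{-1}(\abs{x}{h(\elone)})$, since $\abs{x}{h(\elone)}=\abs{x'}{h(\elone)}$ as $x,x'\#h(\elone)$, and it has support of size $1+|\supp(\elone)|>|\supp(\elone)|=|\supp(x,\elone)|$, contradicting $\phi_h$-safety; hence $x\#\bv_h(\elone)$, which means $x\#\elone$ or $x\in\supp(h(\elone))$. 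Now if $\elone$ is not $h$-safe, choose $\elone''\in h^{-1}(h(\elone))$ with $|\supp(\elone'')|>|\supp(\elone)|$: when $x\in\supp(h(\elone))$ we also have $x\in\supp(\elone'')$, so $(x,\elone'')$ is in the fibre with strictly larger support; when $x\#\elone$, replace $\elone''$ by $(x\;x_0)\cdot\elone''$ for a fresh $x_0$ (which leaves $h(\elone'')$ and $|\supp(\elone'')|$ unchanged, as $x,x_0\#h(\elone'')$) so that $x\#\elone''$, and again $(x,\elone'')$ is in the fibre with strictly larger support. Either way we contradict $\phi_h$-safety, so $\elone$ is $h$-safe.

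Given this, item~(2) is immediate: coproduct injections preserve orbits and supports, the fibre of $\coprod f_i$ over a point of the $j$-th summand lies in the $j$-th summand, and likewise for $\coprod g_i$, so forward- and backward-safety are tested componentwise. For item~(1), backward-safe case: unfold the $(f_1\times f_2)$-safety of $(\elone_1,\elone_2)$ and the $(g_1\times g_2)$-safety of $(f_1(\elone_1),f_2(\elone_2))$ with Lemma~\ref{lem:safe-elem-and-prod}, apply backward-safety of each $(f_i,g_i)$ to make $\elone_i$ $(g_if_i)$-safe, combine the two disjointness conditions with fact~(i) to get $\bv_{g_if_i}(\elone_i)\#\elone_{3-i}$, and reassemble by Lemma~\ref{lem:safe-elem-and-prod} (legitimate since $g_if_i$ is safe by fact~(ii)); the forward-safe case is dual, the one extra point being that extracting $\elone_i$ $(g_if_i)$-safe and $\bv_{g_if_i}(\elone_i)\#\elone_{3-i}$ from the $((g_1f_1)\times(g_2f_2))$-safety of $(\elone_1,\elone_2)$ uses the ``rename a stray bound variable to a fresh name to enlarge the support'' device already present in Lemmas~\ref{lem:choosing-safe-elem} and~\ref{lem:safe-elem-and-prod}, after which forward-safety of $(f_i,g_i)$ and Lemma~\ref{lem:safe-elem-and-prod} (for $g_1,g_2$) finish. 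For item~(3), note $\phi_g\circ(\setvars\times f)=\phi_{gf}$ and $\bv_{\mathit{id}_\setvars}(x)=\emptyset$. Forward-safe case: $\phi_{gf}$-safety of $(x,\elone)$ gives, by the key lemma, that $\elone$ is $(gf)$-safe and $x\#\bv_{gf}(\elone)$; forward-safety of $(f,g)$ then gives $f(\elone)$ $g$-safe, and $x\#\bv_{gf}(\elone)\supseteq\bv_g(f(\elone))$, so the key lemma gives $(x,f(\elone))$ $\phi_g$-safe. Backward-safe case: $(\setvars\times f)$-safety of $(x,\elone)$ gives (Lemma~\ref{lem:safe-elem-and-prod}) $\elone$ $f$-safe and $x\#\bv_f(\elone)$, and $\phi_g$-safety of $(x,f(\elone))$ gives (key lemma) $f(\elone)$ $g$-safe and $x\#\bv_g(f(\elone))$; backward-safety of $(f,g)$ gives $\elone$ $(gf)$-safe, and $x$ is fresh for $\bv_f(\elone)\cup\bv_g(f(\elone))=\bv_{gf}(\elone)$ by fact~(i), so the key lemma gives $(x,\elone)$ $\phi_{gf}$-safe.

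The only genuinely delicate step is the ``$\Rightarrow$'' direction of the key lemma: once the equivalence ``$(x,\elone)$ is $\phi_h$-safe iff $\elone$ is $h$-safe and $x\#\bv_h(\elone)$'' is established, items~(1)--(3) follow by support arithmetic, the given closure properties of forward-/backward-safe pairs, and Lemma~\ref{lem:safe-elem-and-prod}.
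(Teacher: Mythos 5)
Your proposal is correct, and for items~(1) and~(2) it coincides with the paper's argument: both unfold $(f_1\times f_2)$- and $(g_1\times g_2)$-safety via Lemma~\ref{lem:safe-elem-and-prod}, transport the disjointness conditions along $\bv_{g_i}(f_i(\elone_i))\subseteq\bv_{g_if_i}(\elone_i)$ and $\supp(f_i(\elone_i))\subseteq\supp(\elone_i)$, and dismiss coproducts as componentwise. Where you genuinely diverge is item~(3). The paper extracts ``$\elone$ is $(g\circ f)$-safe and $x\fresh\bv_{gf}(\elone)$'' from ``$(x,\elone)$ is $(\theta\circ(\setvars\times(g\circ f)))$-safe'' by chaining Lemma~\ref{lem:safe-composition-1} (to strip off $\theta$) with the \emph{only-if} direction of Lemma~\ref{lem:safe-elem-and-prod} applied to $\mathit{id}_\setvars\times(g\circ f)$; you instead prove the full converse of Lemma~\ref{lem:theta-special}, namely that $(x,\elone)$ is $(\theta\circ(\setvars\times h))$-safe \emph{iff} $\elone$ is $h$-safe and $x\fresh\bv_h(\elone)$, by a direct fibre comparison (swapping $x$ for a fresh name to manufacture an element of the same fibre with strictly larger support). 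Your argument for that converse is sound, and it buys something: it makes no use of $h$ being a safe \emph{map}, whereas the paper's appeal to the only-if direction of Lemma~\ref{lem:safe-elem-and-prod} formally presupposes that the composite $g\circ f$ is safe --- a hypothesis the paper leaves implicit and that you at least partially discharge with your fact~(ii) for the backward-safe case. The price is that the delicate combinatorial work is concentrated in your key lemma rather than distributed over the already-proved Lemmas~\ref{lem:safe-composition-1} and~\ref{lem:safe-elem-and-prod}; both routes are valid.
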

\proof\hfill
\begin{enumerate}
\item Let us show first that $(f_1\times f_2,g_1\times g_2)$ is forward-safe. Assume $(\elone_1,\elone_2)$ is $(g_1\times g_2)\circ (f_1\times f_2)$-safe. By Lemma~\ref{lem:safe-elem-and-prod} we know that each $\elone_i$ is $g_i\circ f_i$-safe, $\bv_{g_1f_1}(\elone_1)\#\elone_2$ and $\bv_{g_2f_2}(\elone_2)\#\elone_1$. Since each $(f_i,g_i)$ is forward-safe we have that $f_i(\elone_i)$ is $g_i$-safe. Moreover since $\bv_{g_i}(f_i(\elone_i))\subseteq\bv_{g_if_i}(\elone_i)$ and $\supp(f_i(\elone_i))\subseteq\supp(\elone_i)$ we conclude that $\bv_{g_1}(f_1(\elone_1))\# f_2(\elone_2)$ and $\bv_{g_2}(f_2(\elone_2))\# f_1(\elone_1)$. Therefore we can apply again Lemma~\ref{lem:safe-elem-and-prod} to conclude that $(f_1(\elone_1),f_2(\elone_2))$ is $g_1\times g_2$-safe.

Next we show that $(f_1\times f_2,g_1\times g_2)$ is backward-safe when each $(f_i,g_i)$ is. To this end assume $(\elone_1,\elone_2)$ is $f_1\times f_2$-safe and $(f_1(\elone_1),f_2(\elone_2))$ is $g_1\times g_2$-safe. We want to show that $(\elone_1,\elone_2)$ is $(g_1\times g_2)\circ (f_1\times f_2)$-safe. By Lemma~\ref{lem:safe-elem-and-prod} we have that 
\begin{itemize}
\item $\elone_i$ is $f_i$-safe and $f_i(\elone_i)$ is $g_i$-safe,
\item $\bv_{f_1}(\elone_1)\#\elone_2$ and $\bv_{f_2}(\elone_2)\#\elone_1$,
\item $\bv_{g_1}(f_1(\elone_1))\#f_2(\elone_2)$ and $\bv_{g_2}(f_2(\elone_2))\#f_1(\elone_1)$.
\end{itemize}
Since $(f_i,g_i)$ are backward-safe the first item above implies that each $\elone_i$ is $g_i\circ f_i$-safe. The next two items imply together that $\bv_{g_1f_1}(\elone_1)\#\elone_2$ and $\bv_{g_2f_2}(\elone_2)\#\elone_1$. By Lemma~\ref{lem:safe-elem-and-prod} we conclude that $(\elone_1,\elone_2)$ is $(g_1\times g_2)\circ (f_1\times f_2)$-safe.
\item As usual the case of coproducts seems trivial.
\item We first prove the lemma for forward-safe maps. Assume $(x,\elone)$ is $(\theta_\settwo\circ(\setvars\times g)\circ(\setvars\times f))$-safe. We will show that $(x,f(\elone))$ is $(\theta_\settwo\circ(\setvars\times g))$-safe. By Lemma~\ref{lem:safe-composition-1} we have that $(x,\elone)$ is $((\setvars\times g)\circ(\setvars\times f))$-safe. By Lemma~\ref{lem:safe-elem-and-prod} we have that $\elone$ is $(g\circ f)$-safe and $x\# \bv_{gf}(\elone)$. Since $(f,g)$ is forward-safe we have that $f(\elone)$ is $g$-safe. Moreover since $\bv_g(f(\elone))\subseteq\bv_{gf}(\elone)$ we have that $x\# \bv_g(f(\elone))$. Therefore, by Lemma~\ref{lem:safe-elem-and-prod} we have that $(x,f(\elone))$ is $(\setvars\times g)$-safe. By Lemma~\ref{lem:theta-special} we know that $(x,f(\elone))$ is $(\theta_\settwo\circ(\setvars\times g))$-safe.

  Now let us prove this closure property for backward-safe maps. Assume $(x,\elone)$ is $(\setvars\times f)$-safe and $(x,f(\elone))$ is $(\theta\circ(\setvars\times g))$-safe. We want to show that $(x,\elone)$ is $(\theta\circ(\setvars\times g)\circ(\setvars\times f))$-safe. By Lemma~\ref{lem:safe-elem-and-prod} we know that $\elone$ is $f$-safe and $x\#\bv_f(\elone)$. By Lemma~\ref{lem:safe-composition-1} we know that $(x,f(\elone))$ is $(\setvars\times g)$-safe, thus by Lemma~\ref{lem:safe-elem-and-prod} we know that $f(\elone)$ is $g$-safe and $x\# \bv_g(f(\elone))$. Since $(f,g)$ is backward-safe we have that $\elone$ is $(g\circ f)$-safe. We can also check that $x\#\bv_{gf}(\elone)$. Thus $(x,\elone)$ is $(\setvars\times(g\circ f))$-safe. Applying Lemma~\ref{lem:theta-special} we get that $(x,\elone)$ is $(\theta\circ(\setvars\times(g\circ f)))$-safe. 
\qed
\end{enumerate}

\begin{lem}
\label{lem:our-maps-fwd-back-safe}
  For every safe map $f:\setone\to\settwo$ the pair of maps $(\fnom{F}(f),q_{\settwo})$ is both backward-safe and forward-safe.
\end{lem}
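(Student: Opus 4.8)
The plan is to prove the statement by structural induction on the polynomial functor $\fnom{F}$, reading off its shape (and that of $q$) from the clauses of~\eqref{eq:8} and feeding each inductive step into Lemma~\ref{lem:fwd-safe-closure-prop}. Since Definitions~\ref{def:fwd-safe-maps} and~\ref{def:back-safe-maps} only speak about pairs of \emph{safe} maps, the first thing I would record is that the statement is well-posed: $q_{\settwo}$ is safe because $q$ is component-wise surjective and, by Lemma~\ref{lem:q-has-off}, has orbit-finite fibres, so Lemma~\ref{lem:has-off-imp-safe} applies; and $\fnom{F}(f)$ is safe whenever $f$ is, which follows by a short induction on $\fnom{F}$ — constant functors and $\mathsf{Id}$ are immediate, coproducts are componentwise, and for a product $\fnom{F'}\times\fnom{F''}$ one uses Lemma~\ref{lem:choosing-safe-elem} to pick a safe preimage in each factor whose bound variables avoid the other factor and then invokes Lemma~\ref{lem:safe-elem-and-prod}.

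Next comes the induction itself. In the base cases $\fnom{F}_\alpha\in\{\mathsf{Id},\setvars,\mathsf{K}\}$ we have $q=\mathit{id}$ by~\eqref{eq:8}, and $\fnom{F}(f)$ is either $f$ or an identity, so the pair $(\fnom{F}(f),q_{\settwo})$ has the form $(f,\mathit{id})$ or $(\mathit{id},\mathit{id})$; both are forward- and backward-safe trivially, since every element is $\mathit{id}$-safe and $\mathit{id}\circ h=h$ for any $h$. For the inductive step I would match each remaining clause of~\eqref{eq:8} with a clause of Lemma~\ref{lem:fwd-safe-closure-prop}: for $\fnom{F}=\fnom{F'}\times\fnom{F''}$, $q=q'\times q''$, so $(\fnom{F}(f),q_{\settwo})=(\fnom{F'}(f)\times\fnom{F''}(f),\,q'_{\settwo}\times q''_{\settwo})$ and the induction hypothesis for $\fnom{F'},\fnom{F''}$ together with Lemma~\ref{lem:fwd-safe-closure-prop}(\ref{item1-lem-fwd}) gives the claim; the case $\fnom{F}=\coprod\fnom{F_i}$ is the same via Lemma~\ref{lem:fwd-safe-closure-prop}(\ref{item2-lem-fwd}); and for the abstraction clause $\fnom{F}_\alpha=[\setvars]\fnom{F'_\alpha}$ we have, again by~\eqref{eq:8}, $\fnom{F}=\setvars\times\fnom{F'}$ and $q=\theta\circ(\setvars\times q')$, so $(\fnom{F}(f),q_{\settwo})=(\setvars\times\fnom{F'}(f),\,\theta\circ(\setvars\times q'_{\settwo}))$ — precisely the shape of Lemma~\ref{lem:fwd-safe-closure-prop}(\ref{item3-lem-fwd}) applied to the pair $(\fnom{F'}(f),q'_{\settwo})$, which is forward- and backward-safe by induction.

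The substantive content of the argument is already packaged in Lemma~\ref{lem:fwd-safe-closure-prop} (and, underneath it, Lemmas~\ref{lem:theta-special} and~\ref{lem:safe-elem-and-prod}), so the main obstacle here is organizational rather than conceptual: one must make sure the inductive decomposition of $q$ dictated by~\eqref{eq:8} lines up \emph{verbatim} with the hypotheses of the three closure clauses — in particular that the abstraction clause really produces the exact pattern $(\setvars\times(-),\theta\circ(\setvars\times(-)))$ — and one must not forget the preliminary fact that $\fnom{F}$ preserves safe maps, which, though not deep, genuinely relies on the freedom to rename bound variables apart granted by Lemma~\ref{lem:choosing-safe-elem}. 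Once these are in place the induction closes and gives that $(\fnom{F}(f),q_{\settwo})$ is both forward-safe and backward-safe.
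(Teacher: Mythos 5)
Your proposal is correct and takes essentially the same route as the paper, whose proof is the one-line remark that the lemma follows by induction on the grammar of $\fnom{F}_\alpha$ together with Lemma~\ref{lem:fwd-safe-closure-prop}; your case analysis matches the clauses of~\eqref{eq:8} to the three closure properties exactly as intended. Your preliminary observation that $q_{\settwo}$ and $\fnom{F}(f)$ are themselves safe maps (so that the forward-/backward-safety of the pair is even well-posed) is a detail the paper leaves implicit, and you justify it correctly.
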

\proof The proof is by induction on the grammar of 
 $\fnom{F}_{\alpha}$ 
 and Lemma~\ref{lem:fwd-safe-closure-prop}.
\qed

\begin{lem}
\label{lem:ssq-close-abs}
If 
\begin{equation}
  \label{eq:squ-abs1}
\begin{gathered}
  \xymatrix{
\setone \ar[d]_-{f}\ar@{<-}[r]^-{p} & \setfour \ar[d]^-{q} \\
\setthree \ar@{<-}[r]_{g} & \settwo
}
\end{gathered}
\end{equation}
is a safe square then
\begin{equation}
  \label{eq:squ-abs2}
\begin{gathered}
  \xymatrix{
\setvars \times \setone \ar[d]_-{ \theta \circ (\setvars \times f)}
\ar@{<-}[r]^-{\setvars \times p} & 
\setvars \times \setfour \ar[d]^-{ \theta \circ (\setvars \times q)} \\
[\setvars]\setthree \ar@{<-}[r]_{[\setvars]g} & [\setvars]\settwo
}
\end{gathered}
\end{equation}
is a safe square. 

\end{lem}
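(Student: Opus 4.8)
The plan is to unfold the definition of safe square for~\eqref{eq:squ-abs2} and reduce it, after passing to a well-chosen representative of the bottom-right element, to the safe-square property of~\eqref{eq:squ-abs1} (in the strengthened form of Lemma~\ref{lem:choosing-good-safe-el}) together with Lemma~\ref{lem:theta-special}. Write $F=\theta\circ(\setvars\times f)$ and $Q=\theta\circ(\setvars\times q)$ for the vertical maps of~\eqref{eq:squ-abs2}. So suppose $(x,\elone)\in\setvars\times\setone$ is $F$-safe and $\abs{y}\eltwo\in[\setvars]\settwo$ satisfies $F(x,\elone)=[\setvars]g(\abs{y}\eltwo)$ and $\bv_F(x,\elone)\#\abs{y}\eltwo$. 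Since $F(x,\elone)=\abs{x}f(\elone)$ and $[\setvars]g(\abs{y}\eltwo)=\abs{y}g(\eltwo)$, the first condition is $\abs{x}f(\elone)=\abs{y}g(\eltwo)$, and I must produce a $Q$-safe $(z,\elfour)$ with $(\setvars\times p)(z,\elfour)=(x,\elone)$ --- which forces $z=x$ and $p(\elfour)=\elone$ --- and $Q(x,\elfour)=\abs{x}q(\elfour)=\abs{y}\eltwo$.

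First I would record two preliminary facts. Using $\supp(\abs{x}f(\elone))=\supp(f(\elone))\setminus\{x\}$ one computes $\bv_F(x,\elone)=\{x\}\cup\bv_f(\elone)$, so the hypothesis $\bv_F(x,\elone)\#\abs{y}\eltwo$ gives both $x\#\abs{y}\eltwo$ and $\bv_f(\elone)\#\abs{y}\eltwo$. Moreover, from $(x,\elone)$ being $F$-safe one deduces that $\elone$ is $f$-safe and $x\#\bv_f(\elone)$: by Lemma~\ref{lem:safe-composition-1}, $(x,\elone)$ is $(\setvars\times f)$-safe, and then Lemma~\ref{lem:safe-elem-and-prod} (or a short direct support-counting argument) gives both statements.

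The key move is to pin $\abs{y}\eltwo$ at $x$. Since $x\#\abs{y}\eltwo$, the concretion $\eltwo_0:=(\abs{y}\eltwo)\conc x$ is defined and satisfies $\abs{x}\eltwo_0=\abs{y}\eltwo$ and $\supp(\eltwo_0)\subseteq\supp(\abs{y}\eltwo)\cup\{x\}$. Equivariance of $g$ yields $\abs{x}g(\eltwo_0)=[\setvars]g(\abs{x}\eltwo_0)=\abs{y}g(\eltwo)=\abs{x}f(\elone)$, and since $\abs{x}a=\abs{x}b$ implies $a=b$, we get $g(\eltwo_0)=f(\elone)$. Also $\bv_f(\elone)\#\eltwo_0$, because $\bv_f(\elone)$ is fresh for $\abs{y}\eltwo$ and for $x$. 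Now I apply Lemma~\ref{lem:choosing-good-safe-el} to the safe square~\eqref{eq:squ-abs1} with the $f$-safe element $\elone$, the element $\eltwo_0$ (we have $f(\elone)=g(\eltwo_0)$ and $\bv_f(\elone)\#\eltwo_0$), and the finite set $S=\{x\}$ (admissible since $x\#\bv_f(\elone)$): this yields a $q$-safe $\elfour$ with $p(\elfour)=\elone$, $q(\elfour)=\eltwo_0$, and $\bv_q(\elfour)\#\{x\}$, i.e.\ $x\#\bv_q(\elfour)$.

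Finally I check $(x,\elfour)$ is as required: $(\setvars\times p)(x,\elfour)=(x,\elone)$, and $Q(x,\elfour)=\abs{x}q(\elfour)=\abs{x}\eltwo_0=\abs{y}\eltwo$, and $(x,\elfour)$ is $Q$-safe by Lemma~\ref{lem:theta-special}, since $\elfour$ is $q$-safe and $x\#\bv_q(\elfour)$. The hard part is the freshness bookkeeping in the middle: one must arrange that $\bv_f(\elone)$ is fresh for the chosen representative $\eltwo_0$ and that the bound variables $\bv_q(\elfour)$ of the lifted witness avoid $x$, so that Lemma~\ref{lem:theta-special} applies --- which is precisely why the strengthened Lemma~\ref{lem:choosing-good-safe-el} (carrying the extra avoided set $S$) is needed rather than Definition~\ref{def:strange-pullback} on its own.
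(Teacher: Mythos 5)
Your proof is correct, but it takes a genuinely different route from the paper. The paper's proof is a one-liner by composition: it factors \eqref{eq:squ-abs2} vertically through $\setvars\times\setthree$ and $\setvars\times\settwo$, observes that the top square is a product of safe squares (Lemma~\ref{lem:ssq-close-prod-coprod}) and the bottom one is the naturality square of $\theta$ at $g$, and then invokes the vertical-composition Lemma~\ref{lem:ssq-vert-comp}, with the required forward-/backward-safety of the pairs $(\setvars\times f,\theta)$ and $(\setvars\times q,\theta)$ supplied by Lemma~\ref{lem:fwd-safe-closure-prop}(3) (applied with the identity in the second slot). You instead verify Definition~\ref{def:strange-pullback} directly: you compute $\bv_F(x,\elone)=\{x\}\cup\bv_f(\elone)$, use concretion at $x$ to pin the abstraction $\abs{y}\eltwo$ to the representative $\eltwo_0$ with $\abs{x}\eltwo_0=\abs{y}\eltwo$, feed $(\elone,\eltwo_0,S=\{x\})$ into the strengthened square property of Lemma~\ref{lem:choosing-good-safe-el}, and finish with Lemma~\ref{lem:theta-special}. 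All the freshness checks go through (in particular $\supp(\eltwo_0)\subseteq\supp(\abs{y}\eltwo)\cup\{x\}$ gives $\bv_f(\elone)\#\eltwo_0$, and $x\#\bv_q(\elfour)$ is exactly what Lemma~\ref{lem:theta-special} needs). What each buys: the paper's argument is shorter and reuses general machinery, but it silently relies on the $\theta$-naturality square being safe and on the vertical maps being safe (as required by the definitions of forward-/backward-safe pairs); your direct argument makes the concretion and freshness bookkeeping explicit and in effect also proves the safety of the $\theta$-naturality square. One small point: your appeal to Lemma~\ref{lem:safe-elem-and-prod} to extract ``$\elone$ is $f$-safe and $x\#\bv_f(\elone)$'' formally requires $f$ to be a safe map, which is not a hypothesis of the lemma; your parenthetical fallback to a direct support-counting argument (with a transposition to move $x$ out of the support of a competing fibre element) is the right fix, and the paper's own proof carries the same implicit assumption.
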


\proof
We apply Lemma \ref{lem:ssq-vert-comp}.
Note that the pairs of maps 
$(\setvars \times f, \theta)$ and $(\setvars \times q, \theta)$
 are forward and backward-safe by Lemma \ref{lem:fwd-safe-closure-prop} (3).
\qed

\begin{lem}
\label{lem:nat-q-sq-is-safe}
  For all equivariant $f:\setone\to\settwo$ the square
\begin{equation}
\begin{gathered}
  \xymatrix{
\fnom{F}\settwo\ar@{<-}[r]\ar@{->>}[d]_{q_{\settwo}} & \fnom{F}\setone\ar@{->>}[d]^{q_{\setone}}\\
\fnom{F}_\alpha\settwo\ar@{<-}[r] & \fnom{F}_\alpha\setone\\
}
\end{gathered}
\end{equation}
is a safe square.
\end{lem}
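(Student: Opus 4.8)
The plan is to proceed by induction on the structure of $\fnom{F}_\alpha$, that is, along the rules~\eqref{eq:8} that inductively define the natural transformation $q\colon\fnom{F}\to\fnom{F}_\alpha$ together with $\fnom{F}$ itself. Before the induction I note two things. First, since $q$ is a natural transformation the square commutes, so it is genuinely a square in the sense of Definition~\ref{def:strange-pullback}. Second, the statement assumes nothing of $f$ beyond equivariance, and the two results that carry the induction --- Lemma~\ref{lem:ssq-close-prod-coprod}, giving closure of safe squares under finite products and coproducts, and Lemma~\ref{lem:ssq-close-abs}, handling the abstraction step --- likewise impose no safety hypothesis on the horizontal maps, so the argument stays uniform in $f$.

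For the base cases $\fnom{F}_\alpha\in\{\mathsf{Id},\setvars,\mathsf{K}\}$ the rules~\eqref{eq:8} give $\fnom{F}=\fnom{F}_\alpha$ and $q=\mathit{id}$, so both vertical edges of the square are identities. I would record once and for all that any commuting square whose vertical edges are identities is a safe square: given an $\mathit{id}$-safe $\elone$ in the top-left object (every element is $\mathit{id}$-safe) and an $\eltwo$ in the bottom-right object with matching image --- the side-condition $\bv_{\mathit{id}}(\elone)\#\eltwo$ holding vacuously since $\bv_{\mathit{id}}(\elone)=\emptyset$ --- the element $\eltwo$ itself is $\mathit{id}$-safe and realises the required factorisation. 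So the base cases are immediate.

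For the inductive step there are three clauses of~\eqref{eq:8} to treat. If $\fnom{F}_\alpha=\fnom{F'_\alpha}\times\fnom{F''_\alpha}$, so that $\fnom{F}=\fnom{F'}\times\fnom{F''}$, $\fnom{F}f=\fnom{F'}f\times\fnom{F''}f$ and $q=q'\times q''$, then the square in the statement is exactly the product of the two squares supplied by the induction hypothesis (for $\fnom{F'},\fnom{F'_\alpha}$ and for $\fnom{F''},\fnom{F''_\alpha}$, both with the same $f$), hence safe by Lemma~\ref{lem:ssq-close-prod-coprod}; the coproduct clause $\fnom{F}_\alpha=\coprod\fnom{(F_i)_\alpha}$ is handled identically by the coproduct part of that lemma. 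If $\fnom{F}_\alpha=[\setvars]\fnom{F'_\alpha}$, so that $\fnom{F}=\setvars\times\fnom{F'}$ and $q_{-}=\theta\circ(\setvars\times q'_{-})$, I would take the safe square for $\fnom{F'},\fnom{F'_\alpha},f$ from the induction hypothesis and feed it into Lemma~\ref{lem:ssq-close-abs}, with the maps called $f,q,p,g$ there instantiated as $q'_\settwo,\ q'_\setone,\ \fnom{F'}f,\ \fnom{F'_\alpha}f$ respectively; the square that lemma returns is safe, and after the identifications $(\setvars\times\fnom{F'})f=\setvars\times\fnom{F'}f$, $([\setvars]\fnom{F'_\alpha})f=[\setvars](\fnom{F'_\alpha}f)$ and $q_{-}=\theta\circ(\setvars\times q'_{-})$ it is precisely the square of the statement for $\fnom{F}=\setvars\times\fnom{F'}$.

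I do not anticipate a genuine obstacle: Lemmas~\ref{lem:ssq-close-prod-coprod} and~\ref{lem:ssq-close-abs}, and behind them the earlier study of forward- and backward-safe pairs and of $\theta$ (Lemmas~\ref{lem:ssq-vert-comp}, \ref{lem:theta-special}, \ref{lem:fwd-safe-closure-prop}), were set up precisely so that this induction closes. The only point demanding care is the abstraction clause, where one must check that the square produced by Lemma~\ref{lem:ssq-close-abs} coincides on the nose with the square in the statement, in particular that the vertical maps $\theta\circ(\setvars\times q'_{-})$ really are the components $q_{-}$ prescribed by~\eqref{eq:8} for $\fnom{F}_\alpha=[\setvars]\fnom{F'_\alpha}$; this is a matter of unfolding definitions rather than a difficulty.
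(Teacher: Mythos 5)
Your proposal is correct and follows exactly the paper's own argument: the paper proves this lemma by induction on the structure of $\fnom{F}_\alpha$, invoking Lemma~\ref{lem:ssq-close-prod-coprod} for the product and coproduct clauses and Lemma~\ref{lem:ssq-close-abs} for the abstraction clause, with the base cases being trivial as you observe. Your write-up simply spells out the instantiations that the paper's one-line proof leaves implicit.
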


\proof
This is proved by induction on $F_\alpha$  using Lemmas \ref{lem:ssq-close-prod-coprod} and \ref{lem:ssq-close-abs}.
\qed

\begin{prop}\label{prop:safe-squares}
  For all $n$ the squares
\begin{equation}
\begin{gathered}
  \xymatrix{
\fnom{F}^n1\ar@{<-}[r]\ar@{->>}[d]_{\amap{n}} & \fnom{F}^{n+1}1\ar@{->>}[d]^{\amap{n+1}}\\
\fnom{F}_\alpha^n1\ar@{<-}[r] & \fnom{F}_\alpha^{n+1}1\\
}
\end{gathered}
\end{equation}
are safe squares.
\end{prop}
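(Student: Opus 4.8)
The plan is to prove the statement by induction on $n$, exhibiting the $(n{+}1)$-st square as a vertical composite of two safe squares and then applying Lemma~\ref{lem:ssq-vert-comp}.

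\emph{Base case $n=0$.} Here the left leg is $\amap{0}=\mathit{id}_1$ and, since $\amap{1}=q_{\fnom{F}_\alpha^01}\circ\fnom{F}(\amap{0})=q_1$, the right leg is $q_1$, while both horizontal legs are the unique maps into the terminal object. The sole element of $1$ is $\amap{0}$-safe and has empty $\bv$, and the condition $p(\elfour)=\elone$ is vacuous since the codomain is $1$, so the square is a safe square exactly when every element of $\fnom{F}_\alpha 1$ has an $\amap{1}$-safe preimage under $\amap{1}$; this holds because $\amap{1}$ is a safe map by Proposition~\ref{prop:safe-maps}.

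\emph{Inductive step.} Using $\amap{n+1}=q_{\fnom{F}_\alpha^n1}\circ\fnom{F}(\amap{n})$ together with functoriality of $\fnom{F}$ on the connecting maps of the final chain, the $(n{+}1)$-st square factors as the vertical composite of (I) the image under $\fnom{F}$ of the $n$-th square, whose vertical legs are $\fnom{F}(\amap{n})$ and $\fnom{F}(\amap{n+1})$, sitting on top of (II) the naturality square of $q\colon\fnom{F}\to\fnom{F}_\alpha$ at the connecting map $\fnom{F}_\alpha^{n+1}1\to\fnom{F}_\alpha^n1$, whose vertical legs are $q_{\fnom{F}_\alpha^n1}$ and $q_{\fnom{F}_\alpha^{n+1}1}$; one checks directly that the left and right columns of this composite compose to $\amap{n+1}$ and $\amap{n+2}$, so the outer square is exactly the $(n{+}1)$-st square. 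Square (II) is safe by Lemma~\ref{lem:nat-q-sq-is-safe}. For square (I), note that the polynomial functor $\fnom{F}$ is built from $\mathsf{Id}$, constant functors, $\setvars$, countable coproducts and finite products; since $\mathsf{Id}$ and constant functors trivially carry safe squares to safe squares and Lemma~\ref{lem:ssq-close-prod-coprod} handles products and coproducts, a structural induction on $\fnom{F}$ shows that $\fnom{F}$ preserves safe squares, so square (I) is safe by the induction hypothesis. It remains to verify the side conditions of Lemma~\ref{lem:ssq-vert-comp}: the left-column pair $(\fnom{F}(\amap{n}),q_{\fnom{F}_\alpha^n1})$ must be forward-safe and the right-column pair $(\fnom{F}(\amap{n+1}),q_{\fnom{F}_\alpha^{n+1}1})$ must be backward-safe. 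Both follow from Lemma~\ref{lem:our-maps-fwd-back-safe} applied to the safe maps $\amap{n}$ and $\amap{n+1}$ (safe by Proposition~\ref{prop:safe-maps}); in fact that lemma yields forward- and backward-safety simultaneously for each pair. Hence Lemma~\ref{lem:ssq-vert-comp} gives that the $(n{+}1)$-st square is safe, completing the induction.

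The only real difficulty is organisational: one must set up the two-square decomposition so that its outer square is literally the square at issue — this is forced by the recursion $\amap{n+1}=q\circ\fnom{F}(\amap{n})$ and by functoriality for the horizontal connecting maps — and so that the forward-/backward-safe orientations align with the hypotheses of Lemma~\ref{lem:ssq-vert-comp}; that alignment is automatic because the two columns of the composite are precisely the pairs covered by Lemma~\ref{lem:our-maps-fwd-back-safe}. The auxiliary claim that the polynomial functor $\fnom{F}$ preserves safe squares, although not isolated as a lemma above, is immediate from Lemma~\ref{lem:ssq-close-prod-coprod} and the fact that $[\setvars](-)$ does not occur in $\fnom{F}$.
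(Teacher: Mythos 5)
Your proposal is correct and follows essentially the same route as the paper: decompose the $(n{+}1)$-st square as $\fnom{F}$ applied to the $n$-th square stacked on the naturality square of $q$, invoke Lemmas~\ref{lem:ssq-close-prod-coprod}, \ref{lem:nat-q-sq-is-safe} and \ref{lem:our-maps-fwd-back-safe}, and conclude with Lemma~\ref{lem:ssq-vert-comp}. The only additions are your explicit (and correct) base case and the spelled-out observation that the polynomial functor $\fnom{F}$ preserves safe squares, both of which the paper leaves implicit.
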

\proof
We use induction on $n$. 
For the inductive step observe that the $(n+1)^\textit{th}$ square is the composition
\begin{equation}
\label{eq:our-sq-comp}
\begin{gathered}
  \xymatrix{
\fnom{F}\fnom{F}^n1\ar@{<-}[r]\ar@{->>}[d]_{\fnom{F}\amap{n}}&\fnom{F}\fnom{F}^{n+1}1\ar@{->>}[d]^{\fnom{F}\amap{n+1}}\\
\fnom{F}\fnom{F}_\alpha^n1\ar@{<-}[r]\ar[d]_{q_{\fnom{F}_\alpha^n1}} & \fnom{F}\fnom{F}_\alpha^{n+1}1\ar[d]^{q_{\fnom{F}_\alpha^{n+1}1}}\\
\fnom{F}_\alpha\fnom{F}_\alpha^n1\ar@{<-}[r] & \fnom{F}_\alpha\fnom{F}_\alpha^{n+1}1\\
}
\end{gathered}
\end{equation}

By Lemma~\ref{lem:ssq-close-prod-coprod} we know that the upper square in~\eqref{eq:our-sq-comp} is a safe square. By Lemma~\ref{lem:nat-q-sq-is-safe} the lower square in~\eqref{eq:our-sq-comp} is also safe. Moreover, by Lemma~\ref{lem:our-maps-fwd-back-safe} we know that $(\fnom{F}\amap{n},q_{\fnom{F}_\alpha^n1})$ is forward-safe and $(\fnom{F}\amap{n+1},q_{\fnom{F}_\alpha^{n+1}1})$ is backward-safe. Thus, we can apply Lemma~\ref{lem:ssq-vert-comp} to conclude that the outer square in~\eqref{eq:our-sq-comp} is a safe square.
\qed

We established the two
bullet points of Corollary~\ref{cor:isomfinalcoalgcarriers-2} in
Propositions~\ref{prop:safe-maps} and \ref{prop:safe-squares}:

\begin{thm} \label{thm:bindingsig-surj}
  Let $\fnom{F}, \fnom{F}_\alpha$ be endofunctors on $\Nom$ obtained from a binding signature and
  $q:\fnom{F}\to\fnom{F}_\alpha$ be the natural
  transformation defined in~\ref{eq:alpha-rules-gen-sig}. Let $P$ be a pullback as in
  \eqref{eq:classesbindsig-2}. Then $P\to U\fcoalg_\alpha$ is onto.
\end{thm}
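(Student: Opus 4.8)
The plan is to obtain the statement as an immediate consequence of Corollary~\ref{cor:isomfinalcoalgcarriers-2}, whose hypotheses have been arranged to match exactly what the preceding propositions deliver. Concretely, Corollary~\ref{cor:isomfinalcoalgcarriers-2} reduces surjectivity of $P\to U\fcoalg_\alpha$ to three conditions on the endofunctors $\fnom{F}$ and $\fnom{F}_\alpha$ associated with the binding signature $(\Sigma,\mathsf{ar})$: that $\fnom{F}$ preserves surjections, that the induced maps $\amap{n}:\fnom{F}^n1\to\fnom{F}_\alpha^n1$ are safe, and that the squares~\eqref{eq:ssq-alpha} are safe squares. The first holds by Proposition~\ref{prop:generalizedgrammar}, since $\fnom{F}$ is built from grammar~\eqref{eq:gramF}; the second is Proposition~\ref{prop:safe-maps}; the third is Proposition~\ref{prop:safe-squares}. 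Feeding these into Corollary~\ref{cor:isomfinalcoalgcarriers-2} --- which itself specialises Theorem~\ref{thm:gen-seq-pb-surj} to the final chains of $\fnom{F}$ and $\fnom{F}_\alpha$ --- yields that $P\to U\fcoalg_\alpha$ is onto.

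The substance lies in Propositions~\ref{prop:safe-maps} and~\ref{prop:safe-squares}, whose proofs I would organise as follows. For the safety of the $\amap{n}$ the argument runs by induction on $n$ via the recursion $\amap{n+1}=q_{\fnom{F}_\alpha^n1}\circ\fnom{F}(\amap{n})$: one proves the stronger property that each $\amap{n}$ has orbit-finite fibres --- a property closed under finite products, coproducts, abstraction and composition (Lemma~\ref{lem:off-maps-closure-prop}) and shared by every component $q_\setone$ (Lemma~\ref{lem:q-has-off}) --- and then invokes Lemma~\ref{lem:has-off-imp-safe}, that a surjection with orbit-finite fibres is safe. For the safe-square property one again inducts on $n$, this time factoring the $(n+1)^{\textit{th}}$ square vertically through $\fnom{F}\fnom{F}_\alpha^n1$: the upper half is a safe square by closure of safe squares under products and coproducts (Lemma~\ref{lem:ssq-close-prod-coprod}), the lower half by Lemma~\ref{lem:nat-q-sq-is-safe}, and the two halves are glued by the vertical-composition lemma for safe squares (Lemma~\ref{lem:ssq-vert-comp}), whose forward- and backward-safety side conditions --- on $(\fnom{F}\amap{n},q_{\fnom{F}_\alpha^n1})$ and on $(\fnom{F}\amap{n+1},q_{\fnom{F}_\alpha^{n+1}1})$ respectively --- are supplied by Lemma~\ref{lem:our-maps-fwd-back-safe}.

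The main obstacle I anticipate is the vertical composition of safe squares isolated as Lemma~\ref{lem:ssq-vert-comp}: safe squares, unlike weak pullbacks, do not compose vertically for free, since $f$-safety of an element is not preserved under composition of $f$ with a further map. One must therefore carry along the freshness bookkeeping $\bv_f(\elone)\fresh\eltwo$ at each stage and exploit both the auxiliary notions of forward- and backward-safe pair and the ability, via Lemma~\ref{lem:choosing-good-safe-el}, to rename bound variables so that they avoid any prescribed finite set, in order to lift a witness through two superimposed safe squares. Once Lemma~\ref{lem:ssq-vert-comp} and the closure properties of forward- and backward-safe pairs (Lemma~\ref{lem:fwd-safe-closure-prop}) are established, the induction for Proposition~\ref{prop:safe-squares}, and with it the reduction to Corollary~\ref{cor:isomfinalcoalgcarriers-2} and Theorem~\ref{thm:gen-seq-pb-surj}, goes through routinely.
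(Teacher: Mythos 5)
Your proposal is correct and follows the paper's own route exactly: the theorem is obtained by feeding Propositions~\ref{prop:safe-maps} and~\ref{prop:safe-squares} (together with surjection-preservation from Proposition~\ref{prop:generalizedgrammar}) into Corollary~\ref{cor:isomfinalcoalgcarriers-2}, which in turn specialises Theorem~\ref{thm:gen-seq-pb-surj}. Your sketches of the two propositions --- orbit-finite fibres plus Lemma~\ref{lem:has-off-imp-safe} for safety of the $\amap{n}$, and the vertical factorisation through $\fnom{F}\fnom{F}_\alpha^n1$ glued by Lemma~\ref{lem:ssq-vert-comp} with the forward-/backward-safety conditions from Lemma~\ref{lem:our-maps-fwd-back-safe} --- also match the paper's arguments.
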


\section{Applications}

In this section,  we first give a general definition of substitution  on the
final coalgebra $\fcoalg_\alpha$ of a functor $\fnom{F}_\alpha$ coming
from a binding signature. 
We, then,  apply the general results given in the previous sections
to the infinitary $\lambda$-calculus
by defining substitution and the notions of
 B\"ohm, L\'evy-Longo and Berarducci trees
on $\alpha$-equivalence classes of $\lambda$-terms.

\subsection{Substitution on an Arbitrary Coalgebraic Data Type}
\label{sec:substitution}

The following lemma \cite[Lemma 2.1]{Moss_parametriccorecursion}
allows parameters in coinductive definitions. It dualises the way in
which primitive recursion strengthens induction.  In order to express
substitution, the set $\setone$ will be used for the term
$\eqclass{N}$ in $\eqclass{M [x:=N]}$ which is not subject to
recursion and the set $\settwo$ will be used for the recursion.

\begin{lem}\label{lemma:flattening}
  Let $\finalmap: \finalcarrier \to F(\finalcarrier)$ be a final
  coalgebra and $\mapone: \carrierone \to F(\carrierone)$ an arbitrary
  $F$-coalgebra.  Then, there is a unique map $\mapthree : \carriertwo
  \to \finalcarrier$ such that for any $\maptwo:\carriertwo \to
  F(\carrierone+\carriertwo)$, the following diagram commutes:
  \[
  \xymatrix@R=18pt@C=40pt{ \settwo \ar[d]_{\maptwo}
    \ar[rr]^{\mapthree} & &
    \finalcarrier  \ar[d]^{\finalmap}  \\
    F(\carrierone+\carriertwo) \ar[rr]^{F([\mapone^*,\mapthree])} & &
    F(\finalcarrier) }
  \]
  where $\mapone^*: \carrierone \to \finalcarrier$ is the unique
  homomorphism between $(\carrierone,\mapone)$ and
  $(\finalcarrier,\finalmap)$.
\end{lem}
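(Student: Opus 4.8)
The plan is to reduce this "parametrised corecursion" statement to the plain finality of $\finalcarrier$ by building a single $F$-coalgebra structure on $\carrierone + \carriertwo$ and then extracting $\mapthree$ from the unique coalgebra morphism into $\finalcarrier$. First I would note that, independently of $\maptwo$, we already have the unique homomorphism $\mapone^\ast:\carrierone\to\finalcarrier$ from $(\carrierone,\mapone)$ to $(\finalcarrier,\finalmap)$; this is what the statement calls for as part of the data. The point is to define $\mapthree$ \emph{before} $\maptwo$ is given, so the first observation must be that $\mapthree$ does not in fact depend on $\maptwo$. The natural candidate is to take $\mapthree$ to be the composite $\carriertwo\hookrightarrow \carrierone+\carriertwo \xrightarrow{k} \finalcarrier$, where $k:\carrierone+\carriertwo\to\finalcarrier$ is the unique $F$-coalgebra morphism out of a suitable coalgebra on $\carrierone+\carriertwo$ — but that coalgebra seems to need $\maptwo$. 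The resolution is that the \emph{universal quantifier over $\maptwo$} is the content: we must show a single $\mapthree$ works for all $\maptwo$, which forces us to identify $\mapthree$ in a $\maptwo$-independent way.

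So the key steps, in order, are as follows. (1) Observe that for a \emph{fixed} $\maptwo:\carriertwo\to F(\carrierone+\carriertwo)$ we can form the coalgebra $\langle \mapone', \maptwo\rangle : \carrierone+\carriertwo \to F(\carrierone+\carriertwo)$, where $\mapone'$ is $\carrierone\xrightarrow{\mapone}F(\carrierone)\xrightarrow{F(\mathsf{inl})}F(\carrierone+\carriertwo)$; by finality this yields a unique $k:\carrierone+\carriertwo\to\finalcarrier$ with $\finalmap\circ k = F(k)\circ\langle\mapone',\maptwo\rangle$. (2) Check that $k\circ\mathsf{inl}$ is an $F$-coalgebra morphism from $(\carrierone,\mapone)$ to $(\finalcarrier,\finalmap)$: indeed $\finalmap\circ k\circ\mathsf{inl} = F(k)\circ\langle\mapone',\maptwo\rangle\circ\mathsf{inl} = F(k)\circ F(\mathsf{inl})\circ\mapone = F(k\circ\mathsf{inl})\circ\mapone$. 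By uniqueness of $\mapone^\ast$ we get $k\circ\mathsf{inl}=\mapone^\ast$, so in particular $k = [\,\mapone^\ast,\; k\circ\mathsf{inr}\,]$. (3) Define $\mapthree := k\circ\mathsf{inr}$. Then the square in the statement commutes: $\finalmap\circ\mapthree = \finalmap\circ k\circ\mathsf{inr} = F(k)\circ\langle\mapone',\maptwo\rangle\circ\mathsf{inr} = F(k)\circ\maptwo = F([\mapone^\ast,\mapthree])\circ\maptwo$, using $k=[\mapone^\ast,\mapthree]$ from step (2). (4) For uniqueness of $\mapthree$ given $\maptwo$: if $f'$ is another map making the square commute, then $[\mapone^\ast, f']$ is easily checked to be an $F$-coalgebra morphism $(\carrierone+\carriertwo,\langle\mapone',\maptwo\rangle)\to(\finalcarrier,\finalmap)$, hence equals $k$ by finality, hence $f' = k\circ\mathsf{inr} = \mapthree$.

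The main obstacle — and the only genuinely subtle point — is the \emph{order of quantifiers} in the statement: $\mapthree$ is asserted to exist "such that for any $\maptwo$" the diagram commutes, whereas the construction above produces, a priori, a $\mapthree$ depending on $\maptwo$. The resolution is that the \emph{same} construction works: once we show $k\circ\mathsf{inl}=\mapone^\ast$ regardless of $\maptwo$, the diagram characterising $\mapthree$ is exactly the diagram saying $k=[\mapone^\ast,\mapthree]$ is the unique coalgebra morphism for $\langle\mapone',\maptwo\rangle$; so read correctly, the lemma is really the statement that $k\circ\mathsf{inr}$ is the unique solution of that square, for each $\maptwo$, and that its restriction along $\mathsf{inl}$ is forced to be $\mapone^\ast$. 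In other words the universal property should be read as: for each $\maptwo$ there is a unique $\mapthree$, and it is "compatible" in the sense that it is a restriction of a single coalgebra morphism $k=[\mapone^\ast,\mapthree]$. I would state this explicitly to avoid any confusion, then the rest is the routine diagram chase sketched above. No categorical machinery beyond finality of $\finalcarrier$ is needed.
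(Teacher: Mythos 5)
Your proof is correct. Note that the paper does not prove this lemma at all: it is quoted from Moss, \emph{Parametric corecursion}, Lemma~2.1, so there is no in-paper argument to compare against; your reduction to plain finality via the coalgebra $[F(\mathsf{inl})\circ\mapone,\,\maptwo]:\carrierone+\carriertwo\to F(\carrierone+\carriertwo)$, the identification $k\circ\mathsf{inl}=\mapone^{*}$, and the uniqueness step via $[\mapone^{*},f']$ being a coalgebra morphism is exactly the standard argument. You are also right that the quantifiers in the statement as printed are in the wrong order: read literally ($\exists!\mapthree\ \forall\maptwo$) the claim is false (already for streams, different $\maptwo$ force different $\mapthree$), and the intended and provable reading is $\forall\maptwo\ \exists!\mapthree$, which is what you establish. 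Two cosmetic points: you write $\langle\mapone',\maptwo\rangle$ for what is a copairing out of a coproduct and should be written $[\mapone',\maptwo]$; and the long discussion of the quantifier issue could be compressed to a one-line remark, since once the statement is read correctly the whole proof is the four-step diagram chase you give.
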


We apply Lemma~\ref{lemma:flattening} to define substitution on the
final coalgebra $\fcoalg_\alpha$ of a functor $\fnom{F}_\alpha$ coming
from a binding signature. 

\begin{defi}[Substitution on $\alpha$-equivalence classes of 
infinitary terms coming from a binding signature]
\label{def:subsa}
Substitution on $\fcoalg_\alpha$ is defined as the unique map such that the diagram below commutes:
$$
\xymatrix@C=50pt{
 {\fcoalg_\alpha} \times \ \setvars \times \fcoalg_\alpha \ar[d]_{\mapsubsl} \ar@{-->}[rr]^<(0.4){\subsa} & & 
 {\fcoalg_\alpha}  \ar[d]^{\unfolda}  \\
 \fnom{F}_{\alpha} (\fcoalg_\alpha + \fcoalg_\alpha \times \ \setvars \times \fcoalg_\alpha  ) 
 \ar[rr]^<(0.3){\quad\quad\quad\fnom{F}_{\alpha}([\mathit{id},\subsa])} &  &
  \fnom{F}_{\alpha}(\fcoalg_\alpha) 
}
$$
where $\mapsubsl$ is defined in~\eqref{eq:h-case1},~\eqref{eq:h-case2} and~\eqref{eq:h-case3}.
\end{defi}

Intuitively $\subsa ([t]_{\alpha}, x, [s]_{\alpha})$  is given by $[t [x:= s]]_{\alpha}$. However, as explained in the introduction, substitution cannot be formally defined as a total function on raw terms: additional freshness side conditions are required. See also the explanation following~\eqref{eq:subs-alpha-sem}.

In order to define the equivariant map $\mapsubsl$ we can use the properties of the functor $\fnom{F}_\alpha$. As observed in Proposition~\ref{prop:generalizedgrammar} such
functors are $\Nom$-enriched, or equivalently (see~\cite{kock:strong-funct})
 \emph{strong}. That is, there exists a natural transformation
$$\tau_{X,Y}:\fnom{F}_\alpha X\times Y\to \fnom{F}_\alpha(X\times Y).$$
Each functor obtained from the grammar in~\eqref{eq:gramF} can be equipped with a strength. Most constructions are standard and if two functors are strong so is their
 composition, product or coproduct. The only interesting case is that of the abstraction functor.
We  define a strength using the concretion of Definition~\ref{def:concretion}. Explicitly $\tau_{\setone,\settwo}:[\setvars]\setone\times\settwo\to[\setvars](\setone\times\settwo)$ is defined by
$$(\abs{x}{\elone},\eltwo)\mapsto \abs{y}(\abs{x}{\elone}\conc y,\eltwo) $$
where $y$ is some/any fresh variable for $x,\elone,\eltwo$. By construction, $\tau$ is a well-defined and natural in both $\setone$ and $\settwo$.

The map $\mapsubsl$ is defined as follows. 

\begin{itemize}
  \item For tuples of the form $(x,x, N)$ we define $\mapsubsl(x,x, N)$ as the composite
    \begin{equation}
      \label{eq:h-case1}
\xymatrixcolsep{3pc}
\xymatrix
{
{\fcoalg_\alpha} \times \ \setvars \times \fcoalg_\alpha\ar[r]^-{\pi_3} & \fcoalg_\alpha\ar[r]^-{\unfolda} &
\fnom{F}_{\alpha}(\fcoalg_\alpha) \ar[r]^-{\fnom{F}_{\alpha}(\sf{inl})} & 
\fnom{F}_{\alpha}(\fcoalg_\alpha +\fcoalg_\alpha \times \ \setvars \times \fcoalg_\alpha)
}
    \end{equation}

  \item For tuples of the form $(y,x, N)$ with $x\neq y$ or for tuples of the form $(k,x,N)$ where $k$ is a constant, we define $\mapsubsl(y,x, N)$ as the composite

    \begin{equation}
      \label{eq:h-case2}
\xymatrixcolsep{3pc}
\xymatrix
{
{\fcoalg_\alpha} \times \ \setvars \times \fcoalg_\alpha\ar[r]^-{\pi_1} & \fcoalg_\alpha\ar[r]^-{\unfolda} &
\fnom{F}_{\alpha}(\fcoalg_\alpha) \ar[r]^-{\fnom{F}_{\alpha}(\sf{inl})} & 
\fnom{F}_{\alpha}(\fcoalg_\alpha +\fcoalg_\alpha \times \ \setvars \times \fcoalg_\alpha)
}
    \end{equation}

  \item For tuples $(M,x,N)$ such that $M$ is not a variable we define $\mapsubsl(M,x,N)$ as the composite:
    \begin{equation}
      \label{eq:h-case3}
\xymatrixcolsep{4pc}
\xymatrix
{
{\fcoalg_\alpha} \times \ \setvars \times \fcoalg_\alpha\ar[d]_-{\unfolda\times\setvars\times\fcoalg_\alpha}\ar@{-->}[r]^-{\mapsubsl}
&
\fnom{F}_{\alpha}(\fcoalg_\alpha +\fcoalg_\alpha \times \ \setvars \times \fcoalg_\alpha)
 \\ \fnom{F}_{\alpha}(\fcoalg_\alpha) \times \ \setvars \times \fcoalg_\alpha\ar[r]^-{\tau_{{\fcoalg_\alpha},\setvars \times \fcoalg_\alpha}}& 
 \fnom{F}_{\alpha}(\fcoalg_\alpha \times \ \setvars \times \fcoalg_\alpha)\ar[u]_-{\fnom{F}_{\alpha}(\sf{inr})} \\ 
}
    \end{equation}
 \end{itemize}
 
Since $\{(x,x,N)\ |\ x\in\setvars\}$, $\{(y,x,N)\ |\ x\neq y\in\setvars\}$, $\{(k,x,N)\ |\ k\textrm{ constant}\}$ and $\{(M,x,N)\ |\ M\not\in\setvars\}$ are nominal sets that form a partition of $\fcoalg_\alpha\times\setvars\times\fcoalg_\alpha$ the map $\mapsubsl$ is well-defined and equivariant, thus we can apply Lemma~\ref{lemma:flattening} to prove the existence of a unique substitution map $\subsa$.

\subsection{Substitution on $\alpha$-Equivalence Classes of Infinitary $\lambda$-Terms}

As an example we spell out the concrete calculations for substitution on the nominal set 
$\Lboti$ of finite and infinite $\lambda$-terms with $\bot$. 
The set $\Lboti$ is defined as the final coalgebra of the functor $\TLambdabot$ defined by:
\begin{equation}
  \TLambdabot \ \setone =\mathcal{V}+\{\bot\}+\mathcal{V}\times \setone+\setone \times \setone.
\end{equation}
Adding the extra constant $\bot$ is needed  in order to write  corecursive functions that compute the B\"ohm, L\'evy-Longo and Berarducci trees.
We also  consider the functor

\begin{equation}\label{eq:LNombot}
\LNombot \ \setone = \setvars+\{\bot\}+[\setvars]\setone+ \setone\times \setone.
\end{equation}

\begin{notation}\label{notation:lbifa}
    We write $\Lbifa$ for the final coalgebra of $\LNombot$, omitting the $\bot$ in the notation to improve readability.
    We continue to denote
    terms in  $\Lbotif$ by $M,N$, but will denote terms in
    $\Lbifa$ by $\eqclass{M}, \eqclass{N}$. By Corollary~\ref{cor:isomfinalcoalgcarriers} we have that $\Lbifa$ is isomorphic to $\Lbotif/{=_\alpha}$.

The injections for the coproduct $\setone +\settwo$
are denoted as 
 $\inl^{\setone, \settwo}: \setone \to \setone + \settwo$
and  $\inr^{\setone, \settwo}: \settwo \to \setone+ \settwo$.
But for the case of 
$\FLambda(\setone)$, we denote them as

\[
\begin{array}{lll}
\inbota^{\setone}: \{ \bot \} \to \LNombot(\setone) \\ 
\invara^{\setone}: \setvars \to \LNombot(\setone) \\ 
\inabsa^{\setone}: [\setvars]\setone \to \LNombot(\setone)  \\
\inappa^{\setone}: \setone \times \setone \to \LNombot(\setone).
\end{array}
\]
We drop the superscripts when 
they are clear from the context.
  
  Since $\unfolda$ is an isomorphism that partitions its domain
  $\Lbifa$ into four disjoint components, see (\ref{eq:FLambda}), we
  write typical elements of $\Lbifa$ as $\eqclass{x}$, $\bot$,
  $\eqclass{M_1 M_2}$, $\lambda{\eqclass{y}}.{\eqclass{M}}$ where
 \begin{equation}\label{eq:elements-Lbifa}
  \begin{array}{ll}
  \eqclass{x} & =\unfolda^{-1}(\invara \ \eqclass{x}) \\
  \eqclass{\bot} & =\unfolda^{-1}(\inbota \ \eqclass{\bot}) \\ 
  \eqclass{\lambda y.M} & =\unfolda^{-1}(\inabsa \ \abs{\eqclass{y}}{\eqclass{ M}}) \\
  \eqclass{M_1 M_2} & =\unfolda^{-1}(\inappa \ (\eqclass{M_1},\eqclass{M_2})). \\
\end{array}
\end{equation}
We use $\eqclass{x}$ to denote both an element in $\setvars$ and also
its copy in $\Lbifa$.
\end{notation}

\begin{exa}[Substitution on $\alpha$-equivalence classes of infinitary
 $\lambda$-terms with $\bot$]
\label{exa:applications}
By instantiating the definition of the map $\mapsubsl$ given in~\eqref{eq:h-case1},~\eqref{eq:h-case2} and~\eqref{eq:h-case3} for the functor $\LNombot$ we obtain
$\mapsubsl:  \Lbifa \times \ \setvars \ \times \ \Lbifa
\  \ \to  
 \LNombot(\Lbifa + \ \Lbifa \times \ \setvars  \  \times \Lbifa)
$
  given by
\[
\begin{array}{llll}
\mapsubsl(\eqclass{x},\eqclass{x},\eqclass{N}) & =
\unfolda(\eqclass{N})\\
\mapsubsl(\eqclass{y}, \eqclass{x},\eqclass{N}) & = 
\invara \  \eqclass{y} & \mbox{\textit{if} $\eqclass{y} \not = \eqclass{x}$} \\
\mapsubsl(\bot, \eqclass{x},\eqclass{N}) & = 
\inbota \  \bot & \\
\mapsubsl(\eqclass{M_1 M_2}, \eqclass{x},\eqclass{N}) & = 
\inappa \ ( (\eqclass{M_1}, \eqclass{x},\eqclass{N}), 
 (\eqclass{M_2}, \eqclass{x},\eqclass{N}))\\
\mapsubsl(\eqclass{\lambda y.M},\eqclass{x},\eqclass{N}) &= 
\inabsa \ \abs{\eqclass{z}}{  
( (\abs{\eqclass{y}}{\eqclass{M}}) \conc \eqclass{z} ,\eqclass{x} ,\eqclass{N}) }  & 
\mbox{\textit{if} }  \eqclass{z} \fresh (\eqclass{\lambda y.M,x,N}).
\end{array}
\]
To improve readability we omitted $\LNombot \inr$ or $\LNombot \inl$ in the definition of $\mapsubsl$. 
We obtain
the substitution function
$$\subsa: \Lbifa \times \ \setvars \times \Lbifa \ \to  \ \Lbifa $$ 
given by
  \begin{equation}
\label{eq:subs-alpha-sem}
\begin{array}{llll}
\subsa(\eqclass{x,x,N}) &= \eqclass{N}\\
\subsa(\eqclass{x,y,N}) &= \eqclass{x} & \textit{if } \eqclass{y}\not= \eqclass{x}\\
\subsa(\bot,\eqclass{y},\eqclass{N}) &= \bot\\
\subsa(\eqclass{M_1M_2,x,N}) &= \subsa(\eqclass{M_1,x,N)} \ \subsa(\eqclass{M_2,x,N})\\
\subsa(\eqclass{\lambda y.M,x,N)} &= \eqclass{\lambda y. \subsa(M ,x,N)}
& 
\mbox{\textit{if} }  \eqclass{y} \fresh (\eqclass{x,N}).
\end{array}
\end{equation}
It should be pointed out that $\subsa$ defined in~\eqref{eq:subs-alpha-sem} is the `semantic' version of the substitution map since it is defined on the (or any) final $\LNombot$-coalgebra. It is Corollary~\ref{cor:isomfinalcoalgcarriers} that allows us to identify the element $\eqclass{\lambda y.M}\in\Lbifa$ with the $\alpha$-equivalence class of an infinitary term $[\lambda y.M]_\alpha$ having finitely many free variables. Thus we obtain a `syntactic' version of the substitution map which looks indeed just like a notational variant of the $\Set$-based
(\ref{definition:informal:substitution}), but is now fully justified
as a coinductive definition on $\alpha$-equivalence classes of
$\lambda$-terms.
\end{exa}

We can now define $\beta$-reduction using $\subsa$.

 \begin{defi}[$\beta$-reduction on $\alpha$-equivalence classes]
 \label{def:betaonalpha}
  We define $\betaa$-reduction 
as the smallest  relation on $\Lbifa \times \Lbifa$
that satisfies
\[ \begin{array}{cc}
 \ProofRule{}{\eqclass{(\lambda x.P) Q} \onebetaa
 \subsa(\eqclass{P}, \eqclass{x}, \eqclass{Q})}{\betaa}\quad
 &
 \ProofRule{\eqclass{P} \onebetaa \eqclass{P'}}
 {\eqclass{\lambda x. P} \onebetaa \eqclass{\lambda x. P'}}{abs}
  \\ \\
 \ProofRule{\eqclass{P} \onebetaa \eqclass{P'} }
 {\eqclass{PQ} \onebetaa \eqclass{P' Q}}{app_L} 
 &  \ProofRule{
  \eqclass{Q} \onebetaa \eqclass{Q'}}
 {\eqclass{PQ} \onebetaa \eqclass{P Q'}}{app_R} 
\end{array}
\]
\end{defi}

We  define the notion of $\beta$-head reduction
which  contracts only the redex at the head position
and corresponds to the normalising leftmost strategy.
This reduction is used to define B\"ohm trees.

\begin{defi}[Head $\beta$-reduction on $\alpha$-equivalence classes]
 We define $\betaheada$-reduction 
as the smallest relation on $\Lbifa \times \Lbifa$ closed under
\[ \begin{array}{cc}
 \ProofRule{}{ \eqclass{(\lambda x.P) Q} 
 \onebetaheada \subsa(\eqclass{P},\eqclass{x}, \eqclass{Q})}{\betaheada} &
 \ProofRule{\eqclass{P} \onebetaheada \eqclass{P'}}
 {\eqclass{\lambda x. P} \onebetaheada \eqclass{\lambda x. P'}}{abs}
 \\\\
 \ProofRule{\eqclass{P} \onebetaheada \eqclass{P'} \ \ \ \ 
  \eqclass{P} \mbox{ is not an abstraction} }
 {\eqclass{PQ} \onebetaheada \eqclass{P' Q}}{app_L} 
 & 
\end{array}
\]
A term $\eqclass{M}$ is in head normal form (hnf) if it is of the form
 $\eqclass{\lambda x_1 \ldots x_n. y N_1 \ldots N_m}$. 
\end{defi}

We restrict  the $\beta$-head reduction 
by not contracting $\beta$-redexes
in the body of an abstraction and 
obtain the weak head $\beta$-reduction which is needed
to define the notion of L\'evy-Longo tree.

\begin{defi}[Weak head $\beta$-reduction on $\alpha$-equivalence classes]
 We define $\betaweakheada$-reduction 
as the smallest  relation on $\Lbifa \times \Lbifa$
closed under
\[ \begin{array}{cc}
 \ProofRule{}{\eqclass{(\lambda x.P)Q} 
 \onebetaweakheada \subsa(\eqclass{P},\eqclass{x}, \eqclass{Q})}
        {\betaweakheada}  \ \  \ \ &
 \ProofRule{\eqclass{P} \onebetaheada \eqclass{P'} }
 {\eqclass{PQ} \onebetaweakheada \eqclass{P' Q}}{app_L} 
 \end{array}
\]
A term $\eqclass{M}$ is in weak head normal form (whnf) if
it is either a head normal form or an abstraction. 
\end{defi}

We now define the  notion of top $\beta$-reduction
which only contracts $\beta$-weak head redexes
at depth $0$ and  
it will be used to define Berarducci trees.

\begin{defi}[Top $\beta$-reduction on $\alpha$-equivalence classes]
 We define $\betatopa$-reduction 
as the smallest  relation on $\Lbifa \times \Lbifa$ 
closed under
\[
\ProofRule{\eqclass{M} \finbetaweakheada
 \eqclass{(\lambda x.P)}}
{\eqclass{M Q} \onebetatopa
\subsa(\eqclass{P}, \eqclass{x}, \eqclass{Q})} {\betatopa}
\]
A term $\eqclass{M}$ is a top normal form (tnf) if it is
either a weak head normal form or an application
of the form $\eqclass{N P}$ where $\eqclass{N}$ cannot reduce to an
abstraction.
\end{defi}

The reflexive, transitive closures of 
$\onebetaheada$, 
$\onebetaweakheada$ and 
$\onebetatopa$
are denoted by 
$\finbetaheada$,
$\finbetaweakheada$ and
$\finbetatopa$,
respectively. The corresponding normal forms, 
head normal form (hnf),
weak head normal form (whnf) and 
top normal form (tnf),
should they exist are unique.

\subsection{Computing the Infinite Normal Form of $\alpha$-Equivalence Classes
of $\lambda$-Terms}
We now define the notions of B\"ohm tree, L\'evy-Longo tree, and Berarducci
tree using the finality of $\unfolda: \Lbifa \to \LNombot(\Lbifa)$.

\begin{defi}[B\"ohm tree on $\alpha$-equivalence classes]
\label{def:bohmtreeonalpha}
We define the B\"ohm tree of $\eqclass{M}$
as $\bohmtreea (\eqclass{M})$ where 
$\bohmtreea$
is the unique map such that
\[
\xymatrix{
\Lbifa \ar[d]_{\gbohmtreea} \ar[rr]^{\bohmtreea} & & 
\Lbifa  \ar[d]^{\unfolda}  \\
 \LNombot (\Lbifa ) \ar[rr]_{\LNombot(\bohmtreea)} &  &
  \LNombot(\Lbifa) 
}
\]
commutes, with $\gbohmtreea: \Lbifa \to \LNombot (\Lbifa)$
being defined as 
\[ \begin{array}{ll}
\gbohmtreea(\eqclass{M}) = &
 \left \{ \begin{array}{ll}  
   \unfolda (\eqclass{N})   &\quad\textit{if } \eqclass{M} \finbetaheada 
   \eqclass{N} \textit{ and $ \eqclass{N}$ is in hnf},\\    
 \inbota \
    \eqclass{\bot} &\quad\textit{otherwise}.
      \end{array} \right .
\end{array}
\]
\end{defi}

\begin{defi}[L\'evy-Longo tree on $\alpha$-equivalence classes]
\label{def:levylongoonalpha}
We define the L\'evy-Longo tree of $\eqclass{M}$
as $\lltreea(\eqclass{M})$ where 
$\lltreea$ is the unique map
such that 
\[
\xymatrix{
\Lbifa \ar[d]_{\glltreea} \ar[rr]^{\lltreea} & & 
\Lbifa  \ar[d]^{\unfolda}  \\
 \LNombot (\Lbifa ) \ar[rr]_{\LNombot(\lltreea)} &  &
  \LNombot(\Lbifa) 
}
\]
commutes, with $\glltreea: \Lbifa \to \LNombot (\Lbifa)$ being defined as
\[ \begin{array}{ll}
\glltreea(\eqclass{M}) = &
 \left \{ \begin{array}{ll} 
    \unfolda (\eqclass{N})    &\quad \textit{if } \eqclass{M} \finbetaweakheada \eqclass{N} \textit{ and $ \eqclass{N}$ is in whnf}, \\
    \inbota \ \bot  &\quad\textit{otherwise}. \\
 \end{array} \right .
\end{array}
\]
\end{defi}

\begin{defi}[Berarducci tree on $\alpha$-equivalence classes]
\label{def:berarduccionalpha}
  We define the Berarducci tree of $\eqclass{M}$ as $\bertreea
  (\eqclass{M})$ where $\bertreea$ is the unique
  map such that
\[
\xymatrix{
\Lbifa \ar[d]_{\gbertreea} \ar[rr]^{\bertreea} & & 
\Lbifa  \ar[d]^{\unfolda}  \\
 \LNombot (\Lbifa ) \ar[rr]_{\LNombot(\bertreea)} &  &
  \LNombot(\Lbifa) 
}
\]
commutes, with $\gbertreea: \Lbifa \to \LNombot (\Lbifa)$ being defined
as follows. 
\[ \begin{array}{ll}
\gbertreea(\eqclass{M}) = &
 \left \{ \begin{array}{ll}
  \unfolda (\eqclass{N})    &\quad \textit{if } \eqclass{M} \finbetatopa \eqclass{N} \textit{ and $ \eqclass{N}$ is in tnf}, \\
  \inbota \ \bot  &\quad  \textit{otherwise}. \\
 \end{array} \right .
\end{array}
\]

\end{defi}

Similar to the remark on the substitution map
mentioned at the end of
 Example \ref{exa:applications}, the maps 
 $\bohmtreea$, $ \lltreea$ and $ \bertreea$ 
 are the `semantic' counterparts 
  of   the maps that compute the B\"ohm, L\'evy-Longo
  and Berarducci trees.
 Corollary~\ref{cor:isomfinalcoalgcarriers} 
  allows us to
   obtain a `syntactic' version of these maps
    which look  like a notational variant of the $\Set$-based 
\eqref{definition:informal:bohmtree}, 
\eqref{definition:informal:levylongotree}
and \eqref{definition:informal:berarduccitree}, 
 but are now fully justified
as  coinductive definitions on $\alpha$-equivalence classes of
$\lambda$-terms.

\subsection{Nominal Sets of Infinite Normal Forms and Bisimulations} 
Recall the \emph{informal} corecursive definitions for the
 sets of B\"ohm, L\'evy-Longo and Berarducci Trees
 given in Definitions~\ref{informal:setbohmtrees},~\ref{informal:setoflltrees} and~\ref{informal:setofbertrees}.
 In this section, we formally define the first two  ones  on $\alpha$-equivalence classes
 as an application of  Corollary \ref{cor:isomfinalcoalgcarriers}.
 In order  to define the set of Berarducci trees, one needs to extend
  the notion of binding signature to include infinite products, which is beyond the scope of our paper.

\begin{exa}[B\"ohm trees up to $\alpha$-equivalence]
We define the functor $\funsetBTa$  by
\[
\begin{array}{lll}
\funsetBTa(X) & = \{ \bot \} + \setvars  + \coprod_{k} [\setvars]^{k}
         (\setvars \times {\sf List} (X)).  \\
\end{array}
\]
The final coalgebra of $\funsetBTa$, denoted by  $\setBTa$,
exists and it is isomorphic to 
the set 
$\setBTffv/\mathord{\alphaconv}$ of $\alpha$-equivalence classes of 
B\"ohm trees with finitely many free variables  by Corollary \ref{cor:isomfinalcoalgcarriers}
(see also \eqref{definition:informal:bohmtree}).

The nominal set $\Lbifa$ can be equipped with a $\funsetBTa$-coalgebra structure. Explicitly, consider $\xi_{\sf BT}:\Lbifa\to\funsetBTa(\Lbifa)$ defined by
\[ \begin{array}{ll}
\xi_{\sf BT}(\eqclass{M}) = &
 \left \{ \begin{array}{ll}
  \eqclass{x}  &\quad \textit{if } \eqclass{M} \finbetaheada \eqclass{x}, \\
  \abs{\eqclass{x_1}}\ldots\abs{\eqclass{x_n}}(\eqclass{x},\eqclass{M_1},\ldots,\eqclass{M_n})    &\quad \textit{if } \eqclass{M} \finbetaheada  \eqclass{\lambda x_1 \ldots\lambda x_n. x M_1 \ldots M_m},\\
    \bot  &\quad  \textit{otherwise}. \\
 \end{array} \right .
\end{array}
\]
Since $\setBTa$ is the final $\funsetBTa$-coalgebra, we have a unique morphism $\bohmtreea:\Lbifa\to\setBTa$ such that the diagram below commutes. Notice that this morphism is obtained by restricting the codomain of the map $\bohmtreea$ from  Definition~\ref{def:bohmtreeonalpha}.
\[
\xymatrix@C=40pt{
\Lbifa \ar[r]^-{\bohmtreea}\ar[d]_-{\xi_{\sf BT}} & \setBTa\ar[d]^-{\simeq}\\
\funsetBTa(\Lbifa)\ar[r]_-{\funsetBTa(\bohmtreea)} &\funsetBTa(\setBTa)
}
\]
We can now define the head bisimulation $\hbis$ on $\Lbifa$ as the kernel pair of the map $\bohmtreea$, that is, $\eqclass{M} \hbis \eqclass{N}$ 
if and only if $\eqclass{M}$ and $\eqclass{N}$ have the same  B\" ohm tree.
Explicitly, $\hbis$ is defined as the pullback
\[
\xymatrix@C=40pt{
\hbis\ar[r]^-{\pi_1}\ar[d]_-{\pi_2}\pullbackcorner & \Lbifa\ar[d]^-{\bohmtreea}\\
\Lbifa\ar[r]_-{\bohmtreea} & \setBTa 
}
\]
Since $[\setvars](-)$ preserves pullbacks and pullbacks commute with coproducts and limits in $\Nom$, we have that $\funsetBTa$ preserves pullbacks, thus the outer square of the diagram
\[
\xymatrix@C=40pt{
\funsetBTa(\hbis)\ar[rrr]^-{\funsetBTa(\pi_1)}\ar[ddd]_-{\funsetBTa(\pi_2)} & & & \funsetBTa(\Lbifa)\ar[ddd]^-{\funsetBTa(\bohmtreea)}\\
&\hbis\ar[r]^-{\pi_1}\ar[d]_-{\pi_2}\ar@{-->}[lu]\pullbackcorner & \Lbifa\ar[ru]_-{\xi_{\sf BT}}\ar[d]^-{\bohmtreea} &\\
& \Lbifa\ar[ld]^-{\xi_{\sf BT}}\ar[r]_-{\bohmtreea} & \setBTa\ar[rd] & \\
\funsetBTa(\Lbifa)\ar[rrr]_-{\funsetBTa(\bohmtreea)} & & & \funsetBTa(\bohmtreea)
}
\]
is also a pullback. Therefore we obtain an $\funsetBTa$-coalgebra structure on $\hbis$ such that the diagram
\begin{equation}
  \begin{gathered}
    \label{head-bis-2}
    \xymatrix@C=40pt{ \Lbifa \ar[d]_{\xi_{\sf BT}} & & \hbis \ar[d] 
      \ar[ll]_{\pi_1} \ar[rr]^{\pi_2} & &
      \ar[d]^{\xi_{\sf BT}}  \Lbifa \\
      \funsetBTa(\Lbifa) & &\funsetBTa(\hbis)
      \ar[ll]^{\funsetBTa(\pi_1)} \ar[rr]_{\funsetBTa(\pi_2)} & &
      \funsetBTa(\Lbifa) }
  \end{gathered}
\end{equation}
commutes. This shows that $\hbis$ is a bisimulation in the sense
of~\cite{AM89}. The commutativity of~\eqref{head-bis-2} means that
$\hbis$ is a binary relation on $\Lbifa$ such that for all
$\eqclass{M}$ and $\eqclass{N}$, if $\eqclass{M} \hbis \eqclass{N}$
and $\eqclass{M} \finbetaheada \eqclass{\lambda x_1 \ldots \lambda
  x_n. x M_1 \ldots M_m} $, then there are $\eqclass{N_1}, \ldots,
\eqclass{N_m}$ such that $\eqclass{N} \finbetaheada \eqclass{\lambda
  x_1 \ldots \lambda x_n. x N_1 \ldots N_m} $ and $\eqclass{M_i} \hbis
\eqclass{N_i}$ for all $1 \leq i \leq m$.

 \end{exa}

\begin{exa}[L\'evy-Longo trees up to $\alpha$-equivalence]
We define the functor  $\funsetLLTa$ by 
  \[ \funsetLLTa(X) = \{\bot \} + [\setvars] X  + \setvars\times{\sf List} (X). \]
The final coalgebra of $\funsetLLTa$, denoted by 
$\setLLTa$, exists and is isomorphic to the set 
$\setLLTffv/\mathord{\alphaconv}$ 
by Corollary \ref{cor:isomfinalcoalgcarriers} (see also \eqref{definition:informal:levylongotree}).

The nominal set $\Lbifa$ can be equipped with an $\funsetLLTa$-coalgebra structure $\xi_{\sf LLT}:\Lbifa\to\funsetLLTa(\Lbifa)$ as follows:
\[ \begin{array}{ll}
\xi_{\sf LLT}(\eqclass{M}) = &
 \left \{ \begin{array}{ll}
  \abs{\eqclass{x}}\eqclass{N}  &\quad \textit{if } \eqclass{M} \finbetaweakheada \eqclass{\lambda x.N}, \\
  (\eqclass{x},\eqclass{M_1},\ldots,\eqclass{M_n})    &\quad \textit{if } \eqclass{M} \finbetaweakheada  \eqclass{ x M_1 \ldots M_m},\\
    \bot  &\quad  \textit{otherwise}. \\
 \end{array} \right .
\end{array}
\]

The unique map from $\xi_{\sf LLT}$ into the final $\funsetLLTa$-coalgebra is given by the restriction of the map $\lltreea$ from Definition~\ref{def:levylongoonalpha} and maps the equivalence class of an infinitary $\lambda$-term to its L\'evy-Longo tree. 

The weak head bisimulation can be defined as the kernel pair of the map $$\lltreea:\Lbifa\to\setLLTa.$$ 
Similarly to the case of head simulation, we can show that $\whbis$ is a bisimulation in the sense of~\cite{AM89}, namely we have an $\funsetLLTa$-coalgebra structure on $\whbis$ such that
\begin{equation}
\begin{gathered}
\label{weak-head-bis-2}
\xymatrix@C=40pt{
\Lbifa  \ar[d]_{\xi_{\sf LLT}} & &\whbis \ar[d] \ar[ll]_{\pi_1} \ar[rr]^{\pi_2} & &
    \ar[d]^{\xi_{\sf LLT}}  \Lbifa \\
\funsetLLTa(\Lbifa)  & &\funsetLLTa(\whbis)   \ar[ll]^-{\funsetLLTa(\pi_1)} \ar[rr]_-{\funsetLLTa(\pi_2)} & &
\funsetLLTa(\Lbifa)
}
\end{gathered}
\end{equation}
The commutativity of the above diagram means that the weak head bisimulation  $\whbis$ is a binary relation on
 $\Lbifa$ such 
 that for all $\eqclass{M}$ and $\eqclass{N}$,
 if  $\eqclass{M} \whbis \eqclass{N}$, the following
 hold:
 \begin{enumerate}
 \item 
 $\eqclass{M} \finbetaweakheada
     \eqclass{\lambda x. M } $
             then
 $\eqclass{N} \finbetaweakheada
     \eqclass{\lambda x. N} $
   and $\eqclass{M} \whbis \eqclass{N}$.
                
\item  $\eqclass{M} \finbetaweakheada  \eqclass{x M_1 \ldots M_m} $
             then 
             $\eqclass{N} \finbetaheada
     \eqclass{ x N_1 \ldots N_m} $
   and 
$\eqclass{M_i} \whbis \eqclass{N_i}$ for all $1 \leq i \leq m$.

\end{enumerate} 
\end{exa}

\begin{exa}[Berarducci trees up to $\alpha$-equivalence]
\label{example:berarducci}

By Theorem \ref{thm:fin}, 
the final coalgebra of $\funsetBerTa$ exists 
where
\[ \funsetBerTa(X) = [\setvars]  X + \{\bot \} \times {\sf List} (X) + 
\setvars  \times {\sf List} (X) + {\sf Stream} (X) \]
However, we cannot apply 
Corollary \ref{cor:isomfinalcoalgcarriers}
in this case because our current definition of binding signatures does not include
infinite products, e.g. $\sf Stream$.

\end{exa}

\subsection{Infinitely Many Free Variables}
\label{section:infinitelymanyfreevariables}

\newcommand{\Tr}{{\sf Tr}}

In this section we follow a suggestion from Pitts to
treat  the case of terms with infinitely many free variables.
In Theorem \ref{theorem:metricompletionofLba}
we proved that  $(\Lbia,\dab)$ is  isomorphic to   $\clba$
provided that ${\mathcal V}$ is uncountable.
However, the isomorphism does not hold if 
${\mathcal V}$ is  countable (Example \ref{ex:alpha-inf-lam}).
Moreover, even if ${\mathcal V}$ is  countable
and  the set $\clba$ can be equipped with a permutation action, 
$\clba$ is not a nominal set, 
since the terms with infinitely many free variables 
are not finitely supported. 
One way to deal with terms with infinitely many free variables,
without leaving  the world of nominal sets,  is to extend 
the calculus with constants and to regard the free variables as constants.

Let $\Const$ be a countable set of names.
The sets $\Lc$  and $\Lbic$ are the sets $\L$ and $\Lib$ extended with 
a set of constants $\Const$.
We can equip the set $\Const$ with the trivial permutation action and consider the functor $\LNom+\Const:\Nom\to\Nom$. 
The following proposition follows from 
\cite[Theorem 5.12]{PittsMGS2011notes}, or  \cite[Theorem 5.1, Remark 5.3]{PittsAM:alpsri}.

\begin{prop}
  The initial algebra of $\LNom+\Const$ is the nominal set 
  $\Lbaonec$ of $\lambda$-terms extended with $\Const$-constants up 
  to $\alpha$-equivalence. 
\end{prop}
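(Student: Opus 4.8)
The plan is to invoke the general machinery already developed in the paper, instantiating the binding signature so that constants appear as nullary operations with trivial binding arity. Concretely, I would extend the $\lambda$-calculus binding signature of Example~\ref{ex:bind-sig-lam-cal} by adding, for each $c\in\Const$, an operation symbol of arity the empty list. The resulting functor on $\Nom$ obtained via~\eqref{eq:F-alpha} is exactly $\LNom+\Const$, where the $\Const$ summand carries the trivial permutation action (a nullary operation contributes a copy of the terminal object $1$, and a coproduct of $|\Const|$ copies of $1$ with trivial action is $\Const$ with the trivial action). Since the theorem on nominal algebraic data types (the one stated just before~\eqref{eq:F-alpha}) says that for \emph{any} binding signature the initial algebra of the associated $\Nom$-functor $\fnom{F}_\alpha$ is the nominal set $(\TSiga,\cdot)$ of $\alpha$-equivalence classes of finite raw terms, we get immediately that the initial algebra of $\LNom+\Const$ is the nominal set of finite $\lambda$-terms-with-$\Const$-constants modulo $\alpha$-equivalence, which is by definition $\Lbaonec$.

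The key steps, in order, would be: (1) write down the extended binding signature $(\Sigma',\mathsf{ar}')$ where $\Sigma'=\{\Abs,\App\}\cup\Const$, with $\mathsf{ar}'(\Abs)=1$, $\mathsf{ar}'(\App)=0,0$, and $\mathsf{ar}'(c)$ the empty list for each $c\in\Const$; (2) identify the associated $\fnom{F}_\alpha$-functor from~\eqref{eq:F-alpha}: the $\setvars$ summand plus the $\Abs$ summand $[\setvars]\setone$ plus the $\App$ summand $\setone\times\setone$ plus, for each constant, a summand which is an empty product, i.e.\ $1$; the coproduct over $\Const$ of these $1$'s is $\coprod_{c\in\Const}1$, and since each summand has trivial action this coproduct is precisely the nominal set $\Const$ with trivial action, so $\fnom{F}_\alpha\setone\cong\setvars+\Const+[\setvars]\setone+\setone\times\setone = (\LNom+\Const)(\setone)$; (3) check that $\alpha$-equivalence on raw terms for this signature, as defined in Definition~\ref{def:aequiv-fin-terms}, restricts on the constant-free part to ordinary $\alpha$-equivalence on $\lambda$-terms and treats constants as inert (the $\mathsf{op}$-rule for a nullary operation is just $c=_\alpha c$), so $\TSiga\cong\Lbaonec$ as nominal sets; (4) conclude by the cited theorem (equivalently by~\cite[Theorem~5.12]{PittsMGS2011notes} or~\cite[Theorem~5.1, Remark~5.3]{PittsAM:alpsri}) that $\Lbaonec$ is the initial $(\LNom+\Const)$-algebra, with $\supp([t]_\alpha)=\fv(t)$ where now ``free variable'' excludes the constants.

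The only genuinely delicate point — and the one I would expect to be the main obstacle to write cleanly rather than conceptually hard — is step~(2): making precise that a nullary operation contributes the terminal object and that the coproduct $\coprod_{c\in\Const}1$ with componentwise trivial action is isomorphic (as a nominal set, hence as an object carrying a $\grp$-action) to $(\Const,\text{trivial})$, and that this matches the summand $\Const$ in the functor $\LNom+\Const$ as the paper intends it. This is where one must be careful that $\Const$ being merely \emph{countable} poses no problem: countable coproducts are allowed by the grammar~\eqref{eq:gramF} and are computed in $\Nom$ as in $\Set$ (the forgetful functor creates colimits), and every element of such a coproduct, lying in a single summand $1$, is trivially finitely supported (by the empty set). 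Once this identification is in place the statement is an immediate corollary of the already-established general theory, so the ``proof'' is really a one-paragraph reduction to the previously cited results rather than a new argument.

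\begin{proof}
Consider the binding signature obtained from the one for $\lambda$-calculus (Example~\ref{ex:bind-sig-lam-cal}) by adding, for each $c\in\Const$, a fresh operation symbol of arity the empty list. By the theorem on nominal algebraic data types (stated before~\eqref{eq:F-alpha}), the nominal set of $\alpha$-equivalence classes of finite raw terms for this signature is the initial algebra of the corresponding $\Nom$-functor $\fnom{F}_\alpha$. Unravelling~\eqref{eq:F-alpha} for this signature, a nullary operation contributes an empty product, i.e.\ the terminal nominal set $1=\{*\}$; the coproduct of these over $c\in\Const$, each summand carrying the trivial action, is $\coprod_{c\in\Const}1$, which — being computed in $\Nom$ as in $\Set$, and with every element finitely supported by the empty set — is isomorphic to $(\Const,\text{trivial action})$. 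Hence $\fnom{F}_\alpha\setone\cong\setvars+\Const+[\setvars]\setone+\setone\times\setone=(\LNom+\Const)(\setone)$, naturally in $\setone$. Finally, $\alpha$-equivalence on raw terms for this signature (Definition~\ref{def:aequiv-fin-terms}) is ordinary $\alpha$-equivalence on the $\Abs$/$\App$ part and the rule for each nullary symbol $c$ is simply $c=_\alpha c$, so constants are inert; therefore $\TSiga$ for this signature is precisely $\Lbaonec$, the set of finite $\lambda$-terms with $\Const$-constants up to $\alpha$-equivalence, with $\supp([t]_\alpha)=\fv(t)$ (the free variables, constants excluded). Thus $\Lbaonec$ is the initial $(\LNom+\Const)$-algebra, as claimed; this is also the content of~\cite[Theorem~5.12]{PittsMGS2011notes} and~\cite[Theorem~5.1, Remark~5.3]{PittsAM:alpsri}.
\end{proof}
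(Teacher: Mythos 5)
Your proposal is correct and follows essentially the same route as the paper, which simply cites Pitts's $\alpha$-structural recursion results (the general theorem that $(\TSiga,\cdot)$ is the initial $\fnom{F}_\alpha$-algebra for any binding signature); you instantiate that theorem at the signature $\{\Abs,\App\}\cup\Const$ with each constant a nullary operation, and your identification of the nullary summands with $\coprod_{c\in\Const}1\cong(\Const,\text{trivial})$ is exactly the intended reading. The paper gives no further detail, so your write-up is just a more explicit version of its one-line citation.
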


The following proposition follows from Corollary \ref{cor:isomfinalcoalgcarriers}
 since $\LNom+\Const$ is a functor obtained from a binding signature 
(see Definition \ref{definition:bindingsignature} and
 Proposition \ref{prop:generalizedgrammar}).

\begin{prop}\label{prop:cons-cor-bot}
  The final coalgebra of $\LNom+\Const$ is isomorphic to the nominal sets \\
   $\Lcai_{\sf fs}$ and  $\Lbifaonec$. 
\end{prop}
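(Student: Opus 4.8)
The plan is to read this off from Corollary~\ref{cor:isomfinalcoalgcarriers}, once we have presented $\LNom+\Const$ as the $\Nom$-endofunctor $\fnom{F}_\alpha$ of~\eqref{eq:F-alpha} attached to a suitable binding signature. First I would take $\Sigma$ to be the disjoint union of $\{\Abs,\App\}$ with the set $\Const$, viewing each $c\in\Const$ as a fresh nullary operation symbol, and set $\mathsf{ar}(\Abs)=1$, $\mathsf{ar}(\App)=0,0$, and $\mathsf{ar}(c)=()$ for every $c\in\Const$. Definition~\ref{definition:bindingsignature} imposes no finiteness condition on the set of operations, and since $\Const$ is countable the coproduct over operations in~\eqref{eq:F-alpha} is at most countable; hence the resulting $\fnom{F}_\alpha$ is one of the functors of grammar~\eqref{eq:gramF} and in particular enjoys the properties of Proposition~\ref{prop:generalizedgrammar}. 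Unfolding~\eqref{eq:F-alpha} for this $\Sigma$ gives
\[
\fnom{F}_\alpha\ \setone \;=\; \setvars + [\setvars]\setone + (\setone\times\setone) + \coprod_{c\in\Const}1,
\]
where each nullary operation contributes the empty product, i.e.\ the terminal nominal set; since $\Const$ carries the trivial action, $\coprod_{c\in\Const}1$ is isomorphic in $\Nom$ to the constant functor on $\Const$, so $\fnom{F}_\alpha\cong\LNom+\Const$ (compare Example~\ref{exa:funct-lam-calc} and~\eqref{eq:FLambda}). In particular $(\LNom+\Const)0$ is nonempty.

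Next I would translate the generic notation of Corollary~\ref{cor:isomfinalcoalgcarriers} into the syntactic notation of this section. For the signature above, the raw terms $T_\Sigma$ may be identified with the finite $\Const$-terms $\Lc$ (writing $\Abs(\langle x\rangle t)$ as $\lambda x.t$, $\App(t_1,t_2)$ as $t_1t_2$, and $c()$ as $c$), the infinitary raw terms $\TSigi$ with $\Lbic$, and, once the members of $\Const$ are modelled as constant operations (binding nothing, $\alpha$-equivalent only to themselves, and contributing no free variables), the free-variable and $\alpha$-equivalence definitions of Definitions~\ref{def:aequiv-fin-terms} and~\ref{def:TSigi} restrict to the standard ones on $\Const$-terms; thus $\TSiga=\Lbaonec$. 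Consequently the nominal set $\TSigia$ of $\alpha$-equivalence classes of infinitary raw terms with finitely many free variables is $\Lbifaonec$, while $\Tfinala=\TSigai_{\sf fs}$ is $\Lcai_{\sf fs}$. Corollary~\ref{cor:isomfinalcoalgcarriers} then states that $\TSigia$ is the final $\fnom{F}_\alpha$-coalgebra, i.e.\ the final $(\LNom+\Const)$-coalgebra, which gives the isomorphism with $\Lbifaonec$; and since $\Tfinala$ may likewise be taken as the final $\fnom{F}_\alpha$-coalgebra (the remark following Theorem~\ref{thm:fin-v2}), uniqueness of final coalgebras yields $\Lcai_{\sf fs}\cong\Lbifaonec$ as well.

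There is no serious obstacle here --- the statement is genuinely a corollary. The points that need care are purely bookkeeping: checking that an infinite (but countable) signature is admissible and that its $\fnom{F}_\alpha$ still lies within grammar~\eqref{eq:gramF}, and verifying that the abstract definitions of free variable and $\alpha$-equivalence from Definition~\ref{def:TSigi} agree with the familiar ones on $\Const$-terms once the constants are treated as empty-arity operation symbols. I expect this last identification, though entirely routine, to be the most delicate step to phrase precisely.
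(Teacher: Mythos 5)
Your proposal is correct and follows exactly the route the paper takes: the paper gives no argument beyond asserting that the proposition follows from Corollary~\ref{cor:isomfinalcoalgcarriers} because $\LNom+\Const$ arises from a binding signature (citing Definition~\ref{definition:bindingsignature} and Proposition~\ref{prop:generalizedgrammar}). You merely make explicit the bookkeeping the paper leaves implicit, namely the choice of signature with countably many nullary operations for $\Const$ and the identification of $\TSigia$ and $\Tfinala$ with $\Lbifaonec$ and $\Lcai_{\sf fs}$.
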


\begin{prop} Let   $\rho:\setvars\to \Const$ be a bijection. 
  We have an isomorphism $\Tr$ between $\Lbaone$ and the set of closed terms in $(\Lbaonec)^0$.
\end{prop}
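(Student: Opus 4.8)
The plan is to build the bijection $\Tr$ directly from the name bijection $\rho$, by turning the free variables of a term into the corresponding constants, and to produce a two-sided inverse by turning the constants of a closed term back into variables after first passing to a suitably $\alpha$-renamed representative. Note that $\Tr$ cannot be expected to be equivariant (it sends the class of the single variable $x$, with support $\{x\}$, to the class of the constant $\rho(x)$, with empty support, and does so non-equivariantly since $\rho(\pi x)\neq\rho(x)$), so we aim only at a bijection of the underlying sets.

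Concretely I would proceed in three steps. \emph{(1)} For a raw term $M\in\Lambda$ define $M^{\circ}\in\Lambda(\Const)$ by recursion, carrying along the finite set $V\subseteq\setvars$ of variables bound above the current position: $x^{\circ}_{V}=\rho(x)$ if $x\notin V$, $x^{\circ}_{V}=x$ if $x\in V$, $(PQ)^{\circ}_{V}=P^{\circ}_{V}Q^{\circ}_{V}$, $(\lambda x.P)^{\circ}_{V}=\lambda x.P^{\circ}_{V\cup\{x\}}$, and set $M^{\circ}:=M^{\circ}_{\emptyset}$; this just replaces every \emph{free} occurrence of a variable $x$ by the constant $\rho(x)$, so $M^{\circ}$ is well defined on raw terms (constants are never bound, hence no capture) and is closed. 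A routine induction on the derivation of $M=_{\alpha}N$ shows $M^{\circ}=_{\alpha}N^{\circ}$, so we get $\Tr\colon\Lbaone\to(\Lbaonec)^{0}$, $[M]_{\alpha}\mapsto[M^{\circ}]_{\alpha}$. \emph{(2)} For $[P]_{\alpha}\in(\Lbaonec)^{0}$ let $C$ be the ($\alpha$-invariant) set of constants occurring in $P$ and $X=\rho^{-1}(C)$; using that $[-]_{\alpha}\colon\Lambda(\Const)\to\Lbaonec$ is safe (as in Example~\ref{ex:safe-non-safe-map}) together with Lemma~\ref{lem:choosing-safe-elem}, pick an $[-]_{\alpha}$-safe representative $P'$ of $[P]_{\alpha}$ with $\bv(P')\cap X=\emptyset$, and let $P'^{\bullet}$ be $P'$ with each constant $c$ replaced by $\rho^{-1}(c)$. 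Since no bound variable of $P'$ lies in $X$ this replacement is capture-free, $\bv(P'^{\bullet})=\bv(P')$ and $\fv(P'^{\bullet})=X$, and I set $\Tr^{-1}([P]_{\alpha}):=[P'^{\bullet}]_{\alpha}$. \emph{(3)} Finally I would verify that $\Tr^{-1}$ is independent of the chosen safe representative and that the two composites are the identity: in $\Tr^{-1}\circ\Tr$ one may take $P'=M^{\circ}$ (which is $\alpha$-safe and has $\bv(M^{\circ})=\bv(M)$ disjoint from $\fv(M)$ once $M$ itself is chosen $\alpha$-safe), so $(M^{\circ})^{\bullet}$ re-replaces each $\rho(x)$ by $x$ exactly at the free positions and returns $M$; in $\Tr\circ\Tr^{-1}$, $(P'^{\bullet})^{\circ}$ re-replaces the free variables $X$ of $P'^{\bullet}$ by their constants and returns $P'$. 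Hence $\Tr$ is a bijection.

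The main obstacle is the $\alpha$-equivalence bookkeeping in step (3), and above all the well-definedness of $\Tr^{-1}$: one must show that if $P_1',P_2'$ are two $[-]_{\alpha}$-safe representatives of the same class with bound variables outside $X$, then $P_1'^{\bullet}=_{\alpha}P_2'^{\bullet}$, i.e.\ that replacing constants by the variables $\rho^{-1}(c)$ commutes with renaming of bound variables. This is exactly what the safety condition $\bv(P')\cap X=\emptyset$ buys us, since it makes the introduced variables fresh for all binders; the verification is a straightforward induction on $\alpha$-derivations but is the most notation-heavy point of the argument. Everything else — that $M^{\circ}$ is closed, that $(-)^{\circ}$ and $(-)^{\bullet}$ respect application and abstraction, and that the support/bound-variable identities for $\alpha$-safe terms hold (recall $\bv(M)=\supp(M)\setminus\supp([M]_{\alpha})$ for $\alpha$-safe $M$) — is immediate.
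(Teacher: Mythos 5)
Your construction is correct and is essentially the paper's own proof: the paper defines $\Tr$ by replacing variables with constants and un-replacing $\rho(x)$ by $x$ under each binder $\lambda x$ (which computes exactly your accumulator-based $(-)^{\circ}_V$, i.e.\ free occurrences become constants), and defines the inverse by choosing a representative whose bound variables avoid $\rho^{-1}$ of the occurring constants and substituting constants back to variables. Your extra care in obtaining that representative via safety (Lemma~\ref{lem:choosing-safe-elem}) and in checking well-definedness of $\Tr^{-1}$ only makes explicit what the paper leaves as ``easy to check''.
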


\begin{proof}
The map $\Tr:\L \to \Lc^0$ is defined inductively as follows:
\[
\begin{array}{lcl}
\Tr(x)& =&   \rho(x)\\
\Tr(MN) & =& \Tr(M)\Tr(N)\\
\Tr(\lambda x. M) &= &\lambda x. \Tr(M) [x/\rho(x)]
\end{array}
\]
where $\Tr(M) [x/\rho(x)]$ is the result of replacing $\rho(x)$ by $x$
in $\Tr(M)$.

It is easy to check that
if $M \alphaconv N$ then $\Tr(M) \alphaconv \Tr (N)$.
Hence, $\Tr$ can be defined on equivalence classes and
we have $\Tr:\Lbaone\to(\Lbaonec)^0$. 
Note that the map $\Tr$ is not finitely supported. For example,
$\Tr( ( x y) \cdot x) \neq ( x y) \cdot \Tr(  x) $
if $\rho (x) \not = \rho(y)$.
Hence, we cannot apply Pitts' alpha structural induction principle.

The inverse of $\Tr$ is defined as follows. 
Given an $\alpha$-equivalence class 
$[M]_\alpha\in(\Lbaonec)^0$, we can find a representative 
$M\in\Lambda(\Const)^0$ such that  
the set of bound variables of $M$ is disjoint from
the image under $\rho^{-1}$ of the constants occurring in $M$.
We put $\Tr^{-1}(M)$ to be the equivalence class of the term obtained by replacing each constant $b$ occurring in $M$ by $\rho^{-1}(b)$.
\end{proof}

\begin{prop} \label{prop:infinitelymanyfreevariables}
  The set $\clba$ is isomorphic to  $(\Lbifaonec)^0$.
\end{prop}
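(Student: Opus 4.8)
The plan is to obtain the isomorphism by completing, metrically, the bijection $\Tr\colon\Lbaone\to(\Lbaonec)^0$ of the previous proposition. First I would check that, when $(\Lbaonec)^0$ carries the restriction of the metric $d_\alpha$, the map $\Tr$ is an isometry. For this it suffices to verify that $\Tr$ is compatible with truncation up to $\alpha$-equivalence: an easy induction on raw terms shows $\Tr(M)^n =_\alpha \Tr(M^n)$ for all $M\in\Lambda(\Const)$ and all $n$ (the clause for $\lambda x.(-)$ uses that the renaming $\Tr(M)[x/\rho(x)]$ commutes with truncation), and moreover $M^n =_\alpha N^n$ if and only if $\Tr(M)^n =_\alpha \Tr(N)^n$. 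Hence the sets $\{\,n\mid M^n =_\alpha N^n\,\}$ and $\{\,n\mid \Tr(M)^n =_\alpha \Tr(N)^n\,\}$ coincide, so $d_\alpha([M]_\alpha,[N]_\alpha)=d_\alpha(\Tr[M]_\alpha,\Tr[N]_\alpha)$, and since $\Tr$ is already known to be a bijection it is an isometric bijection.

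Next I would show that $(\Lbifaonec)^0$ is isometric to the metric completion of $((\Lbaonec)^0,d_\alpha)$. By Proposition~\ref{prop:cons-cor-bot} together with the fact that the final coalgebra is the nominal completion of the initial algebra (Theorem~\ref{thm:fin-v2}), $\Lbifaonec$ is the nominal completion of $\Lbaonec$; and since every closed term has empty support, on the closed part the nominal completion agrees with the ordinary metric completion. Concretely: $(\Lbaonec)^0$ is dense in $(\Lbifaonec)^0$, because for a closed infinitary $\Const$-term $M$ each truncation $M^n$ is again closed (truncation replaces subterms by the constant $*$ and never makes a bound occurrence free, so $\fv(M^n)\subseteq\fv(M)=\emptyset$) and $(M^n)_n$ converges to $M$; and $(\Lbifaonec)^0$ is complete, being a topologically closed subspace of the complete space $\Lbifaonec$, because $\fv$ of a limit is the union of the $\fv$ of its truncations, which are all empty.

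Finally, an isometric bijection of metric spaces extends uniquely to an isometric bijection of their completions, so $\Tr$ lifts to an isomorphism from $\clba$, the completion of $(\Lbaone,d_\alpha)$, onto the completion of $((\Lbaonec)^0,d_\alpha)$, which by the previous paragraph is $(\Lbifaonec)^0$; this is the desired isomorphism. I expect the main obstacle to be the first step: since $\Tr$ is not equivariant it is not an instance of $\alpha$-structural recursion, so the compatibility of $\Tr$ with truncation and with $\alpha$-equivalence must be established directly on raw terms before it can be transported to equivalence classes and then to the metric.
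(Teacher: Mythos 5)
Your proposal is correct and follows essentially the same route as the paper: extend the bijection $\Tr$ along the universal property of metric completion, and identify the completion of $((\Lbaonec)^0,d_\alpha)$ with $(\Lbifaonec)^0$ by exploiting that closed terms are emptily supported, so the nominal completion restricted to closed terms is the ordinary metric completion (with density witnessed by truncations and completeness by convergence of finitely supported Cauchy sequences). The only difference is one of detail: you explicitly verify that $\Tr$ is an isometry via its compatibility with truncation, a step the paper's proof leaves implicit when it invokes the universal property of the Cauchy completion.
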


\begin{proof}
By the universal property of the Cauchy completion, $\Tr$ is extended to an isomorphism between $\clba$ and the completion of  $(\Lbaonec)^0$. The proof is complete by observing that the
isomorphism between $\Lcai_{\sf fs}$ and  $\Lbifaonec$ (see Proposition~\ref{prop:cons-cor-bot}) cuts down to an isomorphism between 
 the completion of  $(\Lbaonec)^0$ and $(\Lbifaonec)^0$. 
Explicitly, given a Cauchy sequence in $\Lbaone$, the image under $\Tr$ is a \emph{finitely supported} Cauchy sequence in 
$(\Lbaonec)^0$, and thus also in $\Lbaonec$. Thus it converges to a unique element of $(\Lbifaonec)^0$. 
Conversely, given $M\in(\Lbifaonec)^0$, we consider the truncations $M^n$. Their translations $\Tr^{-1}(M^n)$ in $\Lbaone$ form a Cauchy sequence and we map $M$ to its limit.
\end{proof}

Substitution is defined on $\Lbifaonec$ by instantiating Definition \ref{def:subsa}. 
Now, $\beta$-reduction restricts to  $(\Lbifaonec)^0$ and  
 can be defined on $\clba$ via the  translation $\Tr$.
We illustrate how this works with an example.

\begin{exa}
Consider the term $(\lambda x_0 x_1. x_0 x_1) \Vterm$, 
where $\Vterm= x_0 (x_1 (x_2 (\ldots)))$  given in \eqref{eq:wrong-beta}.  
Suppose $\Const = \{ c_0, c_1, \ldots \}$ and $\rho(x_i) = c_i$
for all $i$.

Then,
\[\Tr( (\lambda x_0 x_1. x_0 x_1) \Vterm )=
 (\lambda x_0 x_1. x_0 x_1) {\sf allconst}  \]
 where ${\sf allconst} = c_0 (c_1 (c_2 (\ldots)))$.
 The translated term does not contain free variables, but bound variables and constants  only.
It is important to stress the fact that 
the constants represent the free variables of the original
term. 
 We can now safely perform the $\beta$-step
\begin{equation}
  \label{eq:right-beta}
  (\lambda x_0 x_1. x_0 x_1) {\sf allconst} \onebeta (\lambda x_1. x_0 x_1)
  [x_0 := {\sf allconst} ] = \lambda x_1. {\sf allconst} \ x_1.  
\end{equation}
This $\beta$-step is possible because
substitution is defined on the set $(\Lbifaonec)$.
 According to Proposition \ref{prop:infinitelymanyfreevariables}, 
 the equivalence class  $[\lambda x_1. {\sf allconst} \ x_1]_{\alpha} $
 translates back to an element in $\clba$ which is the limit of
 the Cauchy sequence $([\lambda x_n. x_0 (x_1 \ldots (x_{n-1}\ *))x_n]_{\alpha})_n $.
 Similar to Example \ref{ex:alpha-inf-lam}, it does not have any preimage 
 under $[-]_\alpha :\Lbi\rightarrow\clba$.
 
\end{exa}

\section{Related  and Future Work}

The problem of having insufficiently many fresh variables 
does not arise if we use
de Bruijn indices~\cite{debruijn,duppen}.
However, it is unclear whether using de Bruijn indices could lead to a coalgebraic treatment of 
the corecursion principle.

  It would also be interesting to investigate nominal
  coalgebraic data types with infinitely many free variables based on
  either Section~\ref{section:infinitelymanyfreevariables} or on
  variations of nominal sets allowing countable supports, see
  e.g. \cite{cheney:jsl,DowekG12}. This could have applications to the
  semantics of processes which are able to generate infinitely many
  fresh names.

Using the `same' endofunctor as \cite{gabb-pitt:lics99}, but on a
  different category, namely the category $\Set^\mathbb{F}$ of
  presheaves on finite sets, \cite{fiore:lics99} also exhibits
  finite $\lambda$-terms as an initial algebra. Roughly speaking, the difference between $\Nom$ and $\Set^\mathbb{F}$ is that the latter already comes equipped with a notion of substitution, see \cite[Section 7.3]{StatonPhD} for details. \cite{matt-uust:substitution} further develop substitution for algebraic and coalgebraic datatypes over presheaf-categories and describe the set of infinitary $\lambda$-terms as a final coalgebra. \cite{AdamekMV11} furthermore study the so-called rational fixed point, again over $\Set^\mathbb{F}\!,$ as a semantic universe for solutions of higher-order recursion schemes.

In \cite{DBLP:conf/cade/HendriksO03},
the authors define a calculus with 
an operator called {\em adbmal} to deal with
$\alpha$-conversion. This operator removes the
scope of a variable.
It will be interesting to extend 
this calculus for infinite terms. Using this operator, it  
would give an alternative approach to dealing with 
the  problem of having insufficient fresh variables.
 
Nominal Isabelle provides infrastructure for declaring nominal 
data types  and defining recursive functions over them
\cite{DBLP:journals/jar/Urban08}. 
Proposals for codata types in Isabelle are presented in
 \cite{lics2012Codatatypes}.
It will be nice to  include nominal codata types
in Isabelle 
in order to formalise the proofs of some theorems
on $\lambda$-calculus concerning B\"ohm trees and
infinitary $\lambda$-calculus.

Nominal extensions of typed $\lambda$-calculus have
been proposed in 
 \cite{DBLP:conf/popl/Pitts10,DBLP:journals/entcs/Cheney09}
 for system  T,
in \cite{DBLP:journals/corr/abs-1201-5240}
for LF and in \cite{DBLP:conf/lfmtp/WestbrookSA09}
for the 
Calculus of Inductive Constructions. 
Further research could include a study of typing
that combines nominal syntax with coinductive
data types.

The corecursion principle presented in this paper
cannot handle infinitary meta-terms 
as defined for 
Infinitary Combinatory Reduction Systems (iCRS)
\cite{DBLP:journals/iandc/KetemaS11}.
It will be interesting to prove
an $\alpha$-coinduction principle that includes 
meta-variables and meta-terms.
However, it is not straightforward to define  $\alpha$-equivalence
on them \cite{fern-gabb:nom-rewriting}.

One could also study how to extend the notion of binding signature
and  Corollary \ref{cor:isomfinalcoalgcarriers} to include
infinite products for representing the set of Berarducci trees 
up to $\alpha$ (see Example \ref{example:berarducci}). Another possible solution, suggested by a referee,  would be to use a version of Beki{\v c} lemma~\cite{LehmannS81,BackhouseBGW95, Freyd92} and to replace a nested final coalgebra by a non-nested  many-sorted one.

It will also be worthwhile to study $\alpha$-corecursion principles
for sets of infinitary terms obtained 
from alternative metrics such as the 001- and 101-metrics
\cite{KKSV97} or the metric that
captures  the infinite normal forms
of reactive programs \cite{severidevriesICFP2012}.
Also, meta-terms for Infinitary Combinatory Reduction Systems
that satisfy the finite property chain
can be defined using an alternative metric 
(\cite[page 20]{DBLP:journals/iandc/KetemaS11}).

\section*{Acknowledgements}

We are grateful to Andy Pitts for suggesting us to treat the
infinitely many free variables of a term as constants.  We would also
like to thank Christian Urban for helpful discussions. Finally, we acknowledge insightful  improvements suggested by the referees.

\bibliographystyle{alpha}

\end{document}